\documentclass[11pt]{article}
\pagestyle{plain}

\usepackage{geometry}
\geometry{left=2.5cm,right=2.5cm,top=2.5cm,bottom=2.5cm}
\usepackage{amsmath,amsfonts,amssymb,amsthm}
\usepackage{graphicx}
\usepackage{enumerate}
\usepackage{bbm}
\usepackage{tikz}
\usetikzlibrary{math}
\usepackage{verbatim}
\usepackage{hyperref,color}
\hypersetup{colorlinks=true,citecolor=blue, linkcolor=blue, urlcolor=blue}
\usepackage[linesnumbered,ruled,vlined]{algorithm2e}
\usepackage{textcomp}

\newtheorem{thm}{Theorem}
\numberwithin{thm}{section}

\newtheorem{defn}[thm]{Definition}
\newtheorem{lemma}[thm]{Lemma}

\newtheorem{rmk}[thm]{Remark}

\newtheorem{fact}[thm]{Fact}
\newtheorem{claim}[thm]{Claim}

\newcommand{\poly}{\mathrm{poly}}
\newcommand{\e}{\mathrm{e}}

\newcommand{\E}{\mathrm{E}}
\newcommand{\Var}{\mathrm{Var}}

\newcommand{\Ind}{\mathbbm{1}}

\newcommand{\norm}[1]{\left\lVert #1 \right\rVert}

\newcommand{\TV}[2]{\left\| #1 - #2 \right\|_{\textsc{tv}}}
\newcommand{\ol}[1]{\hat{#1}}
\newcommand{\ceil}[1]{\left\lceil #1 \right\rceil}
\newcommand{\floor}[1]{\left\lfloor #1 \right\rfloor}

\newcommand{\BIS}{\textsc{\#BIS}}
\newcommand{\BIShard}{{\textsc{\#BIS-hard }}}
\newcommand{\BIPthreeCOLOR}{\textsc{\#BIP-3-COL}}

\def\eps{\varepsilon}
\def\epsilon{\varepsilon}

\newcommand{\N}{\mathbb{N}}
\newcommand{\R}{\mathbb{R}}
\newcommand{\Ham}{\mathcal{H}}
\newcommand{\Gr}{\mathcal{G}}
\newcommand{\muemp}{\mu_{\textsc{emp}}}
\newcommand{\muemph}{\mu^*_{\textsc{emp}}}

\def\MC{\textsc{MaxCut}}

\newcommand{\UMC}{\textsc{TwoLargeCuts}}
\newcommand{\din}{d_{\textsc{in}}}
\newcommand{\dout}{d_{\textsc{out}}}

\newcommand{\Gu}{G^*}
\newcommand{\bu}{\beta^*}
\newcommand{\Hu}{\Ham^*}

\newcommand{\Mhat}{M^*}

\title{Lower bounds for testing graphical models: colorings and antiferromagnetic Ising models\thanks{An extended abstract of this paper appeared in the Proceedings of COLT 2019~\cite{BBCSV}.}}

\author{
Ivona Bez\'akov\'a\thanks{Rochester Institute of Technology. Email: ib@cs.rit.edu. Research supported in part by NSF grant 1819546.}
\and
Antonio Blanca\thanks{Pennsylvania State University.  Email: ablanca@cse.psu.edu. Research supported in part by NSF grants CCF-1617306 and CCF-1563838.}
\and
Zongchen Chen\thanks{Georgia Institute of Technology.  Email: \{chenzongchen,vigoda\}@gatech.edu.
Research supported in part by NSF grants CCF-1617306 and CCF-1563838.}
\and
Daniel \v{S}tefankovi\v{c}\thanks{University of Rochester.
 Email: stefanko@cs.rochester.edu. Research
supported in part by NSF grant CCF-1563757.}
\and
Eric Vigoda$^\mathsection$
}

\begin{document}
	
	\maketitle

\begin{abstract}
	We study the identity testing problem in the context of spin systems
	or undirected graphical models,
	where it takes the following form:
	given the parameter specification of the model $M$
	and a sampling oracle for the  distribution $\mu_{\Mhat}$ of an unknown model $\Mhat$,
	can we efficiently determine if the two models $M$ and $\Mhat$ are the same?
	We consider identity testing for both soft-constraint and hard-constraint systems.
	In particular, we prove hardness results in two prototypical cases, the \emph{Ising model} and \emph{proper colorings},
	and explore whether identity testing is any easier than structure learning.

	For the ferromagnetic
	(attractive) Ising model,
	Daskalakis et al.\ (2018) presented a polynomial time algorithm
	for identity testing.
	We prove hardness results in the antiferromagnetic (repulsive) setting in the same regime of parameters where structure learning
	is known to
	require a super-polynomial number of samples.
	Specifically, for $n$-vertex graphs of maximum degree $d$, we
	prove that if
	$|\beta| d = \omega(\log{n})$
	(where $\beta$ is the inverse temperature parameter),
	then there is no polynomial
	running time identity testing
	algorithm unless $RP=NP$. We also establish computational lower bounds for a broader set of
	parameters under the (randomized) exponential
	time hypothesis. Our proofs utilize insights into the design
	of gadgets
	using random graphs in recent works concerning the hardness of
	approximate counting by Sly (2010).
	
	In the hard-constraint setting, we
	present hardness results 
	for identity testing for
	proper colorings.  
	Our results are based on the presumed
	hardness of \textsc{\#BIS}, the problem of
	(approximately) counting independent sets in bipartite graphs.
	In particular, we prove that identity testing is hard in the same range of parameters 
	where structure learning is known to be hard, which in turn matches the parameter regime for 
	NP-hardness of the corresponding decision problem.
	
\end{abstract}
	
	\thispagestyle{empty}
	
	\newpage
	
	\clearpage
	\setcounter{page}{1}

\section{Introduction}

	We study the \emph{identity testing} problem in the context of
	\emph{spin systems}.
	Spin systems,
	also known as Markov random fields or undirected graphical models,
	are a general framework in statistical physics, theoretical computer science and machine learning
	for modeling interacting systems of simple elements.
	In this type of model,
	the identity testing problem, sometimes also called \emph{goodness-of-fit testing},
	takes the following form:
	given the parameter specification of the model $M$
	and a sampling oracle for the distribution $\mu_{\Mhat}$ of an unknown model $\Mhat$,
	can we efficiently determine if the two models $M$ and $\Mhat$ are the same?
	
	 A spin system consists of a finite graph $G=(V,E)$
	and a set~$S$ of {\it spins}; a {\it configuration\/} $\sigma\in S^V$ assigns a spin value
	to each vertex $v\in V$.  
	The probability of finding the
	system in a given configuration~$\sigma$ is given by the {\it Gibbs\/} (or {\it Boltzmann\/})
	distribution
	\begin{equation*}\label{eqn:gibbs}
	\mu_{G,\Ham}(\sigma) = \frac{\e^{-\Ham(\sigma)}}{Z},
	\end{equation*}
	where $Z$
	 is the normalizing factor known as the partition function and the Hamiltonian~$\Ham$
	contains terms that depend on the spin values at each vertex (a ``vertex potential'') and
	at each pair of adjacent vertices (an ``edge potential''). 
	
	When  $\mu_{G,\Ham}(\sigma) > 0$ for every configuration $\sigma \in S^V$ (i.e., the Gibbs distribution has full support), the spin system is known as a \emph{soft-constraint model};
	otherwise, it is called a \emph{hard-constraint model}. 
	This is a fundamental distinction among spin systems, as it
	determines their application domains 
	and the computational complexity of several inherent problems.	
	We provide here hardness results for identity testing for both soft-constraint and hard-constraint models by considering two
	prototypical systems: the \emph{Ising model} and \emph{proper colorings}.
	
	A naive approach to the identity testing problem
	is to learn first the unknown model $(\Gu,\Hu)$ and then check whether $(G,\Ham)=(\Gu,\Hu)$.
	The problem of learning $\Gu$ from samples is known as \emph{structure learning} and has received tremendous attention; see, e.g.,~\cite{CL,Dasgupta,LGK,AHHK,RWL,BMS,BGSa,Bresler,VMLC,HKM,KM}.
	Once the graph $\Gu$ is known,
	it is often a simpler task to estimate $\Hu$ \cite{Bresler};  this is known as the \emph{parameter estimation} problem.
	Hence, 	one may be inclined to conjecture that identity testing is in fact easier than structure learning,
	and we investigate whether or not this is the case.
%
	The main takeaway from our results is evidence that identity testing is as hard as structure learning for antiferromagnetic (repulsive) systems,  as we 
	show that the settings where these two problems are hard in both the Ising model and proper colorings coincide. 
	

	\subsection{Lower bounds for the Ising model}
	
	The Ising model is the quintessential example of a soft-constraint system
	and is studied in a variety of fields, including
	phylogeny~\cite{Felsenstein,DMR}, computer vision~\cite{GG,Roth},
	statistical mechanics \cite{Georgii,FV}
	and
	deep learning, where it
	appears under the guise of Boltzmann machines~\cite{Hinton2,SLbm,Hinton1}.
	The Ising model on a graph $G=(V,E)$ is parameterized by the inverse temperature $\beta$ which controls
	the strength of the nearest-neighbor interactions.  Configurations of the model are
	the assignments of spins $S = \{+,-\}$ to the vertices of $G$.
	The probability of a configuration $\sigma \in S^V$ is given by the
	Gibbs distribution:
\begin{equation}
\label{eq:ising-def}
\mu_{G,\beta}(\sigma) = \frac{{\e}^{\beta \cdot A(\sigma)}}{Z_{G,\beta}},
\end{equation}
where $A(\sigma)$ is the number of edges of $G$ connecting vertices with the same spin and
$Z_{G,\beta} = \sum_{\sigma \in S^V} \exp(\beta \cdot A(\sigma))$ is the partition function; the associated Hamiltonian is $\Ham(\sigma) = - \beta \cdot A(\sigma)$.

In the \emph{ferromagnetic} case ($\beta>0$) neighboring vertices prefer to align to the same spin,
whereas the opposite happens in the \emph{antiferromagnetic} setting ($\beta<0$).
In more general variants of the model,
one can allow different inverse temperatures $\beta_e$ for each edge $e\in E$,
as well as a vertex potential or external magnetic field.
However, in this work,
our emphasis will be on lower bounds for the identity testing problem, and hence we focus on the
above mentioned simpler homogeneous
setting (all $\beta_e=\beta$) with no external field.

The identity testing problem in the context of the Ising model is the following:
given a graph $G=(V,E)$, a real number $\beta$  and oracle access to independent random samples
from an unknown Ising distribution $\mu_{\Gu,\bu}$,
can we determine if $(G,\beta)=(\Gu,\bu)$?
If the models are distinct but
their associated Gibbs distributions $\mu_{G,\beta}$ and $\mu_{\Gu,\bu}$
are statistically close,
an exponential (in $|V|$) number of samples may be required to determine that $(G,\beta)\neq(\Gu,\bu)$.
Hence,
following a large body of work on identity testing (see, e.g.,~\cite{BFFKRW,DKN,DK16,VV17,DGPP,DDK,CDGR}),
we study this problem in the property testing framework~\cite{RS96,GGR98}.
That is, we are
guaranteed that
either $(G,\beta)=(\Gu,\bu)$ or
$\|\mu_{G,\beta} - \mu_{\Gu,\bu}\| > \varepsilon$,
for some standard distance $\|\cdot\|$ between distributions and $\eps>0$ fixed. 	

The
most common distances for identity testing are
total variation distance and
Kullback-Leibler (KL) divergence, and it is known that a testing algorithm for the latter immediately provides one
for the former~\cite{DDK}. Therefore, since our focus is on lower bounds,
we work with total variation distance, which we denote by ${\|\cdot\|}_\textsc{tv}$.

Identity testing for the Ising model is then formally defined as follows.
For positive integers $n$ and $d$
let $\mathcal{M}(n,d)$ denote the family of all
$n$-vertex graphs of maximum
degree at most~$d$.
\begin{center}
	\fbox{
		\parbox{0.88\textwidth}{
			Given a graph $G \in \mathcal{M}(n,d)$,
			$\beta \in \R$ and
			sample access to a distribution $\mu_{\Gu,\bu}$ for an
			unknown Ising model $(\Gu,\bu)$, where $\Gu\in\mathcal{M}(n,d)$ and $\bu \in \R$,
			distinguish with probability at least $3/4$ between the cases:
			\begin{enumerate}
				\item $\mu_{G,\beta}=\mu_{\Gu,\bu}$;
				\item ${\|\mu_{G,\beta}-\mu_{\Gu,\bu}\|}_\textsc{tv} > \frac 13$.
	\end{enumerate}}}
\end{center}
\noindent
As usual in the property testing setting, the choice of $3/4$ for
the probability of success is arbitrary, and it can be replaced by any constant in the interval $(\frac 12,1)$ at the expense of a constant factor in the running time of the algorithm.
The  choice of $1/3$ for the accuracy parameter is also arbitrary:
we shall see in our proofs that our lower bounds hold for any constant $\varepsilon \in (0,1)$, provided $n$ is sufficiently large; see also Remark~\ref{rmk:error}.

%

Identity testing
for the Ising model
 was studied first by Daskalakis, Dikkala and Kamath~\cite{DDK} who provided
a polynomial time algorithm for the \emph{ferromagnetic} Ising model (the $\beta > 0$ case).
(We will discuss their results in more detail after further discussion.)
In contrast, we present lower bounds for the \emph{antiferromagnetic} Ising model ($\beta < 0$).
Our lower bounds will be for the case when $\bu = \beta$, which means
that they hold even under the additional promise that
the hidden parameter $\bu$ is equal to~$\beta$.
For a discussion of the case $\bu \neq \beta$, as well as for some related open problems, see Section~\ref{sec:disc}.

The structure learning and parameter estimation problems, which, as discussed earlier, can be used to solve the identity testing problem,
have been particularly well-studied in the context of the Ising model \cite{Bresler,VMLC,HKM,KM}.
Recently, Klivans and Meka~\cite{KM}
solved both of these problems for the Ising model
with a nearly optimal algorithm. Their algorithm
learns $\Gu \in \mathcal M(n,d)$ and the parameter $\bu$ in
running time ${\e}^{O(|\bu| d)} \times O(n^2\log{n})$
and sample complexity ${\e}^{O(|\bu| d)} \times O(\log{n})$.
Consequently, when $|\bu| d = O(\log n)$, or when $\beta = \bu$ and $|\beta| d = O(\log n)$,
this method
provides an identity testing algorithm with polynomial (in $n$) running time and sample complexity.
In contrast, when $|\bu| d = \omega(\log n)$ (i.e., $|\bu| d / \log n \rightarrow \infty$),
it is known that the structure learning problem cannot be solved in polynomial time~\cite{SW},
and this approach to identity testing fails.

Our first result is that the identity testing problem for the {antiferromagnetic} Ising model is computationally hard in the same range of parameters.
Specifically, we show that
when
$|\beta| d = \omega (\log n)$---or equivalently when $\beta = \bu$ and $|\beta^*| d = O(\log n)$---there is no polynomial running time identity testing algorithm for $\mathcal{M}(n,d)$ unless $RP = NP$;
$RP$ is the class of problems that can be solved in polynomial time by a randomized algorithm.

\begin{thm}\label{thm:hardness-Ising}
	Suppose
	$n$, $d$ are positive integers  such that
	$3 \le d \le n^{\theta}$ for constant $\theta \in (0,1)$.
	If $RP \neq NP$, then for all real $\beta < 0$
	satisfying
	$|\beta| d = \omega(\log n)$ and all $n$ sufficiently large,	
	there is no polynomial running time algorithm to solve
	the identity testing problem for the \emph{antiferromagnetic} Ising model in $\mathcal{M}(n,d)$.
\end{thm}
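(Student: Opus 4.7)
The plan is to reduce from an NP-hard problem such as approximate \MC{} on bounded-degree graphs to identity testing, adapting the random bipartite-graph gadgets that Sly developed to prove hardness of approximating the partition function for antiferromagnetic 2-spin systems below the tree-uniqueness threshold. The regime $|\beta| d = \omega(\log n)$ is well inside the non-uniqueness phase for the antiferromagnetic Ising model on the $d$-regular tree, which is precisely where these random bipartite biregular gadgets exhibit the property we need: with high probability over the random construction, their Gibbs distribution is sharply bimodal, concentrated up to exponentially small mass on the two symmetric phases determined by which side of the bipartition is majority-plus.

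Given an instance $H$ of the hard source problem, I would build $G_H\in\mathcal{M}(n,d)$ by wiring together $\Theta(|V(H)|)$ copies of Sly's random gadgets through ``port'' vertices, with the wiring pattern reflecting the edges of $H$. The hidden graph $\Gu_H$ would be produced by the same randomized procedure, sharing the internal randomness of each gadget with $G_H$, but with port wirings rearranged so that $\Gu_H$ decomposes into connected components of size $O(\log n)$. This component structure makes $\mu_{\Gu_H,\beta}$ exactly samplable in polynomial time by brute-force dynamic programming on each component in time $2^{O(\log n)}=\poly(n)$, giving the efficient sample oracle required by the reduction. The construction would be engineered so that the labeled graphs $G_H$ and $\Gu_H$ coincide exactly when $H$ is a YES instance --- which, since for $\beta \neq 0$ the Ising distribution determines its underlying graph, places us in the $\mu_{G_H,\beta}=\mu_{\Gu_H,\beta}$ branch of the identity-testing promise. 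In the NO case, a linear fraction of port pairs fails to align, and the bimodality of each gadget amplifies each misalignment into an $\Omega(1)$ contribution to the total variation distance, so that $\TV{\mu_{G_H,\beta}}{\mu_{\Gu_H,\beta}}>1/3$.

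The technical heart of the argument, and the main obstacle, is establishing this TV dichotomy rigorously. I would do so using the Paley--Zygmund / second-moment machinery from Sly's paper: one shows that with probability at least $3/4$ over the random gadget choices, each gadget's Gibbs distribution lies within exponentially small TV of the uniform mixture of its two symmetric phases, and that the aggregate of $\Theta(n)$ misaligned ports produces a constant TV gap rather than being washed out by fluctuations. Every constant must be controlled uniformly as a function of $|\beta|$ and $d$ throughout the whole regime $|\beta|d=\omega(\log n)$ rather than at just one fixed point, which is what makes the calculation delicate. Because the gadget construction is intrinsically randomized, the overall reduction is only randomized polynomial time, so an efficient identity tester would, after a constant number of independent repetitions for probability amplification, yield an $RP$-algorithm for the NP-hard source problem --- which is exactly why the theorem's hypothesis is $RP \neq NP$ rather than the stronger $P \neq NP$.
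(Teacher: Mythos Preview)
Your proposal has the right ingredients --- reduction from \MC{}, Sly-style random bipartite gadgets as degree reducers, bimodality in the non-uniqueness regime --- but the reduction's logical structure is misdescribed in a way that cannot be repaired as stated. You write that ``the labeled graphs $G_H$ and $\Gu_H$ coincide exactly when $H$ is a YES instance.'' This is circular: the reduction must construct both the visible graph and the sample oracle \emph{without knowing the answer}, so their equality cannot be a function of the YES/NO status of $H$. No engineering of port wirings can make two explicitly constructed graphs equal iff $\MC(H)\ge k$ unless you already know whether $\MC(H)\ge k$.

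The paper's mechanism avoids this entirely. The visible model is $M=(\hat H_w^\Gamma,\beta)$ built from $H$ (plus two auxiliary vertices $s,t$ with multiplicity-$N$ edges to all of $V$ and $w$ edges between $s$ and $t$), while the hidden model is $M^*=(\hat I_0^\Gamma,\beta)$ built from the \emph{independent set} on $N$ vertices with the same gadget structure. These graphs are \emph{always different}, and $\mu_M$ is \emph{never} exactly equal to $\mu_{M^*}$. What varies with the answer is only total variation distance: $\TV{\mu_M}{\mu_{M^*}}$ is exponentially small iff $(\{s,t\},V)$ is the unique maximum cut of $\hat H_w$, which (Lemma~\ref{lemma:key-umc-fact}) happens iff $\MC(H)<N^2-w$. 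The ``close but not equal'' case is handled by a coupling argument against the tester's sample budget: with $L$ samples the tester cannot distinguish $\mu_M$ from any distribution within TV distance $o(1/L)$ of it, so feeding it $\mu_{M^*}$-samples forces a \textsc{Yes} output with high probability. This is also why the hypothesis $|\beta|d=\omega(\log n)$ enters --- it makes the ``close'' case exponentially close, beating any polynomial $L$. Your sketch does not contain this sample-complexity-vs-TV-gap step, which is the actual hinge of the argument.

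A secondary difference: sampling from $\mu_{M^*}$ is trivial in the paper's construction because $\mu_{M^*}$ is concentrated on just two symmetric configurations (all gadgets in the same phase relative to $s,t$), so one simply flips a coin --- no need for your $O(\log n)$-component brute force, and indeed $\hat I_0^\Gamma$ is not decomposed into small components at all.
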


In contrast to the above result, Daskalakis, Dikkala and Kamath~\cite{DDK} designed an identity testing algorithm for the Ising model 
with polynomial
running time and 
sample complexity that works for arbitrary values of $\beta$ (positive, negative or even non-homogeneous).
This 
appears to contradict our lower bound in Theorem~\ref{thm:hardness-Ising}.
However, the model in~\cite{DDK}
assumes not only sampling access to the unknown distribution~$\mu_{\Gu,\bu}$, 
but also that the covariances
between the spins at every pair of vertices in
the visible graph $G=(V,E)$ are given.
More precisely, they assume that for every $u,v \in V$ the quantity ${{\E}}_{\mu_{G,\beta}}[X_u X_v]$ is known, where $X_u,X_v \in \{+1,-1\}$ are the random variables corresponding to the spins at $u$ and $v$, respectively.

This
is a reasonable assumption when these quantities can be computed (or approximated up to an additive error) efficiently. However, an immediate consequence of our results is that
in the antiferromagnetic setting 
when $|\beta| d = \omega(\log n)$ there is no 
FPRAS\footnote{A fully polynomial-time randomized approximation scheme (FPRAS) for an optimization problem with optimal solution $Z$ produces an approximate solution $\hat{Z}$ such that, with probability at least $1-\delta$, $(1-\eps)\hat{Z} \leq Z \leq (1+\eps)\hat{Z}$ with running time polynomial in the instance size, $\eps^{-1}$ and $\log (\delta^{-1})$.}
for estimating
$\E_{\mu_{G,\beta}}[X_u X_v]$ unless $RP = NP$.
In a related result, 
Goldberg and Jerrum~\cite{GJcov} showed recently that
there is no FPRAS for (multiplicatively) approximating the pairwise covariances 
for the antiferromagnetic Ising model unless $RP = \#P$.
Further evidence for the hardness of this problem comes from the fact that
sampling is hard in the antiferromagnetic
setting~\cite{SlySun,GSV:Ising}
and in the ferromagnetic model in the presence of inconsistent magnetic fields~\cite{GJfield} 
(i.e., the vertex potential of distinct vertices may have different signs).
In summary, the algorithmic results of~\cite{DDK} are most interesting
for the ferromagnetic Ising model (with consistent fields), where there are known polynomial running time algorithms for estimating the pairwise covariances (see, e.g.,~\cite{JS:Ising,RW,GuoJ,CGHT}).

In Theorem~\ref{thm:hardness-Ising} we assume that $|\beta| d = \omega(\log n)$,
but
our main technical result (Theorem~\ref{thm:main}) is actually more general.
We show that when $|\beta| d \ge c \ln n$, where $c > 0$ is a sufficiently large constant,
if there is an identity testing algorithm with running time $T=T(n)$ and sample complexity $L=L(n)$
then there is also a randomized algorithm with running time $O(T + L n)$ for computing the maximum cut of any graph with $N = n^{\Theta(1)}$ vertices.
Theorem~\ref{thm:hardness-Ising} then follows immediately from the fact that 
either $T$ or $L$
ought to be super-polynomial in $n$, as otherwise we obtain a randomized algorithm
for the maximum cut problem
with
polynomial running time; this would imply that $RP = NP$.

Under a stronger (but also standard) computational theoretic assumption, namely that there is no randomized algorithm
with sub-exponential running time for the 3-SAT problem, i.e., the \emph{(randomized) exponential time hypothesis or $rETH$}~\cite{IP,CIKP}, our main theorem also implies a general lower bound for identity testing that holds for all $\beta$ and $d$ satisfying
$|\beta| d \ge c \ln n$.

\begin{thm}\label{thm:main:intro:eth}
	Suppose $n$, $d$ are positive integers  such that
	$3 \le d \le n^{\theta}$ for constant $\theta \in (0,1)$.
	Then, there exist constants $c=c(\theta) > 0$ and $\alpha=\alpha(\theta) \in (0,1)$
	such that when $|\beta| d \geq c \ln n$, $rETH$ implies that the running time $T(n)$ of any algorithm that solves
	the identity testing problem for the \emph{antiferromagnetic} Ising model in $\mathcal{M}(n,d)$
	satisfies  $T(n) \ge \min\left\{\exp(\Omega(n^{\alpha})),\frac{\exp(\Omega(|\beta| d))}{n}\right\}$.
\end{thm}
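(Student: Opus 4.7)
The proof is an essentially immediate consequence of the reduction in Theorem~\ref{thm:main} combined with the standard $rETH$-based lower bound for \MC. First I would recall that $rETH$ asserts no randomized $\exp(o(m))$-time algorithm for 3-SAT on $m$ variables; applying the sparsification lemma to reduce to instances with $O(m)$ clauses, and then chaining the classical linear-size reductions 3-SAT $\to$ Max-3-SAT $\to$ \MC\ (each with only a constant blow-up in the instance size), $rETH$ implies that computing \MC\ on an $N$-vertex graph requires randomized time $\exp(\Omega(N))$.

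Next I would invoke Theorem~\ref{thm:main}: any identity testing algorithm with running time $T(n)$ and sample complexity $L(n)$ for the antiferromagnetic Ising model on $\mathcal{M}(n,d)$ produces a randomized \MC\ algorithm on $N$-vertex graphs with running time $O(T+Ln)$. The important point---requiring inspection of the proof of Theorem~\ref{thm:main} rather than just its stated corollary---is that the range of reachable $N$ scales as $\Theta(\min(n^\alpha,\,|\beta|d))$ for some $\alpha=\alpha(\theta)\in(0,1)$: the exponent $\alpha$ records the polynomial blow-up inherent in the Sly-style random-graph gadget, while the factor $|\beta|d$ reflects the fact that each gadget must contain $\Theta(|\beta|d)$ vertices for the phase-coexistence/concentration argument underlying the gadget to be sharp. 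Choosing $N$ to be this maximum value and combining with the \MC\ lower bound yields
\[
T(n)+L(n)\cdot n \;\geq\; \exp\!\bigl(\Omega(\min(n^\alpha,\,|\beta|d))\bigr).
\]

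I would finish with a one-line case split. Any identity tester must at least read its samples, so $T(n)\geq L(n)$, which gives $T+Ln \leq 2\,T\,n$. Dividing through yields
\[
T(n) \;\geq\; \frac{1}{2n}\exp\!\bigl(\Omega(\min(n^\alpha,\,|\beta|d))\bigr) \;=\; \min\!\left\{\frac{\exp(\Omega(n^\alpha))}{n},\;\frac{\exp(\Omega(|\beta|d))}{n}\right\}.
\]
For the first branch the factor $1/n$ is absorbed into the $\Omega(\cdot)$ (since $\log n = o(n^\alpha)$), leaving exactly $\exp(\Omega(n^\alpha))$ and producing the stated min.

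The main obstacle is the middle step: extracting from the proof of Theorem~\ref{thm:main} the more refined scaling $N=\Theta(\min(n^\alpha,|\beta|d))$ rather than the bare qualitative assertion $N=n^{\Theta(1)}$. This is where the $|\beta|d$-dependent branch of the final min is produced, and it amounts to bookkeeping through the gadget-size parameter; all remaining ingredients ($rETH$, the sparsification lemma, and the linear-size 3-SAT-to-\MC\ reduction) are entirely standard.
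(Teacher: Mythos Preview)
Your proposal contains a genuine misunderstanding of where the $\exp(\Omega(|\beta|d))/n$ branch comes from. You claim that the reduction in Theorem~\ref{thm:main} can be mined to produce \MC\ instances of size $N=\Theta(\min(n^\alpha,|\beta|d))$, with the $|\beta|d$ factor arising because ``each gadget must contain $\Theta(|\beta|d)$ vertices.'' This is not what happens. In the proof of Theorem~\ref{thm:main} the gadget size $m$ and the \MC\ instance size $N$ are chosen purely as functions of $n$ and $d$ (e.g., $N=\lfloor n^{1/14}\rfloor-2$ when $d=O(1)$), with no dependence on $|\beta|d$ whatsoever. The role of $|\beta|d$ is different: the concentration bounds (Theorems~\ref{thm:gadget:fact-ld}, \ref{thm:gadget:fact-hd}) give $\mu_M(\Omega_{\rm good})\ge 1-2m(N+2)e^{-\delta|\beta|d}$, and this error has to be summed over $L$ samples, which forces the hypothesis $L\le \exp(|\beta|d/c)/(30n)$ in Theorem~\ref{thm:main}. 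You never address this hypothesis, so as written your invocation of Theorem~\ref{thm:main} is unjustified when $L$ is large.

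The paper's proof handles this by a case split on $L$, not on $N$. If $L>\exp(|\beta|d/c)/(30n)$, then $T\ge L$ already gives the $\exp(\Omega(|\beta|d))/n$ branch directly---no reduction is invoked at all in this case. If $L\le \exp(|\beta|d/c)/(30n)$, then the hypothesis of Theorem~\ref{thm:main} is satisfied, $N=\Theta(n^\alpha)$ (independent of $|\beta|d$), and $rETH$ gives $T+Ln\ge \exp(\Omega(n^\alpha))$; a second small case split on $L$ then yields the $\exp(\Omega(n^\alpha))$ branch. So the ``main obstacle'' you identify---extracting $N=\Theta(\min(n^\alpha,|\beta|d))$---is not only unnecessary but in fact not extractable, because that scaling does not appear in the construction.
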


We remark that the bound in this theorem is comparable to the $\exp(\Omega(|\beta|d))$ 
lower bound for the sample complexity of structure learning~\cite{SW}, 
albeit requiring that $rETH$ is true.

 The very high level idea of the proof of our main theorem for the Ising model  (Theorem~\ref{thm:main}),
 from which Theorems~\ref{thm:hardness-Ising} and~\ref{thm:main:intro:eth} are derived as corollaries, is as follows: given a graph $H$ and an integer $k$,
 we construct an identity testing instance $\Lambda$ so that the output of the identity testing algorithm on $\Lambda$
 can be used to determine whether there is a cut in $H$ of size at least $k$. 
 A crucial component in our construction is a ``degree reducing'' gadget, which consists of a random bipartite graph and is inspired by similar random gadgets in seminal works on the hardness of
 approximate counting~\cite{Sly}.
 One of the main technical challenges in the paper is to establish precise bounds on the \emph{edge expansion} of these random gadgets. 
 A detailed overview of our proof is given in Section~\ref{sub:proof-overview}.

\subsection{Lower bounds for proper $q$-colorings}

The \emph{proper $q$-colorings} of a graph $G=(V,E)$
constitute
a canonical hard-constraint spin system, with multiple applications in statistical physics and theoretical computer science. 
In this model,
the
vertices of graph $G$ are assigned spins (or colors) from $\{1,\dots,q\}$, and the Gibbs distribution $\mu_G$ 
becomes the uniform distribution over the proper $q$-colorings of the graph $G$.
The identity testing problem for this model in $\mathcal{M}(n,d)$ is defined as follows: given $q$, a graph $G\in \mathcal{M}(n,d)$ and sample access to random $q$-colorings of an unknown graph $G^*\in \mathcal{M}(n,d)$, distinguish with probability at least $3/4$ whether 
$\mu_G=\mu_{G^*}$ or $\norm{\mu_G-\mu_{G^*}}>1/3$. 

We establish lower bounds for this problem, thus initiating the study of identity testing in the context of hard-constraint spin systems.
While identity testing does not seem to have been studied in this context before,
the related structure learning problem has received some attention~\cite{BGS,BCSVsl}.
For proper colorings,
it is known that when $q\geq d+1$ the hidden graph $G$ can be learned from $\poly(n,d,q)$ samples, whereas when $q\leq d$ then the problem is non-identifiable, i.e., there are distinct
graphs with the same collection of $q$-colorings~\cite{BCSVsl}.  
Moreover, for $d\geq d_c(q) = q+\sqrt{q}+\Theta(1)$, or equivalently $q \le d - \sqrt{d} + \Theta(1)$, it was also established in~\cite{BCSVsl} that the easier \emph{equivalent structure learning problem}
(learning any graph with the same collection of $q$-colorings as the unknown graph)
is computationally hard in the sense that the sample complexity is exponential in $n$.
The threshold  $d_c(q)$ coincides exactly with the one for polynomial time/NP-completeness for the problem
of determining if $G$ is $q$-colorable~\cite{EHK, MR}; see~\eqref{eq:dc} for the precise definition of $d_c(q)$.

We prove here that the identity testing problem is also hard when $d\geq d_c(q)$, 
thus establishing another connection between the hardness of identity testing and structure learning.
For this we utilize the complexity of {\BIS},
which is the problem of counting the independent sets in a bipartite graph.
{\BIS} is believed not to have an FPRAS, and
it
has achieved considerable interest
in approximate counting
as a tool for proving relative complexity hardness; see, e.g.,~\cite{DGGJ,GJ,DGJ,BDGJM,CDGJLMR,CGGGJSV,GGJ}.

\begin{thm}
	\label{thm:colorings-hardness}
		Suppose $n$, $d$ and $q$  are positive integers  such that
	$q\geq 3$ and $d\geq d_c(q)$. If {\BIS} does not admit an FPRAS, then there is no polynomial running time algorithm that solves the identity testing problem for proper $q$-colorings in $\mathcal M(n,d)$.
\end{thm}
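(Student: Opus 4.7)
The plan is to prove Theorem~\ref{thm:colorings-hardness} by a polynomial-time reduction from a {\BIS}-hard problem---most naturally from the counting problem {\BIPthreeCOLOR} of proper 3-colorings of bipartite graphs, or directly from {\BIS}---to the identity testing problem for proper $q$-colorings. Given an instance of the source problem, the reduction would produce two graphs $G$ and $G^*$ in $\mathcal{M}(n,d)$ whose uniform $q$-coloring distributions either coincide exactly or differ by more than $1/3$ in total variation, with the dichotomy capturing the answer to the source instance. Combined with the promised polynomial-time identity tester, this then yields an FPRAS for {\BIS}, contradicting the hypothesis.

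For the construction I would build on the coloring-preserving gadgets of~\cite{BCSVsl} that underlie their equivalent-structure-learning hardness at $d \ge d_c(q)$: they exhibit pairs of non-isomorphic local substructures (arising from the structure of ``tight'' $q$-colorable configurations near the Erd\H{o}s--Hajnal--Koml\'os/Molloy--Reed threshold) that share the same set of $q$-colorings on their boundary ports. Plugging either twin into the rest of a graph yields the same coloring distribution, so no identity test can separate two graphs that differ only by swapping a single twin pair. The idea is to instrument a source instance $H$ by attaching such gadgets (or a controlled number of them) at carefully chosen ports, so that depending on a threshold property of the number of independent sets of $H$ (or proper 3-colorings, if we reduce from {\BIPthreeCOLOR}), the $q$-coloring marginals on the instrumented region of $G$ and $G^*$ either still factorize identically through both twins, or break symmetry in a way that shifts a boundary marginal by a constant amount.

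To amplify any constant-size marginal discrepancy into a total variation gap of at least $1/3$, I would place many independent disjoint copies of the instrumented gadget inside a single graph of $\mathcal{M}(n,d)$; standard tensorization then drives the TV distance between the two resulting coloring distributions arbitrarily close to $1$. Since $d \ge d_c(q)$ is fixed and the combined graph has size polynomial in that of $H$, the reduction is polynomial-time, and an identity tester running in polynomial time would decide the threshold question on $H$ in polynomial time; standard self-reducibility arguments for {\BIS}-hard counting problems then promote this decision oracle to an FPRAS.

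The main obstacle is establishing the sharp dichotomy needed for the reduction: in the ``no'' case of the source instance, proving that the boundary marginals on the ports of the two twin gadgets, under the uniform $q$-coloring measure, differ by a constant. This is a non-uniqueness-type statement for the uniform coloring Gibbs measure on the gadget in the regime $d \ge d_c(q)$ and is parallel in spirit to the structural analyses in~\cite{BCSVsl}; the challenge is to convert their qualitative separation between coloring sets into a quantitative marginal gap. Once this gap is in hand, the tensorization amplification and the polynomial-time implementation of the reduction are routine, and Theorem~\ref{thm:colorings-hardness} follows.
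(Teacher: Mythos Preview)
Your high-level strategy---reduce from {\BIPthreeCOLOR}, use the tester as a threshold oracle for $Z_3(H)$, then upgrade to an FPRAS---is the same as the paper's. But the concrete mechanism you propose has a genuine gap, and it differs from the paper's in a way that matters.

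First, and most importantly: in the identity testing problem, you must supply the tester with the visible graph $G$ \emph{and} with i.i.d.\ samples from $\mu_{G^*}$. Your proposal never explains how to sample from $\mu_{G^*}$; if sampling from $\mu_{G^*}$ were easy for arbitrary $G^*\in\mathcal{M}(n,d)$, the testing problem itself would be easy. The paper's key move is to choose $G^*$ so that sampling is trivial: it takes $G^* = \hat{B}_{k,\ell}^\Gamma$, built from the \emph{complete} bipartite graph $B$ with the same bipartition as $H$, for which one can write down $Z_3(B)$ and sample $\mu_{G^*}$ in $O(n)$ time (Lemma~\ref{lem:sampling-coloring}). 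The visible graph $G=\hat{H}_{k,\ell}^\Gamma$ is built from $H$ itself. This asymmetry---unknown $H$ for $G$, easy $B$ for $G^*$---is what makes the reduction work.

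Second, the ``twin gadgets'' from \cite{BCSVsl} are the wrong tool. Those gadgets are designed to create \emph{non-identifiability} (two structures with identical coloring sets), which is exactly the situation where a tester \emph{cannot} distinguish; you need the opposite. The paper's mechanism is instead a two-regime construction: to $k$ copies of $H$ it adjoins a complete $(q-3)$-partite graph $J$ with clusters of size $\ell$, so that every $q$-coloring of $\hat{H}_{k,\ell}$ uses either $q-3$ or $q-2$ colors on $J$. The counts are $|\Omega^a|=\Theta(Z_3(H)^k)$ and $|\Omega^b|=\Theta(2^{\ell+k})$ respectively (Lemma~\ref{lem:H-hat-coloring}), so by tuning $\ell$ one controls which regime dominates. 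Because $|\Omega^b|$ is the same for $H$ and for $B$, while $|\Omega^a|$ differs, one gets a clean dichotomy (Lemma~\ref{lem:G-and-Gstar-coloring}): $\TV{\mu_G}{\mu_{G^*}}$ is small when $Z_3(H)<\psi(k,\ell)$ and at least $2/5-o(1)$ when $Z_3(H)\ge\psi(k,\ell)$. Binary search on $\ell$ then pins down $Z_3(H)$ to within $(1\pm\eps)$. The degree-reducing gadget $G(m,q,t)$ (Section~\ref{subsec:col:gadget}) serves only to bring the construction into $\mathcal{M}(n,d)$ when $d\ge d_c(q)$; it does not carry the dichotomy. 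Your proposal conflates the role of this gadget with the source of the marginal gap, and the ``main obstacle'' you flag---producing a constant marginal gap out of the twin construction---does not have a clear resolution along the lines you suggest.
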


In the proof of this theorem we reduce the {\BIS}-hard problem of counting 3-colorings in bipartite graphs to identity testing for $q$-colorings. 
The high level idea of our proof is as follows: given a bipartite graph $H$ 
and an approximation $\hat Z$ for the number of $3$-colorings $Z_3(H)$ of $H$,
we construct an identity testing instance that depends on both $H$ and the value of $\hat Z$.
We then show how to use an identity testing algorithm on this instance to check whether $\hat Z$ is an upper or lower bound for $Z_3(H)$. By adjusting $\hat Z$ and repeating this process we converge to a good approximation for $Z_3(H)$.
A crucial element in our construction is again the design of a degree reducing gadget; in this case, our gadget is inspired by similar constructions in \cite{EHK,MR,BCSVsl} for establishing the computational hardness of the decision and (equivalent) structure learning problems for $d\geq d_c(q)$.
Finally, we mention that for $3$-colorings, $d_c(3)=4$ and thus our hardness result holds for all graphs
with maximum degree at least $4$.

\subsection{An algorithm for the ferromagnetic Ising model}

We provide an improved algorithm for the \emph{ferromagnetic} Ising model.
As mentioned, by combining the algorithm in~\cite{DDK} with previous results for sampling (see~\cite{JS:Ising,RW,GuoJ,CGHT}),
one obtains
a polynomial running time algorithm for identity testing in the ferromagnetic setting.
The algorithm
in \cite{DDK} works for symmetric-KL divergence which is a stronger notion of distance.
We show that if one considers instead total variation distance,
then there is a polynomial running time algorithm that solves the identity testing problem
with sample complexity~$\tilde{O}(n^2d^2 \varepsilon^{-2})$.
This is an improvement over the~$\tilde{O}(n^2d^2\beta^2\varepsilon^{-2})$ bound in~\cite{DDK}, as there is no dependence on the inverse temperature $\beta$.
See
Theorem~\ref{thm:alg-main} in Section~\ref{sec:alg} for a precise statement of this result.

\bigskip
The rest of the paper is organized as follows. In Section~\ref{sec:main-result} we state our main technical theorem (Theorem~\ref{thm:main}), and we derive Theorems~\ref{thm:hardness-Ising} and~\ref{thm:main:intro:eth} as corollaries.
In Section~\ref{sub:proof-overview}, we sketch the key ideas in the proof our main result.
 The actual proof of Theorem~\ref{thm:main} is fleshed out in Section~\ref{section:main-proof}.
 Before that, we introduce a useful variant of the maximum cut problem and study its complexity in Section~\ref{sec:two-large-cuts}.
 In Section~\ref{subsection:key-gadget-fact} we provide bounds for the edge expansion of random bipartite graphs which are crucially used in our proofs and could be of independent interest.
Our algorithm for the ferromagnetic Ising model is analyzed in Section~\ref{sec:alg}, and our results for proper $q$-colorings (Theorem~\ref{thm:colorings-hardness}) are derived in Section~\ref{sec:BIS-hardness};
specifically, the reduction for the $q\geq 4$ case is presented in Section~\ref{subsec:proof:col:main:q>=4}, and the more elaborate construction for $q=3$ is given in Section~\ref{subsec:proof:col:main:q=3}.

\section{Lower bounds for the Ising model}
\label{sec:main-result}

To establish our lower bounds in Theorems~\ref{thm:hardness-Ising} and~\ref{thm:main:intro:eth}
we use the computational hardness of the maximum cut (\MC) problem.
Recall that in the search variant of this problem,
given a graph $H$ and an integer $k>0$,
the goal
is to determine
whether there is a cut of size at least $k$ in $H$.
Our main technical result, from which Theorems~\ref{thm:hardness-Ising} and~\ref{thm:main:intro:eth} are derived, is the following.

\begin{thm}\label{thm:main}
	Suppose
	$n$ and $d$ are positive integers  such
	$3 \le d \le n^{1- \rho}$ for some constant $\rho \in (0,1)$.
	Then, for all $n$ sufficiently large,
	there exist $c = c(\rho)>0$
	and an integer $N = \Theta (n^{\min\{\frac \rho 4,\frac{1}{14}\}})$
	such that when $|\beta| d \geq c \ln n$,
	any identity testing algorithm for
	$\mathcal{M}(n,d)$ for the \emph{antiferromagnetic} Ising model with running time $T(n)$ and sample complexity
	$
	L(n) \le \frac{\exp\left(|\beta| d/c\right)}{30n}
	$
	provides a randomized algorithm
	for $\MC$
	on any graph with $N$ vertices.
	This algorithm outputs the correct answer with probability at least $11/20$ and has running time $O(T(n) + n \cdot L(n))$.
\end{thm}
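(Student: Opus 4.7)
The overall strategy is a polynomial-time reduction from the NP-hard \MC{} decision problem to identity testing for the antiferromagnetic Ising model. Given an arbitrary graph $H$ on $N$ vertices, I would proceed in three stages: first reduce \MC{} to the symmetry-broken variant \UMC{} studied in Section~\ref{sec:two-large-cuts}, since the Ising model cannot by itself distinguish a cut from its complement; second, construct a low-degree Ising instance $(G,\beta)$ in $\mathcal M(n,d)$ from the \UMC{} instance $H$; and third, efficiently produce samples from an appropriate reference distribution $\nu$ whose closeness to (or distance from) $\mu_{G,\beta}$ decides \UMC{}, and hence \MC{}.

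The construction of $G$ would follow the Sly-style gadget paradigm. Each vertex $v\in V(H)$ is replaced by a random bipartite ``gadget'' $B_v$ on roughly $2d$ vertices, and each edge $uv\in E(H)$ is realized by a short matching between $B_u$ and $B_v$. With the right gadget density, $G$ has $n = \Theta(Nd)$ vertices, maximum degree exactly $d$, and $N = \Theta(n^{\min\{\rho/4,\, 1/14\}})$ as in the statement. The key structural lemma, proved using the edge-expansion bounds of Section~\ref{subsection:key-gadget-fact}, is that once $|\beta|d \ge c\ln n$ the Gibbs measure $\mu_{G,\beta}$ concentrates (up to polynomially small mass in $n$) on configurations in which every gadget $B_v$ is in one of two bipartite ``phases'' (one side mostly $+$ and the other mostly $-$, or vice versa). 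Consequently the projection of $\mu_{G,\beta}$ onto the phase-label vector $\tau \in \{+,-\}^{V(H)}$ is well approximated by a very low-temperature antiferromagnetic Ising model on $H$ itself, and is therefore concentrated on the maximum cuts of $H$.

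Given this picture, the distinguisher is straightforward. Using \UMC{}-promise information (such as a candidate partition provided as part of the \UMC{} input), one encodes a target phase vector $\tau^\star\in\{+,-\}^{V(H)}$ and lets $\nu$ be the distribution that draws $\tau^\star$ or $-\tau^\star$ uniformly and then independently fills every gadget from its phase-conditional marginal. Each such sample can be produced in $O(n)$ time, matching the $O(T(n) + nL(n))$ budget. In one side of the \UMC{} promise the Gibbs distribution $\mu_{G,\beta}$ is itself concentrated on $\{\tau^\star,-\tau^\star\}$, so $\mu_{G,\beta}$ and $\nu$ are close in total variation; in the other side $\mu_{G,\beta}$ places macroscopic mass on a phase vector disjoint from $\{\tau^\star,-\tau^\star\}$, pushing $\|\mu_{G,\beta}-\nu\|_{\textsc{tv}} > 1/3$. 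Feeding $(G,\beta)$ and samples from $\nu$ into the identity tester therefore decides \UMC{}, with a constant number of amplification rounds boosting success to $11/20$.

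The principal obstacle is the phase-concentration claim. The Peierls-style argument I envision pays $\exp(-\Omega(|\beta|d))$ per vertex whose spin disagrees with its gadget's phase---this is where the edge expansion of the random bipartite gadget enters, since each additional deviating vertex creates $\Omega(d)$ monochromatic edges---and must absorb a union bound over the $O(n)$ possible deviating vertices together with the $2^N$ choices of phase vector. Balancing these exponentials is precisely what forces the hypothesis $|\beta|d \ge c\ln n$, and the interplay with the sample budget $L(n) \le \exp(|\beta|d/c)/(30n)$ is what ties together the constant $c = c(\rho)$ and the polynomial relation between $N$ and $n$ appearing in the statement; getting all three of these parameters mutually consistent is the delicate bookkeeping at the heart of the proof.
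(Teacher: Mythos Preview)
Your high-level strategy matches the paper's, but the gadget construction contains a genuine error that breaks the parameter arithmetic. You replace each vertex by a random bipartite gadget ``on roughly $2d$ vertices,'' which would indeed give $n=\Theta(Nd)$. But the gadgets are attached to the vertices of $\ol{H}_w$ (not of $H$), and in $\ol{H}_w$ the two special vertices $s$ and $t$ each have degree $N^2+w=\Theta(N^2)$; moreover every edge of $\ol{H}_w$ is realized by $2\ell$ inter-gadget edges. A gadget on $2d$ vertices has at most $O(d^2)$ outgoing port-edges, nowhere near the $\Theta(\ell N^2)$ required for the $s$- and $t$-gadgets (this is exactly condition~\eqref{eq:cons-cond-3} in the paper). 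The paper therefore takes each gadget to have $2m$ vertices with $m$ huge---$m=\Theta(n^{13/14})$ when $d=O(1)$ and $m=\Theta(n^{1-\rho/4})$ otherwise---and only a small fraction of these are ports. This is precisely what forces $N=\Theta(n^{\min\{\rho/4,1/14\}})$ rather than the $N=\Theta(n/d)$ your counting would give, and it is why two separate parameter regimes (and two expansion theorems, \ref{thm:gadget:fact-ld} and \ref{thm:gadget:fact-hd}) are needed.

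There is a second gap in your description of the reference distribution $\nu$. The identity tester is only promised to behave correctly when its samples come from some Ising model $\mu_{G^*,\beta^*}$ with $G^*\in\mathcal{M}(n,d)$; it has no guarantees on an arbitrary $\nu$. Your $\nu$---pick $\pm\tau^\star$ uniformly and then fill each gadget from its ``phase-conditional marginal''---is not an Ising distribution, and sampling those conditional marginals is itself nontrivial. The paper handles this by constructing a second bona fide Ising model $M^*=(\ol{I}_0^\Gamma,\beta)$ (the same gadget graph built over an edgeless $H$ with $w=0$) whose Gibbs measure is provably within $O(n\,e^{-\delta|\beta|d})$ of the trivial distribution uniform over the two fully-ordered configurations $\{\sigma^+,\sigma^-\}$; this is Lemma~\ref{lemma:main-reduction-sampling}. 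The tester is then fed deterministic samples $\sigma^\pm$, and a coupling to $\mu_{M^*}^{\otimes L}$ (which is where the bound $L\le \exp(|\beta|d/c)/(30n)$ is used) puts the input back inside the tester's promise. Without this step---or an equivalent argument that your $\nu$ is close to some genuine Ising model in $\mathcal{M}(n,d)$---the tester's output is meaningless.
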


In words, this theorem says that under some mild assumptions, when $|\beta| d \ge c \ln n$,
any identity testing algorithm
with running time $T(n)$ and sample complexity $L(n)$
provides a randomized algorithm for $\MC$~on graphs of $\poly(n)$ size with running time $O(T(n)+n\cdot L(n))$.
The main ideas in the proof of this theorem are described next in Section~\ref{sub:proof-overview}; its actual proof is fleshed out in Section~\ref{subsection:key-gadget-fact}.
Several important consequences of this result, including Theorems~\ref{thm:hardness-Ising} and~\ref{thm:main:intro:eth} from the introduction,
are derived in Section~\ref{subsection:cor:proofs}.


\subsection{Main result for the Ising model: proof overview}
\label{sub:proof-overview}

To establish Theorem~\ref{thm:main} we construct a class $\mathcal N$ of $n$-vertex graphs of maximum degree at most $d$
and show how an algorithm that solves identity testing for $\mathcal N \subset \mathcal M(n,d)$
can be used to solve the $\MC$ problem on graphs with $N = \Theta(n^\alpha)$ vertices, where $\alpha \in (0,1)$ is a constant.
(The exact value for $\alpha$ depends on $d$: if $d=O(1)$, then we can take $\alpha = 1/14$; otherwise, we set $\alpha = \rho/4$.)

Suppose we want to solve the $\MC$ problem for a graph $H = (V,E)$ and $k\in \N$.
For this, we
add two vertices $s$ and $t$ to $H$ and connect both $s$ and $t$ to every vertex in $V$ with $N=|V|$ edges (adding a total of $2N^2$ edges); we also add $w$ edges between $s$ and $t$.
Let $\ol{H}_w$ be the resulting multigraph. (In our proofs we will convert $\ol{H}_w$ into a simple graph, but it is conceptually simpler
to consider the multigraph for now.)
The cut $(\{s,t\},V)$ in $\ol{H}_w$ is of size $2N^2$.

We consider a variant of the $\MC$ problem which we call the $\UMC$ problem.
In this problem, given the graph $H$ and $w\in\N$,
	the goal is to
	determine whether there are at least two cuts in $\ol{H}_w$ of size at least $2N^2$ (see Definition~\ref{def:tlc}).
$\MC$ can be reduced to $\UMC$ by
treating $w$, the number of edges between $s$ and $t$, as a parameter.
Specifically,
if $(S,V\setminus S)$ is a cut of size $k$ in the original graph $H$,  then $(S\cup\{s\},(V\setminus S)\cup\{t\})$
is a cut of size
$$
w+k + N|S| + N|V\setminus S| = w+k + N^2
$$
in $\ol{H}_w$. Hence, $(\{s,t\},V)$ is the unique large cut (i.e., cut of size $\geq 2N^2$) if and only if 
$$w+\MC(H)+ N^2 < 2N^2,$$ 
where $\MC(H)$ denotes the size of the maximum cut of $H$.
Therefore, to solve $\MC$ for $H$ and $k$,
it is sufficient to solve the $\UMC$ problem for $\ol{H}_w$ with $w = N^2-k$; see Section~\ref{sec:two-large-cuts} for the proof of this fact.
This yields that the $\UMC$ problem is NP-complete and the following useful lemma.

\begin{lemma}
	\label{lemma:unique-mc:alg}
	Let $H=(V,E)$ be an $N$-vertex  graph and let $\delta \in (0,1/2]$.
	Suppose there exists a randomized algorithm
	that solves the $\UMC$ problem on inputs $H$ and $w \le N^2$
	with probability at least $1/2 + \delta$ and running time $R$.
	Then, there exists a randomized algorithm to solve $\MC$ for $H$ and $k \in \N$
	with running time $R+O(N^2)$ 
	and success probability at least $1/2 + \delta$. 
\end{lemma}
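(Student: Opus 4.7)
The plan is to execute the reduction described in the paragraph preceding the lemma. Given an input $(H,k)$ to $\MC$ with $|V(H)| = N$, if $k > \binom{N}{2}$ I return ``no'' trivially (since $\MC(H) \le |E(H)| \le \binom{N}{2}$); otherwise I set $w := N^2 - k \in [0, N^2]$, call the assumed $\UMC$ algorithm on input $(H, w)$, and return its answer verbatim. Forming the representation of $\ol{H}_w$ requires $O(N^2)$ time, and the $\UMC$ call runs in time $R$, so the overall running time is $R + O(N^2)$; the success probability $1/2 + \delta$ is inherited directly from the $\UMC$ algorithm, since the reduction itself is deterministic.

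The only thing requiring proof is the correctness equivalence: with $w = N^2 - k$, the multigraph $\ol{H}_w$ has at least two cuts of size $\ge 2N^2$ if and only if $\MC(H) \ge k$. I would verify this by exhaustively analyzing an arbitrary cut $(A,B)$ of $\ol{H}_w$ according to the placement of $s$ and $t$. Set $S := A \cap V$ and let $k_S$ denote the number of $H$-edges between $S$ and $V \setminus S$, so that $k_S \le |S|(N-|S|)$. The three cases are: (i) both of $s,t$ lie in $A$, giving cut size $k_S + 2N(N-|S|) \le (N-|S|)(|S|+2N) \le 2N^2$, with equality only if $|S| = 0$, i.e., the cut $(\{s,t\}, V)$; (ii) both of $s,t$ lie in $B$, giving cut size $k_S + 2N|S| \le |S|(3N-|S|) \le 2N^2$, with equality only if $|S| = N$, yielding the same cut; and (iii) $s$ and $t$ lie on opposite sides, giving cut size $k_S + N(N-|S|) + N|S| + w = k_S + N^2 + w = k_S + 2N^2 - k$, which is $\ge 2N^2$ exactly when $k_S \ge k$.

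Combining these, $(\{s,t\},V)$ is always a cut of size exactly $2N^2$, and every further cut of size $\ge 2N^2$ must arise from case (iii) and correspond to a bipartition $(S, V\setminus S)$ of $H$ with $k_S \ge k$; conversely, any such bipartition of $H$ produces (at least two) additional distinct cuts of $\ol{H}_w$ of size $\ge 2N^2$ by placing $S \cup \{s\}$ or $S \cup \{t\}$ on one side. Hence a ``second'' large cut exists in $\ol{H}_w$ if and only if $\MC(H) \ge k$, which is exactly what the $\UMC$ algorithm decides (with probability $\ge 1/2 + \delta$).

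The main obstacle---and really the only non-mechanical step---is ruling out accidental same-side large cuts in cases (i) and (ii); this reduces to the quadratic inequality $(|S|-N)(|S|-2N) \ge 0$ on $[0,N]$ and crucially relies on the choice of exactly $N$ parallel edges from each of $s, t$ to each vertex of $V$: with fewer edges, same-side cuts could dominate, and with more edges the opposite-side threshold would never be reachable for reasonable $k$. Everything else---constructing $\ol{H}_w$, counting crossing edges, and inheriting the probabilistic guarantee---is direct bookkeeping.
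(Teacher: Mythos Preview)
Your proposal is correct and follows essentially the same approach as the paper: the paper factors the combinatorial core into a separate Lemma~\ref{lemma:key-umc-fact} (showing $(\{s,t\},V)$ is the unique maximum cut of $\ol{H}_w$ iff $\MC(H) < N^2 - w$) via the same two-case analysis on whether $s,t$ lie on the same or opposite sides, and then observes that Lemma~\ref{lemma:unique-mc:alg} is an immediate corollary with $w = N^2 - k$. Your three-case version and the explicit handling of $k > \binom{N}{2}$ are cosmetic differences only.
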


To determine if $((s,t),V)$ is the maximum cut of $\ol{H}_w$ we can use the antiferromagnetic Ising model on $\ol{H}_w$ as follows.
Every Ising configuration of $\ol{H}_w$
determines a cut:
all the ``$+$'' vertices belong to one side of the cut and the ``$-$'' vertices to the other (or vice versa).
Observe that for every cut of $\ol{H}_w$ there are exactly two corresponding Ising configurations.
The intuition is that the maximum cut of $\ol{H}_w$ corresponds to the two configurations of maximum likelihood in the Gibbs distribution.
Indeed,
when $|\beta|$ is sufficiently large, the distribution
will be well-concentrated on the two configurations that correspond to the maximum cut.
Therefore, a sample from the Gibbs distribution would reveal the maximum cut of $\ol{H}_{w}$ with high probability.

To simulate large magnitudes of $\beta$,
we strengthen the interactions between neighboring vertices of $\ol{H}_w$ by replacing every edge by $2\ell$ edges.
However, sampling from the antiferromagnetic Ising distribution on the resulting multigraph $\ol{H}_{w,\ell}$ is a hard problem,
and we would need to provide a sampling procedure.
Instead, we use the identity testing algorithm as follows.
We construct a simpler Ising model $M^*$ with two key properties: (i) we can easily generate samples from $M ^*$ and (ii) $M^*$ is close in total variation distance to
the Ising model $M = (\ol{H}_{w,\ell},\beta)$ if and only if $(\{s,t\},V)$ is the unique large cut of $\ol{H}_w$. Then, we give $\ol{H}_{w,\ell}$, the parameter $\beta$ and samples from $M^*$ as input to the tester. If the tester outputs \textsc{Yes}, it means that it regarded the samples from $M^*$ as samples from $M$ and so $(\{s,t\},V)$ must be the unique large cut of~$\ol{H}_w$. Conversely, if the tester outputs \textsc{No}, then the total variation distance between $\mu_{M}$ and $\mu_{M^*}$ must be large, in which case $(\{s,t\},V)$ is not the unique large cut of $\ol{H}_w$.

In summary, this argument implies that an identity testing algorithm for $n$-vertex multigraphs gives a polynomial time randomized algorithm for $\MC$ on graphs with $n-2$ vertices.
However,
the maximum degree of $\ol{H}_{w,\ell}$ depends on $\ell$, $N$ and $w$ and could be much larger than $d$. 
Hence,
this argument does not apply for small values of $d$, even if we overlook the fact that we would be using identity testers for multigraphs instead of graphs.
To extend the argument to \emph{simple} graphs in $\mathcal M (n,d)$ for all $3 \le d \le n^{1-\rho}$,
we introduce a ``degree reducing'' gadget, which is reminiscent of gadgets used
in works concerning the hardness of approximate counting~\cite{Sly,SlySun}.

Every vertex of $\ol{H}_{w,\ell}$ is replaced by a random bipartite graph $G = (L \cup R,E_G)$; see Section~\ref{section:main-proof} for the precise random graph model. The graph $G$ has maximum degree at most $d$, and some of its vertices, which we call \emph{ports}, will have degree strictly less than $d$, so that they can be used for connecting the gadgets as indicated by the edges of $\ol{H}_{w,\ell}$.
The resulting {simple} graph, which we denote by $\ol{H}_{w}^\Gamma$, will have maximum degree $d$. ($\Gamma$ 
is the set of parameters of our random graph model;
see Section~\ref{section:main-proof} for the details.)
In similar manner as described above for $\ol{H}_{w,\ell}$,
an identity testing algorithm for
the antiferromagnetic Ising model on $\ol{H}_{w}^\Gamma$ 
can be used to determine whether $(\{s,t\},V)$ is the unique large cut of $\ol{H}_{w}$. 
Since $\ol{H}_{w}^\Gamma$ has maximum degree at most $d$, Theorem~\ref{thm:main} follows.

Finally, we mention that the main technical challenge in our approach is to establish that in every gadget, with high probability, 
either every vertex of $L$ is assigned ``$+$''  and  every vertex of $R$ is assigned ``$-$'' or vice versa; see Theorems~\ref{thm:gadget:fact-ld} and~\ref{thm:gadget:fact-hd}.
To show this, we require very precise bounds on the \emph{edge expansion} of the random bipartite graph $G$.
When $d \rightarrow \infty$, these bounds  can be derived in a fairly straightforward manner from the results in~\cite{BDH}.
However, the case of $d=O(1)$ is more difficult, and it requires for us to define the notion of edge expansion with respect to the ports of the gadget and extending some of the ideas in~\cite{HLW} (see Theorem~\ref{thm:expansion-port}).
Our bounds for the edge expansion of random bipartite graphs may be of independent interest; see Section~\ref{subsection:key-gadget-fact}.

\subsection{Consequences of main result for the Ising model: proofs of Theorems~\ref{thm:hardness-Ising} and~\ref{thm:main:intro:eth}}
\label{subsection:cor:proofs}

In this section we show how to derive
Theorems~\ref{thm:hardness-Ising} and~\ref{thm:main:intro:eth} from
Theorem~\ref{thm:main}.
For Theorem~\ref{thm:hardness-Ising}
we also use the fact that there is no randomized algorithm for $\MC$ with polynomial running time unless $RP=NP$.
(We recall that
$RP$ is the class of problems that can be solved in polynomial time by a randomized algorithm.)
For Theorem~\ref{thm:main:intro:eth} we use a stronger assumption, namely the (randomized) exponential time hypothesis (or $rETH$) \cite{IP,CIKP}.

\begin{proof}[Proof of Theorem~\ref{thm:hardness-Ising}]
	Suppose there is an identity testing algorithm for $\mathcal{M}(n,d)$ with
	$\poly(n)$ running time and sample complexity; that is, $L \le T = \poly(n)$.
	Since $|\beta|d = \omega(\ln n)$,
	$$L\le\frac{\exp\left(|\beta| d/c\right)}{30n}.$$
	Hence, Theorem~\ref{thm:main} implies there is a randomized algorithm for \MC~on graphs of size
	$N = \Theta(n^{\min\{\frac \rho 4,\frac{1}{14}\}})$
	that succeeds with probability at least $11/20$ and has running time $O(T+Ln) = \poly(n)$.
	This implies that \MC~is in $BPP$.
	($BPP$ is the class of all decision problems solvable in polynomial time with
	success probability greater than $1/2$ on both ``yes'' and ``no'' instances; in contrast, $RP$ only allows errors on ``no'' instances.)
	Since \MC~is $NP$-complete, then $NP \subseteq BPP$, and
	the result follows from the standard fact that if $NP \subseteq BPP$, then $RP = NP$; see, e.g.,~\cite{Ko}.
\end{proof}


\begin{proof}[Proof of Theorem~\ref{thm:main:intro:eth}]
	Suppose there exists an identity testing algorithm
	with running time $T$ and sample complexity $L$.
	If $L>\frac{\exp\left(|\beta| d/c\right)}{30n}$, then
	$$
	T \ge L > \frac{\exp\left(|\beta| d/c\right)}{30n}
	$$
	and the result follows.
	Otherwise, when $|\beta| d \ge c \ln n$ for a suitable constant $c = c(\rho) > 0$,
	Theorem~\ref{thm:main} implies that there exists a randomized algorithm for \MC~on graphs with $N = \Theta(n^{\min\{\frac \rho 4,\frac{1}{14}\}})$ vertices
	with running time at most $O(T+Ln)$ and success probability at least $11/20$. However,
	under the assumption that $rETH$ is true, there is no randomized algorithm for \MC~ in such graphs with running time
	${\e}^{o(n^{\alpha})}$, where $\alpha = {\min\{\frac \rho 4,\frac{1}{14}\}}$. Thus, there exist constants $\delta,\gamma > 0$ such that
	$$
	\delta(T+Ln) \ge {\e}^{\gamma n^{\alpha}}.
	$$
	Consequently, if $L \le \frac{{\e}^{\gamma n^{\alpha}}}{2\delta n} $, then $T \ge \frac{{\e}^{\gamma n^{\alpha}}}{2\delta} $; otherwise $T \ge L \ge  \frac{{\e}^{\gamma n^{\alpha}}}{2\delta n}$.
	Putting these bounds together we get
	\[
	T \ge \min\left\{\frac{\exp(|\beta| d/c)}{30n},
	\frac{\exp(\gamma n^{\alpha})}{2\delta n}
	\right\},
	\]
	and the result follows.
\end{proof}


Finally, we note that
Theorem~\ref{thm:main} also implies a polynomial (in $n$) lower bound for the running time of any identity testing algorithm
when $|\beta| d = \Theta(\log n)$.
This regime is not covered by Theorem~\ref{thm:hardness-Ising}, where the assumption is that $|\beta| d = \omega(\log n)$,
and Theorem~\ref{thm:main:intro:eth}
applies to this setting, but under the stronger $rETH$ assumption.
Our next theorem shows that
the weaker complexity theoretic assumption $RP \neq NP$ suffices.

\begin{thm}
	\label{thm:main:poly}
	Suppose
	$n$, $d$ are positive integers  such
	$3 \le d \le n^{1- \rho}$ for some constant $\rho \in (0,1)$
	and $\beta < 0$ is such that $|\beta| d >  c \ln n$, where $c = c(\rho)$ is the constant from Theorem \ref{thm:main}.
	If $RP \neq NP$,
	then, for all $n$ sufficiently large,
	any algorithm that solves the identity testing problem for $\mathcal{M}(n,d)$
	for the antiferromagnetic Ising model
	has running time $T= \Omega(n^{\Delta})$, where $\Delta = \frac{|\beta|d}{c\ln n} -1 $.
\end{thm}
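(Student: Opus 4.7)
The plan is to derive Theorem~\ref{thm:main:poly} as a direct quantitative corollary of Theorem~\ref{thm:main}, based on the observation that the sample-complexity threshold in the hypothesis of Theorem~\ref{thm:main} coincides exactly with $n^\Delta/30$. More precisely, since $\Delta = |\beta|d/(c \ln n) - 1$, one has
\[
\frac{\exp(|\beta| d/c)}{30\,n} \;=\; \frac{n^{|\beta| d / (c \ln n)}}{30\,n} \;=\; \frac{n^{\Delta}}{30}.
\]

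I would argue by contradiction. Suppose there is an identity-testing algorithm for the antiferromagnetic Ising model on $\mathcal{M}(n,d)$ whose running time satisfies $T(n) \le n^\Delta / 30$ for all sufficiently large $n$. Its sample complexity $L(n)$ trivially satisfies $L(n) \le T(n)$, since every sample must at least be read. By the displayed identity above, the hypothesis $L(n) \le \exp(|\beta| d /c)/(30 n)$ of Theorem~\ref{thm:main} holds. Invoking Theorem~\ref{thm:main}, I obtain a randomized algorithm that solves $\MC$ on any $N$-vertex graph, where $N = \Theta(n^{\alpha})$ with $\alpha = \min\{\rho/4,\,1/14\}$, with success probability at least $11/20$ and running time $O(T(n) + n\,L(n)) = O(n\,T(n)) = O(n^{\Delta+1})$.

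Since $n = \Theta(N^{1/\alpha})$, this is an $\MC$ algorithm running in time $O(N^{(\Delta+1)/\alpha})$. In the regime of interest---when $\Delta$ is a constant, i.e., $|\beta| d = \Theta(\ln n)$---the exponent $(\Delta+1)/\alpha$ is also a constant, yielding a polynomial-time two-sided-error randomized algorithm for $\MC$. Standard amplification boosts the success probability to $3/4$ with a constant-factor overhead, placing $\MC \in BPP$. Since $\MC$ is $NP$-complete, this forces $NP \subseteq BPP$, and the classical Ko-style argument invoked in the proof of Theorem~\ref{thm:hardness-Ising} then gives $RP = NP$, contradicting the hypothesis $RP \ne NP$. (When $|\beta|d = \omega(\ln n)$, $\Delta \to \infty$ and the bound $\Omega(n^\Delta)$ is super-polynomial; this regime is covered by Theorem~\ref{thm:hardness-Ising}, which likewise rules out any polynomial-time identity tester under $RP \ne NP$.)

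There is no substantive technical obstacle: the proof is a careful bookkeeping exercise on top of Theorem~\ref{thm:main}. The one point that deserves attention is the interplay between the constant $c = c(\rho)$ appearing in the sample-complexity hypothesis of Theorem~\ref{thm:main} and the same constant $c$ in the definition of $\Delta$; they must be taken to be identical in order for the lower bound $n^\Delta$ we extract to be tight up to constants in the exponent, and for the extra factor of $n$ arising from the $n\,L(n)$ term to cause only a benign one-unit shift (from the natural exponent $|\beta|d/(c \ln n)$ to $\Delta+1 = |\beta|d/(c \ln n)$ in the $\MC$ runtime, versus $\Delta$ in the testing lower bound).
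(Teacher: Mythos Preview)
Your proposal is correct and follows essentially the same approach as the paper. The paper's proof does an explicit two-case split on whether $L$ exceeds the threshold $\exp(|\beta|d/c)/(30n)=n^\Delta/30$: if $L$ is below the threshold it invokes Theorem~\ref{thm:main} to conclude $T+Ln=n^{\omega(1)}$ (hence $T=n^{\omega(1)}$), and if $L$ is above the threshold then $T\ge L>n^\Delta/30$ directly; your contradiction argument is the same logic repackaged, since assuming $T\le n^\Delta/30$ forces $L\le T\le$ threshold and lands you in the first case.
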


\begin{proof}
	Suppose there is an identity testing algorithm for $\mathcal{M}(n,d)$
	with running time $T$ and sample complexity $L$.
	We consider two cases.
	First, if $L \le \frac{\exp\left(|\beta| d/c\right)}{30n}$, then Theorem \ref{thm:main}
	implies that there is a randomized algorithm for \MC~on graphs with $N = \Theta(n^{\min\{\frac{\rho}{4},\frac{1}{14}\}})$ vertices
	that has running time $O(T+Ln)$ and success probability $11/20$.
	Therefore,
	$T+Ln = n^{\omega(1)}$  since otherwise $NP \subseteq BPP$ and thus $RP = NP$~\cite{Ko}.
	Hence, if $Ln =O(\poly(n))$, then $T = \Omega(n^{\omega(1)})$;
	otherwise $T \ge L = \Omega(n^{\omega(1)})$.
	For the second case, when  $L >\frac{\exp\left(|\beta| d/c\right)}{30n}$, we have
	$$
	T \ge L > \frac{\exp\left(|\beta| d/c\right)}{30n} = \frac{n^\Delta}{30},
	$$
	and the result follows.
\end{proof}

\section{Hardness of the $\UMC$ problem}
\label{sec:two-large-cuts}

In this section we prove Lemma~\ref{lemma:unique-mc:alg}, where the hardness of the $\UMC$ problem is established. 
We formally define the $\UMC$ problem next.

%


\begin{defn}
	\label{def:umc}
	Let $H=(V,E)$ be a graph and  let $w\in\N$. Let $\ol{H}_{w}=(\ol{V},\ol{E})$ be the \emph{multigraph} defined as follows:
	\begin{enumerate}
		\item $\ol{V}$ contains all vertices in $V$ and two new vertices $s$ and $t$; i.e., $\ol{V} = V \cup \{s,t\}$;
		\item $\ol{E}$ contains all edges in $E$, $N$ copies of edges $\{s,v\}$ and $\{t,v\}$ for each $v\in V$, and $w$ copies of the edge $\{s,t\}$.
	\end{enumerate}
\end{defn}
\noindent
Observe that the cut $(\{s,t\},V)$ contains exactly $2N^2$ edges.
\begin{defn}
		\label{def:tlc}
	In the $\UMC$ problem, given a graph $H$ and $w\in\N$,
	the goal is to
	determine whether there are at least two cuts in $\ol{H}_w$ of size at least $2N^2$.
\end{defn}
\noindent
Lemma~\ref{lemma:unique-mc:alg} is a direct corollary of the following lemma, which implies that $\MC$ can be reduced to $\UMC$.

\begin{lemma}
	\label{lemma:key-umc-fact}
	Let $H=(V,E)$ be an $N$-vertex  graph and let $w\in\N$.
	The cut
	$(\{s,t\},V)$ is the \emph{unique} maximum cut of $\ol{H}_w$ if and only if
	$
	\MC(H) < N^2 - w.
	$
\end{lemma}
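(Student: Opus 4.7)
The plan is to classify the cuts of $\ol{H}_w$ according to whether $s$ and $t$ lie on the same side or on opposite sides, write down closed-form expressions for the cut size in each case, and then compare the two maxima. Throughout, I fix a cut $(S, \ol{V}\setminus S)$, assume without loss of generality that $s\in S$ (swapping sides gives the same cut), and set $A = S\cap V$ and $B = V\setminus A$, so $|A|+|B|=N$.

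In Case~1, when $t\in S$, the cut contributions are the $H$-edges across $(A,B)$, plus $N|B|$ copies of $sv$-edges for $v\in B$, plus $N|B|$ copies of $tv$-edges for $v\in B$, plus $0$ copies of the $st$-edge. This gives cut size $\mathrm{cut}_H(A,B) + 2N|B|$. In Case~2, when $t\notin S$, the cut picks up $\mathrm{cut}_H(A,B)$, plus $N|B|$ edges from $s$ to $B$, plus $N|A|$ edges from $t$ to $A$, plus all $w$ copies of the $st$-edge, for a total of $\mathrm{cut}_H(A,B) + N(|A|+|B|) + w = \mathrm{cut}_H(A,B) + N^2 + w$.

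Next I would observe that in Case~1 the cut $(\{s,t\},V)$ (obtained at $A=\emptyset$) has size exactly $2N^2$, and that this is the \emph{strict} maximum among Case~1 cuts. Indeed, for any $A$ with $|A|\geq 1$, the trivial bound $\mathrm{cut}_H(A,B)\le |A|(N-1)$ yields
\[
\mathrm{cut}_H(A,B) + 2N|B| \;\le\; |A|(N-1) + 2N(N-|A|) \;=\; 2N^2 - |A|(N+1) \;<\; 2N^2.
\]
In Case~2 the maximum is $\MC(H) + N^2 + w$, attained by taking $(A,B)$ to be a maximum cut of $H$. Combining, $(\{s,t\},V)$ is the unique maximum cut of $\ol{H}_w$ if and only if every Case~2 cut is strictly smaller than $2N^2$, i.e.\ $\MC(H) + N^2 + w < 2N^2$, which rearranges to $\MC(H) < N^2 - w$.

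This proof is essentially bookkeeping, and the only subtle point is getting \emph{uniqueness}: one must use the strict inequality $|A|(N-1)<2N|A|$ for $|A|\ge 1$ to rule out other Case~1 maximizers, and strict inequality in the comparison with Case~2 to rule out ties. One should also briefly note that the regime $w \geq N^2$ is handled automatically, since then $N^2 - w \le 0 \le \MC(H)$ and Case~2 with $A=\emptyset$ already produces a cut of size $N^2+w \ge 2N^2$, correctly preventing uniqueness.
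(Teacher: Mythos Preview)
Your proof is correct and follows essentially the same approach as the paper: split according to whether $s,t$ lie on the same side of the cut, compute each case, and compare. The only cosmetic difference is in the Case~1 bound---the paper uses $\mathrm{cut}_H(A,B)\le |A|\,|B|$ and maximizes the resulting quadratic, whereas you use the looser but sufficient bound $\mathrm{cut}_H(A,B)\le |A|(N-1)$ and get the strict inequality directly.
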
	
Consequently, to solve $\MC$ on inputs $H$ and $k$,
it is sufficient to solve the $\UMC$ problem for $\ol{H}_w$ with $w = N^2-k$.
Hence, Lemma~\ref{lemma:unique-mc:alg} is a direct corollary of Lemma~\ref{lemma:key-umc-fact}. (Note that Lemma~\ref{lemma:key-umc-fact} also implies that the $\UMC$ problem is NP-complete.)



\begin{proof}[Proof of Lemma~\ref{lemma:key-umc-fact}]
	Let $(S,T)$ be a cut of $\hat{H}_w$  (i.e., $S\cup T = \hat{V}$ and $S\cap T=\emptyset$) and let
	$E_{\hat{H}_w}(S,T) \subseteq \ol{E}$ be the set of edges between $S$ and $T$ in $\ol{H}_w$.
	Similarly, for $S',T' \subseteq V$, let  $E_{H}(S',T') \subseteq E$ be the set of edges between $S'$ and $T'$ in $H$.
	
	Let us consider first the cuts $(S,T)$ where $s$ and $t$ belong to the same set.
	Without loss of generality assume $s,t\in S$, and let $S_0 = S\setminus\{s,t\}$. Then, $(S_0,T)$ is a cut of $H$ and so
	\begin{align*}
		\left|E_{\hat{H}_w}(S,T)\right| &= \left|E_{\hat{H}_w}(S_0,T)\right| + \left|E_{\hat{H}_w}(\{s,t\},T)\right|
		= \left|E_{H}(S_0,T)\right| + 2N|T|
		\leq 
		(N-|T|)|T| + 2N|T|.
	\end{align*}
	The quadratic function $f(x) = (N-x)x+2Nx$ is maximized at $x=N$ for $0\leq x \leq N$ and $f(N)= 2N^2$. Thus, $\left|E_{\hat{H}_w}(S,T)\right| \leq 2N^2$, and the maximum value $2N^2$ can be attained only when $|T|=N$; i.e., $S_0 = \emptyset$ and $(S,T) = (\{s,t\},V)$.
	
	Now, for the cuts where $s$ and $t$ belong to distinct sets of the cut,
	let us assume without loss of generality that $s\in S$ and $t\in T$. Let $S_0=S\backslash\{s\}$ and $T_0=T\setminus \{t\}$. Then, $(S_0,T_0)$ is a cut of $H$, and
	\begin{align*}
		\left| E_{\hat{H}_w}(S,T) \right| &= \left| E_{\hat{H}_w}(S_0,T_0) \right| + \left| E_{\hat{H}_w}(S_0,\{t\}) \right| + \left| E_{\hat{H}_w}(\{s\},T_0) \right| + \left| E_{\hat{H}_w}(\{s\},\{t\}) \right|\\
		&= \left| E_H(S_0,T_0) \right| + N^2 + w.
	\end{align*}
	Hence, the maximum cut of this class corresponds to the case when $(S_0,T_0)$ is a maximum cut of $H$, and
	$$
	\left| E_{\hat{H}_w}(S,T) \right|  =  \MC(H) + N^2 + w.
	$$
	
	Combining the above two cases, we conclude that $(\{s,t\},V)$ is the unique maximum cut of $\hat{H}_w$ if and only if
	$
	2N^2 > \MC(H) + N^2 + w,
	$
	and the result follows.
\end{proof}

\section{Proof of  main result for the Ising model: Theorem~\ref{thm:main}}
\label{section:main-proof}

\noindent\textbf{The Ising gadget.} \ \ Suppose $m,p,d,\din,\dout \in\N^+$ are positive integers such that
$m \ge p$, $d \ge 3$ and $\din + \dout = d$.
Let $G = (V_G,E_G)$ be the random bipartite graph defined as follows:
\begin{enumerate}
	\item Set $V_G=L\cup R$, where $|L|=|R|=m$ and $L\cap R=\emptyset$;
	\item Let $P$ be subset of $V_G$ chosen uniformly at random among all the subsets such that $|P\cap L|=|P\cap R|=p$;
	\item Let $M_1,\dots,M_{\din}$ be $\din$ random perfect matchings between $L$ and $R$;
	\item Let $M_1',\dots,M_{\dout}'$ be $\dout$ random perfect matchings between $L\backslash P$ and $R\backslash P$;
	\item Set $E_G = \left(\bigcup_{i=1}^{\din} M_i \right) \cup \left(\bigcup_{i=1}^{\dout} M_i'\right)$;
	\item Make the graph $G$ simple by replacing multiple edges with single edges.
\end{enumerate}
We use $\Gr(m,p,\din,\dout)$ to denote the resulting distribution; that is, $G \sim \Gr(m,p,\din,\dout)$.
Vertices in $P$ are called \textit{ports}. Every port has degree at most $\din$ while every non-port vertex has degree at most $d$.

In our proofs, we use instances of this random graph model
with two different choices of parameters.
For the case when $d$ is such that $3 \le d = O(1)$, we choose
$p = \floor{m^{1/4}}$, $\din = d-1$ and $\dout=1$; otherwise
we take $p = m$ (i.e., every vertex is a port), $\din = \floor{\theta d}$ and $\dout = d -  \floor{\theta d}$ for a suitable constant $\theta \in (0,1)$.
For both parameter choices we establish that
the random graph $G$ is a good expander with high probability; see Section~\ref{subsection:key-gadget-fact}. Using this, we can show that there are only two ``typical''
configurations for the Ising model on $G$, even in the presence of an external configuration (i.e., a boundary condition) exerting influence on the configuration of $G$ via its ports.

We present some notation next that will allow us to formally state these facts.
Let $\sigma^+(G)$ be the configuration of $G = (L\cup R,E_G)$ where every vertex in $L$
is assigned ``$+$'' and every vertex in $R$ is assigned ``$-$'';
similarly, define $\sigma^-(G)$ by interchanging ``$+$'' and ``$-$''.
To capture the notion of an
 external configuration for the bipartite graph $G$,
we assume that 
$G$ is an induced subgraph of a larger graph $G' = (V_{G'},E_{G'})$.
Let $\partial P  = V_{G'} \setminus V_{G}$. Assume that
every vertex in $P \subseteq V_G$ is connected to up to $\dout$ vertices in $\partial P$
and that
there are no edges between $V_G \setminus P$ and $\partial P$ in $G'$.
We use $\{\partial P = \tau\}$ for the event that the configuration in $G'$ of $\partial P$ is $\tau \in \{+,-\}^{\partial P}$.
We can show that for any $\tau$, with high probability over the choice of the random graph~$G$,
the Ising configuration of $V_G$ on $G'$ conditioned on $\{\partial P = \tau\}$ will likely be $\sigma^+(G)$ or $\sigma^-(G)$.

\begin{thm}
	\label{thm:gadget:fact-ld}
	Suppose $\beta < 0$, $3 \le d = O(1)$, $\din = d-1$, $\dout = 1$ and $p = \floor{m^\alpha}$, where $\alpha \in (0,\frac{1}{4}]$ is a constant independent of $m$.
	Then, there exists a constant $\delta > 0$
	such that
	with probability $1-o(1)$ over the choice of the random graph $G$
	the following holds for every configuration $\tau$ on $\partial P$:
	$$
	\mu_{G',\beta} (\{\sigma^+(G),\sigma^-(G)\} \mid \partial P = \tau) \ge 1 - \frac{2 m}{{\e}^{\delta |\beta|   d}}.
	$$
\end{thm}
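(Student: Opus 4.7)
The plan is to show that the conditional distribution concentrates on $\{\sigma^+(G),\sigma^-(G)\}$ by treating them as the ground states and bounding the total weight of deviations via the edge expansion of $G$. I would write the conditional distribution on $V_G$ as
\[
\mu_{G',\beta}(\sigma \mid \partial P = \tau) \;\propto\; \exp\bigl(-|\beta|\bigl[A_G(\sigma)+A_\partial(\sigma,\tau)\bigr]\bigr),
\]
where $A_G(\sigma)$ counts monochromatic edges inside $G$ and $A_\partial(\sigma,\tau)$ counts monochromatic edges between the ports $P$ and the boundary $\partial P$. Because $G$ is bipartite with bipartition $(L,R)$ and $\sigma^\pm$ align with that bipartition, $A_G(\sigma^+)=A_G(\sigma^-)=0$; moreover $A_\partial(\sigma^+,\tau)+A_\partial(\sigma^-,\tau)$ equals the total number of port--boundary edges, which is at most $2p\dout\le 2m^{1/4}$.

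For any other $\sigma$ I will compare it to the Hamming-closer of $\sigma^\pm$. Let $\sigma^*\in\{\sigma^+,\sigma^-\}$ be that nearer reference and set $k(\sigma):=d_H(\sigma,\sigma^*)\le m$. The set $S_\sigma$ of disagreeing vertices has $|S_\sigma|=k(\sigma)$, and a short computation gives $A_G(\sigma)=|E_G(S_\sigma,V_G\setminus S_\sigma)|$. Invoking the edge-expansion guarantee of Section~\ref{subsection:key-gadget-fact} (Theorem~\ref{thm:expansion-port}), which holds with probability $1-o(1)$ over the random $G$, yields $A_G(\sigma)\ge \phi\, d\, k(\sigma)$ for a constant $\phi>0$. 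Flipping the $k_P\le k(\sigma)$ ports inside $S_\sigma$ can move $A_\partial$ by at most $\dout\cdot k_P$, so with $\delta':=\phi d-\dout$ (which is $\Omega(d)$ once $\phi$ is bounded below and $\dout=1$) one obtains the pointwise comparison
\[
\frac{\mu(\sigma\mid\tau)}{\mu(\sigma^*\mid\tau)} \;\le\; \exp\bigl(-|\beta|\delta'\,k(\sigma)\bigr).
\]
Summing over all $\sigma\notin\{\sigma^+,\sigma^-\}$, grouped by $k$, and bounding $\binom{2m}{k}\le(2m)^k$, gives
\[
\mu\bigl(\sigma\notin\{\sigma^+,\sigma^-\}\mid\tau\bigr) \;\le\; 2\sum_{k=1}^m (2m)^k\, e^{-|\beta|\delta' k} \;\le\; \frac{4m\, e^{-|\beta|\delta'}}{1-2m\, e^{-|\beta|\delta'}},
\]
which collapses to at most $2m/e^{\delta|\beta|d}$ for an appropriate $\delta>0$ as soon as $|\beta|d$ exceeds a sufficiently large multiple of $\ln m$. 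Since this bound never depends on $\tau$, it is uniform over all boundary configurations.

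The main obstacle is the uniform edge-expansion estimate $|E_G(S,V_G\setminus S)|\ge \phi d|S|$ for every $1\le|S|\le m$. For growing $d$ this can be read off of standard spectral results (e.g.\ \cite{BDH}), but for $d=O(1)$ with $\din=d-1$ the random graph is essentially a small union of perfect matchings and standard first-moment or spectral arguments are not tight; one must show that $\phi d$ strictly exceeds $\dout$ even for subsets of near-linear size, while simultaneously accounting for the atypical (smaller) degrees at the ports. This is precisely what Theorem~\ref{thm:expansion-port} and the surrounding material in Section~\ref{subsection:key-gadget-fact} are set up to provide; once that bound is in hand, the Hamming-distance summation above is a routine computation.
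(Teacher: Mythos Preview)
Your overall architecture matches the paper's: compare an arbitrary $\sigma$ to the nearer ground state $\sigma^*\in\{\sigma^+,\sigma^-\}$, note that the internal monochromatic edges equal $|E_G(S_\sigma,V_G\setminus S_\sigma)|$, bound the change in boundary agreements by $\dout\cdot|S_\sigma\cap P|$, and then sum over Hamming shells. The gap is in the inequality you actually need, namely
\[
|E_G(S_\sigma,V_G\setminus S_\sigma)| - \dout\cdot|S_\sigma\cap P| \;\ge\; \delta\, d\, |S_\sigma|.
\]
You obtain this by invoking a bound of the form $|E_G(S,V_G\setminus S)|\ge \phi d|S|$ and then using $|S_\sigma\cap P|\le |S_\sigma|$, concluding $\delta'=\phi d-\dout$. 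But the generic expansion constant $\gamma$ from Theorem~\ref{thm:expansion-generic} can be very small (its proof gives $\gamma d$ on the order of $d/(40\kappa)$ for large sets, and $d-\tfrac{5}{2}$ for small sets), so for $d=3$ one has $\gamma d<1=\dout$ and your $\delta'$ is negative. You cite Theorem~\ref{thm:expansion-port}, but that theorem does \emph{not} give $|E|\ge \phi d|S|$; it gives $|E|>(1+\gamma)|S\cap P|$, i.e.\ expansion measured against the number of ports in $S$. The paper's proof uses \emph{both} inequalities simultaneously: from $|E|\ge \gamma d|S|$ and $|E|\ge(1+\gamma)|S\cap P|$ one takes the convex combination with weight $\gamma/(1+\gamma)$ on the first to get
\[
|E_G(S,V_G\setminus S)| - |S\cap P| \;\ge\; \frac{\gamma^2}{1+\gamma}\, d\, |S|,
\]
which is the missing step. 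Without Theorem~\ref{thm:expansion-port} in this role, the subtraction of the boundary term can overwhelm the internal expansion at small $d$.

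A second, minor point: your shell sum uses $\binom{2m}{k}\le(2m)^k$ and a geometric series, which only converges when $|\beta|d>c\ln m$; you then add this as an assumption. The paper instead uses $\sum_{k}\binom{2m}{k}x^k\le(1+x)^{2m}$ and the elementary bound $(1+x)^{-2m}\ge 1-2mx$, which yields $1-2m\,e^{-\delta|\beta|d}$ unconditionally (the bound being vacuous when $|\beta|d$ is small). This is easy to patch, but the binomial route is both cleaner and matches the stated theorem without extra hypotheses.
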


\begin{thm}
	\label{thm:gadget:fact-hd}
	Suppose $\beta < 0$, $p = m$ and $4 + \frac{1200}{\rho} \le d \le m^{1-\rho}$
	for some constant $\rho \in (0,1)$ independent of $m$.
	Then, there exist constants $\delta = \delta(\rho) > 0$ and $\theta = \theta(\rho) \in (0,1)$
	such that
	when $\din = \floor{\theta d}$ and $\dout = d - \floor{\theta d}$
	the following holds for every configuration $\tau$ on $\partial P$ with probability $1-o(1)$ over the choice of the random graph $G$:
	$$
	\mu_{G',\beta} (\{\sigma^+(G),\sigma^-(G)\} \mid \partial P = \tau) \ge 1 - \frac{2 m}{{\e}^{\delta |\beta|   d}}.
	$$
\end{thm}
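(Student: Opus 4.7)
The plan is a Peierls-type argument that exploits the bipartiteness of $G$: under the antiferromagnetic Ising model, $\sigma^+(G)$ and $\sigma^-(G)$ are the only configurations with zero monochromatic edges inside $G$, so every other configuration pays an energetic cost proportional to the edge boundary of its disagreement set. For any configuration $\sigma$ on $V_G$, define
$$
A^+(\sigma)=\{v\in L:\sigma(v)=-\}\cup\{v\in R:\sigma(v)=+\},\qquad A^-(\sigma)=V_G\setminus A^+(\sigma).
$$
A direct count (exactly one endpoint of a monochromatic edge lies in the flipped set) yields the identity $A_G(\sigma)=e_G(A^+(\sigma),V_G\setminus A^+(\sigma))=e_G(A^-(\sigma),V_G\setminus A^-(\sigma))$. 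Let $A(\sigma)$ be whichever of $A^\pm(\sigma)$ has size at most $m$, and $\sigma^\star\in\{\sigma^+(G),\sigma^-(G)\}$ the corresponding ground state.

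First, I would invoke the edge expansion bounds for random bipartite graphs from Section~\ref{subsection:key-gadget-fact}. Since $p=m$, the gadget $G\sim\Gr(m,m,\din,\dout)$ is a random $\din$-regular bipartite (multi)graph, and the total number of multi-edges is $o(m)$ w.h.p.\ because $\din^2/m\to 0$ under $d\le m^{1-\rho}$. The expansion results give a constant $\kappa=\kappa(\rho)\in(0,1]$ such that with probability $1-o(1)$ over $G$,
$$
e_G(A,V_G\setminus A)\;\ge\;\kappa\,\din\,|A|\qquad\text{for every } \emptyset\neq A\subsetneq V_G\text{ with }|A|\le m.
$$
The binding case for $\kappa$ is the ``balanced'' one $|A\cap L|\approx|A\cap R|\approx m/2$, where the expected boundary is $\din m/2$; this forces $\kappa\le 1/2$ in general but $\kappa$ is still a positive constant. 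I would then pick $\theta=\theta(\rho)\in(0,1)$ close enough to $1$, and $\delta=\delta(\rho)>0$ small enough, so that $\kappa\theta-(1-\theta)\ge 2\delta$, i.e.\ $\kappa\din-\dout\ge 2\delta d$. Such $\theta$ exists as long as $\theta>1/(1+\kappa)$, and the hypothesis $d\ge 4+1200/\rho$ is what permits the expansion analysis of Section~\ref{subsection:key-gadget-fact} to certify such a $\kappa$.

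With the expansion event secured, I would run the Peierls comparison. Fix any boundary configuration $\tau$ on $\partial P$ and any $\sigma\notin\{\sigma^+(G),\sigma^-(G)\}$. Each vertex of $V_G$ has at most $\dout$ edges to $\partial P$, so reverting the spins on $A(\sigma)$ to $\sigma^\star$ changes the number of monochromatic external edges by at most $\dout\,|A(\sigma)|$, while it removes $e_G(A(\sigma),V_G\setminus A(\sigma))\ge\kappa\din|A(\sigma)|$ monochromatic internal edges. Using $\beta<0$,
$$
\frac{\mu_{G',\beta}(\sigma\mid\partial P=\tau)}{\mu_{G',\beta}(\sigma^\star\mid\partial P=\tau)}\;\le\;\exp\!\bigl(-|\beta|(\kappa\din-\dout)\,|A(\sigma)|\bigr)\;\le\;\exp\!\bigl(-2\delta|\beta|d\,|A(\sigma)|\bigr).
$$
The number of $\sigma$ with $|A(\sigma)|=k$ is at most $2\binom{2m}{k}\le 2(2em/k)^k$, so summing over $\sigma$ and dividing by $\mu_{G',\beta}(\{\sigma^+,\sigma^-\}\mid\tau)$ gives
$$
\frac{\mu_{G',\beta}(\{\sigma^+,\sigma^-\}^c\mid\tau)}{\mu_{G',\beta}(\{\sigma^+,\sigma^-\}\mid\tau)}\;\le\;\sum_{k=1}^{m}2\binom{2m}{k}e^{-2\delta|\beta|dk}\;\le\;\sum_{k=1}^{m}\bigl(4em\,e^{-2\delta|\beta|d}\bigr)^k.
$$
In the nontrivial regime where $4em\,e^{-2\delta|\beta|d}\le 1/2$ (otherwise the conclusion is vacuous), this geometric series is at most twice its first term, and after renaming $2\delta$ as $\delta$ and rearranging one obtains $\mu_{G',\beta}(\{\sigma^+,\sigma^-\}\mid\tau)\ge 1-2m/e^{\delta|\beta|d}$.

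The main obstacle is Step~1: obtaining the \emph{uniform} high-probability expansion bound, with a constant $\kappa$ large enough that the inequality $\kappa\theta-(1-\theta)>0$ has a solution $\theta\in(0,1)$. The expectation of $e_G(A,V_G\setminus A)$ is straightforward, but promoting this to a single event that holds for every $A$ simultaneously, and handling the nearly-balanced subsets where expansion is smallest, is the content of Section~\ref{subsection:key-gadget-fact}; the lower bound $d\ge 4+1200/\rho$ is exactly what that analysis needs to deliver a usable constant $\kappa$ after concentration and a union bound over the $2^{2m}$ choices of $A$.
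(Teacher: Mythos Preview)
Your proposal is correct and matches the paper's approach essentially line for line: the paper defines the same disagreement set $S_\sigma$, invokes the expansion bound $|E(S_\sigma,V_G\setminus S_\sigma)|\ge(\rho\,\din/300)|S_\sigma|$ from Theorem~\ref{thm:expansion-generic-hd}, chooses $\theta=(300+0.75\rho)/(300+\rho)$ so that $(\rho\,\din/300)-\dout\ge\rho d/600$, and then sums via $\sum_k\binom{2m}{k}e^{-\delta|\beta|dk}\le(1+e^{-\delta|\beta|d})^{2m}$ rather than your geometric series. One small clarification: the hypothesis $d\ge 4+1200/\rho$ is not used to secure the expansion constant $\kappa$ (Theorem~\ref{thm:expansion-generic-hd} only needs $\din\ge3$) but rather to make the arithmetic $\kappa\lfloor\theta d\rfloor-(d-\lfloor\theta d\rfloor)\ge\rho d/600$ go through with the explicit $\kappa=\rho/300$ and the chosen $\theta$.
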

\noindent
The proofs of these theorems are given in Section~\ref{subsection:key-gadget-fact}.

\bigskip\noindent\textbf{Testing instance construction.} \  \ 
Let $H=(V,E)$ be a simple $N$-vertex graph and
for $w  \le N^2$
let
$\ol{H}_w$ be the multigraph from Definition~\ref{def:umc}.
We use an instance of the random bipartite graph $\Gr(m,p,\din,\dout)$
as a gadget to define a simple graph $\ol{H}_w^{\Gamma}$, where
$\Gamma$ denotes the set parameters $\{m,p,\din,\dout,\ell\}$; $\ell > 0$ is assumed to be an integer divisible by $\dout$.
The graph $\ol{H}_w^{\Gamma}$ is constructed as follows:
\begin{enumerate}
	\item Generate an instance $G = (L \cup R, E_G)$ of the random graph model $\Gr(m,p,\din,\dout)$;
	\item Replace every vertex of $\ol{H}_w$ by a copy $G_v =  (L_v \cup R_v, E_{G_v})$ of the generated instance $G$;
	\item For every edge $\{v,u\} \in \ol{H}_w$,
	choose $\ell/\dout$ unused ports in $L_v$
	and $\ell/\dout$ unused ports in $R_u$ and
	connect them with a simple bipartite $\dout$-regular graph;
	\item Similarly, for every edge $\{v,u\} \in \ol{H}_w$,
	choose $\ell/\dout$ unused ports in $R_v$
	and $\ell/\dout$ unused ports in $L_u$ and  connect them with a simple bipartite $\dout$-regular graph.
\end{enumerate}
Observe that our construction requires:
\begin{align}
	\din+\dout &= d \le m,\label{eq:cons-cond-1}\\
	\dout &\,\,|\,\, \ell,\label{eq:cons-cond-2}\\
	\ell(N^2+w) &\le p \cdot \dout, \label{eq:cons-cond-3}\\
	\dout^2 &\leq \ell. \label{eq:cons-cond-4}
\end{align}
To see that \eqref{eq:cons-cond-3}  is necessary, note that
the maximum degree of $\ol{H}_w$ is $N^2+w$ (this is the degree of vertices  $s$ and $t$), and so the total out-degree of the ports should be large enough to accommodate $\ell(N^2+w)$ edges.
Observe also that when condition~\eqref{eq:cons-cond-4} holds, there is always a simple bipartite $\dout$-regular graph with $\ell/\dout$ vertices on each side for steps 3 and 4.

The number of vertices in $\ol{H}_w^{\Gamma}$ is $2m(N+2)$
and its maximum degree is $d = \din+\dout$; thus, $\ol{H}_w^{\Gamma} \in \mathcal{M}(2m(N+2),d)$.
Let $I$ be an independent set with $N$ vertices. By setting $H=I$ and $w=0$, we can analogously define the graphs $\ol{I}_0$ and $\ol{I}_0^{\Gamma}$ so that $\ol{I}_0^{\Gamma} \in \mathcal{M}(2m(N+2),d)$.
Let $M$ and $M^*$ denote the Ising models $(\ol{H}_w^{\Gamma},\beta)$ and $(\ol{I}_0^{\Gamma},\beta)$, respectively.
We show next that the models $M$ and $M^*$ 
are statistically close if and only if $(\{s,t\},V)$ is the unique large cut of $\ol{H}_w$.
To formally state this fact we require some additional notation.


For a configuration $\sigma$ on $\ol{H}_w^{\Gamma}$,
we say that the gadget $G_v \!=\!  (L_v \cup R_v, E_{G_v})$ is in the plus (resp., minus) \emph{phase} if all the vertices in $L_v$ (resp., $R_v$)
are assigned ``$+$'' in $\sigma$ and all the vertices in $R_v$ (resp., $L_v$) are assigned ``$-$''.
Let $\Omega_{\mathrm{good}}$ be the set of configurations of $\ol{H}_w^{\Gamma}$ where the gadget of every vertex is either in the plus or the minus phase.
The set of Ising configurations of $\ol{H}_w^{\Gamma}$ and $\ol{I}_0^{\Gamma}$ is the same and is denoted by $\Omega$.
We use $Z_M$, $Z_{M^*}$ for the partition functions of $M$, $M^*$, and $Z_M(\Lambda)$, $Z_{M^*}(\Lambda)$ for their restrictions to a subset of configurations $\Lambda \subseteq \Omega$.
That is,
$
Z_M = \sum_{\sigma \in \Omega} w_M(\sigma)
$
and
$
Z_M(\Lambda) = \sum_{\sigma \in \Lambda} w_M(\sigma)
$
where $w_M(\sigma) :=  {\e}^{\beta A(\sigma)}$ is called the \emph{weight} of the configuration $\sigma$ in $M$; see \eqref{eq:ising-def}. When $\beta < 0$, 
$w_M(\sigma) =  {\e}^{-|\beta| A(\sigma)}$.

The Ising models $M$ and $M^*$ are related as follows.

\begin{lemma}\label{lemma:main-reduction-phase}
	Let $N \ge 1$, $w \ge 0$ be integers and let $\beta < 0$.
	Let $\Gamma = (m,p,\din,\dout,\ell)$ be such that $|\beta|( \ell - d) \ge N$ and conditions \eqref{eq:cons-cond-1}--\eqref{eq:cons-cond-4}
	are satisfied.
	If for the Ising model $M = (\ol{H}_w^\Gamma,\beta)$ we have
	$Z_{M}(\Omega_{\rm good}) \ge (1-\varepsilon) Z_{M}$ for some $\varepsilon \in (0,1)$, then  with probability $1-o(1)$ over the choice of the random graph $G$  the following holds:
	\begin{enumerate}
		\item 	If $(\{s,t\},V)$ is the unique maximum cut of $\ol{H}_w$, then
		$$
		\TV{\mu_M}{\mu_{M^*}} \le 2(\varepsilon + {\e}^{- 2|\beta| d}).
		$$
		\item If $(\{s,t\},V)$ is not the unique maximum cut of $\ol{H}_w$, then
		$$
		\TV{\mu_M}{\mu_{M^*}} > \frac{1}{2} - \varepsilon - {\e}^{- 2|\beta| d}.
		$$
		\item 	If there is a cut in $\ol{H}_w$ with strictly more edges than
		$(\{s,t\},V)$, then
		$$
		\TV{\mu_M}{\mu_{M^*}} \ge 1 - \varepsilon - 2 {\e}^{- 2|\beta| d}.
		$$
	\end{enumerate}
\end{lemma}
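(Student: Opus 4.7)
The strategy is to reduce the TV comparison to the restrictions of $\mu_M$ and $\mu_{M^*}$ to $\Omega_{\mathrm{good}}$, where both distributions take a particularly clean form in terms of cuts of $\ol{H}_w$ and $\ol{I}_0$. Good configurations are in bijection with phase assignments $\phi \in \{+,-\}^{\ol{V}}$ via $\phi \mapsto \sigma_\phi$ (placing gadget $G_v$ in phase $\phi(v)$), so I would first compute $A(\sigma_\phi)$ for each model. Since each gadget $G_v$ is in one of its two bipartite $2$-colourings, within-gadget edges contribute no monochromatic edges, while a careful bookkeeping over steps~3--4 of the construction shows that each edge of $\ol{H}_w$ contributes $2\ell$ inter-gadget edges whose monochromatic/bichromatic status is determined by whether $\phi(v) = \phi(u)$. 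Collecting terms and absorbing the $\phi$-independent prefactor into the normalizer yields
\begin{equation*}
\mu_M(\sigma_\phi \mid \Omega_{\mathrm{good}}) \;=\; \frac{\exp\bigl(2|\beta|\ell\, C_{\ol{H}_w}(\phi)\bigr)}{\sum_{\phi'} \exp\bigl(2|\beta|\ell\, C_{\ol{H}_w}(\phi')\bigr)},
\end{equation*}
and analogously for $M^*$ with $\ol{I}_0$ in place of $\ol{H}_w$, where $C_G(\phi)$ denotes the size of the cut induced by $\phi$ in the multigraph $G$.

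The next step is to show that each of these conditional distributions is sharply concentrated on phase assignments achieving the maximum cut. Writing $K_w := \MC(\ol{H}_w)$ and $D_w := |\{\phi : C_{\ol{H}_w}(\phi) = K_w\}|$, the hypothesis $|\beta|(\ell - d) \ge N$ combined with $\ln 2 < 2$ gives the elementary tail bound $2^{N+2} e^{-2|\beta|\ell} \le 4 e^{-2|\beta|d}$, from which each max-cut phase carries conditional mass $(1 + O(e^{-2|\beta|d}))/D_w$ inside $\Omega_{\mathrm{good}}$ and all sub-maximal phases together contribute only $O(e^{-2|\beta|d})$. The identical argument for $M^*$, combined with Theorems~\ref{thm:gadget:fact-ld}/\ref{thm:gadget:fact-hd} union-bounded over the $N+2$ gadgets of $\ol{I}_0^\Gamma$, gives $\mu_{M^*}(\Omega_{\mathrm{good}}) \ge 1 - O(e^{-2|\beta|d})$ with probability $1-o(1)$ over the choice of $G$; since $\MC(\ol{I}_0) = 2N^2$ is uniquely attained by the partition $(\{s,t\},V)$, only the two phase assignments $\phi_0^+$ and $\phi_0^-$ realising this partition carry non-negligible $\mu_{M^*}$-mass, each with $\tfrac12 - O(e^{-2|\beta|d})$.

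The three parts of the lemma then follow by case analysis via Lemma~\ref{lemma:key-umc-fact}. In Part~1, the unique-max-cut hypothesis yields $K_w = 2N^2$, $D_w = 2$, and the two $\ol{H}_w$-max-cut phases equal $\phi_0^\pm$, so $\mu_M$ and $\mu_{M^*}$ both concentrate $\approx \tfrac12$ mass on the same two configurations $\sigma_{\phi_0^\pm}$, and the claimed bound falls out of an $\ell_1$ computation combining the $\varepsilon$ slack on $\mu_M(\Omega_{\mathrm{good}})$ with the $O(e^{-2|\beta|d})$ errors inside $\Omega_{\mathrm{good}}$. In Part~2 we have $K_w = 2N^2$ but $D_w \ge 4$, so $\mu_M(\{\sigma_{\phi_0^\pm}\}) \le 2/D_w + O(\varepsilon + e^{-2|\beta|d}) \le \tfrac12 + O(\varepsilon + e^{-2|\beta|d})$, while $\mu_{M^*}(\{\sigma_{\phi_0^\pm}\}) \ge 1 - O(e^{-2|\beta|d})$; testing TV against the event $A = \{\sigma_{\phi_0^\pm}\}$ gives $\TV{\mu_M}{\mu_{M^*}} \ge \tfrac12 - \varepsilon - e^{-2|\beta|d}$. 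In Part~3 we have $K_w > 2N^2$, so $\phi_0^\pm$ are strictly sub-maximal in $\ol{H}_w$ and, by the concentration step, $\mu_M(\{\sigma_{\phi_0^\pm}\}) \le 2 e^{-2|\beta|\ell} \le e^{-2|\beta|d}$ (using $|\beta|(\ell - d) \ge N \ge 1$); combined with $\mu_{M^*}(\{\sigma_{\phi_0^\pm}\}) \ge 1 - O(\varepsilon + e^{-2|\beta|d})$ this gives $\TV{\mu_M}{\mu_{M^*}} \ge 1 - \varepsilon - 2 e^{-2|\beta|d}$. The main technical obstacle throughout is exactly this tail bound on sub-maximal phases: the number of phase assignments is $2^{N+2}$, and absorbing this combinatorial factor into $e^{-2|\beta|d}$ is what requires the full strength of the hypothesis $|\beta|(\ell - d) \ge N$.
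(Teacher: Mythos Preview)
Your strategy matches the paper's: both reduce to the conditional distributions on $\Omega_{\rm good}$, express those weights via cut sizes in $\ol{H}_w$ (the paper uses the complementary count $\mathcal{I}(\sigma)$, you use $C_{\ol{H}_w}(\phi)$ directly), and absorb the $2^{N+2}$ phase-count into the $e^{-2|\beta|d}$ tail using $|\beta|(\ell-d)\ge N$. The paper packages the $M^*$-side concentration as Fact~\ref{fact:redunction:is} (which tacitly requires $Z_{M^*}(\Omega_{\rm good})\ge(1-\varepsilon)Z_{M^*}$, verified separately in the proof of Theorem~\ref{thm:main}); you appeal to the gadget theorems directly, which is fine in spirit, though your claimed error $O(e^{-2|\beta|d})$ from the union bound is really $n/e^{\delta|\beta|d}$ and so should be folded into~$\varepsilon$ rather than stated as an independent $O(e^{-2|\beta|d})$ term.

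There is one genuine (if easily fixed) gap in your Part~2. You write ``we have $K_w = 2N^2$ but $D_w \ge 4$'', but the hypothesis of Part~2---that $(\{s,t\},V)$ is not the \emph{unique} max cut---also allows $K_w > 2N^2$, i.e.\ $(\{s,t\},V)$ may fail to be a max cut at all. The paper handles both sub-cases uniformly: take any max cut $(S,\ol V\setminus S)\neq(\{s,t\},V)$, note $w_M(\{\sigma_*^+,\sigma_*^-\}) \ge w_M(\Omega^0)$, and conclude $\mu_M(\Omega^0)\le\tfrac12$ directly, without needing the concentration estimate. Your argument is easily repaired by observing that the sub-case $K_w>2N^2$ is exactly Part~3, whose stronger bound already implies Part~2's.
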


The next lemma states that we can easily generate samples from the simpler model $M^*$; this will be crucial
in our proof of Theorem~\ref{thm:main}.

\begin{lemma}
	\label{lemma:main-reduction-sampling}
	Let $N \ge 1$ be an integer and let $\beta < 0$.
	Let $\Gamma = (m,p,\din,\dout,\ell)$ be such that $|\beta| (\ell N-d) \ge N$ and conditions \eqref{eq:cons-cond-1}--\eqref{eq:cons-cond-4}
	are satisfied.
	If for the Ising model $M^* = (\ol{I}_0^\Gamma,\beta)$ we have
	$Z_{M^*}(\Omega_{\rm good}) \ge (1-\varepsilon) Z_{M^*}$ for some $\varepsilon \in (0,1)$, then
	there exists
	a sampling
	algorithm with running time $O(mN)$
	such that with probability $1-o(1)$ over the choice of the random graph $G$,
	the distribution $\mu_{\textsc{alg}}$ of its output satisfies:
	$$
	\TV{\mu_{M^*}}{\mu_{\textsc{alg}}} \le \varepsilon + {\e}^{- 2|\beta| d}.
	$$
\end{lemma}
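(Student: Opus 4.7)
The plan is to exploit the particularly simple structure of $\ol{I}_0$: since $I$ is edgeless and $w=0$, the only edges of $\ol{I}_0$ are the $N$ parallel copies of $\{s,v\}$ and of $\{t,v\}$ for each $v\in V$. Consequently, in $\ol{I}_0^\Gamma$ distinct gadgets $G_v,G_u$ ($v,u\in V$) share no edges, and neither do $G_s$ and $G_t$. The hypothesis $Z_{M^*}(\Omega_{\rm good})\ge(1-\varepsilon)Z_{M^*}$ yields $\TV{\mu_{M^*}}{\mu_{M^*}(\cdot\mid\Omega_{\rm good})}\le\varepsilon$, so it suffices to sample exactly from $\mu_{M^*}(\cdot\mid\Omega_{\rm good})$; the extra $e^{-2|\beta|d}$ slack in the target bound is then absorbed trivially.

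The first key step is to compute the weight of a good configuration $\sigma_\phi$, indexed by a phase assignment $\phi:V\cup\{s,t\}\to\{+,-\}$ that sets each gadget $G_u$ to $\sigma^+(G_u)$ if $\phi_u=+$ and to $\sigma^-(G_u)$ otherwise. Inside a pure $\phi_u$-phase, $L_u$ is at spin $\phi_u$ and $R_u$ at $-\phi_u$, and bipartiteness makes every internal edge of $G_u$ bichromatic. For the $2\ell$ gadget edges implementing a single copy of an edge $\{v,u\}\in\ol{I}_0$, the step-3 and step-4 wirings pair a port in $L_v$ (spin $\phi_v$) with one in $R_u$ (spin $-\phi_u$), and a port in $R_v$ (spin $-\phi_v$) with one in $L_u$ (spin $\phi_u$); both kinds of edge are monochromatic iff $\phi_v\neq\phi_u$. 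Combining the $N$-fold multiplicity of each edge with the absence of $V$-internal and $s$-$t$ edges in $\ol{I}_0$, one obtains
\begin{equation*}
w_{M^*}(\sigma_\phi) \;=\; C\cdot\prod_{v\in V}\exp\!\left(-2\ell N|\beta|\bigl(\1[\phi_s\neq\phi_v] + \1[\phi_t\neq\phi_v]\bigr)\right),
\end{equation*}
where $C$ collects the (always-bichromatic) internal edge factors and is independent of $\phi$. Crucially, once $(\phi_s,\phi_t)$ is fixed this factorizes over $v\in V$.

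This factorization immediately yields the algorithm: \textbf{(a)} sample $(\phi_s,\phi_t)\in\{+,-\}^2$ from the marginal $\Pr[(\phi_s,\phi_t)]\propto\prod_{v\in V}\sum_{b\in\{+,-\}}\exp(-2\ell N|\beta|(\1[\phi_s\neq b]+\1[\phi_t\neq b]))$, which by $\pm$-symmetry reduces to one fair coin for $\phi_s$ and one biased coin for $\phi_t$; \textbf{(b)} for each $v\in V$, independently draw $\phi_v$ with weight proportional to $\exp(-2\ell N|\beta|(\1[\phi_s\neq\phi_v]+\1[\phi_t\neq\phi_v]))$; \textbf{(c)} for each $u\in V\cup\{s,t\}$ write $L_u$ at spin $\phi_u$ and $R_u$ at spin $-\phi_u$. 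Steps (a) and (b) cost $O(N)$ and step (c) writes $2m(N+2)=O(mN)$ spins, so the total runtime is $O(mN)$.

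By the factorization, the output distribution equals $\mu_{M^*}(\cdot\mid\Omega_{\rm good})$ exactly, so $\TV{\mu_{M^*}}{\mu_{\textsc{alg}}}\le\varepsilon\le\varepsilon+e^{-2|\beta|d}$. The ``with probability $1-o(1)$ over $G$'' part of the conclusion is inherited from the hypothesis, which in turn follows from Theorems~\ref{thm:gadget:fact-ld}/\ref{thm:gadget:fact-hd} applied gadget-by-gadget; the algorithm itself requires no further properties of $G$. The main obstacle is the port-counting step above: one must carefully verify via the step-3/step-4 wiring that ``$G_v$ and $G_u$ share the same phase'' translates to ``all $2\ell$ inter-gadget edges are bichromatic,'' and then exploit the absence of $V$-internal and $s$-$t$ edges in $\ol{I}_0$ to get the conditional independence of $\{\phi_v\}_{v\in V}$ given $(\phi_s,\phi_t)$ that makes the algorithm so simple.
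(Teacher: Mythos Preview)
Your proof is correct and takes a genuinely different route from the paper's. The paper's argument is very short: it invokes Fact~\ref{fact:redunction:is} (which is where the hypothesis $|\beta|(\ell N-d)\ge N$ is used) to show that $\mu_{M^*}$ places mass at least $1-\varepsilon-e^{-2|\beta|d}$ on just the two configurations $\Omega^0=\{\sigma^+,\sigma^-\}$, and then the sampling algorithm is simply a fair coin flip between $\sigma^+$ and $\sigma^-$. You instead sample \emph{exactly} from the conditional distribution $\mu_{M^*}(\cdot\mid\Omega_{\rm good})$ by exploiting the product structure of $\{\phi_v\}_{v\in V}$ given $(\phi_s,\phi_t)$. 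Your approach yields the tighter bound $\TV{\mu_{M^*}}{\mu_{\textsc{alg}}}\le\varepsilon$, makes no use of the assumption $|\beta|(\ell N-d)\ge N$, and is insensitive to whether ``same phase'' or ``different phase'' corresponds to monochromatic inter-gadget edges (since the factorization and the exact-sampling steps go through either way). The paper's approach, on the other hand, is more economical and explains where the $e^{-2|\beta|d}$ term in the statement actually comes from.
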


The proofs of Lemmas~\ref{lemma:main-reduction-phase} and~\ref{lemma:main-reduction-sampling} are provided
in Section~\ref{subsec:M}. We are now ready to prove Theorem~\ref{thm:main}.

\begin{proof}[Proof of Theorem~\ref{thm:main}]
	Let us assume first that $3 \le d = O(1)$.
	In this case, we take
	\begin{align*}
	N &= \floor{n^{1/14}}-2,~~\text{and} ~~
	m = \floor{\frac{n^{13/14}}{2}}.
	\end{align*}
	
	If
	$\lfloor n^{1/4} \rfloor$ and  $\lfloor \frac{n^{13/14}}{2} \rfloor$ are both integers, then  $n = 2m(N+2)$.
	For simplicity and
	without much loss of generality, we assume that this is indeed the case.
	See Remark~\ref{rmk:non-int} for
	a brief explanation on how to extend the current proof to the case when 
	$\lfloor n^{1/4} \rfloor$ or  $\lfloor \frac{n^{13/14}}{2} \rfloor$  are not integers.

	Let $H=(V,E)$ be a
	graph such that $|V|=N$.
	We show that an identity testing  algorithm for $\mathcal{M}(n,d)$ with running time  $T$ and sample complexity
	$
	L  \le \frac{\exp(|\beta| d/c)}{30n}
	$,  henceforth called the \textsc{Tester},
	can be used to solve the $\UMC$ problem
	on inputs $H$ and $w \in \N$
	in $O(T + Ln )$ time.
	
	We recall that in the $\UMC$ problem the goal is to determine whether $(\{s,t\},V)$ is the unique maximum cut of the graph $\ol{H}_w$; see Definitions~\ref{def:umc} and~\ref{def:tlc}.
	For this, we construct the two Ising models $M = (\ol{H}_w^{\Gamma},\beta)$ and $M^* =  (\ol{I}_0^{\Gamma},\beta)$,
	as described at the beginning of this section.
	When $3 \le d = O(1)$,
	we choose
	$p = \lfloor m^{1/4} \rfloor$, $\din = d-1$, $\dout = 1$
	and $\ell = \Theta(n^{9/112})$.
	That is,
	$$
	\Gamma = \{m, \lfloor m^{1/4} \rfloor,d-1,1,\Theta(n^{9/112})\}.
	$$
	Recall that $\ell$ is an integer divisible by $\dout$ by assumption.
	Moreover, $\din + \dout = d$ and $\ol{H}_w^{\Gamma}$, $\ol{I}_0^{\Gamma}$
	have exactly $n$ vertices; hence, $\ol{H}_w^{\Gamma},\ol{I}_0^{\Gamma} \in \mathcal{M}(n,d)$.
	
	Suppose $\sigma$ is sampled according from $\mu_M$.
	Theorem~\ref{thm:gadget:fact-ld} implies that with probability $1-o(1)$ over the choice of the random gadget $G$, if the configuration in the gadget $G_v$ for vertex $v \in \ol{V}$
	is re-sampled in $\sigma$, conditional on the configuration of $\sigma$ outside of $G_v$, then the new configuration in $G_v$ will be in either the plus or minus phase with probability
	at least $1 - \frac{2m}{\e^{\delta|\beta| d}}$, for suitable constant $\delta > 0$.
	A union bound then implies that after re-sampling the configuration in every gadget one by one, the resulting configuration $\sigma'$ is in the set $\Omega_{\rm good}$ with probability $1 - \frac{2m(N+2)}{\e^{\delta|\beta| d}}$.
	The same is true if $\sigma$ were sampled from $\mu_{M^*}$ instead.
	Thus,
	\begin{align}
		\mu_M(\Omega_{\rm good}) &= \frac{Z_M(\Omega_{\rm good})}{Z_M} \ge 1 - \frac{2m(N+2)}{\e^{\delta|\beta| d}},~\text{and}	\label{eq:key-bound-M} 	\\
		\mu_{M^*}(\Omega_{\rm good})  &= \frac{Z_{M^*}(\Omega_{\rm good})}{Z_{M^*}} \ge 1 - \frac{2m(N+2)}{\e^{\delta|\beta| d}} \label{eq:key-bound-M*}.	
	\end{align}
	
	Our choices for $N$ and $\Gamma$ satisfy conditions \eqref{eq:cons-cond-1}--\eqref{eq:cons-cond-4}.
	It can also be checked that  $|\beta| (\ell N-d) \ge N$ when $|\beta| d \ge c \ln n$.
	Then,  \eqref{eq:key-bound-M*} and Lemma~\ref{lemma:main-reduction-sampling} imply that we can generate $L$ samples $\mathcal{S} = \{\sigma_1,\dots,\sigma_L\}$ from a distribution $\mu_{\textsc{alg}}$ in $O(nL)$ time such that
	\begin{equation}
		\label{eq:M*-alg}
		\TV{\mu_{M^*}}{\mu_{\textsc{alg}}} \leq \frac{2m(N+2)}{\e^{\delta|\beta| d}} + \frac{1}{\e^{2 |\beta| d}} \leq  \frac{2n}{\e^{\gamma|\beta| d}},
	\end{equation}
	where $\gamma = \min\{2,\delta\}$.
	
	Our algorithm for $\UMC$
	inputs the Ising model $M$ and the $L$ samples $\mathcal{S}$ to the \textsc{Tester} and outputs the  negation of the \textsc{Tester}'s output.
	Recall that  the \textsc{Tester} returns \textsc{Yes} if it regards the samples in $\mathcal{S}$ as samples from $\mu_{M}$; it returns \textsc{No} if it regards them to be from some other distribution $\nu$ such that
	$\TV{\mu_M}{\nu} > 1/3$.
	
	If $(\{s,t\},V)$ is the unique maximum cut of $\ol{H}_w$, then  \eqref{eq:key-bound-M} and part 2 of Lemma~\ref{lemma:main-reduction-phase}  imply:
	$$
	\TV{\mu_M}{\mu_{M^*}} \le  2\left(\frac{2m(N+2)}{\e^{\delta|\beta| d}} + \frac{1}{\e^{2 |\beta| d}}\right) \leq  \frac{4n}{\e^{\gamma|\beta| d}}.
	$$	
	The triangle inequality and \eqref{eq:M*-alg} imply:
	$$
	\TV{\mu_M}{\mu_{\textsc{alg}}} \leq \TV{\mu_M}{\mu_{M^*}} + \TV{\mu_{M^*}}{\mu_{\textsc{alg}}} \leq   \frac{6n}{\e^{\gamma|\beta| d}}.
	$$
	
	Let $\mu_{M}^{\otimes L}$, $\mu_{M^*}^{\otimes L}$ and $\mu_{\textsc{alg}}^{\otimes L}$ be the product distributions corresponding to $L$ independent samples from $\mu_{M}$, $\mu_{M^*}$ and $\mu_{\textsc{alg}}$ respectively.
	When $c > 1/ \gamma$, we have
	$$
	\TV{\mu_{M}^{\otimes L}}{\mu_{\textsc{alg}}^{\otimes L}} \leq L \TV{\mu_M}{\mu_{\textsc{alg}}} \leq  \frac{\e^{|\beta| d/c}}{30n}\cdot  \frac{6n}{\e^{\gamma|\beta| d}} \le \frac{1}{5}.
	$$
	If $\mathbb{P}$ is the optimal coupling of the distributions $\mu_{M}^{\otimes L}$ and $\mu_{\textsc{alg}}^{\otimes L}$, and $\mathcal{S}'$ is sampled from~$\mathbb{P} (\cdot | \mathcal{S})$, then $\mathcal{S}' = \mathcal{S}$ with probability at least $4/5$.
	Hence, the input to the $\textsc{Tester}$ (i.e., $\mathcal{S}$)
	is distributed according to $\mu_{M}^{\otimes L}$ with probability at least $4/5$.
	Let $\mathcal{F}_\mathcal{S}$ be the event that this is indeed the case.
	Recall that the \textsc{Tester} makes a mistake with probability at most $1/4$.
	Moreover,
	if $\mathcal{F}_\mathcal{S}$ occurs and the \textsc{Tester} does not make a mistake, then the \textsc{Tester} would output \textsc{Yes}.
	Therefore,
	\begin{align*}
		\Pr\left[ \textsc{Tester}~\text{outputs}~\textsc{No} \right]
		&\leq \Pr\left[\neg \mathcal{F}_\mathcal{S} ~~\text{or}~~ \textsc{Tester}~\text{makes}~\text{a}~\text{mistake} \right]\\
		&\leq \Pr\left[ \neg \mathcal{F}_\mathcal{S} \right] + \Pr\left[\textsc{Tester}~\text{makes}~\text{a}~\text{mistake}\right]\\
		&\le \frac{1}{5} + \frac{1}{4} = \frac{9}{20}.
	\end{align*}
	Hence, the \textsc{Tester} returns \textsc{Yes} with probability at least $11/20$ in this case.
	
	When $(\{s,t\},V)$ is not the unique maximum cut of $\ol{H}_w$,
	\eqref{eq:key-bound-M} and the second part of Lemma~\ref{lemma:main-reduction-phase} imply
	\begin{equation}
		\label{eq:error-lb}
		\TV{\mu_M}{\mu_{M^*}} > \frac{1}{2} - \frac{n}{\e^{\delta|\beta| d}} - \frac{1}{{\e}^{2|\beta| d}}  > \frac{1}{3},
	\end{equation}
	where the last inequality holds for $n$ large enough, since by assumption that $|\beta| d\geq c\ln n$
	and we chose $c > 1/\gamma$. Moreover, from \eqref{eq:M*-alg} we get
	$$
	\TV{\mu_{M^*}^{\otimes L}}{\mu_{\textsc{alg}}^{\otimes L}} \leq L \TV{\mu_{M^*}}{\mu_{\textsc{alg}}} \leq  \frac{1}{15}.
	$$
	Thus,
	with probability at least $14/15$ the samples in  $\mathcal{S}$ have distribution $\mu_{M^*}^{\otimes L}$.
	Let $\mathcal{F}^*_\mathcal{S}$ be the event that this is the case. Then,
	$$
	\Pr\left[ \textsc{Tester}~\text{outputs}~\textsc{Yes} \right] \leq \Pr\left[ \neg \mathcal{F}^*_\mathcal{S} \right] + \Pr\left[\textsc{Tester}~\text{makes}~\text{a}~\text{mistake}\right] \leq \frac{19}{60}.
	$$
	Hence, the \textsc{Tester} returns \textsc{No} with probability at least $2/3$.
	
	Therefore, our algorithm can solve the $\UMC$ problem on $\ol{H}_w$ in $O(T+L n)$ time with probability at least $11/20$.
	The results for the case when $3 \le d = O(1)$ then follows from Corollary \ref{lemma:unique-mc:alg}
	and the fact that $|V| = N =  \lfloor n^{1/14} \rfloor-2 \ge \lfloor n^{\min\{\frac \rho 4,\frac{1}{14}\}} \rfloor - 2$.
	
	Now, for $d$ such that $d \le n^{1-\rho}$ but $d \rightarrow \infty$,
	we take
	\begin{align*}
	N = \floor{n^{\rho/4}}-2 ,~~
	m = \floor{\frac{n^{1- \rho/4}}{2}},~~\text{and}~~\Gamma = \{m, m,\floor{\theta d},d-\floor{\theta d},\Theta(n^{1-\frac{3\rho}{4}})\},
	\end{align*}
	where $\theta = \theta(\rho)$ is a suitable constant.
	That is,
	$p =m$, $\din = \floor{\theta d}$, $\dout = d -  \floor{\theta d}$ and $\ell = \Theta(n^{1-\frac{3\rho}{4}})$.
	These choices for $N$, $m$ and $\Gamma$ satisfy conditions \eqref{eq:cons-cond-1}--\eqref{eq:cons-cond-4}.
	Hence,
	\eqref{eq:key-bound-M} and 	\eqref{eq:key-bound-M*} can be deduced in similar fashion using Theorem~\ref{thm:gadget:fact-hd} instead.
	The rest of the proof remains unchanged for this case. Note that for this choice of parameters, $|V| = N =  \lfloor n^{\rho/4} \rfloor-2 \ge \lfloor n^{\min\{\frac \rho 4,\frac{1}{14}\}} \rfloor - 2$.
\end{proof}

\begin{rmk}
	\label{rmk:non-int}
	When either $\lfloor n^{1/4} \rfloor$ or  $\lfloor \frac{n^{13/14}}{2} \rfloor$  is not an integer, then
	$2(m+1)(N+3) \ge n > 2m(N+2)$, and
	an identity testing algorithm for $\mathcal{M}(n,d)$
	with running time $T$ and sample complexity $L$
	can be used to solve the same problem for $\mathcal{M}(2m(N+2),d)$ by simply
	``padding'' the graph from $\mathcal{M}(2m(N+2),d)$ with
	$n-2m(N+2)$ isolated vertices.
	The samples from the hidden distribution can be extended by
	adding isolated vertices and independently assigning ``$+$'' or ``$-$'' with probability $1/2$  to each of them.
	Hence, the resulting algorithm for identity testing in $\mathcal{M}(2m(N+2),d)$ has running time $T' = T(n)+O(n)$ and sample complexity $L(n)$.
	In this case, the proof of Theorem~\ref{thm:main} gives that there is an algorithm for the $\UMC$ problem on graphs with $N$ vertices
	with running time $O(T' + Nm \cdot L(n)) = O(T(n)+n\cdot L(n))$. Hence, Theorem~\ref{thm:main} holds for all sufficiently large $n$.
\end{rmk}

\begin{rmk}
	\label{rmk:error}
	From~\eqref{eq:error-lb} we see that our proof works when $\|\mu_{G,\beta}-\mu_{\Gu,\bu}{\|}_\textsc{tv}> \varepsilon$
	for any constant $\varepsilon \in (0, 1/2)$.
	With a minor modification to the proof, we can extend our result
	to all constant values of $\varepsilon \in (0,1)$, provided $n$ is sufficiently large.
	Specifically, if we assume that
	the starting graph $H$ has a maximum cut of \emph{odd} size,
	then it is straightforward to verify that either $(\{s,t\},V)$ is the unique maximum cut of $\ol{H}_w$
	or there is some other cut with \emph{strictly} more edges. If this is the case, then we can use part 3 of Lemma~\ref{lemma:main-reduction-phase} (instead of part 2) and deduce that the bound in~\eqref{eq:error-lb} becomes
	$\|\mu_{M}-\mu_{M^*}{\|}_\textsc{tv}> 1-\varepsilon$
	for any desired constant $\varepsilon \in (0,1)$.
	It can be easily checked that the $\UMC$ problem restricted to graphs with odd maximum cuts is still hard; in particular, any algorithm for $\UMC$ that works for this type of input can be used to solve the $\MC$ problem efficiently.
\end{rmk}

\subsection{Relating the Ising models $M$ and $M^*$: proof of Lemmas \ref{lemma:main-reduction-phase} and \ref{lemma:main-reduction-sampling}}
\label{subsec:M}

Let $\sigma^+ = \sigma^+(\ol{H}_w^{\Gamma})$ be the configuration of $\ol{H}_w^{\Gamma}$
such that the gadgets for $s$ and $t$ are in the plus phase
and every other gadget is in the minus phase; define $\sigma^- = \sigma^-(\ol{H}_w^{\Gamma})$ in similar manner but interchanging ``$+$'' and ``$-$'' everywhere.
(Recall that a gadget is in the plus (resp., minus) phase if all the vertices in $L$ (resp., $R$)
are assigned ``$+$'' in $\sigma$ and all the vertices in $R$ (resp., $L$) are assigned ``$-$''.)
Let $\Omega^0 = \Omega^0(\ol{H}_w^{\Gamma})= \{\sigma^+,\sigma^-\}$.
The following fact will be used in the proof of
Lemma~\ref{lemma:main-reduction-phase}.

\begin{fact}
	\label{fact:redunction:is}
	Let $N \ge 1$ be an integer and let $\beta < 0$.
	Let $\Gamma = (m,p,\din,\dout,\ell)$ be such that $|\beta| (\ell N-d) \ge N$ and conditions \eqref{eq:cons-cond-1}--\eqref{eq:cons-cond-4}
	are satisfied.
	If for the Ising model $M^* = (\ol{I}_0^\Gamma,\beta)$ we have
	$Z_{M^*}(\Omega_{\rm good}) \ge (1-\varepsilon) Z_{M^*}$ for some $\varepsilon \in (0,1)$, then
	$
	\mu_{M^*}(\Omega^0) \ge 1 - \varepsilon - {\e}^{- 2|\beta| d}.
	$
\end{fact}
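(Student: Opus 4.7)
\emph{Plan.} The hypothesis is equivalent to $\mu_{M^*}(\Omega \setminus \Omega_{\mathrm{good}}) \le \varepsilon$. Since $\Omega \setminus \Omega^0 = (\Omega \setminus \Omega_{\mathrm{good}}) \cup (\Omega_{\mathrm{good}} \setminus \Omega^0)$, it suffices to show $\mu_{M^*}(\Omega_{\mathrm{good}} \setminus \Omega^0) \le \e^{-2|\beta| d}$, which I will do by a direct weight comparison against $\Omega^0 = \{\sigma^+, \sigma^-\}$.

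Every good configuration is specified by a phase $\pi:\{s,t\}\cup V \to \{+,-\}$ for each of the $N+2$ gadgets, so $|\Omega_{\mathrm{good}}| = 2^{N+2}$. Because each gadget is bipartite and lies in a pure phase, every internal edge is bichromatic, and the weight of a good configuration depends only on its cross-gadget edges. A direct check of the cross-gadget construction shows that, for each edge $\{v,u\}$ of $\ol{I}_0$, the $2\ell$ edges linking $G_v$ and $G_u$ are either all monochromatic or all bichromatic, and that $\sigma^+$ and $\sigma^-$---whose phase patterns are $\pi^+$ (with $\pi^+(s)=\pi^+(t)=+$ and $\pi^+(v)=-$ for $v\in V$) and $\pi^-$---are the only good configurations making every cross-gadget edge bichromatic, reflecting that $(\{s,t\},V)$ is the unique maximum cut of $\ol{I}_0$. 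Writing $k(\pi)$ for the number of monochromatic edges of $\ol{I}_0$ (counted with multiplicity) induced by $\pi$, the weight of the corresponding $\sigma$ equals $\e^{-2|\beta|\ell k(\pi)}$; in particular $Z_{M^*}(\Omega^0) = 2$.

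The combinatorial crux is the bound $k(\pi) \ge 2N$ for every $\pi \notin \{\pi^+,\pi^-\}$. The edge multiset of $\ol{I}_0$ consists of $N$ copies of $\{s,v\}$ and $N$ copies of $\{t,v\}$ for each $v\in V$, so I split on $\pi(s),\pi(t)$: if $\pi(s)=\pi(t)$ and $\pi \notin \Omega^0$, some $v\in V$ satisfies $\pi(v)=\pi(s)=\pi(t)$, contributing $2N$ monochromatic edges; if $\pi(s)\neq \pi(t)$, every $v \in V$ matches exactly one of $s,t$, so $k(\pi)=N^2\ge 2N$ for $N\ge 2$. In either case $k(\pi)\ge 2N$, and the weight of the corresponding configuration is at most $\e^{-4|\beta|\ell N}$.

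Combining the estimates,
\[
\mu_{M^*}(\Omega_{\mathrm{good}} \setminus \Omega^0) \;\le\; \frac{Z_{M^*}(\Omega_{\mathrm{good}} \setminus \Omega^0)}{Z_{M^*}(\Omega^0)} \;\le\; \frac{2^{N+2}\,\e^{-4|\beta|\ell N}}{2} \;=\; 2^{N+1}\,\e^{-4|\beta|\ell N}.
\]
The hypothesis $|\beta|(\ell N - d)\ge N$ gives $4|\beta|\ell N \ge 4|\beta|d+4N \ge 2|\beta|d+(N+1)\ln 2$ (using $(N+1)\ln 2 \le 4N$ for $N \ge 1$), so the ratio is at most $\e^{-2|\beta|d}$. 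Combining with $\mu_{M^*}(\Omega \setminus \Omega_{\mathrm{good}})\le\varepsilon$ yields $\mu_{M^*}(\Omega^0) \ge 1 - \varepsilon - \e^{-2|\beta|d}$. The main obstacle is the case-analysis lower bound $k(\pi) \ge 2N$; once it is secured, the rest is routine arithmetic leveraging the gap between $|\beta|\ell N$ and $|\beta|d$.
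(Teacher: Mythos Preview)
Your argument is correct and essentially the same as the paper's. The paper avoids your case split by observing only that for any $\sigma\in\Omega_{\rm good}\setminus\Omega^0$ at least one of $G_s,G_t$ shares its phase with some neighboring gadget---giving the simpler bound $w_{M^*}(\sigma)\le \e^{-2|\beta|\ell N}$ (i.e.\ $k(\pi)\ge N$ rather than your $k(\pi)\ge 2N$)---and then uses the multiplicative decomposition $\mu_{M^*}(\Omega^0)=\frac{2}{Z_{M^*}(\Omega_{\rm good})}\cdot\frac{Z_{M^*}(\Omega_{\rm good})}{Z_{M^*}}$ instead of your additive one; your sharper combinatorial bound is not needed (and, as you note, its Case~2 requires $N\ge 2$).
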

\begin{proof}
	The weight of the configurations $\sigma^+$, $\sigma^-$ satisfy:
	$
	w_{M^*}(\sigma^+)=w_{M^*}(\sigma^-) = 1.
	$
	If $\sigma\in \Omega_{\rm good} \setminus \Omega^0$,
	then
	the gadget for either $s$ or $t$ is connected
	to the gadget of at least one other vertex in the same phase by $2\ell N$ edges. Hence,
	$
	w_{M^*}(\sigma) \leq \e^{2 \beta \ell N} = \e^{-2|\beta| \ell N}
	$ and
	\begin{align*}
		Z_{M^*}(\Omega_{\rm good} \setminus \Omega^0) &= \sum_{\sigma\in \Omega_{\rm good} \setminus \Omega_0} w_{M^*}(\sigma)
		\leq |\Omega_{\rm good}|\cdot \e^{-2|\beta|\ell N}
		=  2^{N+2}\cdot \e^{-2|\beta|\ell N}
		\leq {\e}^{- 2|\beta| d},
	\end{align*}
	where in the last inequality we used the fact that $|\beta| (\ell N-d) \ge N$ by assumption.
	Then,
	$$
	Z_{M^*}(\Omega_{\rm good}) \le 2 + {\e}^{- 2|\beta| d}
	$$
	and so
	$$
	\mu_{M^*}(\Omega^0) = \frac{2}{Z_{M^*}(\Omega_{\rm good})} \cdot \frac{Z_{M^*}(\Omega_{\rm good})}{Z_{M^*}} \ge \left(1 - {\e}^{- 2|\beta| d}\right)  \frac{Z_{M^*}(\Omega_{\rm good})}{Z_{M^*}} \ge 1 - {\e}^{- 2|\beta| d}- \varepsilon,
	$$
	as claimed.
\end{proof}
We are now ready to prove Lemmas~\ref{lemma:main-reduction-phase} and~\ref{lemma:main-reduction-sampling}.
\begin{proof}[Proof of Lemma~\ref{lemma:main-reduction-phase}]
	We show that when $(\{s,t\},V)$ is the unique maximum cut of $\ol{H}_w$, then
	\begin{equation}
		\label{eq:key-fact-1}
		\mu_{M}(\Omega^0) \ge 1 - \varepsilon-{\e}^{- 2|\beta| d}.
	\end{equation}
	Since by symmetry $\mu_{M}(\sigma^+)=\mu_{M}(\sigma^-)$ and $\mu_{M^*}(\sigma^+)=\mu_{M^*}(\sigma^-)$,
	Fact~\ref{fact:redunction:is} implies
	$$
	\TV{\mu_M}{\mu_{M^*}} \le \left|\mu_M(\sigma^+)-\mu_{M^*}(\sigma^+)\right| +\frac{\mu_M(\Omega\setminus\Omega^0)+\mu_{M^*}(\Omega\setminus\Omega^0)}{2} \le 2(\varepsilon + {\e}^{- 2|\beta| d})
	$$
	and part 1 follows. (Recall that $\Omega$ is the set of Ising configurations of the graphs $\ol{H}_w^\Gamma$ and $\ol{I}_0^\Gamma$.)
	
	To establish \eqref{eq:key-fact-1}, observe that
	\begin{equation}
		\label{eq:prob-lb}
		\mu_{M}(\Omega^0) = \frac{Z_M(\Omega^0)}{Z_M(\Omega_{\rm good})} \cdot \frac{Z_M(\Omega_{\rm good})}{Z_M} \ge \frac{(1-\varepsilon) Z_M(\Omega^0)}{Z_M(\Omega_{\rm good})},
	\end{equation}
	where the last inequality follows from the assumption that $Z_M(\Omega_{\rm good}) \ge (1-\varepsilon)Z_M$.
	
	For $\sigma \in \Omega_{\rm good}$, let $\mathcal{I}(\sigma)$
	be the number of edges $\{u,v\}$ of $\ol{H}_w$ such that the gadgets corresponding to vertices $u$ and $v$ in $\ol{H}_w^\Gamma$ are in the same phase in $\sigma$.
	Since every edge of $\ol{H}_w$ correspond to exactly $2\ell$ edges in $\ol{H}_w^\Gamma$, we have
	$w_M(\sigma) =  {\e}^{2\beta\ell \mathcal{I}(\sigma)} = {\e}^{-2|\beta|\ell \mathcal{I}(\sigma)}$.
	Moreover, $\mathcal{I}(\sigma^+) = \mathcal{I}(\sigma^-) = w+|E|$, where $E$ is the set of edges of the graph $H$.
	When $(\{s,t\},V)$ is the unique maximum cut of $\ol{H}_w$,
	$\mathcal{I}(\sigma) \ge w+|E|+1$
	for all $\sigma\in \Omega_{\rm good} \setminus \Omega_0$.
	Therefore,
	$Z_M(\Omega^0) = 2 {\e}^{-2|\beta|\ell(w + |E|)}$ and for $\sigma\in \Omega_{\rm good} \setminus \Omega_0$
	$$
	w_{M}(\sigma) \leq \e^{-2|\beta|\ell(w+|E|+1)} = \frac{Z_M(\Omega^0)}{2\e^{2|\beta|\ell}}.
	$$
	Then,
	\begin{align*}
		Z_{M}(\Omega_{\rm good}) &=Z_M(\Omega^0) + \sum_{\sigma\in \Omega_{\rm good} \setminus \Omega_0} w_{M}(\sigma)
		\leq Z_M(\Omega^0) + |\Omega_{\rm good}|\cdot\frac{Z_M(\Omega^0)}{2\e^{2|\beta|\ell}}
		= Z_M(\Omega^0) \left(1 + \frac{2^{N+1}}{\e^{2|\beta|\ell}}\right).
	\end{align*}
	By assumption $|\beta|( \ell-d) \ge N$, so
	$
	Z_{M}(\Omega_{\rm good}) \leq Z_M(\Omega^0) \left(1 + {\e}^{- 2|\beta| d}\right).
	$
	Thus, we deduce that
	$$
	\frac{Z_M(\Omega^0)}{Z_M(\Omega_{\rm good})} \geq \frac{1}{1 + {\e}^{- 2|\beta| d}} \geq 1 - {\e}^{- 2|\beta| d}.
	$$
	Plugging this bound into \eqref{eq:prob-lb} gives \eqref{eq:key-fact-1} and the proof of part 1 of the lemma is complete.
	
	For the second part we show that when $(\{s,t\},V)$ is not the unique maximum cut of $\ol{H}_w$, then
	\begin{equation}
		\label{eq:conditional-bound}
		\mu_{M}(\Omega^0) \le \frac{1}{2}.
	\end{equation}
	By Fact~\ref{fact:redunction:is},
	$
	\mu_{M^*} (\Omega^0) \geq 1 - \varepsilon - {\e}^{- 2|\beta| d};
	$
	hence,
	$$
	\TV{\mu_M}{\mu_{M^*}} \geq \left| \mu_{M^*}(\Omega^0) - \mu_{M} (\Omega^0) \right| \geq \frac{1}{2} - \varepsilon - {\e}^{- 2|\beta| d}
	$$
	and part 2 follows.
	
	To establish \eqref{eq:conditional-bound},
	let $(S,\hat{V}\setminus S)\neq (\{s,t\},V)$ be a maximum cut of the graph $\hat{H}_w$. Let $\sigma_*^+$ (resp., $\sigma_*^-$) be the Ising configuration of the graph $\ol{H}_w^\Gamma$ where the gadgets corresponding to vertices in $S$ are in the plus phase (resp., minus phase), and the remaining gadgets are in the minus phase (resp., plus phase).
	Since $(S,\hat{V}\setminus S)$ is a maximum cut of $\hat{H}_w$,
	$\mathcal{I}(\sigma_*^+) = \mathcal{I}(\sigma_*^-) \le \mathcal{I}(\sigma^+) = \mathcal{I}(\sigma^-)$
	and so
	$
	w_M(\{\sigma_*^+,\sigma_*^-\}) \geq w_M(\Omega^0).
	$
	It follows that
	$$
	Z_M \geq w_M(\Omega^0) + w_M(\sigma_*^+,\sigma_*^-) \geq 2w_M(\Omega^0)
	$$
	and
	$
	\mu_{M}(\Omega^0) = {w_M(\Omega^0)}/{Z_M} \leq 1/2;
	$
	this gives \eqref{eq:conditional-bound} and part 2 follows.

	Part 3 follows in similar fashion.
	Let $(S,\hat{V}\setminus S)$ be a cut of $\hat{H}_w$
	with strictly more edges than the cut $(\{s,t\},V)$.
	Let $\sigma_*^+$ (resp., $\sigma_*^-$) be Ising configuration of $\ol{H}_w^\Gamma$
	determined by $(S,\hat{V}\setminus S)$
	as in the proof of part 2.
	Then,
	$\mathcal{I}(\sigma_*^+) = \mathcal{I}(\sigma_*^-) < \mathcal{I}(\sigma^+) = \mathcal{I}(\sigma^-)$
	and
	$
	w_M(\{\sigma_*^+,\sigma_*^-\}) \geq {\e}^{2|\beta| \ell} w_M(\Omega^0).
	$
	It follows that
	$$
	Z_M \geq w_M(\Omega^0) + w_M(\sigma_*^+,\sigma_*^-) \geq (1+ {\e}^{2|\beta| \ell})w_M(\Omega^0)
	$$
	and
	$$
	\mu_{M}(\Omega^0) = \frac{w_M(\Omega^0)}{Z_M} \leq \frac{1}{1+ {\e}^{2|\beta| \ell}} \le  {\e}^{-2|\beta| \ell} \le {\e}^{-2|\beta| d},
	$$
	where in the last inequality we use the assumption that $|\beta|( \ell-d) \ge N$ and so $\ell \ge d$.
	This bound and Fact~\ref{fact:redunction:is} imply
	$$
	\TV{\mu_M}{\mu_{M^*}} \geq \left| \mu_{M^*}(\Omega^0) - \mu_{M} (\Omega^0) \right| \geq 1 - \varepsilon - 2 {\e}^{- 2|\beta| d},
	$$
	as claimed.
	\end{proof}

\begin{proof}[Proof of Lemma~\ref{lemma:main-reduction-sampling}]
	By Fact~\ref{fact:redunction:is},
	$
	\mu_{M^*}(\Omega^0) \ge 1 -  \varepsilon - {\e}^{- 2|\beta| d}.
	$
	Also, $\mu_{M^*}(\sigma^+)=\mu_{M^*}(\sigma^-) = \mu_{M^*}(\Omega^0)/2$.
	Hence, if $\mu_{\textsc{alg}}$ is the uniform distribution over $\{\sigma^+,\sigma^-\}$, we have
	$$
	\TV{\mu_{M^*}}{\mu_{\textsc{alg}}} =  \left|\mu_{M^*}(\sigma^+)-\frac 12\right| + \frac{1-\mu_{M^*}(\Omega^0)}{2} \le \varepsilon + {\e}^{- 2|\beta| d}.
	$$
	The results follows from the fact that a sample from $\mu_{\textsc{alg}}$ can be generated in $O(mN)$ time.	
\end{proof}

\section{Properties of the Ising gadget}
\label{subsection:key-gadget-fact}

In this section we prove the key properties of the random bipartite graph $\Gr(m,p,\din,\dout)$ that were used
in Section~\ref{section:main-proof} to establish our main result.
In particular, we establish Theorems~\ref{thm:gadget:fact-ld} and~\ref{thm:gadget:fact-hd}.
Throughout this section we let $G= (V_G = L \cup R, E_G)$ be an instance of $\Gr(m,p,\din,\dout) $ as defined in Section~\ref{section:main-proof}.
Recall that $|L|=|R|=m$ and that there is a set $P$ of ports such that
$|P \cap L| = |P \cap R| = p$.
For $S,T\subset V_G$  define
\begin{align*}
E(S,T) &= \left\{ \{u,v\}\in E_G: u\in S, v\in T \right\}.
\end{align*}
In the proof of Theorems~\ref{thm:gadget:fact-ld} and~\ref{thm:gadget:fact-hd} we crucially use the following facts about the edge expansion of the random graph $G$.
\begin{thm}
	\label{thm:expansion-generic-hd}
	Suppose $p=m$ and $3 \le \din \le d \le m^{1-\rho}$
	where $\rho \in (0,1)$ is a constant independent of $m$.
	Then, with probability $1-o(1)$ over the choice of the random graph $G$:
	$$
	\min_{\substack{S\subset V_G:\\0<|S|\leq m}} \frac{|E(S,V_G\backslash S)|}{|S|} \ge  \frac{\rho \din}{300}.
	$$
\end{thm}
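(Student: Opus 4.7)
The plan is to prove Theorem~\ref{thm:expansion-generic-hd} via a first-moment (union bound) argument over subsets $S \subseteq V_G$ with $|S| = s \le m$. Fix such an $S$, write $a = |S \cap L|$, $b = |S \cap R|$, and for each of the $\din$ independent random perfect matchings $M_i$ between $L$ and $R$ let $e_i$ be the number of edges of $M_i$ with both endpoints in $S$. Each matching contributes $a + b - 2e_i$ edges to the cut in the multigraph produced in step 5, so
$$
|E(S, V_G \setminus S)| \;\ge\; \din s - 2\sum_{i=1}^{\din} e_i - D_S,
$$
where $D_S$ is the number of cut edges lost when step 6 turns the multigraph into a simple graph. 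A pair-wise union bound shows $D_S = O(\din^2 s/m) = o(\din s)$ uniformly over all $S$ with probability $1-o(1)$ (using $\din \le m^{1-\rho}$), so this term is absorbed into the constant. Since $e_i \le \min(a,b)$, the right-hand side is always at least $\din|a-b|$, and the expansion claim is immediate when $|a-b| \ge (\rho/100)s$. I henceforth restrict to the \emph{balanced} regime $\min(a,b) \ge s(1-\rho/100)/2$, in which it suffices to show $\sum_i e_i \le (\tfrac{1}{2} - \tfrac{\rho}{400})\din s$ with failure probability at most $m^{-1}\binom{2m}{s}^{-1}$.

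The core technical step is a tail bound on $\sum_i e_i$. The $e_i$'s are independent, each with mean $ab/m \le s^2/(4m)$ and hypergeometric-type tail $\Pr[e_i \ge k] \le \binom{a}{k}\binom{b}{k}/\binom{m}{k}$. I would split into two overlapping regimes. For \emph{small} $s$ (say $s$ at most a small enough constant multiple of $m$), a Chernoff bound built from the MGF estimate $E[e^{\lambda e_i}] \le \exp((ab/m)(e^\lambda-1))$ (valid for hypergeometrics by Hoeffding's majorization against the same-mean Binomial) combined with independence of the matchings yields
$$
\Pr\Bigl[\sum_i e_i \ge \bigl(\tfrac{1}{2} - \tfrac{\rho}{400}\bigr)\din s\Bigr] \le \exp\bigl(-\Omega_\rho(\din s\,\log(m/s))\bigr),
$$
which comfortably dominates $\binom{2m}{s} \le (2em/s)^s$. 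For \emph{large} $s$, I invoke the spectral expansion results of~\cite{BDH} for unions of random matchings: with probability $1-o(1)$, the second singular value of the biadjacency matrix of $G$ satisfies $\sigma_2 \le 2\sqrt{\din-1} + o(1)$. The bipartite Cheeger inequality then gives $h(G) \ge (\din - \sigma_2)/2$, which exceeds $\rho\din/300$ for every $\din \ge 3$ and $\rho \in (0,1)$, and this handles the large-$s$ regime deterministically on the high-probability event that the spectral bound holds.

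Combining the two regimes with a union bound over $s = 1,\ldots,m$ and over the $\binom{2m}{s}$ subsets of each size yields the edge-expansion bound with probability $1-o(1)$. The main obstacle I anticipate is the large-$s$ regime: naive Hoeffding concentration for $\sum_i e_i$ produces only $\exp(-c\din)$ tails independent of $s$, which cannot overcome the $\binom{2m}{s} \sim 4^m$ factor when $s = \Theta(m)$, so the sharp Friedman-style spectral estimate of~\cite{BDH} is essential. The specific constant $\rho\din/300$ in the conclusion emerges from pushing the Cheeger-style bound and the balanced-case reduction through while leaving room for the deduplication loss and the small-$s$/large-$s$ transition, each of which is controlled by the gap between $\din$ and $m$ ensured by the hypothesis $\din \le m^{1-\rho}$.
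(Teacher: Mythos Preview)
Your approach diverges from the paper's, and it has a genuine gap in the large-$s$ regime. You invoke the spectral result of~\cite{BDH} for the union of $\din$ random matchings and then apply the bipartite Cheeger inequality. But the paper records~\cite{BDH} as Lemma~\ref{lemma:spectral}, and that statement is explicitly restricted to $d=O(1)$; the Friedman-type bound $\lambda_2\le 2\sqrt{d-1}+o(1)$ for the permutation model with $d$ \emph{growing} with $m$ is not what \cite{BDH} supplies (at least not in the form the paper cites), and nothing else in the paper gives you that. Moreover, notice that if the spectral bound \emph{did} hold for the $\din$-matching multigraph, Cheeger would give $h\ge(\din-\lambda_2)/2$ for every $S$ with $|S|\le m$, so your small-$s$ first-moment argument would be entirely redundant. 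The fact that your proof needs two overlapping regimes is a symptom that the spectral input you are assuming is stronger than what is actually available.

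The paper's proof is engineered precisely to avoid this issue. It never applies~\cite{BDH} at degree $\din$. Instead it drops from $\din$ to the nearest multiple of~$3$, writes the $d'$-matching multigraph $F_{d'}$ as a disjoint union of $d'/3$ independent copies of $F_3$, and applies Lemma~\ref{lemma:spectral} (with $d=3$, a constant) plus Cheeger to each copy separately to get per-copy expansion $\ge 0.08$ with probability $1-o(1)$. A one-line Chebyshev bound then shows that at least $3d'/10$ of the copies succeed simultaneously, yielding multigraph edge-expansion $\ge 0.024\,d'$. This decomposition-into-$F_3$'s step is the idea you are missing; it is what lets growing $\din$ be handled using only a constant-degree spectral gap.

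Your deduplication treatment is also looser than the paper's. You subtract an additive loss $D_S=O(\din^2 s/m)$ and claim it is $o(\din s)$ uniformly, but for $s=O(1)$ this would require bounding per-vertex multi-edge counts, which you do not do. The paper instead proves (Lemma~\ref{lemma:multiedges}) that every edge has multiplicity at most $\kappa=\lceil 3/\rho\rceil$ with probability $1-O(m^{-1})$, and then simply divides the multigraph expansion by $\kappa$; the final inequalities $\kappa\le 4/\rho$ and $d'\ge 3\din/5$ are what produce the constant $300$ in the statement.
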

\begin{thm}
	\label{thm:expansion-generic}
	Suppose $3 \le d = O(1)$,
	$p = \floor{m^\alpha}$ with	
	$\alpha \in (0,\frac{1}{4}]$,
	$\din=d-1$
	and
	$\dout = 1$.
	Then, there exists a constant $\gamma > 0$  independent of $m$ such that with probability $1-o(1)$ over the choice of the random graph $G$:
	$$
	\min_{\substack{S\subset V_G:\\0<|S|\leq m}} \frac{|E(S,V_G\backslash S)|}{|S|} \ge \gamma d.
	$$
\end{thm}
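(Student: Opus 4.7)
My plan is a first-moment / union-bound argument on bad sets in the bipartite multigraph obtained from the $\din+\dout=d$ random perfect matchings; converting to the simple graph $G$ afterwards loses at most $O(\log m)$ edges w.h.p.\ (since the expected number of coinciding matching-edge pairs is $O(1)$), which is negligible against the target $\gamma d|S|$ outside the very-small-$|S|$ regime. For a fixed $S\subset V_G$ with $|S\cap L|=a$, $|S\cap R|=b$, $k=a+b$, the degree identity $|E(S,V_G\setminus S)| = \sum_{v \in S}\deg_G(v) - 2E_{\mathrm{in}}(S) \ge dk - \min(k,p) - 2E_{\mathrm{in}}(S)$ reduces the claim to showing $E_{\mathrm{in}}(S) \le \tfrac12(1-2\gamma)dk$ w.h.p.\ uniformly in $S$, for a suitable constant $\gamma>0$.

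The key ingredient is a tail bound for $E_{\mathrm{in}}(S) = \sum_{i=1}^{\din}X_i + X'$, expressed as a sum of independent hypergeometric counts each with mean $\le ab/m$. A standard Chernoff computation (via stochastic domination of hypergeometric by binomial) yields $\Pr[E_{\mathrm{in}}(S)\ge T] \le (edab/(mT))^T$ for $T$ above the mean. Plugging $T=(1-2\gamma)dk/2$ and $ab\le k^2/4$ gives $\bigl(ek/(2(1-2\gamma)m)\bigr)^{(1-2\gamma)dk/2}$, and the union bound against $\binom{2m}{k}\le (2em/k)^k$ candidates is summable once $k \ge C\log m$ for a large enough constant $C$ and $\gamma$ a small enough constant (depending on $d \ge 3$): the exponent per term is roughly $k\bigl[\log(2em/k) - \tfrac12(1-2\gamma)d\,\log(m/k)\bigr]$, which is negative and of order $-k\log(m/k)$ whenever $d \ge 3$.

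For the small-$|S|$ regime $k \le C\log m$ the naive union bound fails because the entropy $k\log(2em/k)$ beats the tail decay, so I would instead invoke a local tree-likeness argument: a direct first-moment count in the matching model shows that w.h.p.\ no set $S$ of size at most $C\log m$ spans more than $k-1$ edges, hence $E_{\mathrm{in}}(S) \le k-1$ and $|E(S,V_G\setminus S)| \ge (d-2)k - \min(k,p) + 2$, which meets the $\gamma dk$ bound after absorbing the port term (the extra-small range $k = O(m^{1/4})$ being handled by ruling out atypically dense port configurations via a direct first-moment argument on pairs of ports). The principal technical obstacle is the boundary case $d=3$, where the Chernoff exponent and the entropy $\log\binom{2m}{k}$ are of precisely the same order: closing the gap requires exploiting the full $\log(m/k)$ factor in the tail rather than a constant multiple of $d$, and carefully accounting for sets $S$ that are mostly ports (whose effective degree is only $\din=d-1=2$). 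This is exactly what motivates the port-based refinement in Theorem~\ref{thm:expansion-port}; by measuring expansion against the smaller port set, one gains an additional $m^{-\Omega(1)}$ factor per candidate bad $S$ (since $p = \lfloor m^{\alpha}\rfloor$ with $\alpha \le 1/4$), which comfortably dominates the entropy and closes the argument for the $d=3$ endpoint.
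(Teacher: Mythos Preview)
Your first-moment/union-bound program has a genuine gap in the \emph{large-$|S|$ regime}, precisely where the paper takes a different route.  For $k=|S|$ close to $m$ (say $a=b=m/2$), the mean of $E_{\mathrm{in}}(S)$ is $\mu\approx dk^2/(4m)=dm/4$, while your target $T=(1-2\gamma)dk/2\approx dm/2$ is only a factor~$2$ above the mean.  In this regime the Poisson-type Chernoff tail $(e\mu/T)^T$ is worthless (indeed $>1$ since $e/2>1$), and your displayed exponent $k[\log(2em/k)-\tfrac12(1-2\gamma)d\log(m/k)]$ is \emph{positive} at $k=m$ because $\log(m/k)=0$ while $\log(2e)>0$.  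Replacing the Poisson tail by a Hoeffding/Azuma bound does not save the union bound either: the concentration scale is $e^{-c\,d m}$ with $c\le 2(1/4-\gamma)^2<1/8$, whereas the number of balanced sets is $\binom{2m}{m}\approx e^{(2\ln 2)m}$; for $d=3$ one has $cd<3/8<2\ln2$, so the union bound diverges.  This is not a matter of sharpening constants---it is the well-known obstacle that direct counting does not establish expansion of random $d$-regular (bipartite) graphs across balanced bisections for small~$d$.

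The paper avoids this by splitting the range.  For $|S|\le\varepsilon m$ it uses a vertex-expansion first-moment bound (Lemma~\ref{lem:small-vx-expansion}, then Lemma~\ref{lem:small-expansion}), in the same spirit as your argument and where your Chernoff/union-bound calculation is in fact fine.  For $\varepsilon m<|S|\le m$ it invokes the Friedman-type spectral gap for random bipartite $d$-regular multigraphs (Lemma~\ref{lemma:spectral}, from \cite{BDH}): $\lambda_2(\hat G)<2\sqrt{d-1}+\delta$ w.h.p., and then Cheeger's inequality gives $|\hat E(S,\bar S)|/|S|\ge (d-\lambda_2)/2\ge d/40$.  Passing from the multigraph $\hat G$ to $G$ costs only the $p\le m^{1/4}$ removed port-edges and a bounded multiplicity factor (Lemma~\ref{lemma:multiedges}), both negligible once $|S|\ge\varepsilon m$.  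So the missing idea in your plan is exactly this spectral input for the linear-size regime.

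Two smaller points.  First, your tree-likeness claim for $k\le C\log m$ (``no set of size $\le C\log m$ spans more than $k-1$ edges'') is false as stated: random $d$-regular bipartite graphs have girth $\Theta(\log_{d-1} m)$, so for $C$ large there \emph{are} cycles of length $\le C\log m$, i.e.\ sets with $E_{\mathrm{in}}(S)=|S|$.  (This is not fatal---$E_{\mathrm{in}}(S)\le |S|$ would still give expansion $\ge d-2$---but the argument needs repair, and the paper instead gets this range cleanly from vertex expansion.)  Second, your appeal to Theorem~\ref{thm:expansion-port} at the end misreads its role: that theorem is a \emph{separate} statement about expansion measured against $|P\cap S|$, used later to control boundary influence through ports; it is not a device for rescuing the $d=3$ endpoint of Theorem~\ref{thm:expansion-generic}, and in the paper it is proved \emph{using} Theorem~\ref{thm:expansion-generic} for the large-$|S|$ case, not the other way around.
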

\begin{thm}
	\label{thm:expansion-port}
	Suppose $3 \le d = O(1)$,
	$p = \floor{m^\alpha}$ with	
	$\alpha \in (0,\frac{1}{4}]$,
	$\din=d-1$
	and
	$\dout = 1$. Then, there exists a constant $\gamma > 0$ independent of $m$ such that with probability $1-o(1)$ over the choice of the random graph $G$: $$
	\min_{\substack{S\subset V_G:\\0<|P\cap S|\leq |S|\leq m}} \frac{|E(S,V_G\backslash S)|}{|P\cap S|} > 1 + \gamma.
	$$
\end{thm}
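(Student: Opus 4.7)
The plan is to split the analysis by how port-dense $S$ is. Let $\gamma_0>0$ denote the expansion constant provided by Theorem~\ref{thm:expansion-generic}, fix a constant $c_1\in(0,1)$ with $\gamma_0 d/c_1>1+\gamma$ for some target $\gamma\in(0,d-2)$ (possible because $d\ge 3$), and recall that $\din=d-1$, $\dout=1$, $p=\lfloor m^{\alpha}\rfloor$ with $\alpha\in(0,\tfrac14]$. In the \emph{sparse-port regime} $|P\cap S|\le c_1|S|$, Theorem~\ref{thm:expansion-generic} immediately gives
\[
\frac{|E(S,V_G\setminus S)|}{|P\cap S|}\ \ge\ \frac{\gamma_0 d\,|S|}{|P\cap S|}\ \ge\ \frac{\gamma_0 d}{c_1}\ >\ 1+\gamma,
\]
so the content of the theorem lies in the complementary \emph{port-rich regime} $|P\cap S|>c_1|S|$. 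In that regime $|P\cap S|\le|P|=2p$ combined with $|P\cap S|>c_1|S|$ forces $|S|\le 2p/c_1=O(m^{\alpha})$, so $S$ is a very small subset of $V_G$.

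For the port-rich regime I would establish three structural properties, each holding with probability $1-o(1)$ over the random matchings: (i)~every port $v\in P$ has $\din$ distinct in-matching partners, so its simple-graph degree equals $\din=d-1$; (ii)~no edge of $G$ has both endpoints in $P$; (iii)~for every $S$ with $|P\cap S|>c_1|S|$ and $|S|\le 2p/c_1$, the quantity $2e(S)+\sum_{v\in S}\delta(v)$ is strictly less than $(1-\gamma)|P\cap S|$, where $e(S)$ denotes the number of edges of the underlying multigraph with both endpoints in $S$, and $\delta(v):=\deg_{G_{\mathrm{multi}}}(v)-\deg_G(v)$ is the multi-edge loss at $v$. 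Properties~(i) and~(ii) have expected violation counts $O(p/m)$ and $O(p^2/m)$ respectively, both $o(1)$ since $p=O(m^{1/4})$. Under these events,
\[
|E(S,V_G\setminus S)|\ \ge\ \sum_{v\in S}\deg_{G_{\mathrm{multi}}}(v)-\sum_{v\in S}\delta(v)-2e(S)\ =\ d|S|-|P\cap S|-2e(S)-\sum_{v\in S}\delta(v),
\]
and combining with~(iii), $|S|\ge|P\cap S|$, and $d\ge 3$ yields
\[
\frac{|E(S,V_G\setminus S)|}{|P\cap S|}\ >\ \frac{d|S|-|P\cap S|-(1-\gamma)|P\cap S|}{|P\cap S|}\ \ge\ d-2+\gamma\ \ge\ 1+\gamma,
\]
as required.

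The main obstacle is establishing property~(iii). A Markov-type tail bound applied set-by-set is too weak to survive a union bound over $\binom{2m}{|S|}$ sets, so both the tail and the set count must be tightened. The crucial observation is that the number of $S$ with $|S|=s$ and $|P\cap S|\ge c_1 s$ is at most $\binom{2p}{\lceil c_1 s\rceil}\binom{2m}{s-\lceil c_1 s\rceil}=m^{(1-\Omega(1))s}$, substantially smaller than the naive $m^s$ count, thanks to $p=O(m^{1/4})$. Combining this with Chernoff-type tail bounds for the number of edges a random perfect matching places inside a small set (such counts are sums of negatively correlated $\{0,1\}$-random variables with mean $O(s^2/m)=O(m^{-1/2})$), and splitting further into the subcase $|P\cap S|=O(1)$ (which forces $|S|=O(1)$ and is handled by enumeration of the few relevant small configurations, each bad configuration occurring with probability polynomially small in $m$ because at least one port is involved) and the subcase $|P\cap S|=\omega(1)$ (where the allowable slack $(1-\gamma)|P\cap S|$ is itself large, so standard concentration suffices), would complete the verification of~(iii).
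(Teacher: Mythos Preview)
Your two-regime split is natural and your sparse-port case is correct, but property~(iii) in the port-rich regime is too strong and does not hold with probability $1-o(1)$. Take $d=3$ and any set $S=\{u,v\}$ with $u\in P$ and $v\notin P$ a non-port vertex satisfying $\delta(v)\ge 1$. Since the constant $\gamma_0$ from Theorem~\ref{thm:expansion-generic} can be much smaller than $1/2$ (the paper's proof yields roughly $1/(40\kappa)$ for the large-$|S|$ range), you may be forced to take $c_1<1/2$, putting this $S$ in the port-rich regime with $|P\cap S|=1$. Property~(iii) then demands $2e(S)+\delta(u)+\delta(v)<1-\gamma$, i.e.\ the left side must vanish, which it does not. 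The expected number of non-port $v$ with $\delta(v)\ge 1$ is $\Theta(1)$, so with probability bounded away from zero such a $v$ exists, and then every one of the $2p$ choices of port $u$ yields a violating pair. Your assertion that bad configurations are polynomially unlikely ``because at least one port is involved'' is precisely where the argument breaks: the bad event $\delta(v)\ge 1$ concerns only the non-port $v$ and inherits no decay from the presence of $u\in P$ in $S$. A union bound over the $\Theta(pm)$ such pairs against a $\Theta(1/m)$ event gives $\Theta(p)\to\infty$.

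The paper sidesteps multi-edge accounting entirely by working with \emph{vertex} expansion. Lemma~\ref{lem:small-vx-expansion} shows, via a first-moment calculation over pairs $(S,T)$, that w.h.p.\ every small $T\subset L$ (or $T\subset R$) has $|\partial T|>(d-1-\alpha-\delta)|T|$; vertex expansion counts distinct neighbors and is therefore automatically insensitive to multi-edges. From this, two edge-expansion bounds follow by elementary set arithmetic: one for all small $S$ (Lemma~\ref{lem:small-expansion}) and one for port-rich $S$ (Lemma~\ref{lem:small-expansion-P}), the latter also using that $P$ is an independent set w.h.p.\ (your property~(ii)). Theorem~\ref{thm:expansion-port} is then a three-case split: $|S|>\varepsilon m$ (use Theorem~\ref{thm:expansion-generic}); $|S|\le\varepsilon m$ with $|P\cap S|<\xi|S|$ (use Lemma~\ref{lem:small-expansion}); and $|S|\le\varepsilon m$ with $|P\cap S|\ge\xi|S|$ (use Lemma~\ref{lem:small-expansion-P}), for a specific $\xi$ chosen to make the two small-$S$ bounds meet.
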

\begin{proof}[Proof of Theorems~\ref{thm:gadget:fact-ld} and~\ref{thm:gadget:fact-hd}]
	Let $\sigma$ and $\tau$ be Ising configurations on $V_G$ and $\partial P$ respectively.
	Let $P^+ \subseteq \partial P$ be the set of vertices of $\partial P$ that are assigned ``$+$'' by $\tau$ and let $P^-$ be those that are assigned ``$-$''.
	Let $L_\sigma^+\subseteq L$ and $L_\sigma^- \subseteq L$ be the set of vertices of $L$ that are assigned ``$+$'' and ``$-$'', respectively, in $\sigma$ and define
	$R_\sigma^+,R_\sigma^- \subseteq R$ similarly.
	Let $S_\sigma$ denote the set of smaller cardinality between $L_\sigma^+ \cup R_\sigma^-$ and $L_\sigma^- \cup R_\sigma^+$; hence $S_\sigma \le m$. Suppose $|S_\sigma| > 0$; i.e., $\sigma \neq \sigma^+(G)$ and $\sigma \neq \sigma^-(G)$, where $\sigma^+(G)$ (resp., $\sigma^-(G)$) is the configuration of $G$ in which every vertex in $L$
	is assigned ``$+$'' (resp., ``$-$'') and every vertex in $R$ is assigned ``$-$'' (resp., ``$+$'').
	
	For $S,T \subseteq V_G \cup \partial P$, we use $[S,T]$ for the number of edges between $S$ and $T$ in the graph $G' = (V_{G} \cup \partial P,E_G \cup E(P,\partial P))$.
	Observe that the weights of $\sigma^+(G)$ and $\sigma^-(G)$ in $G'$ conditional on $\tau$ are:
	\begin{align*}
	w^+ := w^\tau_{G',\beta}(\sigma^+(G)) &=  {\e}^{-|\beta|([L, P^+] + [R , P^-])},~\text{and}\\
	w^- := w^\tau_{G',\beta}(\sigma^-(G)) &=  {\e}^{-|\beta|([L,P^-] + [R , P^+])}.
	\end{align*}
	Henceforth we use $w^\tau(\cdot)$ for $w^\tau_{G',\beta}(\cdot)$.
	
	We consider first the case when $S_\sigma = L_\sigma^+ \cup R_\sigma^-$.
	Then,
	\begin{align}
	w^\tau(\sigma) &= \exp\left[ -|\beta| (|E(S_\sigma,V_G\setminus S_\sigma)|  + [L^+_\sigma , P^+]+ [L^-_\sigma , P^-]+ [R^+_\sigma , P^+]+ [R^-_\sigma , P^-])\right]\notag\\
	&= w^- \cdot \exp\left[ -|\beta| (|E(S_\sigma,V_G\setminus S_\sigma)|  + [L^+_\sigma , P^+] + [R^-_\sigma , P^-] - [L^+_\sigma , P^-]- [R^-_\sigma , P^+])\right]\notag\\
	&\le w^- \cdot  \exp\left[ -|\beta| (|E(S_\sigma,V_G\setminus S_\sigma)|  - [L^+_\sigma , \partial P] - [R^-_\sigma ,\partial P] )\right] \notag\\
	&\le w^-\cdot  \exp\left[ -|\beta| (|E(S_\sigma,V_G\setminus S_\sigma)|  - [S_\sigma , \partial P])\right].
	\label{eq:weight-bound-1}
	\end{align}
	where the first inequality follows from $[L_\sigma^+ , P^-]-[L_\sigma^+ , P^+] \le [L_\sigma^+ , \partial P]$ and $[R_\sigma^- , P^+]-[R_\sigma^- , P^-] \le [R_\sigma^- , \partial P]$.
	
	In Theorem~\ref{thm:gadget:fact-ld}
	we assume that $3 \le d = O(1)$, $p = \floor{m^\alpha}$ with $\alpha \in (0,\frac{1}{4}]$, $\din = d-1$ and $\dout = 1$.
	Hence, $ [S_\sigma , \partial P] = |S_\sigma \cap P|$ and
	since $0 < |S_\sigma| \le m$,
	Theorems~\ref{thm:expansion-generic} and~\ref{thm:expansion-port} imply that  exists a constant $\gamma > 0$ such that with probability $1-o(1)$ over the choice of the random graph $G$ we have
	\begin{align*}
	\frac{|E(S_\sigma,V_G \setminus S_\sigma)|}{|S_\sigma|} &\ge \gamma d,~\textrm{and}\\
	\frac{|E(S_\sigma,V_G\backslash S_\sigma)|}{|S_\sigma \cap P|} &\ge 1 + \gamma.
	\end{align*}
	Combining these two inequalities we get for
	$\delta = \frac{\gamma^2}{1+\gamma}$ that
	\begin{equation*}
	\label{eq:exr-imeq}
	|E(S_\sigma,V_G \setminus S_\sigma)| \ge |S_\sigma \cap P| + \delta d |S_\sigma| = [S_\sigma , \partial P] + \delta d |S_\sigma| .
	\end{equation*}
	Plugging this bound into \eqref{eq:weight-bound-1},
	\begin{equation}
	\label{eq:weight-bound-2}
	w^\tau(\sigma) \le w^-\cdot  {\exp}\left[ -\delta |\beta| d |S_\sigma|\right].
	\end{equation}
	Under the assumptions in Theorem~\ref{thm:gadget:fact-hd},
	we can also establish \eqref{eq:weight-bound-2} as follows.
	When $m^{1-\rho} \ge d \ge \din = \floor{\theta d} \ge 3$,  Theorem \ref{thm:expansion-generic-hd} implies that
	$$
	|E(S_\sigma,V_G\setminus S_\sigma)|  \ge \frac{\rho \din}{300} |S_\sigma| = \frac{\rho \floor{\theta d}}{300}   |S_\sigma|.
	$$
	Moreover,
	$$
	[S_\sigma , \partial P] \le \dout |S_\sigma| = (d-\floor{\theta d}) |S_\sigma|.
	$$
	Hence,  taking
	$$
	\theta = \frac{300 + 0.75 \rho}{300+\rho}
	$$
	we get  that when $d \ge 4 + \frac{1200}{\rho}$:
	$$
	\frac{\rho \floor{\theta d}}{300}   - (d-\floor{\theta d})  \ge \frac{\rho d}{600}.
	$$
	Together with \eqref{eq:weight-bound-1} this implies
	$$
	w^\tau(\sigma) \le   w^-\cdot  {\exp}\left[ -\frac{\rho |\beta| d |S_\sigma|}{600}\right],
	$$
	which gives \eqref{eq:weight-bound-2} for $\delta \leq \rho/600$.
	Observe that our choice of $\theta$ guarantees $d-1 \ge \din = \floor{\theta d} \ge 3$ for all $d \ge 4$.
	
	For the case when $S_\sigma = L_\sigma^- \cup R_\sigma^+$ we deduce analogously that for a suitable $\delta > 0$
	\begin{align}
	w^\tau(\sigma) \le w^+ \cdot {\exp}\left[ -\delta |\beta| d |S_\sigma|\right].    \label{eq:weight-bound-3}
	\end{align}
	
	Let $\Omega_G$ be the set of Ising configurations of the graph $G$.
	By definition, the partition function $Z_{G',\beta,\tau}$
	for the conditional distribution $\mu_{G',\beta}(\cdot \mid \partial P = \tau)$
	satisfies:
	$$
	Z_{G',\beta,\tau} = \sum_{\sigma \in \Omega_G} w^\tau(\sigma) \le \sum_{\sigma:\; 0\leq |L_\sigma^+ \cup R_\sigma^-| \leq m} w^\tau(\sigma) + \sum_{\sigma:\; 0\leq |L_\sigma^- \cup R_\sigma^+| \leq m} w^\tau(\sigma).
	$$
	From \eqref{eq:weight-bound-2} we get
	\begin{align*}
	\sum_{\sigma:\; 0\leq |L_\sigma^+ \cup R_\sigma^-| \leq m} w^\tau(\sigma)
	&\leq \sum_{\sigma:\; 0\leq |L_\sigma^+ \cup R_\sigma^-| \leq m}  w^-  \!\cdot\!{\e}^{-\delta |\beta| d |L_\sigma^+ \cup R_\sigma^-|}\\
	&= w^-  \sum_{k=0}^m \binom{2m}{k} {\e}^{ -\delta |\beta| d k }
	\leq w^- \!\left(1+\e^{-\delta|\beta| d}\right)^{2m}.
	\end{align*}
	Similarly, we deduce from \eqref{eq:weight-bound-3}
	$$
	\sum_{\sigma:\; 0\leq |L_\sigma^- \cup R_\sigma^+| \leq m} w^\tau(\sigma) \leq w^+ \left(1+\e^{-\delta|\beta| d}\right)^{2m}.
	$$
	Hence,
	$$
	Z_{G',\beta,\tau} \leq \left(w^- +w^+\right) \left(1+\e^{-\delta|\beta| d}\right)^{2m},
	$$
	and
	$$
	\mu_{G',\beta}( \{\sigma^+(G),\sigma^-(G)\} \mid \partial P = \tau) = \frac{w^++w^- }{Z_{G,\beta,\tau}} \geq \frac{1}{\left(1+\e^{-\delta|\beta| d}\right)^{2m}} \geq 1 - \frac{2m}{\e^{\delta |\beta| d}},
	$$
	where in the last inequality we use $(\frac{1}{1+x})^{2m}\geq (1-x)^{2m} \geq 1-2mx$ for all $x\in[0,1)$.
\end{proof}

\subsection{Gadget expansion when $d\rightarrow \infty$: proof of Theorem~\ref{thm:expansion-generic-hd}}
\label{subsec:ge-hd}

We derive~Theorem~\ref{thm:expansion-generic-hd} as a consequence of the following
two properties of the random bipartite multigraphs obtained as the union of perfect matchings.
\begin{lemma}[{\cite[Theorem 4]{BDH}}]
	\label{lemma:spectral}
	Let $\hat{G} = (L_{\hat{G}} \cup R_{\hat{G}},E_{\hat{G}})$ be
	a random bipartite graph obtained
	as the union of $d$ random perfect matchings between  $L_{\hat{G}}$ and $R_{\hat{G}}$.
	Let $\lambda_2(\hat{G})$ denote the second largest eigenvalue of its adjacency matrix.
	Then,
	for $3 \le d=O(1)$ and any constant $\delta>0$ (independent of $m$), with probability $1-o(1)$, the following holds:
	$$
	\lambda_2(\hat{G}) < 2\sqrt{d-1}+\delta.
	$$
\end{lemma}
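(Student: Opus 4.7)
The plan is to use the trace method. Since $\hat{G}$ is bipartite and $d$-regular, its adjacency matrix $A$ has $\pm d$ as its two "trivial" eigenvalues (with eigenvectors supported on the Perron vectors of each side), and the spectrum is symmetric about $0$, so
$$
\lambda_2(\hat{G})^{2k} \le \tr(A^{2k}) - 2 d^{2k}.
$$
I would pick $k = k(m)$ growing slowly with $m$ (e.g., $k = c\log m$) and bound $\E[\tr(A^{2k})]$, then apply Markov's inequality. The quantity $\tr(A^{2k})$ counts closed walks of length $2k$ in the random union of $d$ perfect matchings. A direct count is awkward because short backtracks create large contributions; the classical workaround is to transition to \emph{non-backtracking} walks.

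Concretely, let $B$ be the non-backtracking matrix of $\hat{G}$ (indexed by oriented edges). The Ihara--Bass formula tells us that each eigenvalue $\mu$ of $B$ corresponds to an eigenvalue $\lambda$ of $A$ via $\mu^2 - \lambda \mu + (d-1)=0$, so $|\lambda|\ge 2\sqrt{d-1}$ iff $|\mu|\ge \sqrt{d-1}$. Thus it suffices to show that all non-trivial eigenvalues of $B$ have modulus at most $\sqrt{d-1}+\delta'$, which reduces to bounding $\E[\tr(B^{k})]$ (or of $(BB^\top)^{k/2}$ to stay real and positive). The trace $\tr(B^{k})$ counts non-backtracking closed walks of length $k$, and one groups such walks by their \emph{shape} — the isomorphism class of the underlying multigraph together with the walk's visit pattern. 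Tree-like shapes (walks whose trace is a tree traversal) contribute the expected leading term $\approx 2m(d-1)^k$, which cancels against the two trivial eigenvalues $\pm\sqrt{d-1}$ of $B$ after Ihara--Bass bookkeeping.

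The main work, and the main obstacle, is bounding the contribution of \emph{tangled} walks — those whose shape has some excess (cycles, repeated edges). In the union-of-matchings model, conditional on a given edge multiset, the probability that this multiset appears in the random graph can be expressed via permanents/falling factorials of the underlying matching structure; a careful combinatorial count of walks of a given shape, combined with the probability that those edges are present, yields a factor of $m^{-(\text{excess})}$ against a number of walks that grows only polynomially in $k$ and $d$ for each fixed excess. Summing over shapes and using that $k$ grows slowly in $m$ shows that the tangled contribution to $\E[\tr(B^{k})]$ is $o((d-1)^k)$. This is where Bordenave's switching argument is typically invoked: to avoid losing factors of $k$ in naive bounds for shapes with small excess, one re-samples a small number of matching edges and couples walks with and without a prescribed tangle, obtaining the required cancellation.

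Combining the bounds gives $\E[\tr(A^{2k})] \le 2d^{2k} + 2m(2\sqrt{d-1}+\delta)^{2k}(1+o(1))$, and Markov's inequality together with the choice of $k$ turns this into
$$
\Pr\!\left[\lambda_2(\hat G) \ge 2\sqrt{d-1} + \delta\right] \le \frac{2m(2\sqrt{d-1}+\delta/2)^{2k}}{(2\sqrt{d-1}+\delta)^{2k}} = o(1),
$$
which is the desired conclusion. The argument I have sketched is essentially Friedman's theorem adapted (via Bordenave's non-backtracking framework) to the bipartite union-of-matchings ensemble, which is what [BDH] establishes.
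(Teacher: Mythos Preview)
The paper does not prove this lemma at all; it is quoted verbatim from \cite{BDH} (Theorem~4 there) and used as a black box in the proofs of Theorems~\ref{thm:expansion-generic-hd} and~\ref{thm:expansion-generic}. So there is nothing in the present paper to compare your sketch against.

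That said, your outline is a faithful high-level description of the Friedman/Bordenave strategy that \cite{BDH} (and its predecessors) actually follow: pass to the non-backtracking operator via Ihara--Bass, bound the expected trace of a high power by classifying closed non-backtracking walks according to their excess, show that tree-like walks give the main term $(d-1)^{k}$ while tangled walks are suppressed by powers of $1/m$, and finish with Markov. As a proof \emph{plan} it is correct in spirit. As a proof it is of course incomplete: the hard work is entirely in the ``careful combinatorial count'' of tangled walks and the cancellation for small-excess shapes, which you acknowledge but do not carry out. You should also be a bit more careful with the Ihara--Bass bookkeeping in the bipartite setting (the trivial spectrum of $B$ is $\{\pm(d-1),\pm 1\}$ coming from $\lambda=\pm d$, not just $\pm\sqrt{d-1}$), but this does not affect the overall logic.
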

\begin{lemma}\label{lemma:multiedges}
	Let $\hat{G} = (L_{\hat{G}} \cup R_{\hat{G}},E_{\hat{G}})$ be a random bipartite graph obtained as the union of $d$
	random perfect matchings between $L_{\hat{G}}$ and $R_{\hat{G}}$.
	Suppose $|L_{\hat{G}}|=|R_{\hat{G}}| = m$ and that $d \le m^{1-\rho}$ for some constant $\rho \in (0,1)$
	independent of $m$. Then, the probability that an edge between $L_{\hat{G}}$ and $R_{\hat{G}}$ is chosen by more than $\ceil{ 3/\rho}$ random perfect matchings is $O(m^{-1})$.
\end{lemma}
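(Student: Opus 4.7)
\medskip
\noindent\textbf{Proof proposal.} The plan is a first-moment union bound over the $m^2$ possible edges of the bipartite complete graph on $L_{\hat G}\cup R_{\hat G}$, combined with the simple observation that for a fixed pair $\{u,v\}$ the number of matchings containing it is a Binomial random variable whose parameters are well under control.

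First, I would set up the per-pair computation. Fix $u\in L_{\hat G}$, $v\in R_{\hat G}$ and a single index $i\in\{1,\dots,d\}$. Conditioning on $u$, the partner of $u$ in the uniformly random perfect matching $M_i$ is distributed uniformly on $R_{\hat G}$, so $\Pr[\{u,v\}\in M_i]=1/m$. Since the $d$ matchings $M_1,\dots,M_d$ are drawn independently, the multiplicity $X_{uv}:=|\{i:\{u,v\}\in M_i\}|$ is distributed as $\mathrm{Binomial}(d,1/m)$. Setting $k=\lceil 3/\rho\rceil+1$, the standard binomial tail estimate gives
\[
\Pr[X_{uv}\ge k]\;\le\;\binom{d}{k}m^{-k}\;\le\;\frac{d^k}{k!\,m^k}.
\]

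Next, I would take a union bound over all $m^2$ candidate edges $(u,v)\in L_{\hat G}\times R_{\hat G}$:
\[
\Pr\bigl[\exists (u,v):\,X_{uv}\ge k\bigr]\;\le\;m^2\cdot\frac{d^k}{k!\,m^k}\;=\;\frac{1}{k!}\cdot\frac{d^k}{m^{k-2}}.
\]
Using the hypothesis $d\le m^{1-\rho}$ yields the bound $\frac{1}{k!}\,m^{2-\rho k}$. By the choice $k=\lceil 3/\rho\rceil+1>3/\rho$, we have $\rho k>3$, so $2-\rho k<-1$, which makes the above at most $\frac{1}{k!}\,m^{-1}=O(m^{-1})$. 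This is exactly the claimed bound.

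I do not anticipate any real obstacle: the only delicate point is confirming that the multiplicities $X_{uv}$ associated to distinct pairs do not need to be handled jointly, which is precisely what the union bound achieves. The choice of the threshold $\lceil 3/\rho\rceil$ is pinned down by the requirement $\rho k\ge 3+\rho$, which in turn is what produces the extra $\rho$ in the exponent and hence the $O(m^{-1})$ conclusion; any smaller threshold would not give a summable-in-$m$ bound under the hypothesis $d\le m^{1-\rho}$.
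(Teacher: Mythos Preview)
Your proposal is correct and follows essentially the same approach as the paper: identify each multiplicity $X_{uv}$ as $\mathrm{Binomial}(d,1/m)$, bound the upper tail, and union bound over the $m^2$ pairs. The paper bounds the full binomial tail via $\binom{d}{a}\le(\e d/a)^a$ and works with $\kappa=\lceil 3/\rho\rceil$ directly (obtaining an $O(m^{-3})$ per-pair bound), whereas you use the cleaner estimate $\Pr[X_{uv}\ge k]\le\binom{d}{k}m^{-k}$ with $k=\lceil 3/\rho\rceil+1$; both variants are standard and yield the same $O(m^{-1})$ conclusion.
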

\begin{proof}
	Let $L_{\hat{G}} = \{v_1,\dots,v_m\}$ and $R_{\hat{G}} = \{u_1,\dots,u_m\}$. Let $X_{ij}$ be the random variable corresponding to the number of perfect matchings that use the edge $\{v_i,u_j\}$ and let $\kappa = \ceil{ 3/\rho}$. Then,
	\begin{align*}
	\Pr[X_{ij} \ge \kappa] = \sum_{a = \kappa}^d {d \choose a} \frac{1}{m^a} \left(1 - \frac 1m\right)^{d-a} \le \sum_{a = \kappa}^d \left(\frac{\e d}{a m}\right)^a \le  \left(\frac{\e}{\kappa m^\rho}\right)^\kappa \sum_{a = 0}^{d-\kappa} \left(\frac{ \e}{\kappa m^\rho}\right)^a \leq O(m^{-3}).
	\end{align*}
	The result follows by a union bound over the pairs $\{v_i,u_j\}$.
\end{proof}
\begin{proof}[Proof of Theorem~\ref{thm:expansion-generic-hd}]
	Let $G  = (V_G = L \cup R, E_G)$ be a random bipartite graph sampled from $\Gr(m,p,\din,\dout)$.
	For $k \ge 3$
	let $F_{k} =(L\cup R,E(F_{k}))$ be a random bipartite graph obtained as the union of $k$ random perfect matchings
	between $L$ and $R$.
	For $S \subseteq V_G$ let $E_{F_{k}}(S,V_G\setminus S) \subseteq E(F_{k})$ be set edges between $S$ and $V_G\setminus S$ in $F_{k}$.
	Lemma \ref{lemma:multiedges} implies that for $\kappa = \ceil{3/\rho}$,  with probability $1-o(1)$,
	we have
	\begin{align}
	\label{eq:linear-exp:ld-1}
	\min_{\substack{S\subset V_G: \\ 0 <|S|\leq m}} \frac{|E(S,V_G\backslash S)|}{|S|}
	\geq \min_{\substack{S\subset V_{{G}}: \\ 0<|S|\leq m}} \frac{|E_{F_{\din}}(S,V_{{G}}\setminus S)|}{\kappa |S|}.
	\end{align}
	Let $d' \ge 3$ be the unique integer divisible by $3$ such that  $\din \ge d' \ge \din-2$. Then,
	\begin{align}
	\label{eq:linear-exp:ld-2}
	\min_{\substack{S\subset V_{{G}}: \\ 0<|S|\leq m}} \frac{|E_{F_{\din}}(S,V_{{G}}\setminus S)|}{\kappa |S|} \ge \min_{\substack{S\subset V_{{G}}: \\ 0<|S|\leq m}} \frac{|E_{F_{d'}}(S,V_{{G}}\setminus S)|}{\kappa |S|}.
	\end{align}
	The random graph $F_{d'}$ can also be obtained as the union of $d'/3$ independent instances of the random graph $F_3$;
	let $F_3^{(1)},\dots,F_{3}^{(d'/3)}$ be these instances.
	For each $i \in\{1,\dots,d'/3\}$, Cheeger's inequality and Lemma \ref{lemma:spectral} (with $\delta = 0.01$) imply that with probability $r = 1-o(1)$:	
	$$
	\min_{\substack{S\subset V_{{G}}: \\ 0<|S|\leq m}} \frac{\left|E_{F_3^{(i)}}(S,V_{{G}}\setminus S)\right|}{ |S|} \ge \frac{3-\lambda_2(F_3^{(i)})}{2} \ge 0.08.
	$$
	Let $Z$ be number of $F_3^{(i)}$'s that satisfy this property. We have $\E[Z] = rd'/3$, $\Var(Z) = r(1-r)d'/3$ and by Chebyshev's inequality for sufficiently large $m$
	$$
	\Pr\left[Z \le \frac{3d'}{10}\right] \leq \Pr\left[ \big|Z-\E[Z]\big|\geq \left(\frac{r}{3}-\frac{3}{10}\right)d'  \right] \le \frac{r(1-r)}{3\left(\frac{r}{3}-\frac{3}{10}\right)^2d'} = o(1).
	$$
	Therefore, with probability $1-o(1)$ we have
	$$
	\min_{\substack{S\subset V_{{G}}: \\ 0<|S|\leq m}} \frac{|E_{F_{d'}}(S,V_{{G}}\setminus S)|}{ |S|} = \min_{\substack{S\subset V_{{G}}: \\ 0<|S|\leq m}}  \sum_{i=1}^{\frac {d'}{3}} \frac{ \left|E_{F_3^{(i)}}(S,V_{{G}}\setminus S)\right|}{|S|} \geq 0.08 Z \ge 0.024 d'.
	$$
	This bound, combined with \eqref{eq:linear-exp:ld-1} and \eqref{eq:linear-exp:ld-2}, implies that with probability $1-o(1)$:
	$$
	\min_{\substack{S\subset V_G: \\ 0 <|S|\leq m}} \frac{|E(S,V_G\backslash S)|}{|S|}
	\geq  \frac{0.024 d'}{\kappa} \ge \frac{\rho \din}{300},
	$$
	where in the last inequality we use $\kappa \leq 3/\rho +1 \leq 4/\rho$ and $d'\geq \frac{3}{5}\din$ for $\din\geq 3$.
\end{proof}	

\subsection{Gadget expansion when $d = O(1)$: proof of Theorems~\ref{thm:expansion-generic} and~\ref{thm:expansion-port}}
\label{subsec:ge-ld}

Let $m$, $p$ and $d$ be positive integers such that
$3 \le d = O(1)$ and
$p = \floor{m^\alpha}$ for some constant	
$\alpha \in (0,1)$.
Throughout this section we let $G = (V_G=L\cup R,E_G)$ be a random bipartite graph distributed according to $\Gr(m,p,d-1,1)$; that is, $\din=d-1$ and $\dout = 1$.
The random graph $G$ can be equivalently generated as follows.
\begin{lemma}
	\label{lemma:equivalent-gen-2}
	Let $m,p,d \in\N^+$ be positive integers such that
	$m \ge p$ and $d \ge 3$.
	Let $G' = (V_G=L\cup R,E_{G'})$ be the random bipartite graph generated as follows:
	\begin{enumerate}
		\item Let $M_1,M_2,\dots,M_{d}$ be $d$ random perfect matchings between $L$ and $R$;
		\item Let $P_1$ be a subset of $L$ chosen uniformly at random among all the subsets of $L$ such that $|P_1|=p$;
		\item Let $P_2 \subset R$ be the set of vertices in $R$ that are matched to $P_1$ in $M_d$, and let $A \subset M_d$ be the set of edges between $P_1$ and $P_2$;
		\item Let $P = P_1 \cup P_2$ and $E_{G'} = \bigcup_{i=1}^d M_i \setminus A$.
	\end{enumerate}
	Then $G'$ has distribution $\Gr(m,p,d-1,1)$.
\end{lemma}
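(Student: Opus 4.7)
The plan is to verify that the alternative procedure produces a random pair $(G',P)$ with the same joint distribution as the pair $(G,P)$ output by $\Gr(m,p,d-1,1)$. Since making the multigraph simple (step 6 of $\Gr$) is a deterministic function of the underlying edge multiset, it suffices to match the joint distribution of the port set together with the multi-edge set before simplification. I will split this into matching (i) the marginal distribution of $P$, and (ii) the conditional distribution of the edges given $P$.

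For (i), the original definition picks $P$ uniformly among subsets of $V_G$ with exactly $p$ vertices on each side. In the alternative procedure $P_1$ is uniform over $p$-subsets of $L$, and $P_2 = M_d(P_1)$, where $M_d$ is an independent uniform random perfect matching between $L$ and $R$. For any fixed $P_1$, the image of $P_1$ under a uniform random bijection $L\to R$ is a uniform random $p$-subset of $R$, because each of the $\binom{m}{p}$ candidate images is realized by exactly $p!(m-p)!$ bijections. Hence $(P_1,P_2)$ is uniform over all pairs of $p$-subsets of $L$ and $R$, matching the original distribution of $(P\cap L,\,P\cap R)$.

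For (ii), condition on $(P_1,P_2)=(P^L,P^R)$. The matchings $M_1,\dots,M_{d-1}$ in the alternative procedure are independent uniform perfect matchings between $L$ and $R$ and are independent of $(P_1,M_d)$, so they have precisely the same conditional distribution as the corresponding $d-1$ matchings in the original construction. It remains to analyze $M_d\setminus A$, which by construction is the restriction of $M_d$ to $(L\setminus P^L)\times(R\setminus P^R)$. Conditioning the uniform random perfect matching $M_d$ on the set-event $\{M_d(P^L)=P^R\}$ yields, by a direct counting argument, an independent pair of uniform random perfect matchings $P^L\to P^R$ and $L\setminus P^L\to R\setminus P^R$; in particular $M_d\setminus A$ is a uniform perfect matching between $L\setminus P^L$ and $R\setminus P^R$, which is exactly the distribution of the matching $M_1'$ in the original $\Gr(m,p,d-1,1)$ construction. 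Combining (i) and (ii) yields equality in distribution.

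The only step that requires any real care is the decomposition in (ii): once the conditioning on $\{M_d(P^L)=P^R\}$ is expanded as a counting identity on perfect matchings (each matching with this property factors uniquely as a bijection $P^L\to P^R$ times a bijection $L\setminus P^L\to R\setminus P^R$), the remainder of the argument is a direct term-by-term comparison of the two procedures. No further probabilistic input is needed.
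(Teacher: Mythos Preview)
Your proof is correct and follows essentially the same logic as the paper's. The paper proves this lemma jointly with a companion lemma (an intermediate construction $G''$ in which one first draws a random \emph{complete} matching from $L\setminus P_1$ into $R$ and lets $P_2$ be the unmatched vertices), and shows $\rho'=\rho''=\rho$ for the joint law of $(P_2,M_1')$; your argument bypasses $G''$ and compares $G'$ to $\Gr(m,p,d-1,1)$ directly via the factorization of a uniform perfect matching conditioned on $\{M_d(P^L)=P^R\}$. The two routes are equivalent---the paper's detour through $G''$ is there only because it needs that intermediate model elsewhere, not because the direct comparison is harder.
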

Additionally, the edge expansion of the random graph $G$ satisfies the following bounds.
\begin{lemma}\label{lem:small-expansion}
	For $3\le d = O(1)$, $\alpha \in (0,1)$ and $\delta>0$, there exists $\varepsilon>0$ such that with probability $1-o(1)$:
	$$
	\min_{\substack{S\subset V_G:\\0<|S|\leq \varepsilon m}} \frac{|E(S,V_G\backslash S)|}{|S|} > d-2-\alpha-\delta.
	$$
\end{lemma}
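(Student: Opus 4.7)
The plan is to prove the lemma by a first-moment argument, bounding the expected number of subsets $S \subseteq V_G$ of size at most $\varepsilon m$ that violate the inequality in the statement and showing this expectation is $o(1)$ for $\varepsilon = \varepsilon(d,\alpha,\delta) > 0$ small enough. Using Lemma~\ref{lemma:equivalent-gen-2}, view $G$ as the simplification of $\hat{G} \setminus A$, where $\hat{G}$ is the union of $d$ independent random perfect matchings $M_1, \ldots, M_d$ between $L$ and $R$ and $A \subseteq M_d$ is the set of $p$ removed port-edges. In $\hat{G} \setminus A$, non-ports have degree $d$ and ports have degree $d-1$, so for every $S$ of size $k$,
\[
|E(S, V_G \setminus S)| \;\ge\; dk \,-\, |S \cap P| \,-\, 2\,e_{\hat{G}}(S) \,-\, \mu(S),
\]
where $e_{\hat{G}}(S)$ counts internal multigraph edges of $\hat{G}$ inside $S$ and $\mu(S)$ absorbs the loss from collapsing multi-edges. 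The total number of multi-edge pairs in $\hat{G}$ is $O(\log m)$ with probability $1-o(1)$, and $|S \cap P| \leq \min(k,2p)$. Hence the required inequality $|E(S, V_G \setminus S)| > (d-2-\alpha-\delta)k$ reduces, for $k \geq k_0$ with $k_0$ a sufficiently large constant, to the event $e_{\hat{G}}(S) \leq (1+\eta)k$ for some constant $\eta = \eta(\alpha,\delta) > 0$. The cases $k \in \{1,2\}$ are handled deterministically, since $|E(S, V_G \setminus S)| \geq (d-1)k - 2 > (d-2-\alpha-\delta)k$ whenever $\alpha + \delta > 0$.

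The key probabilistic input is a Chernoff-type tail bound on $e_{\hat{G}}(S)$. For $|S \cap L| = a$, $|S \cap R| = b$, the number of edges of a single matching $M_i$ with one endpoint in $S \cap L$ and the other in $S \cap R$ is hypergeometric with parameters $(m,a,b)$ and hence a sum of negatively associated indicators; summing over the $d$ independent matchings yields
\[
\Pr\!\left[e_{\hat{G}}(S) \geq t\right] \;\leq\; \left(\frac{e\,dab}{t\,m}\right)^{\! t} \qquad \text{for } t \geq e \cdot dab/m.
\]
Applying this with $t = \lceil (1+\eta)k \rceil$ and using $ab \leq k^2/4$, then taking a union bound over the $\binom{m}{a}\binom{m}{b} \leq (2em/k)^k$ sets with $a+b = k$, gives
\[
\E\bigl[\#\{\text{bad } S : |S| = k\}\bigr] \;\leq\; (k+1) \cdot \Lambda(\varepsilon)^{k}, \qquad \Lambda(\varepsilon) := \frac{2e}{\varepsilon}\left(\frac{ed\varepsilon}{4(1+\eta)}\right)^{\!1+\eta},
\]
whenever $k \leq \varepsilon m$. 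Note that $\Lambda(\varepsilon) \to 0$ as $\varepsilon \to 0$.

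Choosing $\varepsilon$ small enough that $\Lambda(\varepsilon) \leq 1/2$, the geometric sum $\sum_{k = k_0}^{\varepsilon m}(k+1)\Lambda(\varepsilon)^k$ is $o(1)$ once $k_0 = k_0(m) \to \infty$ slowly; for each of the finitely many remaining constant sizes $k \in [3, k_0)$ the same estimate carries an additional factor of $m^{-\eta k}$, so those contributions are also $o(1)$. Markov's inequality then yields that no bad set exists with probability $1-o(1)$, proving the lemma. The principal technical obstacle is closing the union bound uniformly over $k \in [1,\varepsilon m]$: the enumeration entropy $(m/k)^k$ only succumbs to the Chernoff tail when $\varepsilon$ is sufficiently small relative to $d^{1/\eta}$, which is precisely why the lemma asserts only the existence of \emph{some} $\varepsilon > 0$; the multi-edge and port corrections are lower-order bookkeeping absorbed into the slack $2 + \alpha + \delta$ between $d$ and the target ratio.
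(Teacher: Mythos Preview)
Your approach differs from the paper's, which first establishes a one-sided \emph{vertex} expansion bound (Lemma~\ref{lem:small-vx-expansion}: $|\partial T|/|T| > d-1-\alpha-\delta$ for small $T\subset L$) and then deduces the edge bound via the elementary inequality $|E(S,V_G\setminus S)|\ge |\partial S_L|+|\partial S_R|-|S|$. Your direct first-moment route on edge counts is reasonable in spirit, but the argument as written has a genuine gap.

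The problematic step is the claim that, for $k\ge k_0$, the target inequality reduces to $e_{\hat G}(S)\le (1+\eta)k$. Rewriting your degree identity, what you actually need is
\[
e_{\hat G}(S)\;<\;\Bigl(1+\tfrac{\alpha+\delta}{2}\Bigr)k \;-\; \tfrac{|S\cap P|+\mu}{2}.
\]
But for any $k\le 2p=2\lfloor m^{\alpha}\rfloor$ there exist sets $S$ with $|S\cap P|=k$ (take $S\subseteq P$), and for those the requirement becomes $e_{\hat G}(S)<\tfrac{1+\alpha+\delta}{2}\,k$, i.e.\ a threshold $t\approx ck$ with $c<1$ whenever $\alpha+\delta<1$. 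With $t<k$ your union bound over all $\binom{2m}{k}$ sets does not close: the enumeration entropy $(2em/k)^{k}$ multiplied by the Chernoff tail $(edk/(4cm))^{ck}$ carries a residual factor $(m/k)^{(1-c)k}$ which diverges for $k\le \varepsilon m$. So the slack ``$\alpha+\delta$'' you are counting on is already consumed by $|S\cap P|$ for small port-heavy sets, and no choice of constant $k_0$ (or even $k_0(m)\to\infty$ slowly) rescues this, since $2p=2m^{\alpha}\to\infty$.

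The fix is to stratify the union bound by $r=|S\cap P|$ and use that there are only $\binom{2p}{r}$ ways to pick the port part, contributing $(m^{\alpha}/r)^{r}$ rather than $(m/r)^{r}$. This is exactly where the ``$-\alpha$'' in the target ratio $d-2-\alpha-\delta$ comes from, and it is precisely what the paper does (in the proof of Lemma~\ref{lem:small-vx-expansion}, summing over $r$). If you incorporate that stratification into your edge-count argument, it should go through; alternatively, adopting the paper's two-step route (one-sided vertex expansion, then combine) sidesteps the issue cleanly.
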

\begin{lemma}\label{lem:small-expansion-P}
	For $3\le d = O(1)$, $\alpha \in (0,\frac 14]$, $\delta>0$ and  $\xi\in(0,1)$, it holds with probability $1-o(1)$:
	$$
	\min_{\substack{S\subset V_G:\\0<\xi|S|\leq |P\cap S|}} \frac{|E(S,V_G\backslash S)|}{|S|} > \xi(d-\alpha-\delta)-1.
	$$
\end{lemma}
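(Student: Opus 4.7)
\textit{Proof plan.} The key structural observation is that $|P|=2p=2\lfloor m^\alpha\rfloor$ together with $|P\cap S|\ge\xi|S|$ force $|S|\le 2p/\xi = O(m^\alpha)$, so the lemma concerns only small sets with a large port fraction. I would therefore proceed with a first-moment union bound over all admissible $S$.

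Using the equivalent generation of Lemma~\ref{lemma:equivalent-gen-2}, I treat $G$ as obtained from $d$ independent random perfect matchings $M_1,\ldots,M_d$ by removing the port-to-port set $A\subseteq M_d$ and collapsing multi-edges. In the resulting multigraph every port has degree exactly $d-1$ and every non-port degree exactly $d$, so for a candidate $S$ with $|S|=s$, $|S\cap P|=t$, $|S\cap L|=s_L$, $|S\cap R|=s_R$, one has
$$
|E(S,V_G\setminus S)|_G \;\ge\; (sd-t) - 2X(S) - M(S),
$$
where $X(S)$ counts matching edges with both endpoints in $S$ summed over all $d$ matchings (with multiplicity), and $M(S)$ is a correction for multi-edges with exactly one endpoint in $S$. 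Setting $\tau := \xi(d-\alpha-\delta)-1$ and using $t\ge\xi s$, the claim reduces to showing, with probability $1-o(1)$, that every admissible $S$ satisfies $2X(S)+M(S) < s(d-\tau)-t$, whose right-hand side is at least $s\bigl((1-\xi)(d+1)+\xi(\alpha+\delta)\bigr) > 0$, a positive multiple of $s$ depending only on the fixed constants $d,\xi,\alpha,\delta$.

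The core tail estimate is the standard first-moment bound $\Pr[|M_i\cap(S\times S)|\ge k]\le \binom{s_L}{k}\binom{s_R}{k}/\binom{m}{k}$ for a single random perfect matching, which is summed across the $d$ matchings to control $X(S)$; analogous pair-wise moment bounds control the expected number of multi-edges touching $S$. Union-bounding over $s\in\{1,\dots,\lfloor 2p/\xi\rfloor\}$, $t\in\{\lceil\xi s\rceil,\dots,\min(s,2p)\}$, and the at most $\binom{2p}{t}\binom{2(m-p)}{s-t}$ choices of $S$ for each $(s,t)$, and taking the tail threshold $k^\ast \approx \lceil(s(d-\tau)-t)/2\rceil$, should then give a total failure probability of $o(1)$. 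The decisive arithmetic is that the expected number of matching edges between any two ports across $M_1,\dots,M_{d-1}$ is at most $(d-1)p^2/m = O(m^{2\alpha-1}) = o(1)$ whenever $\alpha\le 1/4$; this single inequality is what lets the union bound close.

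The main obstacle I anticipate is the small-$s$ regime, where $s(d-\tau)-t$ is only a constant of order $\alpha+\delta$, so that one cannot afford \emph{any} internal matching edge or multi-edge at a port. These exceptional configurations need to be ruled out individually by first-moment arguments of the same flavor: (i) no two ports are matched by any of $M_1,\dots,M_{d-1}$, which has expected count $O((d-1)p^2/m) = O(m^{2\alpha-1})$; (ii) no port is incident to a multi-edge, with expected count $O(d^2 p/m) = O(m^{\alpha-1})$; and (iii) no vertex has so many multi-edges that its simple-graph degree drops below $d-\lceil\alpha+\delta\rceil$. Conditional on the complement of these $o(1)$-probability events, the expansion bound holds deterministically for the handful of small subsets; for larger $s$ the bulk first-moment calculation of the previous paragraph takes over. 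The rest is a routine (if calculation-heavy) asymptotic bookkeeping.
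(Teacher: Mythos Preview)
Your approach is viable but takes a genuinely different and heavier route than the paper's. The paper does \emph{not} run a direct first-moment argument on edge expansion; instead it reduces Lemma~\ref{lem:small-expansion-P} to two ingredients already in hand: the vertex-expansion bound of Lemma~\ref{lem:small-vx-expansion} (applied only to the port subsets $P\cap S_L$ and $P\cap S_R$) and Fact~\ref{fact:port-edges} (that $P$ is an independent set w.h.p.). Once those hold, the inequality follows from six lines of algebra: since $P$ is independent, $\partial(P\cap S_L)\setminus S_R = \partial(P\cap S_L)\setminus(S_R\setminus P)$, whence
\[
|E(S,V_G\setminus S)| \ge |\partial S| \ge |\partial(P\cap S_L)| + |\partial(P\cap S_R)| - |S\setminus P| > (d-1-\alpha-\delta)|P\cap S| - |S\setminus P| = (d-\alpha-\delta)|P\cap S| - |S|,
\]
and the constraint $|P\cap S|\ge\xi|S|$ finishes. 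The trick is that only the port part of $S$ is expanded; the non-port part is simply subtracted.

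Your direct union-bound strategy should close, but a couple of points deserve care. First, a small arithmetic slip: from $t\le s$ (not $t\ge\xi s$, which goes the wrong way) one gets $s(d-\tau)-t\ge s\bigl(d(1-\xi)+\xi(\alpha+\delta)\bigr)$, not $s\bigl((1-\xi)(d+1)+\xi(\alpha+\delta)\bigr)$; still positive, but the budget is smaller than you wrote. Second, the phrase ``routine bookkeeping'' hides exactly the kind of page-long estimate the paper carries out for Lemma~\ref{lem:small-vx-expansion}: when $\xi$ is close to $1$ and $S$ contains a few non-ports, the budget is only $\Theta((\alpha+\delta)s)$, and a single non-port with many neighbors in $S\cap P$ could exhaust it. Your items (i)--(iii) rule out the size-$1$ cases but do not by themselves cover, say, one non-port adjacent to several ports of $S$; such configurations are excluded in the paper's argument precisely because they violate the vertex expansion of $P\cap S_R$. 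In your framework you would need a tail bound on $X(S)$ sharp enough to beat $\binom{2p}{t}\binom{2(m-p)}{s-t}$ choices for all $(s,t)$ in the admissible range, which is doable but not automatic. The paper's reduction sidesteps this entirely by leveraging the vertex-expansion lemma it already needs for Lemma~\ref{lem:small-expansion}.
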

We are now ready to prove Theorems~\ref{thm:expansion-generic} and~\ref{thm:expansion-port}.
\begin{proof}[Proof of Theorem~\ref{thm:expansion-generic}]
	By Lemma~\ref{lem:small-expansion},	for $\alpha \in (0,\frac 14]$ and $\delta \le \frac 14$ there exists $\varepsilon>0$ such that with probability $1-o(1)$:
	$$
	\min_{\substack{S\subset V_G:\\0<|S|\leq \varepsilon m}} \frac{|E(S,V_G\backslash S)|}{|S|} > d-\frac 52 \geq \frac{d}{6}.
	$$
	For $S\subset V_G$ with $\eps m<|S|\leq m$, we consider
	the random bipartite graph $\hat{G}$ obtained
	as the union of $d$ random perfect matchings $M_1,\dots,M_d$ between $L$ and $R$.
	Let $\lambda_2(\hat{G})$ denote the second largest eigenvalue of the adjacency matrix of $\hat{G}$.
	Lemma \ref{lemma:spectral} implies that for any constant $\delta>0$ (independent of $m$), with probability $1-o(1)$, the following holds:
	$$
	\lambda_2(\hat{G}) < 2\sqrt{d-1}+\delta.
	$$
	Now, let $\hat{G}=(V_{G},E_{\hat{G}})$ and let $\hat{E}(S,V_{{G}} \setminus S)$
	be the set of edges between $S$ and $V_{{G}}\setminus S$ in $\hat{G}$.
	Cheeger's inequality implies
	$$
	\min_{\substack{S\subset V_G:\\0<|S|\leq m}} \frac{|\hat{E}(S,V_{{G}}\backslash S)|}{|S|} \geq \frac{d-\lambda_2(\hat{G})}{2} > \frac{d}{2}-\sqrt{d-1}-\frac{\delta}{2} \ge \frac{d}{40},
	$$
	where the last inequality holds for $\delta \le 0.01$.
	
	We use this bound on the edge expansion of $\hat{G}$ to deduce a bound for the edge expansion of $G$.
	First note that	by Lemma~\ref{lemma:equivalent-gen-2}, $G$ can be obtained from $\hat{G}$ as follows:
	\begin{enumerate}
		\item Choose $P_1 \subset L$ uniformly at random among all the subsets of $L$ of size $p$;
		\item Let $P_2 \subset R$ be the set of vertices matched to $P_1$ in $M_d$, and let $A \subset M_d$
		be the set of edges between $P_1$ and $P_2$;
		\item Set $P = P_1 \cup P_2$ and $E_G = \bigcup_{i=1}^d M_i \setminus A$;
		\item Replace all the multiedges in $E_G$ by single edges.
	\end{enumerate}
	Moreover, since $3\leq d = O(1)$, Lemma \ref{lemma:multiedges} implies there exists a constant $\kappa$ such that with probability $1-O(m^{-1})$ the multiplicity of every edge  in $\hat{G}$ is at most $\kappa$. Hence,
	in order to obtain $G$ from $\hat{G}$, $p$ edges are removed and the multiplicity of every edge may decrease by a factor of at most $\kappa$. Therefore,
	for every $S \subset V_G$
	$$
	E(S,V_G \setminus S) \ge \frac{\hat{E}(S,V_G \setminus S) - p}{\kappa},
	$$
	and
	\begin{align*}
	\min_{\substack{S\subset V_G:\\\eps m<|S|\leq m}} \frac{|E(S,V_G\backslash S)|}{|S|}
	\geq \min_{\substack{S\subset V_{{G}}:\\\eps m<|S|\leq m}} \frac{|\hat{E}(S,V_{{G}}\setminus S)| - p}{\kappa |S|}
	\ge \frac{d}{40 \kappa}- \frac{1}{\eps \kappa m^{3/4}},
	\end{align*}
	and the result is established.
\end{proof}	
\begin{proof}[Proof of Theorem~\ref{thm:expansion-port}]
	Let $S\subset V_G$ such that $0 < |S| \le m$.
	Let $\hat{d} = d - \alpha -\delta$, $\delta = 0.01$ and let $\varepsilon>0$ be the constant from Lemma~\ref{lem:small-expansion}.
	We consider three cases, depending on the sizes of $S$ and $S \cap P$.
	Suppose first that
	$|S|>\varepsilon m$. Then, Theorem~\ref{thm:expansion-generic} implies that with probability $1-o(1)$, there exists $\gamma' > 0$ such that
	$$
	|E(S,V\backslash S)| > \gamma' |S| > \gamma' \varepsilon m \ge  (1+\gamma)m^\alpha > (1+\gamma)|P\cap S|,
	$$
	where the second to last inequality holds for sufficiently large $m$ and a suitable constant $\gamma > 0$.
	
	For the second case, suppose that $|S|\leq\varepsilon m$ and $|P\cap S|<\xi |S|$, where
	$$
	\xi =  \frac{\sqrt{4\hat{d}(\hat{d}-2)+1}+1}{2\hat{d}}.
	$$
	Then, by Lemma~\ref{lem:small-expansion}, with probability $1-o(1)$:
	$$
	|E(S,V\backslash S)| > (\hat{d}-2)|S| > \frac{\hat{d}-2}{\xi}|P\cap S| = \frac{\sqrt{4\hat{d}(\hat{d}-2)+1}-1}{2} |P\cap S| \ge (1+\gamma)|P\cap S|,
	$$
	where the last inequality holds for sufficiently small $\gamma$ since $\hat{d} \ge 2.74$.
	
	Finally, suppose  $|S|\leq\varepsilon m$ and $|P\cap S|\geq \xi |S|$. In this case, Lemma \ref{lem:small-expansion-P} also implies that with probability $1-o(1)$:
	$$
	|E(S,V\backslash S)| > \left( \xi \hat{d}-1 \right)|S| = \frac{\sqrt{4\hat{d}(\hat{d}-2)+1}-1}{2} |P\cap S| \ge (1+\gamma)|P\cap S|,
	$$
	and the result follows.	
\end{proof}

\subsection{Gadget expansion when $d = O(1)$: auxiliary facts}
In this section, we give the proof of our two key bounds on the edge expansion of the random graph $\Gr(m,p,{d-1},1)$.
In particular, we prove Lemmas~\ref{lem:small-expansion} and~\ref{lem:small-expansion-P}.
Suppose $3 \le d = O(1)$,
$p = \floor{m^\alpha}$ for some constant	
$\alpha \in (0,1)$,
and let  $G = (V_G=L\cup R,E_G)$ be a random bipartite graph distributed according to $\Gr(m,p,d-1,1)$.
Our proofs of Lemmas~\ref{lem:small-expansion} and~\ref{lem:small-expansion-P} will be based on the following bound for the
\textit{vertex expansion} of small subsets of $L$ (or $R$).
Recall that the vertex expansion is defined as:
$$
\partial S = \{ v\in V_G\backslash S: \exists u\in S, \{u,v\}\in E_G\}.
$$
\begin{lemma}\label{lem:small-vx-expansion}
	For $3\le d = O(1)$, $\alpha \in (0,1)$ and $\delta>0$, there exists $\varepsilon>0$ such that with probability $1-o(1)$ the following holds:
	$$
	\min_{\substack{S\subset L:\\0<|S|\leq \varepsilon m}} \frac{|\partial S|}{|S|} > d-1-\alpha-\delta.
	$$	
\end{lemma}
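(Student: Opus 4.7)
The plan is a union-bound argument after coupling $G$ with the simpler random multigraph $\hat{G} := \bigcup_{i=1}^{d} M_i$ via Lemma~\ref{lemma:equivalent-gen-2}. In this coupling, $G$ is obtained from $\hat{G}$ by deleting exactly $p$ edges (the set $A\subseteq M_d$ incident to a uniformly random size-$p$ set $P_1\subseteq L$) and then collapsing multi-edges. Collapsing does not alter vertex neighborhoods, and deleting the edges of $A$ adjacent to $S$ can shrink $\partial S$ by at most $|S\cap P_1|$. Hence $|\partial_G S|\ge |\partial_{\hat{G}} S|-|S\cap P_1|$, and it suffices to show that with probability $1-o(1)$,
\[
|\partial_{\hat{G}} S|\;>\;(d-1-\alpha-\delta)|S|+|S\cap P_1|\qquad\text{for all }S\subseteq L,\ 0<|S|\le\varepsilon m.
\]

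I would split by $s:=|S|$ at the threshold $s_0:=p/\delta$. For the bulk regime $s\ge s_0$, the trivial bound $|S\cap P_1|\le p\le \delta s$ reduces the task to $|\partial_{\hat{G}} S|>(d-1-\alpha)s$. Taking $t:=\lfloor(d-1-\alpha)s\rfloor$ and summing over $S$ and $T$, a standard first-moment bound yields
\[
\Pr\bigl[\exists S,\,|S|=s,\,|\partial_{\hat{G}}S|\le t\bigr]\;\le\;\binom{m}{s}\binom{m}{t}\left(\frac{t}{m-s}\right)^{ds},
\]
using that for each of the $d$ independent random matchings $M_i$ and a fixed $T\subseteq R$ of size $t$, $\Pr[M_i(S)\subseteq T]\le (t/(m-s))^s$. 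After taking logarithms and using $\binom{m}{k}\le(em/k)^k$, the coefficient of $s\log(m/s)$ works out to $1+t/s-d\le -\alpha<0$; choosing $\varepsilon$ small enough makes the probability $\exp(-\Omega(s))$, and summing over $s\in[s_0,\varepsilon m]$ gives $o(1)$. For the small regime $s<s_0$, the additive $|S\cap P_1|$ is no longer dominated by $\delta s$; instead I would use $|\partial_{\hat{G}} S|\ge ds-X_S$, where $X_S$ counts pairs of distinct edges of $\hat{G}$ incident to $S$ sharing an $R$-endpoint (i.e.\ multi-edges at a single $v\in S$ together with common neighbors between distinct $v,v'\in S$). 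Since $\E[X_S]=O(s^2 d^2/m)$, a Markov/Chebyshev bound combined with a union bound over $S$ of size $\le s_0\ll m/d^2$ gives $X_S\le \delta s/2$ for all such $S$ with high probability; combined with $|S\cap P_1|\le s$ this yields $|\partial_G S|\ge (d-1-\delta/2)s>(d-1-\alpha-\delta)s$.

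The extreme case $s=O(1)$ reduces to showing that with high probability no port has a multi-edge among its $d-1$ in-matching edges---which holds since the expected number of such ports is $p\cdot O(1/m)=O(m^{\alpha-1})=o(1)$---and no non-port has more than one multi-edge overall. The main obstacle is making the two regimes meet cleanly at $s\sim p$: the bulk union bound loses its slack precisely when $|S\cap P_1|$ becomes comparable to $s$, whereas the small-$s$ collision analysis needs $s\ll m/d^2$. Balancing the split $s_0\approx p/\delta$, the expander constant $\varepsilon$, and the collision-tail estimates is the technical heart of the argument, and delivers the claimed $o(1)$ overall failure probability.
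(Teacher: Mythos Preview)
Your bulk-regime argument ($s\ge p/\delta$) is correct and parallels the paper. The small-regime argument, however, has a genuine gap: you cannot conclude $X_S\le\delta s/2$ for \emph{all} $S$ with $|S|\le s_0$ via Markov/Chebyshev plus a union bound. Even with Poisson-type tails one gets at best $\Pr[X_S>\delta s/2]\le(Cs/m)^{\delta s/2}$, and since the exponent $\delta s/2<s$ this is swamped by the $\binom{m}{s}\approx(em/s)^s$ choices of $S$. In fact the intermediate claim is false as stated: each $u\in R$ has $d$ incident edges in $\hat G$, so there are $\Theta(m)$ pairs $S=\{v,w\}\subset L$ with $X_S\ge 1>\delta s/2$ whenever $\delta<1$. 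What you actually need in this regime is the uniform bound $|\partial_{\hat G}S|\ge(d-\delta/2)s$ (to afford the trivial correction $|S\cap P_1|\le s$), but the best uniform vertex expansion for random $d$-regular bipartite graphs is only $(d-1-\delta')s$, with the $-1$ essentially tight.

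The paper avoids this by \emph{not} decoupling the port effect from the expansion. Using the generation in Lemma~\ref{lemma:equivalent-gen-1}, for $S$ with $|S|=s$ and $r=|S\cap P|$ one has directly $\Pr[\partial_G S\subseteq T]\le(t/m)^{ds-r}$, since each port participates in only $d-1$ of the matchings. The union bound over $S$ then carries the factor $\binom{p}{r}\binom{m-p}{s-r}$, and it is precisely the smallness of $\binom{p}{r}\le(e m^\alpha/r)^r$ that absorbs the $r$ missing edges in the exponent $ds-r$; this tradeoff is the source of the $\alpha$ in the final bound $d-1-\alpha-\delta$. The fix to your outline is therefore to track $r$ jointly inside a single first-moment sum rather than to control $X_S$ and $|S\cap P_1|$ separately.
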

The proof of this lemma has similar flavor to that of Theorem 4.16 in~\cite{HLW} for random regular bipartite graphs, and it is provided in Section \ref{subsection:vertex-exp}.
\begin{proof}[Proof of Lemma~\ref{lem:small-expansion}]
	By Lemma~\ref{lem:small-vx-expansion}, for $3 \le d = O(1)$, $\alpha \in (0,1)$ and $\delta>0$ there exists $\varepsilon>0$ such that for all
	$T \subset V_G$ with $0<|T|\leq \varepsilon m$ and either $T\subset L$ or $T\subset R$, with probability $1-o(1)$ we have
	$$
	|\partial T| > (d-1-\alpha-\delta)|T|.
	$$
	Suppose this holds for every such $T$. Let $S\subset V_G$ such that $0<|S|\leq \varepsilon m$ and
	let $S_L = S\cap L$ and $S_R = S\cap R$. Then, $S_L\subset L$, $S_R\subset R$ and $\max\{|S_L|,|S_R|\} \leq |S|\leq \varepsilon m$. Hence,
	\begin{align*}
	|E(S,V\backslash S)| &\ge |\partial S| \\
	&= |\partial S_L \backslash S_R| + |\partial S_R \backslash S_L|\\
	&\geq |\partial S_L| - |S_R| + |\partial S_R| - |S_L|\\
	&> (d-1-\alpha-\delta) |S_L| - |S_R| + (d-1-\alpha-\delta) |S_R| - |S_L|\\
	&= (d-2-\alpha-\delta) |S|.
	\end{align*}
	Therefore, this holds for all such $S$ with probability $1-o(1)$ and the result follows.
\end{proof}
To prove Lemma~\ref{lem:small-expansion-P} we also need the following fact.
\begin{fact}
	\label{fact:port-edges}
	For $3 \le d = O(1)$ and $\alpha \in (0,\frac 14]$, the event $E(L \cap P, R\cap P) = \emptyset$ occurs with probability $1-O(m^{-1/2})$.
\end{fact}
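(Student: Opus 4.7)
The plan is to use the equivalent generation of $G\sim\Gr(m,p,d-1,1)$ provided by Lemma~\ref{lemma:equivalent-gen-2}. Under that description, we sample $d$ independent uniformly random perfect matchings $M_1,\ldots,M_d$ between $L$ and $R$, independently choose a uniformly random subset $P_1\subset L$ of size $p$, set $P_2 := M_d(P_1)\subset R$, and then remove from $M_d$ all edges between $P_1$ and $P_2$ (the set $A$). Since by construction $M_d\setminus A$ has no edges between $P_1$ and $P_2$, the event $E(L\cap P,R\cap P)=\emptyset$ in the final simple graph fails if and only if some matching among $M_1,\ldots,M_{d-1}$ contributes an edge between $P_1$ and $P_2$. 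So it suffices to bound the expected number of such edges and apply Markov's inequality.

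The main calculation is to compute, for fixed $i\in\{1,\ldots,d-1\}$ and for each pair $(u,v)\in L\times R$, the probability that $(u,v)\in M_i$ and simultaneously $u\in P_1$, $v\in P_2$. By independence of $M_i$ from $(M_d,P_1)$, this factors as $\Pr[(u,v)\in M_i]\cdot\Pr[u\in P_1,\,v\in P_2]=\tfrac{1}{m}\cdot\Pr[u\in P_1,\,M_d^{-1}(v)\in P_1]$. Conditioning on the value $w=M_d^{-1}(v)$, which is uniform over $L$ since $M_d$ is a uniformly random perfect matching independent of $P_1$, one obtains
\[
\Pr[u\in P_1,\,v\in P_2] \;=\; \frac{1}{m}\cdot\frac{p}{m} \,+\, \frac{m-1}{m}\cdot\frac{p(p-1)}{m(m-1)} \;=\; \frac{p^2}{m^2},
\]
where the two terms correspond to $w=u$ and $w\neq u$ respectively. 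Summing over the $m^2$ pairs $(u,v)$ yields $\E[|E(M_i)\cap E(P_1,P_2)|]=p^2/m$, and then summing over the $d-1=O(1)$ matchings gives
\[
\E\bigl[|E(L\cap P,R\cap P)|\bigr] \;\le\; (d-1)\cdot\frac{p^2}{m} \;=\; O\bigl(m^{2\alpha-1}\bigr) \;=\; O\bigl(m^{-1/2}\bigr),
\]
using $\alpha\le \tfrac14$ in the last step. Markov's inequality then gives $\Pr[|E(L\cap P,R\cap P)|\ge 1]=O(m^{-1/2})$, which is exactly what we need.

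I do not anticipate a serious obstacle; the only subtle point is to recognize the independence of $M_i$ from $(M_d,P_1)$ and to carry out the conditioning on $M_d^{-1}(v)$ cleanly so that the $1/m$ and $p(p-1)/m(m-1)$ contributions combine exactly to $p^2/m^2$. Everything else is a routine first-moment argument with the bound matching the target exponent $-1/2$ precisely at $\alpha=1/4$.
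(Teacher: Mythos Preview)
Your argument is correct. The paper takes a slightly different route: it works directly with the original definition of $\Gr(m,p,d-1,1)$ (where $P\cap L$ and $P\cap R$ are independent uniform $p$-subsets and $M_1'$ is a matching between the non-ports, so it automatically contributes no $(L\cap P)$--$(R\cap P)$ edges), and then computes exactly the probability that a single matching $M_i$ avoids $(L\cap P)\times(R\cap P)$ as the product $\prod_{j=0}^{p-1}\bigl(1-\tfrac{p}{m-j}\bigr)$, raises it to the $(d-1)$-th power, and lower-bounds it by $1-O(p^2/m)=1-O(m^{-1/2})$. Your approach instead passes through the equivalent generation of Lemma~\ref{lemma:equivalent-gen-2} and uses a first-moment (Markov) bound. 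This is arguably a touch slicker---no product to lower-bound, and the $p^2/m^2$ identity falls out cleanly from your conditioning on $M_d^{-1}(v)$---at the minor cost of invoking Lemma~\ref{lemma:equivalent-gen-2}. Both arrive at the same $(d-1)p^2/m$ estimate.
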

\begin{proof}
	By definition,
	no edges are added between $L \cap P$ and $R\cap P$ in  $M_1'$. Thus,
	it is sufficient to show that $E(L \cap P, R\cap P) = \emptyset$ after adding the $d-1$ random perfect matchings $M_1,\dots,M_{d-1}$. We may generate a random matching by choosing for each vertex of $L$, in any order, a uniformly random  unmatched vertex in $R$.
	Say $M_1,\dots,M_{d-1}$ are generated in this manner always matching the vertices in $L \cap P$ first.
	Then, the probability of the event  $E(L \cap P, R\cap P) = \emptyset$ is:
	\begin{align*}
	\left[\prod_{i=0}^{p-1} \left(1 - \frac{p}{m-i}\right)\right]^{d-1} \ge \left(1 - \frac{p}{m-p+1}\right)^{p(d-1)} \ge 1 - O(m^{-1/2}).\tag*{\qedhere}
	\end{align*}
\end{proof}
We now have all the ingredients for the proof of Lemma~\ref{lem:small-expansion-P}.
\begin{proof}[Proof of Lemma~\ref{lem:small-expansion-P}]
	By Lemma~\ref{lem:small-vx-expansion},  for $3 \le d = O(1)$ and $\delta>0$ there exists $\varepsilon>0$ such that, with  probability $1-o(1),$ for all $S \subset V$ with $0<|S|\leq \varepsilon m$ and either $S\subset L$ or $S\subset R$,
	$$
	|\partial S| > \left( d-1-\alpha-\delta \right)|S|.
	$$
	Also, by Fact~\ref{fact:port-edges}, $P$ is an independent set with probability $1-O(m^{-1/2})$.
	Hence, by a union bound, both of these events occur with probability $1-o(1)$. Suppose this is the case.
	
	For $\xi\in(0,1)$, let $S\subset V$ such that $0<\xi |S|\leq |P\cap S|$. Then, for sufficiently large $m$
	$$
	|S| \leq \frac{|P\cap S|}{\xi} \leq \frac{2p}{\xi} \leq \eps m.
	$$
	Let $S_L=S\cap L$ and $S_R=S\cap R$. Since there is no edge between any pair of vertices in $P$, we have
	$$
	\partial S_L \backslash S_R \supset \partial (P\cap S_L) \backslash S_R = \partial (P\cap S_L) \backslash (S_R\backslash P),
	$$
	and similarly $\partial S_R \backslash S_L \supset \partial (P\cap S_R) \backslash (S_L\backslash P)$. Moreover, $S_L\subset L$, $S_R\subset R$ and $\max\{|S_L|,|S_R|\} \leq |S|\leq \varepsilon m$. It follows that
	\begin{align*} 	
	|E(S,V\backslash S)|  &\ge
	|\partial S| \\
	&= |\partial S_L \backslash S_R| + |\partial S_R \backslash S_L|\\
	&\geq |\partial (P\cap S_L) \backslash (S_R \backslash P)| + |\partial (P\cap S_R) \backslash (S_L \backslash P)|\\
	&\geq |\partial (P\cap S_L)| - |S_R \backslash P| + |\partial (P\cap S_R)| - |S_L \backslash P|\\
	&> (d-1-\alpha-\delta) |P\cap S_L| + (d-1-\alpha-\delta)|P\cap S_R| - |S\backslash P|\\
	&= (d-\alpha-\delta) |P\cap S| - |S|\\
	&\geq \left( \xi(d-\alpha-\delta) - 1 \right) |S|.
	\end{align*}
	Thus, this holds for all $S\subset V$ such that $0<\xi |S|\leq |P\cap S|$ with probability $1-o(1)$ as claimed.
\end{proof}

\subsubsection{Vertex expansion for the Ising gadget when $d=O(1)$}
\label{subsection:vertex-exp}

In this section, we prove our lower bound in Lemma~\ref{lemma:equivalent-gen-2} for the \emph{vertex expansion}
of a random bipartite graph $G=(V_G=L \cup R,E_G)$ with distribution $\Gr(m,p,d-1,1)$
when $p = \floor{m^\alpha}$ for some $\alpha \in (0,\frac 14]$.
As mentioned earlier, our proof of this bound follows closely the approach in \cite{HLW}.
We also need the following fact regarding yet another equivalent way of generating a graph with distribution $\Gr(m,p,d-1,1)$; see also Lemma~\ref{lemma:equivalent-gen-2}.
\begin{lemma}
	\label{lemma:equivalent-gen-1}
	Let $m,p,d \in\N^+$ be positive integers such that
	$m \ge p$ and $d \ge 3$.
	Let $G'' = (V_G=L\cup R,E_G)$ be the random bipartite graph generated as follows:
	\begin{enumerate}
		\item Let $M_1,M_2,\dots,M_{d-1}$ be $d-1$ random perfect matchings between $L$ and $R$;
		\item Let $P_1$ be a subset of $L$ chosen uniformly at random among all the subsets of $L$ such that $|P_1|=p$;
		\item Let $M_1'$ be a random complete matching between $L\backslash P_1$ and $R$, and let
		$P_2 \subset R$ be the set of unmatched vertices in $R$; hence  $|P_2|=p$.
		\item Let $P = P_1 \cup P_2$ and $E_{G''} = \left(\bigcup_{i=1}^{d-1} M_i\right) \bigcup M_1'$.
	\end{enumerate}
	Then $G''$ has distribution $\Gr(m,p,d-1,1)$.
\end{lemma}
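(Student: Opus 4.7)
The plan is to show that the alternative procedure produces the same joint distribution of $(P_1, P_2, M_1, \dots, M_{d-1}, M_1')$ as the original definition of $\Gr(m, p, d-1, 1)$. Since in both procedures the matchings $M_1, \dots, M_{d-1}$ are i.i.d.\ uniformly chosen perfect matchings between $L$ and $R$, and they are independent of everything else, it suffices to verify that the triple $(P_1, P_2, M_1')$ has the same joint law under both procedures. Both distributions are supported on triples where $P_1 \subset L$ and $P_2 \subset R$ have size $p$ and $M_1'$ is a perfect matching between $L \setminus P_1$ and $R \setminus P_2$, so I just need to check that every such valid triple receives the same probability.

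First, I would compute the probability of a valid triple under the original procedure. There $P_1$ and $P_2$ are chosen independently (each uniform on $p$-subsets of its respective side), and then $M_1'$ is a uniform perfect matching between $L \setminus P_1$ and $R \setminus P_2$, giving
\[
\Pr_{\mathrm{orig}}[(P_1, P_2, M_1') = (p_1, p_2, m')] \;=\; \frac{1}{\binom{m}{p}^2} \cdot \frac{1}{(m-p)!} \;=\; \frac{(p!)^2 (m-p)!}{(m!)^2}.
\]
Next, I would compute the same probability under the $G''$ procedure. Here $P_1$ is uniform on $p$-subsets of $L$, and conditional on $P_1$, $M_1'$ is a uniform injection from $L \setminus P_1$ into $R$. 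The number of such injections is $m!/p!$, so for any fixed valid triple, $\Pr_{G''}[(P_1, M_1') = (p_1, m')] = \frac{1}{\binom{m}{p}} \cdot \frac{p!}{m!}$, and then $P_2$ is deterministically the set of $p$ vertices of $R$ left unmatched by $M_1'$. Simplifying yields
\[
\Pr_{G''}[(P_1, P_2, M_1') = (p_1, p_2, m')] \;=\; \frac{p!}{m! \binom{m}{p}} \;=\; \frac{(p!)^2 (m-p)!}{(m!)^2},
\]
matching the original.

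Having checked equality of probabilities on every atom, I conclude that $(P_1, P_2, M_1')$ has the same distribution under both procedures, and hence, together with the independent matchings $M_1, \dots, M_{d-1}$, the whole graph $G''$ is distributed as $\Gr(m, p, d-1, 1)$. There is no real obstacle here beyond book-keeping: the only subtle point is recognizing that in the $G''$ procedure $P_2$ is determined by $(P_1, M_1')$, so one must be careful to count injections $L \setminus P_1 \to R$ rather than perfect matchings into a fixed target set. Once this is noted, the two probability computations agree immediately.
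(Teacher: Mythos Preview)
Your proposal is correct and follows essentially the same approach as the paper: both reduce the claim to verifying that the joint law of $(P_1,P_2,M_1')$ (equivalently, of $(P_2,M_1')$ conditional on $P_1$) agrees in the two procedures, the remaining matchings $M_1,\dots,M_{d-1}$ being identically distributed and independent in both. The only stylistic difference is that the paper argues by symmetry (the unmatched set $P_2$ is uniform, and conditional on $P_2$ the injection is a uniform perfect matching into $R\setminus P_2$), whereas you compute the atom probabilities directly; both yield the same conclusion immediately.
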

Both Lemmas~\ref{lemma:equivalent-gen-1} and \ref{lemma:equivalent-gen-2} are proved in Section \ref{subsection:alternative-gen},
\begin{proof}[Proof of Lemma~\ref{lem:small-vx-expansion}]
	Let
	$$
	\eta = d-1-\alpha-\delta.
	$$
	For $S\subset L$ and $T\subset R$, let $X_{S,T}$ be an indicator random variable for the event $\partial S\subseteq T$. For some $\eps\in(0,1/\eta)$ to be chosen later, let
	$$
	X = \sum_{\substack{S\subset L:\\0<|S|\leq \eps m}} \sum_{\substack{T\subset R:\\|T|=\floor{\eta|S|}}} X_{S,T}.
	$$
	Since every set $T \subset R$ of size less than $\floor{\eta|S|}$ is included in some subset of $R$ of size exactly
	$\floor{\eta|S|}$, it suffices to show that
	$$
	\Pr[X>0] \leq O\left( \frac{(\ln m)^\delta}{m^\delta} \right).
	$$
	Write $|S|=s$, $|T|=t$ and $|P\cap S|=r$. Then, Lemma~\ref{lemma:equivalent-gen-1} implies
	$$
	\Pr\left[ X_{S,T}=1 \right] = \left( \frac{t(t-1)\dots(t-s+1)}{m(m-1)\dots(m-s+1)} \right)^{d-1} \left( \frac{t(t-1)\dots(t-(s-r)+1)}{m(m-1)\dots(m-(s-r)+1)} \right) \leq \left( \frac{t}{m} \right)^{ds-r}.
	$$
	Using  Markov's inequality, we get
	\begin{align*}
	\Pr\left[ X > 0 \right] = \Pr\left[ X\geq 1 \right] &\leq \E[X],
	\end{align*}
	and so
	\begin{align*}
	\Pr\left[X > 0 \right] &\le \sum_{\substack{S\subset L:\\0<|S|\leq \eps m}} \sum_{\substack{T\subset R:\\|T|=\floor{\eta|S|}}} \E [X_{S,T}]\\
	&\leq \sum_{s=1}^{\floor{\eps m}}   \sum_{r=0}^s   \binom{p}{r}\binom{m-p}{s-r} \binom{m}{t} \left(\frac{t}{m}\right)^{ds-r}.
	\end{align*}
	From the inequality $\binom{m}{k}\leq \left(\frac{\e m}{k}\right)^k$, we get
	\begin{align*}
	\Pr\left[ X > 0 \right] &\leq \sum_{s=1}^{\floor{\eps m}}  \sum_{r=0}^s   \left(\frac{\e p}{r}\right)^r \left(\frac{\e (m-p)}{s-r}\right)^{s-r} \left(\frac{\e m}{t}\right)^t \left(\frac{t}{m}\right)^{ds-r}\\
	&= \sum_{s=1}^{\floor{\eps m}} \sum_{r=0}^s  \Big[ (\e p)^r (\e(m-p))^{s-r} \Big] \cdot \left[ \left(\frac{1}{r}\right)^r\left(\frac{1}{s-r}\right)^{s-r} \right] \cdot \left[ \left(\frac{\e m}{t}\right)^t \left(\frac{t}{m}\right)^{ds-r} \right].
	\end{align*}
	Since $p \le m^\alpha$, the first term is be bounded by
	\begin{equation}\label{eqn:expansion-term1}
	(\e p)^r (\e(m-p))^{s-r} \leq \e^s p^r m^{s-r} = \left( \e m^{1-\frac {(1-\alpha)r}{s}} \right)^s.
	\end{equation}
	Also, the AM-GM inequality yields
	\begin{equation}\label{eqn:expansion-term2}
	\left(\frac{1}{r}\right)^r\left(\frac{1}{s-r}\right)^{s-r} \leq \left( \frac{r\cdot\frac{1}{r} + (s-r)\cdot\frac{1}{s-r}}{s} \right)^s = \left(\frac{2}{s}\right)^s.
	\end{equation}
	Finally, since for $\varepsilon \in (0,1/\eta)$, $t \leq \eta s \le \eta \varepsilon m < m$ we have
	\begin{equation}\label{eqn:expansion-term3}
	\left(\frac{\e m}{t}\right)^t \left(\frac{t}{m}\right)^{ds-r} \leq \left(\frac{\e m}{t}\right)^{\eta s} \left(\frac{t}{m}\right)^{ds-r} = \left[ \e^{\eta} \left(\frac{t}{m}\right)^{d-\frac rs-\eta} \right]^s \leq \left[ \e^{\eta} \left(\frac{\eta s}{m}\right)^{d-\frac rs-\eta} \right]^s.
	\end{equation}
	Combining the inequalities \eqref{eqn:expansion-term1}, \eqref{eqn:expansion-term2} and \eqref{eqn:expansion-term3} we deduce
	\begin{align*}
	\Pr\left[ X > 0 \right]
	&\leq \sum_{s=1}^{\floor{\eps m}} \sum_{r=0}^s\left[ \e m^{1-(1-\alpha)\frac rs} \cdot \frac{2}{s} \cdot \e^{\eta} \left(\frac{\eta s}{m}\right)^{d-\frac rs-\eta} \right]^s\\
	&= \sum_{s=1}^{\floor{\eps m}} \sum_{r=0}^s \left[ 2\e^{\eta+1}\eta^{d-\frac rs-\eta} \cdot \left(\frac{s}{m}\right)^{d-1-\alpha \frac rs-\eta} \cdot s^{-(1-\alpha)\frac rs} \right]^s.
	\end{align*}
	We recall that $\eta = d-1-\alpha-\delta$, so for $0\leq r\leq s$ we have
	\begin{equation*}
	\eta^{d-\frac rs-\eta} \leq (d-1)^{d-\eta}, \quad\quad \left(\frac{s}{m}\right)^{d-1-\alpha \frac rs-\eta} \leq \left(\frac{s}{m}\right)^\delta \quad\text{and}\quad s^{-(1-\alpha)\frac rs}\leq 1.
	\end{equation*}
	Thus, taking $c=c(d,\alpha,\delta) = 2  \e^{\eta+2}(d-1)^{d-\eta}$ we obtain
	\begin{align*}
	\Pr\left[X > 0 \right]
	&\le \sum_{s=1}^{\floor{\eps m}} \sum_{r=0}^s \left[ 2\e^{\eta+1} (d-1)^{d-\eta} \cdot \left(\frac{s}{m}\right)^{\delta} \right]^s\\
	&\le \sum_{s=1}^{\floor{\eps m}} (s+1) \left[ \frac{c}{{\e}} \cdot \left(\frac{s}{m}\right)^\delta \right]^s \\
	&\leq \sum_{s=1}^{\floor{\eps m}} \left[ c \left(\frac{s}{m}\right)^\delta\right]^s,
	\end{align*}
	where in the last inequality we use the fact that $(s+1)^{1/s}\leq {\e}$ for all $s> 0$.
	
	Now given $\delta>0$, since $c \le 2 d^{2+\delta} {\e}^{d+1}$, we can choose $\eps\in(0,1/\eta)$ such that
	$
	c \eps^\delta < {\e}^{-1}.
	$
	For this choice of $\varepsilon$ we have
	$$
	\sum_{s=\floor{\delta\ln m}+1}^{\floor{\eps m}} \left[ c \left(\frac{s}{m}\right)^\delta\right]^s \leq \sum_{s=\floor{\delta\ln m}+1}^{\floor{\eps m}} \left[ c \eps^\delta\right]^s \leq \sum_{s=\floor{\delta\ln m}+1}^{\floor{\eps m}} \frac{1}{\e^s} = O\left( \frac{1}{m^\delta} \right)
	$$
	and
	$$
	\sum_{s=1}^{\floor{\delta\ln m}} \left[ c \left(\frac{s}{m}\right)^\delta\right]^s \leq \sum_{s=1}^{\floor{\delta\ln m}} \left[ c \left(\frac{\delta\ln m}{m}\right)^\delta\right]^s = O\left( \frac{(\ln m)^\delta}{m^\delta} \right).
	$$
	Hence,
	$$
	\Pr\left[ X > 0 \right]  \leq O\left( \frac{(\ln m)^\delta}{m^\delta} \right)
	$$
	and the lemma follows.
\end{proof}
\subsection{Ising gadget: equivalent generation}
\label{subsection:alternative-gen}
We previously stated, in Lemmas~\ref{lemma:equivalent-gen-2} and~\ref{lemma:equivalent-gen-1}, two alternative procedures to generate a random graph $G$ with distribution $\Gr(m,p,d-1,1)$. We conclude this section with a proof of these facts.
\begin{proof}[Proof of Lemmas~\ref{lemma:equivalent-gen-2} and~\ref{lemma:equivalent-gen-1}]
	Denote the random bipartite graph defined in Lemma~\ref{lemma:equivalent-gen-2} by $G'$ and the one in Lemma~\ref{lemma:equivalent-gen-1} by $G''$. We need to show that both $G'$ and $G''$ have distribution $\Gr(m,p,d-1,1)$. Recall that $P_1=P\cap L$, $P_2=P\cap R$ and $M'_1$ is the perfect matching between $L\backslash P_1$ and $R\backslash P_2$. By the definitions of $\Gr(m,p,d-1,1)$, $G'$ and $G''$, it suffices to show that the joint distributions of $(P_2, M'_1)$ in these three models are the same. We recall that:
	\begin{itemize}
		\item In $\Gr(m,p,d-1,1)$, the joint distribution $\rho$ of $(P_2, M'_1)$ is:
		\begin{enumerate}
			\item $P_2$ is a subset of $R$ chosen uniformly at random among all the subsets of $R$ such that $|P_2|=p$;
			\item $M'_1$ is a random perfect matching between $L\backslash P_1$ and $R\backslash P_2$.
		\end{enumerate}
		\item In $G'$, the joint distribution $\rho'$ of $(P_2, M'_1)$ is:
		\begin{enumerate}
			\item $M_d$ is a random perfect matching between $L$ and $R$;
			\item $P_2 \subset R$ is the set of vertices in $R$ that are matched to $P_1$;
			\item $A \subset M_d$ is the set of edges between $P_1$ and $P_2$, and let $M'_1=M_d\backslash A$.
		\end{enumerate}
		\item In $G''$, the joint distribution $\rho''$ of $(P_2, M'_1)$ is:
		\begin{enumerate}
			\item $M'_1$ is a random complete matching between $L\backslash P_1$ and $R$;
			\item $P_2 \subset R$ is the set of unmatched vertices in $R$.
		\end{enumerate}
	\end{itemize}
	
	We first show that $\rho''=\rho$. In $\rho''$, the set $P_2$ of unmatched vertices in $R$ is a uniformly random subset of $R$ over all subsets such that $|P_2|=p$. Also, given $P_2\subset R$, a random complete matching between $L\backslash P_1$ and $R$ conditioned on that vertices in $P_2$ are unmatched is a random perfect matching between $L\backslash P_1$ and $R\backslash P_2$. This gives $\rho''=\rho$.	
	To see that $\rho'=\rho''$, we observe that a random complete matching between $L\backslash P_1$ and $R$ can be obtained by first drawing a random perfect matching between $L$ and $R$ and then removing all edges incident to $P_1$.
\end{proof}

\section{Identity testing algorithm for the ferromagnetic Ising model}
\label{sec:alg}

Let $G = (V,E)\in\mathcal{M}(n,d)$ be an $n$-vertex graph of maximum degree at most $d$. In this section, we 
focus on the ferromagnetic (attractive) setting.
We will allow each edge to have distinct but \emph{positive} interaction parameter which may depend on $n$.
In setting, $\beta = \{\beta(v,w)\}_{\{v,w\} \in E}$ with $\beta(v,w) > 0$, and
the Gibbs distribution becomes
\[
\mu_{G,\beta}(\sigma) = \frac{1}{Z_{G,\beta}} \exp\left( \sum_{\{v,w\}\in E} \beta(v,w)\Ind\{\sigma(v) = \sigma(w)\} \right),  
\]
for every $\sigma \in \{+,-\}^V$; cf., \eqref{eq:ising-def}. 
With slight abuse of notation, we also use $\beta$ for the largest $\beta(v,w)$; i.e., $\beta = \max_{\{v,w\} \in E} \beta(v,w)$.
We remark that the Ising model is also well-defined when $G$ is a multigraph. 
Indeed, an Ising model on a multigraph can be transformed into an equivalent model on a simple graph by collapsing all parallel edges and setting $\beta(v,w)$ to be the sum of the weights of all the edges between $v,w \in V$.
We restrict attention again in this section to the simpler case where there is no external magnetic field. 
This simplification is not actually necessary 
and is done only for the sake of clarity in our proofs; for a discussion about how our algorithmic results extend
to models with external field see Remark~\ref{rmk:field}.


In \cite{DDK}, the authors give an algorithm for identity testing for $\mathcal{M}(n,d)$ (see Algorithm~2 in \cite{DDK}).
We call this algorithm the \emph{DDK algorithm}. 
As discussed in the introduction, the running time and sample complexity of this algorithm 
for the ferromagnetic Ising model, 
where we can sample in polynomial time, 
is $\poly(n,d,\beta,\eps^{-1})$. 
We provide here an algorithm whose running time and sample complexity is polynomial in $n$, $d$ and $\eps^{-1}$ but independent of $\beta$. 

Our algorithm will use as a subroutine the DDK algorithm for multigraphs,
which extends straightforwardly to this more general setting.
We will also use the fact that we can generate samples from the ferromagnetic Ising distribution in polynomial time; see~\cite{JS:Ising,RW,GuoJ,CGHT} for various methods. 
These two facts are rigorously stated in the following theorems.
For positive integers 
$n$ and $m$, let $\mathcal{M}_\mathrm{multi}(n,m)$ denote the family of all $n$-vertex multigraphs with at most $m$ edges. 

\begin{thm}
	\label{thm:sample-ferro-Ising}
	Let $H\in \mathcal{M}_\mathrm{multi}(n,m)$ where $n$, $m$ are positive integers.
	Then, for all $\beta,\delta>0$,  there exists an algorithm that generates a sample from a distribution $\mu_{\textsc{alg}}$ satisfying:
	\[
	\TV{\mu_{H,\beta}}{\mu_{\textsc{alg}}} \leq \delta,
	\]
	with running time $\poly(m,\log(1/\delta))$. 
\end{thm}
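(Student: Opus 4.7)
The plan is to reduce the statement to a classical polynomial-time sampling result for the ferromagnetic Ising model on weighted simple graphs. First, I would apply the reduction noted immediately above the theorem statement: transform the multigraph $H$ into an equivalent weighted simple graph $H'$ by collapsing every bundle of parallel edges between two vertices $v,w$ into a single edge of weight equal to the sum of the individual interaction parameters $\beta(v,w)$. Because $\sum_e \beta(v,w)\Ind\{\sigma_v = \sigma_w\}$ is invariant under this collapsing, $\mu_{H,\beta} = \mu_{H',\beta'}$ on the nose, and the reduction runs in $O(m)$ time, producing a simple graph on $n$ vertices with at most $m$ edges and positive edge weights $\beta'(e)$.

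Next, I would invoke the Jerrum--Sinclair algorithm \cite{JS:Ising}, or any of the alternatives \cite{RW,GuoJ,CGHT}, to draw an approximate sample from $\mu_{H',\beta'}$. The key idea in \cite{JS:Ising} is to work in the subgraphs-world representation, where the partition function becomes a sum over even subgraphs $A \subseteq E(H')$ weighted by $\prod_{e \in A}\tanh(\beta'(e))$. A suitable Markov chain on subgraphs of $H'$ is rapidly mixing uniformly in the positive weights via a canonical-paths argument whose congestion is bounded by a quantity polynomial in $m$. A sample from this chain can be converted to a spin configuration in $O(m)$ time via the standard bijection between even subgraphs and Ising configurations, and running the chain for $O(\log(1/\delta))$ mixing times reduces the total variation error below $\delta$.

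The main subtlety is to confirm that the running time is genuinely independent of the magnitude of $\beta$, since the theorem asserts complexity $\poly(m,\log(1/\delta))$ with no explicit $\beta$ factor. In the subgraphs-world viewpoint, the weights appearing in the Markov chain are $\tanh(\beta'(e)) \in [0,1)$, so the transition probabilities and the canonical-paths congestion analysis involve quantities bounded by functions of $m$ alone; the ferromagnetic sign condition is precisely what makes these weights non-negative and the canonical-paths bound uniform. The only place $\beta$ enters the algorithm is in computing hyperbolic tangents to the precision required by the TV-distance guarantee, which takes $\poly(m,\log(1/\delta))$ bit operations. Combining the $O(m)$ preprocessing with the sampler then yields the claimed overall complexity $\poly(m,\log(1/\delta))$.
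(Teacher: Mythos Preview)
Your approach matches the paper's: the paper does not actually prove this theorem but states it as a known fact, citing precisely \cite{JS:Ising,RW,GuoJ,CGHT} after making the same observation you do, that a multigraph Ising model collapses to an equivalent weighted simple model. Your sketch is already more detailed than anything the paper provides.

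One technical slip in your more detailed account: there is no ``standard bijection between even subgraphs and Ising configurations.'' The high-temperature (subgraphs-world) expansion expresses $Z_{H',\beta'}$ as a sum over even subgraphs weighted by $\prod_{e\in A}\tanh(\beta'(e))$, but a sampled even subgraph does not correspond to a spin configuration via any bijection. The cited works instead obtain samples either through the random-cluster (FK) representation---sample an FK configuration and then assign an independent uniform $\pm$ spin to each connected component, as in \cite{RW,GuoJ}---or via self-reducibility from the partition-function FPRAS of \cite{JS:Ising}. Either route delivers the claimed $\poly(m,\log(1/\delta))$ running time uniformly in $\beta>0$, so your conclusion stands; only the conversion mechanism in your middle paragraph needs to be rephrased.
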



\begin{thm}[\cite{DDK}]\label{thm:DDK}
	The DDK algorithm for the identity testing problem in $\mathcal{M}_\mathrm{multi}(n,m)$ has sample complexity ${O}(m^2\beta^2 \eps^{-2}\log n)$, running time 
	$\poly(m,\beta,\eps^{-1})$ and success probability at least~$4/5$.
\end{thm}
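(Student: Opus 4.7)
The plan is to reconstruct the DDK analysis and observe that it extends from simple graphs to multigraphs essentially verbatim, since the relevant test statistic is naturally indexed by edges rather than pairs of vertices. First I would describe the algorithm: draw $L$ samples from the unknown distribution $\mu^*$, and for each edge $e=\{u,v\}\in H$ compute the empirical correlation $\hat c_e$ as an estimator of $c_e = \E_{\mu_{H,\beta}}[X_uX_v]$ under the hypothesized model, where $X_u,X_v\in\{\pm 1\}$ denote the spins at $u,v$. The reference values $c_e$ can be computed to arbitrary inverse-polynomial precision using the polynomial-time sampler of Theorem~\ref{thm:sample-ferro-Ising}. The algorithm accepts if a chi-squared-style test statistic of the form $T = \sum_e \beta(e)^2(\hat c_e - c_e)^2$ falls below a threshold scaling with $\eps$ and $m$, and rejects otherwise.

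The analysis splits into completeness and soundness. For completeness, if $\mu^* = \mu_{H,\beta}$ then each $\hat c_e$ is an unbiased estimator of $c_e$ with variance $O(1/L)$, since the summands $X_u X_v$ lie in $\{-1,+1\}$. Taking $L = \Omega(m^2\beta^2 \eps^{-2}\log n)$ and applying Chernoff--Hoeffding concentration together with a union bound over the $m$ edges shows that $T$ lies below the threshold with probability at least $9/10$. For soundness, the central DDK lemma translates TV lower bounds into correlation-discrepancy lower bounds: it shows that if $\|\mu_{H^*,\beta^*} - \mu_{H,\beta}\|_\textsc{tv} > \eps$, then $\sum_e \beta(e)^2(\E_{\mu^*}[X_uX_v] - c_e)^2$ is bounded below by a suitable function of $\eps$ and $m$. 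This is established by first bounding the symmetric KL divergence between the two Ising distributions by a weighted sum of squared correlation gaps (a direct computation using the exponential-family form of the Gibbs distribution), and then invoking Pinsker's inequality to convert KL into TV. Combining completeness and soundness yields success probability at least $4/5$ after a constant number of independent repetitions, and the running time is polynomial because each sample can be processed in $O(m)$ time and the preprocessing of the $c_e$'s via Theorem~\ref{thm:sample-ferro-Ising} costs $\poly(m,\beta,\eps^{-1})$.

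The main obstacle is the soundness step, since the completeness and concentration arguments are essentially routine. The TV-to-correlation lemma crucially uses the exponential-family structure of the Ising model: the factor of $\beta^2$ in the sample complexity is precisely what one pays to convert $L^1$-differences in the Gibbs density into $L^2$-style differences between correlations, with a factor of $\beta$ appearing from differentiating the log-partition function in each edge parameter. A secondary issue is ensuring that approximating $c_e$ via the sampler from Theorem~\ref{thm:sample-ferro-Ising} contributes only a subdominant error; this is handled by setting the sampling precision $\delta$ to be polynomially smaller than the test threshold, at the cost of only a logarithmic factor in running time.
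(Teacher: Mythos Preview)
The paper does not give a proof of this theorem; it is stated with a citation to~\cite{DDK} and used as a black box in the analysis of Algorithm~\ref{alg-testing}. So there is no in-paper argument to compare your proposal against.

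That said, your reconstruction captures the right high-level picture of the DDK approach: an edge-indexed correlation test whose soundness rests on the exponential-family identity expressing the symmetric KL divergence between two Ising models as $\sum_e (\beta_e-\beta^*_e)\big(\E_\mu[X_uX_v]-\E_{\mu^*}[X_uX_v]\big)$, combined with Pinsker to pass from total variation to KL. One point where your sketch drifts from the actual DDK argument is the form of the test statistic and the soundness localization: the symmetric-KL identity is \emph{linear} in the correlation gaps, not quadratic, and DDK's localization shows that if $\|\mu-\mu^*\|_{\textsc{tv}}>\eps$ then at least one edge must witness a correlation discrepancy of order $\eps^2/(m\beta)$, which is then detected by a per-edge Hoeffding bound with $L=\Theta(m^2\beta^2\eps^{-2}\log n)$ samples and a union bound over $m$ edges. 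Your chi-squared-style statistic would also work in principle, but the clean factor of $m^2\beta^2$ in the sample complexity comes out most directly from the linear-in-gaps identity plus pigeonhole, rather than from an $L^2$ aggregate. The remark that the reference correlations $c_e$ are obtained via the ferromagnetic sampler of Theorem~\ref{thm:sample-ferro-Ising} is correct and is exactly how the present paper packages the DDK result for use in Section~\ref{sec:alg}.
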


Before presenting our algorithm, we first introduce some necessary notations and definitions.
For a set of vertices $A \subset V$, let $\binom{A}{2}$ denote the collection of all pairs of vertices in $A$; i.e., 
\[
\binom{A}{2} = \{\{v,w\}: v,w\in A, v\neq w\}.
\]
Suppose $P = \{C_1,\dots,C_k\}$ is a partition of $V$; that is, $\cup_{i=1}^k C_i = V$ and $C_i\cap C_j = \emptyset$ for $1\leq i<j\leq k$. Let
\[
E(P) = \bigcup_{i=1}^k \binom{C_i}{2}.
\]
The \textit{quotient graph} $G_{\textsc{p}} = (V(G_{\textsc{p}}),E(G_{\textsc{p}}))$ is a multigraph defined as follows:
\begin{enumerate}
	\item Every vertex of $G_{\textsc{p}}$ is a partition class $C_i$ from $P$ where $1\leq i\leq k$; i.e., $V(G_{\textsc{p}}) = P$.
	\item Every edge between $C_i$ and $C_j$ of $G_{\textsc{p}}$ corresponds to an edge $\{v,w\}$ of $G$ where $v\in C_i$ and $w\in C_j$, allowing parallel edges.  The number of edges between $C_i$ and $C_j$ in $G_{\textsc{p}}$ is equal to the size of the set $\{\{v,w\} \in E: v\in C_i,\, w\in C_j \}$.
\end{enumerate}

Observe also that there is a one-to-one correspondence between the edge sets $E(G_{\textsc{p}})$ and $E \setminus E(P)$, which we represent by the bijective map $\varphi: E(G_{\textsc{p}}) \to E\backslash E(P)$.
Using $\varphi$, we can define an Ising model $(G_{\textsc{p}},\beta_{\textsc{p}})$ on the quotient graph $G_{\textsc{p}}$; the parameter $\beta_{\textsc{p}}$ is given by
\[
\beta_{\textsc{p}} (e) = \beta(\varphi(e))
\]
for every $e \in E(G_{\textsc{p}})$. 
We call $(G_{\textsc{p}},\beta_{\textsc{p}})$ the \emph{quotient model}. 

Suppose $\tau\in \{+,-\}^V$ is an Ising configuration of the original model $(G,\beta)$ satisfying $\tau(v) = \tau(w)$ for all $\{v,w\} \in E(P)$; that is, every vertex in the same partition class $C_i$ shares the same spin. Then, we can define a corresponding configuration $\tau_{\textsc{p}} \in \{+,-\}^P$ of the quotient model $(G_{\textsc{p}},\beta_{\textsc{p}})$ as follows. For each $C_i\in P$ and $v\in C_i$
\[
\tau_{\textsc{p}} (C_i) = \tau(v);
\]
our assumption on $\tau$ guarantees that the configuration $\tau_{\textsc{p}}$ is well-defined.

Recall that in the identity testing problem for $\mathcal{M}(n,d)$, 
we are given a graph $G=(V,E) \in \mathcal{M}(n,d)$,
the parameter $\beta$ 
and sample access to an unknown Ising distribution on a graph ${G}^* \in \mathcal{M}(n,d)$.
We will reduce this problem to identity testing for the corresponding quotient models.

For ease of notation, we set $\mu=\mu_{G,\beta}$, $\mu_{\textsc{p}}=\mu_{G_{\textsc{p}},\beta_{\textsc{p}}}$, 
${\mu}^*=\mu_{{G}^*,{\beta}^*}$ and $\mu^*_{\textsc{p}}={\mu}_{G^*_{\textsc{p}},\beta^*_{\textsc{p}}}$.
For a partition $P$ of $V$, 
define $\mathcal{P}$ to be the event that vertices from the same partition class of $P$ receive the same spin; i.e.,
\[
\mathcal{P} =  \{ X_v = X_w, \,\forall \{v,w\} \in E(P) \},
\]
where recall that $X_v ,X_w \{+1,-1\}$ are the random variables for the spins at vertices $v$ and $w$ respectively.
Interchangeably, we also use $\mathcal{P}$ for the set
\begin{align*}
\mathcal{P}  &= \left\{ \sigma\in\{+,-\}^V: \sigma(v) = \sigma(w), \,\forall \{v,w\} \in E(P) \right\}.
\end{align*}
We remark that the conditional distributions $\mu(\cdot|\mathcal{P})$ and ${\mu}^*(\cdot|\mathcal{P})$ are equivalent to the Gibbs distributions $\mu_{\textsc{p}}$ and ${\mu}^*_{\textsc{p}}$, respectively, for the quotient models.

Given $L$ samples $\{\sigma_1,\dots,\sigma_L\}$ from $\mu$, we define $\muemp$ to be the empirical distribution of these samples; in particular,
\[
\muemp(\mathcal{P}) = \frac{1}{L}\, \sum_{i=1}^L \Ind\left\{ \sigma_i \in \mathcal{P} \right\} 
\]
Observe that $\muemp(\mathcal{P}) = 1$ if and only if in all the $L$ samples every vertex from the same partition class of $P$ has the same spin. Similarly, given $L$ samples $\{\tau_1,\dots,\tau_L\}$ from ${\mu}^*$, we also define the empirical distribution $\muemph$ and the empirical probability $\muemph(\mathcal{P})$.

\begin{algorithm}[t]
	\SetKwInOut{Input}{input}\SetKwInOut{Output}{output}
	\Input{An Ising model $(G,\beta)$, $L$ samples $\{\tau_i\}_{i=1}^L$ from an unknown Ising model 
		$({G}^*,{\beta}^*)$ and a parameter $\eps > 0$.}
	\Output{\textsc{Yes} if it regards$\{\tau_i\}_{i=1}^L$ as samples from  $\mu_{G,\beta}$; \\ \textsc{No} if it regards $\{\tau_i\}_{i=1}^L$ as samples from $\mu_{G^*,\beta^*}$ such that~${\|\mu_{G,\beta} - \mu_{G^*, \beta^*}\|}_{\textsc{tv}}> \eps$.}
	\BlankLine
	
	$P = \{V\}$\;\label{algstep:find-P}
	
	\For{$i \leftarrow 1$ \KwTo $L$}{
		\ForEach{$C \in P$}{
			$C_+ = \{v\in C: \tau_i(v) = +\}$\;
			$C_- = \{v\in C: \tau_i(v) = -\}$\;
			$P \leftarrow P \backslash \{C\} \cup \{C_+,C_-\}$\;\label{algstep:find-P-2}
		}
	}
	
	\medskip
	Generate $L$ independent $(\eps/16)$-approximate samples $\{\sigma_i\}_{i=1}^L$ from $\mu_{G,\beta}$\;\label{algstep:sample-from-G}
	\If{$\muemp(P) \leq 1-\eps/4$\label{algstep:bad-event1-if}}{
		return \textsc{No}\;\label{algstep:bad-event1-then}
	}
	
	\medskip
	\ForEach{$\{v,w\} \in E\backslash E(P)$\label{algstep:bad-event2-for}}{
		\If{$\beta(v,w) \geq \ln(20n^2L)$\label{algstep:bad-event2-if}}{
			return \textsc{No}\;\label{algstep:bad-event2-then}
		}	
	}
	
	\medskip
	Run the DDK algorithm on the quotient model $(G_{\textsc{p}},\beta_{\textsc{p}})$, with samples $\{(\tau_i)_{\textsc{p}}\}_{i=1}^L$ and parameter $\eps' = \eps/2$.\label{algstep:DDK}
	
	Return the output of the DDK algorithm.
	
	\caption{Identity testing for ferromagnetic Ising models}\label{alg-testing}
\end{algorithm}

Suppose we are given a known Ising model $(G,\beta)$, $L$ samples $\{\tau_1,\ldots,\tau_L\}$ from an unknown Ising model $({G}^*,\beta^*)$ and a parameter $\eps>0$, where $G,G^*\in\mathcal{M}(n,d)$ and $\beta,\beta^*>0$.
Our algorithm (see Algorithm~\ref{alg-testing}) 
tests whether $(G,\beta) = (G^*,\beta^*)$ or ${\|\mu_{G,\beta}-\mu_{G^*,\beta^*}\|}_\textsc{tv} > \eps$.
For this, it first identifies ``heavy'' edges.
These are the edges $\{v,w\}$ whose interaction parameter $\beta(v,w)$ is very large,
and thus its endpoints $v$ and $w$ are likely to have the same spin.
To identify the heavy edges, the algorithm looks for the coarsest partition $P$ of $V$ such that 
for every partition class $C$ of $P$, all vertices from $C$ have the same spin in each of the $L$ samples $\{\tau_1,\dots,\tau_L\}$; i.e., $\muemph(\mathcal{P})=1$. The algorithm will regard the edges in $E(P)$
as the heavy ones. 

The algorithm then generates $L$ samples from the given Ising model $(G,\beta)$ and computes ${\muemp}(P)$ from these samples.
If ${\muemp}(P)$ is small, then it means that there is substantial disagreement between the sets of heavy edges in the known and hidden models, so the algorithm will output \textsc{No}.
Otherwise, the algorithm has found a partition $P$ such that vertices from the same partition class are very likely to receive the same spin in both $(G,\beta)$ and $( G^*,\beta^*)$.
The algorithm outputs \textsc{No} if any heavy edge of $G$ is not included in $E(P)$.

As a result, after Steps \ref{algstep:bad-event2-for}-\ref{algstep:bad-event2-then},
the weights of the  edges in $E(G_\textsc{p})$ are guaranteed to be $O(\log n)$ 
and those in $E({G}^*_\textsc{p})$ are also $O(\log n)$ with high probability.
Consequently, the original identity testing problem for $(G,\beta)$ and $(G^*,\beta^*)$ reduces to the same problem for the quotient models $(G_\textsc{p},\beta_\textsc{p})$ and $({G}^*_\textsc{p},{\beta}^*_\textsc{p})$ where $\beta_\textsc{p},{\beta}^*_\textsc{p} = O(\log n)$. 
Hence, when we run the DDK algorithm 
on the quotient models, the dependence on $\beta$ can be replaced by a $\poly(\log n)$ term.
The precise sample complexity of Algorithm~\ref{alg-testing} is given in the following theorem.



\begin{thm}
	\label{thm:alg-main}
	Suppose $G,G^* \in \mathcal{M}(n,d)$ and $\beta,\beta^*>0$. For all sufficiently large $n$, Algorithm~\ref{alg-testing} outputs the correct answer for the identity testing problem with probability at least $3/4$ and has sample complexity $L =  O(n^2 d^2 \eps^{-2}\log^3n)$.
	Moreover, the running time of Algorithm~\ref{alg-testing} is $\poly(n,d,\varepsilon^{-1})$.
\end{thm}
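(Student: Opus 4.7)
The plan is to verify correctness by splitting into the two cases of the identity testing definition---completeness when $(G,\beta)=(G^*,\beta^*)$ and soundness when $\|\mu-\mu^*\|_{\textsc{tv}}>\varepsilon$---and in both cases to track the random partition $P$ produced in steps~\ref{algstep:find-P}--\ref{algstep:find-P-2}. Call an edge ``heavy'' if its interaction parameter is at least $\ln(20n^2L)$. A direct calculation using~\eqref{eq:ising-def} gives $\mu^*(X_v\neq X_w)\le e^{-\beta^*(v,w)}\le 1/(20n^2L)$ for every heavy edge $\{v,w\}$ of $G^*$, so a union bound over the at most $nd/2$ heavy edges and the $L$ samples $\{\tau_i\}$ places every heavy edge inside a single class of $P$ with probability at least $1-1/(40n)$. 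In parallel, a purely structural argument bounds $\mu^*(\neg\mathcal{P})$: if $v,w$ land in a common class of $P$, then $\tau_i(v)=\tau_i(w)$ for all $i$, which has probability at most $\mu^*(X_v=X_w)^L$; a union bound over the $\binom{n}{2}$ pairs forces $1-\mu^*(X_v=X_w)\le O(\log n/L)$ for every same-class pair, so $\mu^*(\neg\mathcal{P})\le O(n^2\log n/L)\ll\varepsilon$ for the target $L=O(n^2d^2\varepsilon^{-2}\log^3 n)$.

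In the completeness case $\mu=\mu^*$, so the heavy-edge bound applied to $G=G^*$ shows step~\ref{algstep:bad-event2-if} does not trigger with high probability, and since $\mu(\neg\mathcal{P})=\mu^*(\neg\mathcal{P})\ll\varepsilon$, a Chernoff bound on the samples $\{\sigma_i\}$ generated in step~\ref{algstep:sample-from-G} gives $\muemp(\mathcal{P})>1-\varepsilon/4$, so step~\ref{algstep:bad-event1-if} does not trigger either. The $\varepsilon/16$ sampling error from Theorem~\ref{thm:sample-ferro-Ising} can be absorbed either by a coupling argument or---at the cost of only logarithmic runtime factors---by shrinking its per-sample tolerance to $O(\varepsilon/L)$. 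Since the two quotient models coincide, the DDK call in step~\ref{algstep:DDK} returns \textsc{Yes} with probability at least $4/5$ by Theorem~\ref{thm:DDK}, and a union bound over the $O(1)$ failure events yields the claimed success probability $\ge 3/4$.

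In the soundness case, if either of steps~\ref{algstep:bad-event1-if} or~\ref{algstep:bad-event2-if} triggers, the algorithm outputs \textsc{No} correctly. Otherwise step~\ref{algstep:bad-event2-if} not firing bounds every edge weight of the quotient $(G_{\textsc{p}},\beta_{\textsc{p}})$ by $\ln(20n^2L)=O(\log n)$, and step~\ref{algstep:bad-event1-if} not firing combined with Chernoff forces $\mu(\neg\mathcal{P})\le\varepsilon/4+o(\varepsilon)$. Splitting configurations according to $\mathcal{P}$ versus $\neg\mathcal{P}$ and using $\mu_{\textsc{p}}(\sigma)=\mu(\sigma)/\mu(\mathcal{P})$ on $\mathcal{P}$ yields the clean bound
\[
\|\mu-\mu^*\|_{\textsc{tv}} \;\le\; \|\mu_{\textsc{p}}-\mu^*_{\textsc{p}}\|_{\textsc{tv}} + \mu(\neg\mathcal{P}) + \mu^*(\neg\mathcal{P}),
\]
so $\|\mu_{\textsc{p}}-\mu^*_{\textsc{p}}\|_{\textsc{tv}} > \varepsilon - \varepsilon/4 - o(\varepsilon) > \varepsilon/2$ and the DDK call with accuracy $\varepsilon/2$ returns \textsc{No} with probability at least $4/5$. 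For the complexity bound, the quotient multigraph has at most $nd/2$ edges and maximum weight $O(\log n)$, so Theorem~\ref{thm:DDK} gives sample complexity $O((nd)^2(\log n)^2\varepsilon^{-2}\log n)=O(n^2d^2\varepsilon^{-2}\log^3 n)$; steps~\ref{algstep:sample-from-G} and~\ref{algstep:DDK} run in $\poly(n,d,\varepsilon^{-1})$ time by Theorems~\ref{thm:sample-ferro-Ising} and~\ref{thm:DDK}.

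The main obstacle I anticipate is the dependence between $P$ and the hidden samples $\{\tau_i\}$: both $\mathcal{P}$ and the event ``heavy edges are respected'' are functions of the same samples used to define $P$, so one cannot invoke concentration on $\mu^*(\mathcal{P})$ directly. The pairwise union-bound argument above sidesteps this by replacing an unavailable union bound over the exponentially many candidate partitions with a union bound over $\binom{n}{2}$ vertex pairs, while the fresh samples $\{\sigma_i\}$ generated in step~\ref{algstep:sample-from-G} restore independence with $P$ for the Chernoff step that converts $\muemp(\mathcal{P})$ into an estimate of $\mu(\mathcal{P})$.
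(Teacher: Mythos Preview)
Your argument is correct and reaches the target bound, but it diverges from the paper's at the key step of handling the dependence between the partition $P$ and the samples. The paper (Lemma~\ref{lem:F-holds}) defines a single good event $\mathcal{F}$ requiring that for \emph{every} partition $Q$ of $V$ simultaneously, both $|\muemp(\mathcal{Q})-\mu(\mathcal{Q})|<\varepsilon/8$ and $|\muemph(\mathcal{Q})-\mu^*(\mathcal{Q})|<\varepsilon/8$, together with the heavy-edge condition; it establishes $\Pr[\mathcal{F}]\ge 19/20$ by a brute-force union bound over the at most $n^n$ partitions, which succeeds because the Chernoff tail $e^{-\Omega(\varepsilon^2 L)}$ with $L\ge 800n^2/\varepsilon^2$ beats $n^n$. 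Once $\mathcal{F}$ holds, Lemmas~\ref{lem:E-is-true}--\ref{lem:quotient-bound} follow deterministically, and the data-dependent $P$ simply inherits its concentration from this uniform statement. Your pairwise union bound over $\binom{n}{2}$ vertex pairs to control $\mu^*(\neg\mathcal{P})$, combined with the observation that the fresh samples $\{\sigma_i\}$ are independent of $P$ so Chernoff applies conditionally, is a genuinely different decoupling that avoids the $n^n$-term union bound you flagged as unavailable. In both approaches the sample complexity is ultimately dominated by the DDK call on the quotient with $\beta_{\textsc p}=O(\log n)$, so the final bound is identical; the paper's route trades a heavier union bound for one clean event that makes the downstream case analysis very short, while yours is more explicit about exactly where the dependency bites and would in principle give sharper constants in the non-DDK terms.
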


\begin{rmk}
	\label{rmk:field}
	Our guarantees for Algorithm~\ref{alg-testing}  in Theorem~\ref{thm:alg-main} extend without significant modification to the case
	where the Gibbs distribution $\mu_{G,\beta}$ includes
	a \emph{consistent} magnetic field or vertex potential.
	(Recall that a magnetic field is consistent if it has the same sign in every vertex.)
	We believe that identity testing for the ferromagnetic Ising model with inconsistent fields is actually hard
	since sampling is already known to be \BIShard in this setting~\cite{GJfield}.
	Observe that the DDK algorithm is not guaranteed to run in polynomial time with inconsistent fields since in this case we do not know
	how to compute the pairwise covariances efficiently. 
\end{rmk}

Before proving Theorem~\ref{thm:alg-main}, we first show that, with high probability, the empirical distributions $\muemp$, $\muemph$ are close to the corresponding Gibbs distributions $\mu$, ${\mu}^*$.
Define $\mathcal{F}$ to be the event that the following two events occur:
\begin{enumerate}
	\item For every partition $P$ of $V$,
	$
	|\muemp(\mathcal{P}) - \mu(\mathcal{P})| < \frac{\eps}{8}$ and $|\muemph(\mathcal{P}) - {\mu^*}(\mathcal{P})| < \frac{\eps}{8};
	$
	\item For every $v,w\in V$, 	if
	${\mu^*}(X_v = X_w) \geq 1 - \frac{1}{20n^2L}$, then $\tau_i(v) = \tau_i(w)$ for all $1\leq i\leq L$.
\end{enumerate}
The probability space associated with the event $\mathcal{F}$ is determined by the random samples $\{\sigma_1,\dots,\sigma_L\}$ and $\{\tau_1,\dots,\tau_L\}$. 
If $\mathcal{F}$ occurs, then the empirical and true distributions are close to each other. 
We can prove that the event $\mathcal{F}$ occurs with probability at least $19/20$.

\begin{lemma}
	\label{lem:F-holds}
	Suppose $L\geq 800n^2\eps^{-2}$. Then for $n$ sufficiently large we have
	$
	\Pr[\mathcal{F}] \geq \frac{19}{20}.
	$	
\end{lemma}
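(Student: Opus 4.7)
The plan is to handle the two events comprising $\mathcal{F}$ separately and then combine them with a union bound. Note that $\muemp(\mathcal{P})$ is the empirical mean of $L$ independent Bernoulli indicators $\Ind\{\sigma_i \in \mathcal{P}\}$, each with mean $\mu(\mathcal{P})$, and similarly for $\muemph(\mathcal{P})$. So for a \emph{fixed} partition $P$, Hoeffding's inequality immediately yields
\[
\Probability\left[|\muemp(\mathcal{P}) - \mu(\mathcal{P})| \ge \tfrac{\eps}{8}\right] \le 2\exp\!\left(-\tfrac{L\eps^2}{32}\right),
\]
and the same bound for $\muemph$ versus $\mu^*$.

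The key observation for Part~1 is that the event $\mathcal{P}$ only depends on the partition $P$, and the total number of partitions of $V$ is the Bell number $B_n$, which satisfies $B_n \le n^n$, so $\log B_n = O(n\log n)$. Applying a union bound over the $2 B_n$ events (both empirical distributions against both true distributions), the failure probability for Part~1 is at most $4 B_n \exp(-L\eps^2/32)$. Substituting $L \ge 800 n^2 \eps^{-2}$ gives an exponent of at most $-25 n^2$, and since $n\log n - 25 n^2 \to -\infty$, this bound is $o(1)$ as $n \to \infty$; in particular it is below $1/40$ for all sufficiently large $n$.

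For Part~2, fix a pair $v,w \in V$ with $\mu^*(X_v = X_w) \ge 1 - \tfrac{1}{20 n^2 L}$. Then a single sample $\tau_i$ from $\mu^*$ satisfies $\tau_i(v) \ne \tau_i(w)$ with probability at most $\tfrac{1}{20 n^2 L}$. A union bound over the $L$ samples yields a failure probability of at most $\tfrac{1}{20 n^2}$ for this pair, and a further union bound over the at most $\binom{n}{2} \le n^2/2$ pairs gives total failure probability at most $\tfrac{1}{40}$.

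Combining the two parts by one more union bound, $\Probability[\neg \mathcal{F}] \le \tfrac{1}{40} + o(1) \le \tfrac{1}{20}$ for all $n$ sufficiently large, as required. The only mildly delicate point is the size of the union bound in Part~1; everything there hinges on the fact that $L$ grows quadratically in $n$ so that $L\eps^2 = \Omega(n^2)$ comfortably dominates $\log B_n = O(n \log n)$. No step should present a genuine obstacle.
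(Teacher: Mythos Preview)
Your approach is essentially the same as the paper's: Chernoff/Hoeffding plus a union bound over all partitions (using $B_n \le n^n$) for Part~1, and a double union bound (over samples, then over pairs) for Part~2, with the final $1/40 + 1/40 = 1/20$ accounting.

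There is, however, one small but genuine oversight. You write that each indicator $\Ind\{\sigma_i \in \mathcal{P}\}$ has mean $\mu(\mathcal{P})$, but in Algorithm~\ref{alg-testing} the samples $\sigma_i$ are only $(\eps/16)$-\emph{approximate} samples from $\mu$ (Step~\ref{algstep:sample-from-G}); they are drawn from some $\mu_{\textsc{alg}}$ with $\TV{\mu_{\textsc{alg}}}{\mu}\le \eps/16$, so the true mean of the indicator is $\mu_{\textsc{alg}}(\mathcal{P})$, not $\mu(\mathcal{P})$. The paper handles this by concentrating $\muemp(\mathcal{P})$ around $\mu_{\textsc{alg}}(\mathcal{P})$ to within $\eps/16$ and then using the triangle inequality with $|\mu_{\textsc{alg}}(\mathcal{P})-\mu(\mathcal{P})|\le \eps/16$. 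Your argument is easily patched in the same way (apply Hoeffding with deviation $\eps/16$, giving exponent $-L\eps^2/128 \le -6.25\,n^2$, which still dominates $\log B_n$), but as written the concentration step is centered at the wrong quantity. The samples $\tau_i$ from the hidden model, by contrast, are exact samples from $\mu^*$, so your treatment of $\muemph$ versus $\mu^*$ and of Part~2 is fine as is.
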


We justify next Steps~\ref{algstep:bad-event1-if}-\ref{algstep:bad-event2-then} of Algorithm~\ref{alg-testing}. 
Let $P$ be the partition of $V$ we get after Step~\ref{algstep:find-P-2} of the algorithm.
Let $\mathcal{E}$ be the event that the following two events occur:
\begin{enumerate}
	\item $
	\muemp(\mathcal{P}) > 1-\frac{\eps}{4};
	$
	\item For every edge $\{v,w\}\in E\backslash E(P)$,
	$
	\beta(v,w) < \ln(20n^2L).
	$
\end{enumerate}
Like $\mathcal{F}$, the probability space associated with $\mathcal{E}$ is determined by the random samples $\{\sigma_1,\dots,\sigma_L\}$ and $\{\tau_1,\dots,\tau_L\}$. 
Observe that if the event $\mathcal{E}$ does not occur, then Algorithm~\ref{alg-testing} will output \textsc{No}.
The following lemma justifies this.
\begin{lemma}\label{lem:E-is-true}
	Assume the event $\mathcal{F}$ occurs. If $(G,\beta) = (G^*,\beta^*)$, then the event $\mathcal{E}$ always occurs.
\end{lemma}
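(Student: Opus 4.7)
The plan is to verify the two conditions in the definition of $\mathcal{E}$ separately, exploiting that $\mu=\mu^*$ under the hypothesis $(G,\beta)=(G^*,\beta^*)$.

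For the first condition, recall that Steps~\ref{algstep:find-P}--\ref{algstep:find-P-2} iteratively refine the trivial partition so that each $\tau_i$ is constant on every class of $P$. By construction every sample $\tau_i$ therefore lies in $\mathcal{P}$, so $\muemph(\mathcal{P})=1$. Condition~(1) of $\mathcal{F}$ applied to this $P$ gives $\mu^*(\mathcal{P})>1-\eps/8$; the identity $\mu=\mu^*$ transfers this to $\mu(\mathcal{P})>1-\eps/8$; and a second application of condition~(1) of $\mathcal{F}$, now to the samples $\sigma_1,\dots,\sigma_L$ drawn in Step~\ref{algstep:sample-from-G}, yields $\muemp(\mathcal{P})>\mu(\mathcal{P})-\eps/8>1-\eps/4$.

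For the second condition, the main step is the following ferromagnetic sub-claim: for every edge $\{v,w\}\in E$,
\[
\mu(X_v=X_w)\ \ge\ \frac{e^{\beta(v,w)}}{1+e^{\beta(v,w)}}.
\]
To prove this, let $\mu_0$ denote the Gibbs distribution obtained from $(G,\beta)$ by setting the interaction on the edge $\{v,w\}$ to zero while leaving every other $\beta(\cdot,\cdot)$ unchanged. Unfolding the definition of the Gibbs distribution yields the identity
\[
\mu(X_v=X_w)\ =\ \frac{e^{\beta(v,w)}\,\mu_0(X_v=X_w)}{e^{\beta(v,w)}\,\mu_0(X_v=X_w)+\mu_0(X_v\ne X_w)}.
\]
Since $\mu_0$ is still ferromagnetic with no external field, the Griffiths/GKS inequality gives $\E_{\mu_0}[X_vX_w]\ge 0$, i.e., $\mu_0(X_v=X_w)\ge\mu_0(X_v\ne X_w)$, and substituting this into the identity proves the sub-claim. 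This sub-claim is the only nontrivial ingredient in the whole argument; everything else is bookkeeping around the construction of $P$ and the definition of $\mathcal{F}$.

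With the sub-claim in hand, fix any $\{v,w\}\in E\setminus E(P)$. By the construction of $P$, the endpoints $v$ and $w$ lie in different classes, so there is some index $i$ with $\tau_i(v)\ne\tau_i(w)$. The contrapositive of condition~(2) of $\mathcal{F}$ gives $\mu^*(X_v=X_w)<1-\tfrac{1}{20n^2L}$, and the hypothesis $\mu=\mu^*$ carries this bound over to $\mu$. Combined with the sub-claim this forces
\[
\frac{e^{\beta(v,w)}}{1+e^{\beta(v,w)}}<1-\frac{1}{20n^2L},
\]
which after rearrangement is equivalent to $1+e^{\beta(v,w)}<20n^2L$, and hence $\beta(v,w)<\ln(20n^2L-1)<\ln(20n^2L)$, establishing Part~(2) of $\mathcal{E}$.
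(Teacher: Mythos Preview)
Your proof is correct and follows essentially the same route as the paper's. The only notable difference is that for Part~(2) the paper simply cites the bound $\mu^*(X_v\neq X_w)\le 1/(e^{\beta^*(v,w)}+1)$ as Lemma~11 of~\cite{DDK}, whereas you supply a self-contained derivation of the equivalent inequality $\mu(X_v=X_w)\ge e^{\beta(v,w)}/(1+e^{\beta(v,w)})$ via the edge-removal identity and the Griffiths inequality; the logical structure (contrapositive of condition~(2) of $\mathcal{F}$, then bound $\beta(v,w)$) is otherwise identical.
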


Now, suppose $\mathcal{F}$ and $\mathcal{E}$ occur; then, the algorithm outputs the answer by running the DDK algorithm on the quotient models.
The next two lemmas,
in which we establish several useful properties of the quotient models $(G_\textsc{p},\beta_\textsc{p})$ and $({G}^*_\textsc{p},{\beta}^*_\textsc{p})$,
will be used to guarantee the correctness of Algorithm~\ref{alg-testing} in this case.

\begin{lemma}\label{lem:quotient-close}
	Assume both of the events $\mathcal{F}$ and $\mathcal{E}$ occur. Let $P$ be the partition of $V$ we get after Step~\ref{algstep:find-P-2} of Algorithm~\ref{alg-testing}. Then the following events always occur:
	\begin{enumerate}
		\item The original Gibbs distribution $\mu$ and the conditional Gibbs distribution $\mu(\cdot |\mathcal{P})$ are close:
		\[
		\TV{\mu}{\mu(\cdot |\mathcal{P})} < \frac{3\eps}{8};
		\]
		\item The original Gibbs distribution ${\mu}^*$ and the conditional Gibbs distribution ${\mu^*}(\cdot |\mathcal{P})$ are close:
		\[
		\TV{{\mu^*}}{{\mu}^*(\cdot |\mathcal{P})} < \frac{\eps}{8}.
		\]
	\end{enumerate}	
\end{lemma}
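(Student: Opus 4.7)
The plan is to reduce both claims to a single, routine probabilistic fact: for any probability measure $\nu$ on a finite set and any event $A$ with $\nu(A)>0$, a direct expansion of total variation distance gives
\[
\TV{\nu}{\nu(\cdot \mid A)} \;=\; 1 - \nu(A).
\]
Indeed, splitting the sum over $\sigma$ into $\sigma \in A$ (where $\nu(\sigma\mid A) = \nu(\sigma)/\nu(A)$) and $\sigma \notin A$ (where $\nu(\sigma\mid A)=0$) produces $\frac{1}{2}(1-\nu(A))$ from each piece. Applying this identity with $\nu \in \{\mu,\mu^*\}$ and $A=\mathcal{P}$, both parts of the lemma reduce to lower bounds on $\mu(\mathcal{P})$ and $\mu^*(\mathcal{P})$.

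For part 1, I would chain the two hypotheses. The occurrence of $\mathcal{E}$ directly gives $\muemp(\mathcal{P}) > 1 - \eps/4$, while part 1 of the event $\mathcal{F}$ (applied to this particular partition $P$) gives $|\muemp(\mathcal{P}) - \mu(\mathcal{P})| < \eps/8$. Combining these,
\[
\mu(\mathcal{P}) \;>\; \muemp(\mathcal{P}) - \tfrac{\eps}{8} \;>\; 1 - \tfrac{\eps}{4} - \tfrac{\eps}{8} \;=\; 1 - \tfrac{3\eps}{8},
\]
so $\TV{\mu}{\mu(\cdot \mid \mathcal{P})} = 1 - \mu(\mathcal{P}) < 3\eps/8$.

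For part 2, the key observation is structural rather than probabilistic: the partition $P$ produced in Steps~\ref{algstep:find-P}--\ref{algstep:find-P-2} is by construction the finest partition of $V$ such that for every partition class $C\in P$, every vertex in $C$ shares the same spin in each sample $\tau_1,\dots,\tau_L$. Equivalently, two vertices lie in the same class of $P$ if and only if they agree on all $L$ samples. Consequently $\tau_i \in \mathcal{P}$ for every $i$, i.e., $\muemph(\mathcal{P}) = 1$. Invoking the first part of $\mathcal{F}$ again yields
\[
\mu^*(\mathcal{P}) \;>\; \muemph(\mathcal{P}) - \tfrac{\eps}{8} \;=\; 1 - \tfrac{\eps}{8},
\]
and hence $\TV{\mu^*}{\mu^*(\cdot \mid \mathcal{P})} = 1 - \mu^*(\mathcal{P}) < \eps/8$.

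There is no real obstacle in this argument; the only subtlety is verifying that the iterative splitting in the algorithm produces a partition with $\muemph(\mathcal{P})=1$, which is immediate from the inductive invariant that after processing sample $\tau_i$, every current class $C \in P$ is monochromatic under $\tau_i$ (and future splits only refine $P$, preserving monochromaticity for earlier samples). All other pieces are a one-line TV calculation plus triangle-style manipulations, so the proof is short.
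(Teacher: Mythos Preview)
Your proposal is correct and follows essentially the same approach as the paper: reduce to the identity $\TV{\nu}{\nu(\cdot\mid A)} = 1-\nu(A)$, then bound $\mu(\mathcal{P})$ via $\mathcal{E}$ and $\mathcal{F}$ for part 1, and bound $\mu^*(\mathcal{P})$ via $\muemph(\mathcal{P})=1$ (by construction of $P$) together with $\mathcal{F}$ for part 2. One minor terminological slip: the partition produced by the algorithm is the \emph{coarsest} (not finest) partition with the monochromaticity property, but your subsequent characterization (``two vertices lie in the same class iff they agree on all $L$ samples'') is correct and is all that is needed.
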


\begin{lemma}\label{lem:quotient-bound}
	Assume both of the events $\mathcal{F}$ and $\mathcal{E}$ occur. Let $P$ be the partition of $V$ we get after Step~\ref{algstep:find-P-2} of Algorithm~\ref{alg-testing}. Then the following events always occur:
	\begin{enumerate}
		\item For every edge $\{v,w\}\in E\backslash E(P)$,
		$
		\beta(v,w) < \ln(20n^2L);
		$
		\item For every edge $\{v,w\}\in {E^*}\backslash E(P)$,
		$
		{\beta^*}(v,w) < \ln(20n^2L).
		$
	\end{enumerate}	
\end{lemma}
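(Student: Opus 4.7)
Part 1 of the lemma is essentially a tautology: it is nothing more than the second clause of the event $\mathcal{E}$, which we are already assuming holds. So the real content is Part 2, which I plan to establish by contradiction. Suppose toward contradiction that there exists an edge $\{v,w\} \in E^* \setminus E(P)$ with $\beta^*(v,w) \geq \ln(20n^2 L)$. The plan is to show that such a large weight forces $v$ and $w$ into a common class of $P$, which places $\{v,w\}$ in $E(P)$ and yields the desired contradiction.

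The crux is showing that the large edge weight in $(G^*, \beta^*)$ forces the hidden spins at $v$ and $w$ to agree with probability at least $1 - \frac{1}{20n^2 L}$ under $\mu^*$. To this end, let $\nu$ denote the ferromagnetic Ising distribution on the graph $G^* \setminus \{v,w\}$ (i.e., $G^*$ with just the edge $\{v,w\}$ deleted), inheriting the same couplings $\beta^*(e)$ on the remaining edges. Reading off the Gibbs weights from \eqref{eq:ising-def} and separating the contribution of the edge $\{v,w\}$, we obtain
\[
\frac{\mu^*(X_v = X_w)}{\mu^*(X_v \neq X_w)} \;=\; \e^{\beta^*(v,w)} \cdot \frac{\nu(X_v = X_w)}{\nu(X_v \neq X_w)}.
\]
Since $\nu$ is a ferromagnetic Ising model with nonnegative couplings and no external field, Griffiths' inequality (equivalently, the FKG inequality applied to the spin-product observable) gives $\nu(X_v = X_w) \geq \nu(X_v \neq X_w)$. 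Hence
\[
\mu^*(X_v \neq X_w) \;\leq\; \frac{1}{1 + \e^{\beta^*(v,w)}} \;\leq\; \e^{-\beta^*(v,w)} \;\leq\; \frac{1}{20n^2 L}.
\]

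Once this estimate is in hand, the second clause of event $\mathcal{F}$ immediately gives $\tau_i(v) = \tau_i(w)$ for every $1 \leq i \leq L$. Inspecting Steps~\ref{algstep:find-P}--\ref{algstep:find-P-2} of Algorithm~\ref{alg-testing}, the partition $P$ is produced by iteratively splitting each current class according to the spins assigned by each $\tau_i$, so after processing all $L$ samples two vertices share a class of $P$ if and only if they are assigned equal spins in every $\tau_i$. Consequently $v$ and $w$ lie in a common class of $P$, which forces $\{v,w\} \in E(P)$ and contradicts the starting assumption, completing the proof. The only substantive ingredient is the Griffiths/FKG comparison, which is completely standard for the ferromagnetic Ising model without external field, so I do not anticipate any real obstacle.
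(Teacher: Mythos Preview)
Your proposal is correct and follows essentially the same approach as the paper's proof: Part~1 is immediate from the definition of $\mathcal{E}$, and Part~2 is by contradiction, using the bound $\mu^*(X_v\neq X_w)\le (1+\e^{\beta^*(v,w)})^{-1}$ together with clause~2 of $\mathcal{F}$ to force $\{v,w\}\in E(P)$. The only difference is that the paper invokes Lemma~11 of \cite{DDK} for this correlation bound, whereas you derive it directly via the edge-deletion identity and Griffiths/FKG; your derivation is correct and makes the argument self-contained.
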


The proof of Lemmas~\ref{lem:F-holds},~\ref{lem:E-is-true},~\ref{lem:quotient-close} and~\ref{lem:quotient-bound} are provided in Section~\ref{subsec:alg-lemmas}.
We are now ready to prove Theorem~\ref{thm:alg-main}.

\begin{proof}[Proof of Theorem~\ref{thm:alg-main}]
	Assume the event $\mathcal{F}$ occurs. If the event $\mathcal{E}$ does not occur, then by Lemma~\ref{lem:E-is-true} we have $(G,\beta) \neq ( G^*, \beta^*)$, and Algorithm~\ref{alg-testing} will accordingly return \textsc{No}. Let us assume that the event $\mathcal{E}$ occurs. 
	Recall that we denote the Gibbs distributions of the quotient models $(G_\textsc{p}, \beta_\textsc{p})$ and $({G}^*_\textsc{p}, {\beta}^*_\textsc{p})$ by $\mu_\textsc{p}$ and ${\mu}^*_\textsc{p}$ respectively. 
	If $\mu={\mu}^*$, then $\mu(\cdot |\mathcal{P}) = {\mu}^*(\cdot |\mathcal{P})$ and therefore $\mu_\textsc{p} = {\mu}^*_\textsc{p}$.
	Otherwise, if $\TV{\mu}{{\mu}^*}>\eps$, then we deduce from Lemma~\ref{lem:quotient-close} that
	\begin{align*}
	\TV{\mu_\textsc{p}}{{\mu}^*_\textsc{p}} &= \TV{\mu(\cdot |\mathcal{P})}{{\mu}^*(\cdot |\mathcal{P})}\\
	&\geq \TV{\mu}{{\mu}^*} - \TV{\mu}{\mu(\cdot |\mathcal{P})} - \TV{{\mu}^*}{{\mu}^*(\cdot |\mathcal{P})}\\
	&> \eps - \frac{3\eps}{8} - \frac{\eps}{8} = \frac{\eps}{2}.
	\end{align*}
	Since in every sample $\tau_i$, vertices from the same partition class of $P$ always receive the same spin, we can regard $\{\tau_i\}_{i=1}^L$ as independent samples from the conditional distribution ${\mu}^*(\cdot |\mathcal{P})$. Therefore, $\{(\tau_i)_\textsc{p}\}_{i=1}^L$ are independent samples from the Gibbs distribution ${\mu}^*_\textsc{p}$ of the quotient model $({G}^*_\textsc{p}, {\beta}^*_\textsc{p})$. Thus, we can run the DDK algorithm on inputs $(G_\textsc{p}, \beta_\textsc{p})$, $\{(\tau_i)_\textsc{p}\}_{i=1}^L$ and $\eps/2$. By Theorem~\ref{thm:DDK} and Lemma~\ref{lem:quotient-bound}, the number of samples needed is $L =  O(n^2d^2\eps^{-2}\log^3 n)$.
	
	Consequently, Algorithm~\ref{alg-testing} fails only if the event $\mathcal{F}$ does not occur or the DDK algorithm makes a mistake. By Theorem~\ref{thm:DDK}, the DDK algorithm has success probability $4/5$. Thus, Lemma~\ref{lem:F-holds} implies that the failure probability of Algorithm~\ref{alg-testing} is at most $1/20+1/5 = 1/4$,  provided $L\geq 800n^2\eps^{-2}$.
	Finally, Theorems~\ref{thm:sample-ferro-Ising} and \ref{thm:DDK} imply that the overall running time of Algorithm~\ref{alg-testing} is $\poly(n,d,\varepsilon^{-1})$ as claimed.
\end{proof}

\subsection{Proofs of auxiliary lemmas}
\label{subsec:alg-lemmas}

In this section we provide the missing proofs of Lemmas~\ref{lem:F-holds},~\ref{lem:E-is-true},~\ref{lem:quotient-close} and~\ref{lem:quotient-bound}. 

\begin{proof}[Proof of Lemma~\ref{lem:F-holds}]
	Let $P$ be a partition of $V$. 
	Since the samples $\{\sigma_1,\dots,\sigma_L\}$ from $(G,\beta)$ are $(\eps/16)$-approximate, we have
	\[
	|\mu_{\textsc{alg}}(\mathcal{P}) - \mu(\mathcal{P})| \leq \TV{\mu_{\textsc{alg}}}{\mu} \leq \frac{\eps}{16}.
	\]
	Then, by the triangle inequality
	$$
	|\muemp(\mathcal{P}) - \mu(\mathcal{P})|  \le 	|\muemp(\mathcal{P}) - \mu_{\textsc{alg}}(\mathcal{P})|  + 	|\mu_{\textsc{alg}}(\mathcal{P})- \mu(\mathcal{P})|  \le |\muemp(\mathcal{P}) - \mu_{\textsc{alg}}(\mathcal{P})|  + \frac{\eps}{16}
	$$	
	A Chernoff bound then implies
	\begin{align*}
	\Pr \left[|\muemp(\mathcal{P}) - \mu(\mathcal{P})| \geq \frac{\eps}{8}\right] &\leq \Pr \left[|\muemp(\mathcal{P}) - \mu_{\textsc{alg}}(\mathcal{P})| \geq \frac{\eps}{16}\right]\\
	&= \Pr \left[ \left| \sum_{i=1}^L \Ind\{\sigma_i \in \mathcal{P}\} - L\mu_{\textsc{alg}}(\mathcal{P}) \right| \geq \frac{\eps L}{16}\right]\\
	&\leq 2 \exp \left( -\frac{\eps^2 L}{768\mu_{\textsc{alg}}(\mathcal{P})} \right)\\
	&\leq 2 \exp \left( -\frac{\eps^2 L}{768} \right)
	\leq 2\e^{-n^2},
	\end{align*}
	where the last inequality holds when $\eps^2 L \geq 800n^2$. 
	The total number of partitions of $V$ is at most $n^n$. It then follows from the union bound that
	\[
	\Pr \left[ \exists ~\text{a partition}~ P ~\text{of}~ V: |\muemp(\mathcal{P}) - \mu(\mathcal{P})| \geq \frac{\eps}{8} \right] \leq n^n \cdot 2\e^{-n^2} = 2\e^{n\ln n-n^2} \leq \frac{1}{80},
	\]
	for large enough $n$.
	
	In similar fashion, we deduce that the same holds for $|\muemph(\mathcal{P}) - {\mu}^*(\mathcal{P})|$. Namely,
	\[
	\Pr \left[ \exists ~\text{a partition}~ P ~\text{of}~ V: |\muemph(\mathcal{P}) - {\mu}^*(\mathcal{P})| \geq \frac{\eps}{8} \right] \leq \frac{1}{80}.
	\]
	
	Finally, for each $v,w\in V$ such that ${\mu}^*(X_v=X_w) \geq 1 - (20n^2L)^{-1}$, we obtain from a union bound over the samples that
	\[
	\Pr\left[\exists i,\, 1\leq i \leq L: \tau_i(v)\neq \tau_i(w)\right] \leq L \cdot {\mu}^*(X_v\neq X_w) \leq \frac{1}{20n^2}.
	\]
	Another union bound, this time over the pairs of vertices, implies 
	\[
	\Pr\left[\exists v,w\in V,\, \hat{\mu}(X_v=X_w) \geq 1 - (20n^2L)^{-1},\, \exists i,\, 1\leq i \leq L: \tau_i(v)\neq \tau_i(w) \right] \leq \frac{n^2}{2} \cdot \frac{1}{20n^2} = \frac{1}{40}.
	\]
	
	Combining all bounds above, we obtain from another union bound that
	\[
	\Pr[\neg \mathcal{F}] \leq \frac{1}{80} + \frac{1}{80} + \frac{1}{40} = \frac{1}{20},
	\]
	as desired.
\end{proof}

\begin{proof}[Proof of Lemma~\ref{lem:E-is-true}]
	Assume $(G,\beta) = (G^*,\beta^*)$. 
	Since $\mathcal{F}$ occurs, we have
	\[
	\muemp(\mathcal{P}) > \mu(\mathcal{P}) - \frac{\eps}{8} = {\mu}^*(\mathcal{P}) - \frac{\eps}{8} > \muemph(\mathcal{P}) - \frac{\eps}{4} = 1 - \frac{\eps}{4}.
	\]
	Suppose next that there exists some $\{v,w\}\in E\backslash E(P)$ such that $\beta(v,w) \geq \ln(20n^2L)$. Then, ${\beta}^*(v,w) = \beta(v,w) \geq \ln(20n^2L)$.
	We deduce from Lemma 11 in \cite{DDK} that
	\[
	{\mu}^*(X_v\neq X_w) \leq \frac{1}{\e^{{\beta}^*(v,w)} + 1} \leq \e^{-{\beta}^*(v,w)} \leq \frac{1}{20n^2L}.
	\]
	The event $\mathcal{F}$ implies that $\tau_i(v) = \tau_i(w)$ for all $1\leq i\leq L$. It follows that $v$ and $w$ must belong to the same partition class of $P$; i.e., $\{u,v\}\in E(P)$, which leads to a contradiction. Hence, the event $\mathcal{E}$ always occurs when $\mathcal{F}$ occurs.
\end{proof}

\begin{proof}[Proof of Lemma~\ref{lem:quotient-close}]
	Since both of the events $\mathcal{F}$ and $\mathcal{E}$ occur, we have
	\[
	\mu(\mathcal{P}) > \muemp(\mathcal{P}) - \frac{\eps}{8} > 1 - \frac{3\eps}{8}.
	\]
	It follows immediately that
	$
	\TV{\mu}{\mu(\cdot |\mathcal{P})} = 1 - \mu(\mathcal{P}) < 3\eps/8.
	$
	Similarly, since the event $\mathcal{F}$ occurs, we have
	\[
	{\mu}^*(\mathcal{P}) > \muemph(\mathcal{P}) - \frac{\eps}{8} = 1 - \frac{\eps}{8},
	\]
	and
	$
	\TV{{\mu}^*}{{\mu}^*(\cdot |\mathcal{P})} = 1 -{\mu^*}(\mathcal{P}) <3\eps/8. 
	$
\end{proof}

\begin{proof}[Proof of Lemma~\ref{lem:quotient-bound}]
	When $\mathcal{E}$ occurs, 
	for all $\{v,w\}\in E\backslash E(P)$, we have $\beta(v,w) < \ln(20n^2L)$. Suppose $\{v,w\}\in {E^*}\backslash E(P)$ and ${\beta}^*(v,w) \geq \ln(20n^2L)$. 
	By Lemma 11 in \cite{DDK} we have 
	\[
	{\mu}^*(X_v\neq X_w) \leq \frac{1}{\e^{{\beta}^*(v,w)} + 1} \leq \e^{-{\beta}^*(v,w)} \leq \frac{1}{20n^2 L}.
	\]
	When $\mathcal{F}$ occurs, $\tau_i(v)=\tau_i(w)$ for all $1\leq i\leq L$ and thus $v$ and $w$ belong to the same partition class of $P$. It follows that $\{v,w\}\in E(P)$ which is a contradiction. 
\end{proof}

\section{Lower bounds for proper $q$-colorings}
\label{sec:BIS-hardness}

Let $d$ and $q$ be positive integers
and let $G=(V,E) \in \mathcal{M}(n,d)$, where, as in the previous sections, $ \mathcal{M}(n,d)$ denotes the family of all $n$-vertex graphs of maximum degree at most $d$.
We use $\Omega_G$ for the set of all proper $q$-colorings of $G$
and $\mu_G$ for the uniform distribution on $\Omega_G$.
We recall that a coloring of the vertices of $G$ using colors $\{1,\dots,q\}$ is proper
if the endpoints of every edge in $G$ are assigned different colors. 
The proper $q$-colorings model is one of the easiest combinatorial examples of a hard-constraint model.

The identity testing problem for proper $q$-colorings in $\mathcal{M}(n,d)$ is described as follows: given $q$, a graph $G\in \mathcal{M}(n,d)$ and sample access to random $q$-colorings of an unknown graph $G^*\in \mathcal{M}(n,d)$, distinguish with probability at least $3/4$ whether $\mu_G=\mu_{G^*}$ or $\norm{\mu_G-\mu_{G^*}}>1/3$. 
We establish lower bounds for this problem, thus initiating the study of identity testing in the context of hard-constraint spin systems.

Our lower bounds will crucially use the presumed hardness of the {\BIS} problem.
This is the problem of counting independent sets in bipartite graphs.  
{\BIS} is believed not to have an FPRAS, and
it
is widely used
in the study of the complexity of
approximate counting
problems; see, e.g.,~\cite{DGGJ,GJ,DGJ,BDGJM,CDGJLMR,CGGGJSV,GGJ}.
Specifically, we utilize the hardness of the problem of counting proper $3$-colorings in bipartite graphs,
which we denote by {\BIPthreeCOLOR} and is known to be no easier than {\BIS}.
\begin{thm}[\cite{DGGJ}]\label{thm:BIS}
	If {\BIPthreeCOLOR} admits an FPRAS, then {\BIS} admits an FPRAS.
\end{thm}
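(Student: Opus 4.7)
The plan is to give an approximation-preserving reduction from $\BIS$ to $\BIPthreeCOLOR$ by constructing, for each bipartite instance $H=(A\cup B,E)$ of $\BIS$, a bipartite graph $H'$ whose number of proper $3$-colorings equals exactly $6$ times the number of independent sets of $H$. The key idea is to add two auxiliary vertices that ``pin down'' the meaning of the three colors in $H'$.

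The construction is as follows: start from $H$, adjoin two new vertices $r_A$ and $r_B$, join $r_A$ to every vertex of $B$, join $r_B$ to every vertex of $A$, and add the single edge $\{r_A,r_B\}$. The resulting graph $H'$ is bipartite with parts $A\cup\{r_A\}$ and $B\cup\{r_B\}$, has $|V(H)|+2$ vertices, and is produced in polynomial time. The main step is to prove the exact identity
\[
Z_3(H') \;=\; 6\cdot I(H),
\]
where $Z_3(H')$ denotes the number of proper $3$-colorings of $H'$ and $I(H)$ is the number of independent sets of $H$. This is a direct bijection: in any proper $3$-coloring $\chi$ of $H'$, the pair $(\chi(r_A),\chi(r_B))$ takes one of $6$ ordered values of distinct colors. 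Once such a value $(c_A,c_B)$ is fixed, writing $c^*$ for the remaining color, adjacency to $r_A$ forces every vertex of $B$ to be colored in $\{c_B,c^*\}$, while adjacency to $r_B$ forces every vertex of $A$ to be colored in $\{c_A,c^*\}$. For an edge $\{a,b\}\in E$ with $a\in A$ and $b\in B$, the color pairs $(c_A,c_B)$, $(c_A,c^*)$ and $(c^*,c_B)$ are all admissible, and the only forbidden combination is $(\chi(a),\chi(b))=(c^*,c^*)$. Hence $S:=\{v\in A\cup B:\chi(v)=c^*\}$ is precisely an independent set of $H$, and conversely each independent set $S$ of $H$ together with a choice of $(c_A,c_B)$ determines a unique proper $3$-coloring of $H'$ by assigning $c^*$ to $S$, $c_A$ to $A\setminus S$, and $c_B$ to $B\setminus S$. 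Summing over the $6$ ordered pairs gives the identity.

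Given the exact equality $I(H)=Z_3(H')/6$, any FPRAS for $\BIPthreeCOLOR$ applied to $H'$ yields, with identical error parameters $\eps$ and $\delta$, an FPRAS for $I(H)$, and hence for $\BIS$. The conceptually interesting aspect of the argument is the gadget idea of pinning down color roles using the two high-degree anchors $r_A$ and $r_B$; the steps that deserve care are the verification that $H'$ is bipartite and that the bijection $\chi\leftrightarrow(c_A,c_B,S)$ has no double-counting, but both are immediate from the construction. I do not anticipate a serious technical obstacle beyond writing out the bijection cleanly.
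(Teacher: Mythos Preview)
Your proof is correct. The paper does not prove this theorem at all; it simply cites it as a known result from~\cite{DGGJ}. Your reduction is the standard one: the two anchor vertices force each side of the bipartition to use only two of the three colors, and the shared ``free'' color then encodes an independent set. The exact identity $Z_3(H')=6\cdot I(H)$ you establish is stronger than what is strictly needed for an approximation-preserving reduction, so the FPRAS transfer is immediate.
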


We show that when $d \ge q+\sqrt{q}+\Theta(1)$, 
any identity testing algorithm for proper $q$-colorings for $\mathcal{M}(n,d)$
with running time $T(n)$ and sample complexity $L(n)$
provides a randomized algorithm for {\BIPthreeCOLOR}~on graphs of $\poly(n)$ size with running time 
$\poly(T(n),L(n))$.
This will allow us to establish Theorem~\ref{thm:colorings-hardness} from the introduction, since if $T(n)$ and $L(n)$ were polynomials in $n$, then one would obtain an FPRAS for {\BIPthreeCOLOR} and for {\BIS} by Theorem~\ref{thm:BIS}.
For $q\in\N^+$, let
\begin{equation}
\label{eq:dc}
d_c(q) = q + \ceil{\sqrt{q-\frac{3}{4}}-\frac{1}{2}}.
\end{equation}

\begin{thm}
	\label{thm:main-q-coloring}
	Let $d$ and $q$ be positive integers such that  $q\geq 3$ and $d\geq d_c(q)$. 
	Suppose that, for all sufficiently large $n$, there is an identity testing algorithm for proper $q$-colorings in $\mathcal{M}(n,d)$ with running time $T(n)$ and sample complexity $L(n)$. Then,
	for every integer $N$ sufficiently large, $\delta \in(0,1)$ and $\varepsilon \in(0,1)$,
	there exists an integer $n = \Theta(\varepsilon^{-2} N^{4})$
	such that if $L(n) \leq 2^{N-4} $, 
	then
	there is an algorithm 
	that with probability at least $1-\delta$ computes an $\eps$-approximation 
	for {\BIPthreeCOLOR} 
	on bipartite graphs with $N$ vertices.
	The running time of this algorithm is 
	$$O\left( [nL(n)+T(n)]N\ln(N/\delta) + \eps^{-8} \right).$$
\end{thm}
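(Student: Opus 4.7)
The plan is to reduce approximate counting of \BIPthreeCOLOR\ to identity testing by binary search on $\ln Z_3(H)$. Given a bipartite graph $H$ on $N$ vertices, $Z_3(H) \in \{0, 1, \ldots, 3^N\}$, so $\ln Z_3(H) \in [0, N\ln 3]$. To obtain an $\eps$-multiplicative approximation, I would perform a binary search with $O(N\ln(N/\delta))$ oracle queries of the form ``Is $Z_3(H) \geq \hat Z$ or $Z_3(H) \leq \hat Z/(1+\eps')$?'' for a suitable $\eps' = \Theta(\eps)$. Boosting each tester call's success probability to $1 - O(\delta/N)$ by majority vote over $O(\log(N/\delta))$ independent repetitions, and taking a union bound over queries, gives overall success probability $1 - \delta$.

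Each decision query is implemented by constructing an identity testing instance. Given $H$ and $\hat Z$, I would build a pair of graphs $G, G^* \in \mathcal{M}(n, d)$ with $n = \Theta(\eps^{-2} N^4)$, where sampling from $\mu_{G^*}$ is easy (for instance because $G^*$ decomposes into independent local pieces). Feeding $G$ and samples from $\mu_{G^*}$ to the tester would yield \textsc{Yes}, i.e.\ $\mu_G = \mu_{G^*}$, when $Z_3(H) \geq \hat Z$, and \textsc{No}, i.e.\ $\|\mu_G - \mu_{G^*}\|_{\textsc{tv}} > 1/3$, when $Z_3(H) \leq \hat Z/(1+\eps')$. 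The graph $G$ is built in two stages. First, I would construct an intermediate, possibly high-degree, graph whose proper $q$-colorings are in bijection with pairs $(c, s)$, where $c$ is a proper $3$-coloring of $H$ and $s$ is an auxiliary ``slack'' structure whose cardinality can be tuned to any desired value near $\hat Z$. Second, I would apply a degree-reduction step substituting each vertex by a bounded-degree gadget which, up to negligible mass, forces one of three canonical colorings at its ports; such gadgets exist precisely when $d \geq d_c(q)$, inspired by the constructions of~\cite{EHK,MR,BCSVsl} used to establish NP-hardness of $q$-colorability at the same threshold.

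The size $n = \Theta(\eps^{-2}N^4)$ breaks down as follows: the $N^4$ factor comes from substituting each of the $N$ vertices of $H$ by a gadget of size $\Theta(N^3)$ large enough to absorb all incident edges after degree reduction, and the $\eps^{-2}$ factor from taking enough parallel copies of the construction to amplify a multiplicative support-size gap of $1 + \Theta(\eps)$ between $\mu_G$ and $\mu_{G^*}$ into a total variation distance of at least $1/3$ via a Chernoff-style concentration argument. The additive $\eps^{-8}$ term in the runtime plausibly accounts for a final refinement stage using median-of-means boosting to sharpen the estimate produced by the binary search.

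The main obstacle, and the reason the paper treats $q \geq 4$ and $q = 3$ separately, is the gadget construction itself. For $q \geq 4$ one has a spare color that serves as slack, permitting a clean ``selector'' gadget with three distinguished port-colorings and a readily computable total coloring count. For $q = 3$ no slack is available: the gadget itself must be $3$-colorable and bounded degree, and must be assembled directly from bipartite pieces, all while simultaneously controlling the negligible mass of non-canonical colorings so that the desired support-size gap translates into a total variation distance of at least $1/3$. The $q = 3$ case is where the bulk of the delicate analysis lies.
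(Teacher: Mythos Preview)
Your high-level plan (binary search on $Z_3(H)$, each query an identity-testing call boosted by majority vote) matches the paper, but the mechanism you sketch for the testing instance does not give a threshold test. A bijection between colorings of $G$ and pairs $(c,s)$ with $c$ a $3$-coloring of $H$ is a product structure: tuning the slack cardinality rescales the total count but creates no qualitative change that an identity tester can detect, and it is unclear what $G^*$ would be or why $\mu_G=\mu_{G^*}$ should hold on one side of the threshold. The paper instead builds a graph $\hat H_{k,\ell}$ whose colorings split into \emph{two competing classes}: one of size $\Theta(Z_3(H)^k)$ (a $(q{-}3)$-partite block $J$ uses exactly $q{-}3$ colors, leaving $3$ for the $k$ copies of $H$) and one of size $\Theta(2^{\ell+k})$ ($J$ uses $q{-}2$ colors, leaving only $2$). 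The reference $G^*$ replaces $H$ by the complete bipartite $B$ on the same bipartition, so $\Omega_{G^*}\subset\Omega_G$, the second class is identical in both, and the first class in $G^*$ has size only $\Theta(Z_3(B)^k)$. Hence $\|\mu_G-\mu_{G^*}\|_{\textsc{tv}}=1-|\Omega_{G^*}|/|\Omega_G|$ is small exactly when the second class dominates both, i.e.\ $Z_3(H)<\hat Z$, and large when $Z_3(H)>\hat Z$ (reversing your \textsc{Yes}/\textsc{No} assignment).

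Three smaller corrections. The degree-reducing gadget $G(m,q,t)$ forces all its ports to a single color \emph{deterministically} in every proper coloring; there is no negligible-mass analysis here, and the same gadget works for all $q\geq 3$. The $q=3$ versus $q\geq 4$ split is therefore not about the gadget but about the testing instance: when $q=3$ the block $J$ is vacuous, so the two-class competition is engineered differently, via an auxiliary graph $\tilde H$ with two interface vertices $s,t$ and a triangle glued to each vertex of $H$, so that whether $s,t$ share a color governs whether $H$ is $2$- or $3$-colored. Finally, the $\eps^{-8}$ term is not a median-of-means refinement: it is brute-force enumeration $O(3^N)$, invoked only in the trivial regime $\eps<2^{-N/4}$, where $3^N=O(\eps^{-8})$.
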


Theorem~\ref{thm:colorings-hardness} is a direct corollary of this result.

\begin{proof}[Proof of Theorem~\ref{thm:colorings-hardness}]
	Suppose there is an identity testing algorithm for $q$-colorings in $\mathcal{M}(n,d)$ with $\poly(n)$ running time and sample complexity; i.e., $L(n)\leq T(n)=\poly(n)$. Then, by Theorem~\ref{thm:main-q-coloring}, for any $\eps,\delta\in(0,1)$ there is an algorithm for {\BIPthreeCOLOR} on an $N$-vertex bipartite graph that outputs an $\eps$-approximation solution with probability at least $1-\delta$ in time $\poly(N,\eps^{-1},\ln(\delta^{-1}))$. That is, there is an FPRAS for {\BIPthreeCOLOR}
	and thus also one for {\BIS} by Theorem~\ref{thm:BIS}. This leads to a contradiction and the result follows.
\end{proof}


The proof of Theorem~\ref{thm:main-q-coloring} 
is divide into two cases: $q \ge 4$ and $q = 3$.
Conceptually, these two cases are proved in the same manner but in the $q=3$ case the construction of the testing instance requires some additional ideas. 
The proof for $q\geq 4$ is provided in Section~\ref{subsec:proof:col:main:q>=4}. 
The $q=3$ case is considered in in Section~\ref{subsec:proof:col:main:q=3}.
Before that, we provide a proof sketch containing the high level ideas of our proof in Section~\ref{sec:main-result-col},
we introduce our gadget $G(m,q,t)$ in Section~\ref{subsec:col:gadget}, and we describe the construction of the coloring instance in Section~\ref{subsec:testing:instance}. 

\subsection{Lower bounds for proper colorings: proof overview}
\label{sec:main-result-col}


As mentioned, we crucially use in our proof
the hardness of {\BIPthreeCOLOR}, the problem of counting proper $3$-colorings in bipartite graphs. 
We show that when $d \ge d_c(q)$, 
an identity testing algorithm for proper $q$-colorings in $\mathcal{M}(n,d)$,
with running time $T(n)$ and sample complexity $L(n)$,
can be turned into a randomized algorithm for {\BIPthreeCOLOR}~on graphs of $\poly(n)$ size with running time 
$\poly(T(n),L(n))$; see Theorem~\ref{thm:BIS}.
Theorem~\ref{thm:main-q-coloring} follows from the fact that if $T(n)$ and $L(n)$ were both polynomials in $n$, then we would obtain an algorithm
that computes an $\eps$-approximation 
for {\BIPthreeCOLOR} in polynomial time.

To derive an algorithm for {\BIPthreeCOLOR} we proceed as follows.
Let $H$ be an $N$-vertex connected bipartite graph, and suppose we want to compute an $\varepsilon$-approximation for 
the number of $3$-colorings $Z_3(H)$ of $H$.
Let $B$ be the \emph{complete} $N$-vertex bipartite graph with the same bipartition as $H$, and let $Z_3(B)$ denote the number of $3$-colorings of $B$. Then, $Z_3(H) \in [Z_3(B),3^N]$.
We converge to an $\varepsilon$-approximation of $Z_3(H)$ via binary search in the interval $[Z_3(B),3^N]$. Specifically, for $\hat{Z} \in [Z_3(B),3^N]$ we construct a suitable identity testing instance and run the identity testing algorithm to determine whether we should consider larger or smaller values than $\hat{Z}$.

The testing instance is constructed as follows. For integers $k,\ell \ge 1$, we define the graph $\hat{H}_{k,\ell}$ that consists of $k$ copies 
$H_1,\dots,H_k$ 
of the original graph $H$
and a complete $(q-3)$-partite graph $J$ in which each cluster has $\ell$ vertices.
In addition to the edges in $J$ and in the $k$ copies of $H$, $\hat{H}_{k,\ell}$ also contains edges between every vertex in $J$ and every vertex in $H_i$ for $i = 1,\dots,k$. 
(Our definition of $\hat{H}_{k,\ell}$ requires $q\geq 4$; the case when $q\!=\!3$ requires a slightly more complicated construction which is provided in Section~\ref{subsec:proof:col:main:q=3}.)
For any $\hat{Z} \in [Z_3(B),3^N]$, we choose 
$k$ and $\ell$ in a way so that the output of the identity testing algorithm on $\hat{H}_{k,\ell}$
can be interpreted as feedback on whether  or not $\hat{Z} >Z_3(H)$.

We set $k = \ceil{N/\eps}$ where $\varepsilon$ is the accuracy parameter.
The choice of $\ell$ is more subtle.
There are only two types of colorings for $\hat{H}_{k,\ell}$: (i) those where $J$ uses $q-3$ colors and (ii) those where $J$ uses $q-2$ colors.
It can be easily checked that there are $|\Omega_1| =\Theta(Z_3(H)^k)$ colorings of the first type
and $|\Omega_2| = \Theta(2^{\ell+k})$ of the second type. Hence, the choice of $\ell$ will determine which of these two types of colorings dominates in the uniform distribution 
$\mu_{k,\ell}$
over the proper colorings of~$\hat{H}_{k,\ell}$.

%

To compare $\hat Z$ and $Z_3(H)$,
we could set $\ell$ so that $\hat{Z}^k \!=\! |\Omega_2| \!=\! \Theta(2^{\ell+k})$ and draw a sample from $\mu_{k,\ell}$. If we get a coloring of the first kind, we may presume that 
$|\Omega_1| \gg |\Omega_2|$, or equivalently that 
$Z_3(H) \!>\! \hat{Z}$. Conversely, if the coloring is of the second kind, then it is likely that $|\Omega_1| \ll |\Omega_2|$ and $Z_3(H) \!<\! \hat{Z}$. 
Sampling from $\mu_{k,\ell}$ is hard,
but we can emulate this approach with a testing algorithm.

Specifically, we construct a simpler graph $\hat{B}_{k,\ell}$ such that:  (i) we can easily generate samples from
$\hat{\mu}_{k,\ell}$, the uniform distribution over the proper
$q$-colorings $\hat{B}_{k,\ell}$; and (ii) $\mu_{k,\ell}$ and $\hat{\mu}_{k,\ell}$ are close in total variation distance 
if and only if the dominant colorings in the Gibbs distributions are those of the second type. Then, we pass $q$, $\hat{H}_{k,\ell}$ and samples from $\hat{\mu}_{k,\ell}$ as input to the tester.
Its output then reveals the dominant color class and hence whether  $\hat Z$ is larger or smaller than $Z_3(H)$.

Our final obstacle is that 
the maximum degree of the graph $\hat{H}_{k,\ell}$ depends on $N$, $k$ and $\ell$,
and could be much larger than $d$. 
To reduce the degree of $\hat{H}_{k,\ell}$ so that it belongs to $\mathcal M(n,d)$, we design a degree reducing gadget, which is inspired by the gadgets used to establish the hardness of the decision and 
structure learning problems \cite{EHK,MR,BCSVsl}. 

\subsection{The colorings gadget}
\label{subsec:col:gadget}

In this section, we present our construction of the coloring gadget, which is 
inspired by similar constructions in \cite{EHK,MR,BCSVsl} for establishing the computational hardness of the decision and (equivalent) structure learning problems for proper $q$-colorings. 

For $m,q,t\in\N^+$ with $t<q$, the graph $G(m,q,t) = (V(m,q,t),E(m,q,t))$ is defined as follows. Let $C_1,\dots,C_m$ be cliques of size $q-1$ and let $I_1,\dots,I_m$ be independent sets of size $t$. Then, set
\[
V(m,q,t) = \bigcup_{i=1}^m \big(V(C_i) \cup V(I_i)\big)
\]
where $V(C_i)$ and $V(I_i)$ are the vertex sets of $C_i$ and $I_i$ respectively for $1\leq i\leq m$. Vertices in the independent sets $I_1,\dots,I_m$ are called \textit{ports}. The cliques $C_i$'s and the independent sets $I_i$'s are connected in the following way:
\begin{enumerate}
	\item For $1\leq i\leq m$, there is a complete bipartite graph between $C_i$ and $I_i$. That is, for $u\in C_i$ and $v\in I_i$, $\{u,v\}\in E(m,q,t)$.
	\item For $2\leq i\leq m$, each $C_i$ is partitioned into $t$ almost-equally-sized disjoint subsets $C_{i,1},\dots,C_{i,t}$ of size either $\floor{(q-1)/t}$ or $\ceil{(q-1)/t}$. Then, the $j$-th vertex of $I_{i-1}$ is connected to every vertex in $C_{i,j}$.
\end{enumerate}
Together with the edges in the cliques $C_i$ for $1\leq i\leq m$, these edges constitute the edge set $E(m,q,t)$. See Figure~\ref{fig:G(m,q,t)} for an illustration of the graph $G(m,q,t)$ and Figure~\ref{fig:G(3,3,2)} for $G(3,3,2)$ as an example.

\begin{figure}[t]
	\centering
	\begin{tikzpicture}[scale=0.96]
	\node(C1) [shape=circle, draw=black, minimum size=1.5cm] at (0,0) {$C_1$};
	\node(I1) [shape=circle, draw=black, minimum size=0.9cm] at (2,0) {$I_1$};
	\node(C2) [shape=circle, draw=black, minimum size=1.5cm] at (4,0) {$C_2$};
	\node(I2) [shape=circle, draw=black, minimum size=0.9cm] at (6,0) {$I_2$};
	\node(Im1)[shape=circle, draw=black, minimum size=0.9cm] at (9,0) {$I_{m-1}$};
	\node(Cm) [shape=circle, draw=black, minimum size=1.5cm] at (11,0) {$C_m$};
	\node(Im) [shape=circle, draw=black, minimum size=0.9cm] at (13,0) {$I_m$};
	
	\path
	(C1) edge (I1)
	(I1) edge [dashed] (C2)
	(C2) edge (I2)
	(Im1) edge [dashed] (Cm)
	(Cm) edge (Im)
	
	(I2) edge[white] node[black]{$\cdots\cdots$} (Im1);
	\end{tikzpicture}
	\caption{The graph $G(m,q,t)$. Each of $C_1,\ldots,C_m$ is a clique of size $q-1$ and each of $I_1,\ldots,I_m$ is an independent set of size $t<q$. Solid lines between $C_i$ and $I_i$ mean that every vertex in $C_i$ is adjacent to every vertex in $I_i$. Dashed lines between $I_{i-1}$ and $C_i$ mean that every vertex in $I_{i-1}$ is adjacent to roughly $(q-1)/t$ vertices in $C_i$ with no two vertices in $I_{i-1}$ sharing a common neighbor in $C_i$.}
	\label{fig:G(m,q,t)}
	
	\vspace{1em}
	\begin{tikzpicture}[scale=0.9]
	\filldraw
	(0,0) circle (2pt) node {}
	(2,0) circle (2pt) node {}
	(4,0) circle (2pt) node {}
	(6,0) circle (2pt) node {}
	(8,0) circle (2pt) node {}
	(10,0) circle (2pt) node {}
	(0,2) circle (2pt) node {}
	(2,2) circle (2pt) node {}
	(4,2) circle (2pt) node {}
	(6,2) circle (2pt) node {}
	(8,2) circle (2pt) node {}
	(10,2) circle (2pt) node {}
	;
	
	\path
	(0,0) edge (0,2)
	(4,0) edge (4,2)
	(8,0) edge (8,2)
	
	(0,0) edge (2,0)
	(2,0) edge (4,0)
	(4,0) edge (6,0)
	(6,0) edge (8,0)
	(8,0) edge (10,0)
	
	(0,2) edge (2,2)
	(2,2) edge (4,2)
	(4,2) edge (6,2)
	(6,2) edge (8,2)
	(8,2) edge (10,2)
	
	(0,0) edge (2,2)
	(0,2) edge (2,0)
	(4,0) edge (6,2)
	(4,2) edge (6,0)
	(8,0) edge (10,2)
	(8,2) edge (10,0)
	;
	
	\draw[dashed]
	(0,1) ellipse (15pt and 45pt)
	(2,1) ellipse (15pt and 45pt)
	(4,1) ellipse (15pt and 45pt)
	(6,1) ellipse (15pt and 45pt)
	(8,1) ellipse (15pt and 45pt)
	(10,1) ellipse (15pt and 45pt)
	;
	
	\draw
	(0,-1) node {$C_1$}
	(2,-1) node {$I_1$}
	(4,-1) node {$C_2$}
	(6,-1) node {$I_2$}
	(8,-1) node {$C_3$}
	(10,-1) node {$I_3$}
	;
	\end{tikzpicture}
	\caption{The graph $G(3,3,2)$ for $m=3$, $q=3$ and $t=\big\lceil\sqrt{q-3/4}+1/2\big\rceil=2$. All ports (i.e., vertices in $I_1\cup I_2 \cup I_3$) have degree at most 3. All non-ports (i.e., vertices in $C_1\cup C_2 \cup C_3$) have degree at most 4. Recall that $d_c(3)=3+\big\lceil\sqrt{3-3/4}-1/2\big\rceil=4$.}
	\label{fig:G(3,3,2)}
\end{figure}

The following key fact of the gadget $G(m,q,t)$ follows from its definition.
\begin{lemma}\label{lem:gadget-phase-coloring}
	Let $m,q,t\in\N^+$ with $t<q$. Then in every proper $q$-coloring of $G(m,q,t)$, all ports (i.e., vertices in the independent sets $I_1,\dots,I_m$) have the same color and all non-ports (i.e., vertices in the cliques $C_1,\dots,C_m$) are assigned the remaining $q-1$ colors.
\end{lemma}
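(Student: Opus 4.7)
The plan is to chain together two observations, one about each clique and one about the couplings between consecutive ``layers'' of the gadget. First I would fix an arbitrary proper $q$-coloring $\sigma$ of $G(m,q,t)$ and, for each $i \in \{1,\dots,m\}$, let $c_i \in \{1,\dots,q\}$ denote the unique color that does \emph{not} appear on $C_i$ under $\sigma$. This $c_i$ is well-defined because $C_i$ is a clique of size $q-1$, so in any proper $q$-coloring its $q-1$ vertices must receive $q-1$ distinct colors, leaving exactly one color unused.

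Next I would exploit the complete bipartite join between $C_i$ and $I_i$: every vertex of $I_i$ is adjacent to every vertex of $C_i$, so the only color available to a vertex of $I_i$ is $c_i$. Hence each $I_i$ is monochromatic under $\sigma$, and its common color is exactly $c_i$. This already establishes that ports within the same layer $I_i$ agree in color; what remains is to show that the color is the \emph{same} across layers, i.e.\ $c_1 = c_2 = \cdots = c_m$.

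To propagate the color across layers, I would use the dashed connections from $I_{i-1}$ to $C_i$ for $i \geq 2$. By construction, the sets $C_{i,1},\dots,C_{i,t}$ partition $C_i$, and the $j$-th vertex of $I_{i-1}$ is adjacent to every vertex of $C_{i,j}$. Therefore every vertex of $C_i$ has at least one neighbor in $I_{i-1}$, and by the previous paragraph that neighbor carries color $c_{i-1}$. Thus no vertex of $C_i$ can be colored $c_{i-1}$, which forces $c_{i-1}$ to be the missing color of $C_i$, i.e.\ $c_{i-1} = c_i$. A one-line induction on $i$ then gives $c_1 = \cdots = c_m =: c$, so every port is colored $c$ and every non-port is colored with one of the remaining $q-1$ colors, which is exactly what the lemma asserts.

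The argument is essentially a structural unpacking of the definition rather than a technical calculation, so no step poses a real obstacle. The only point worth being careful about is that the partition into $C_{i,1},\dots,C_{i,t}$ really covers all of $C_i$, so that \emph{every} vertex of $C_i$ sees a $c_{i-1}$-colored neighbor in $I_{i-1}$; this is immediate from the definition but is the place where the hypothesis $t < q$ (which guarantees that each $C_{i,j}$ is nonempty and that $I_{i-1}$ has enough vertices to cover $C_i$ without forcing a monochromatic independent set to meet itself) is silently used.
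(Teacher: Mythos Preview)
Your proof is correct and follows essentially the same approach as the paper's: both arguments observe that each clique $C_i$ uses $q-1$ colors, forcing $I_i$ to be monochromatic in the remaining color, and then use the fact that every vertex of $C_i$ has a neighbor in $I_{i-1}$ to propagate this color across layers. Your aside about where $t<q$ is ``silently used'' is not actually needed for the logical argument (the partition of $C_i$ into $t$ parts covers $C_i$ by definition, regardless of whether some parts are empty), but this does not affect the validity of the main proof.
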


\begin{proof}
	Consider a proper $q$-coloring $\sigma$ of $G(m,q,t)$. Since each $C_i$ is a clique of size $q-1$ for $1\leq i\leq m$, it receives $q-1$ colors in $\sigma$. For $1\leq i\leq m$, for each $v\in I_i$, $v$ is adjacent to all vertices in $C_i$; hence, $v$ receives the only color that is not used by $C_i$ in $\sigma$. That means, for each $i$ all vertices in $I_i$ have the same color which does not appear in $C_i$. Next, for $2\leq i\leq m$, each vertex in $C_i$ is adjacent to some vertex in $I_{i-1}$. Since all vertices in $I_{i-1}$ have the same color, we deduce that $I_{i-1}$ receives the color which does not appear in $C_i$. It follows immediately that all independent sets $I_1,\dots,I_n$ have the same color and the cliques $C_1,\dots,C_m$ use the remaining $q-1$ colors.
\end{proof}

The following lemma shows that when $d\geq d_c(q)$ the maximum degree of the gadget $G(m,q,t)$ is at most $d$ for a certain choice of $t$; moreover, every port has degree at most $d-1$. See Figure~\ref{fig:G(3,3,2)} for an example.
\begin{lemma}\label{lem:gadget-degree-coloring}
	Suppose $q\geq 3$ and $d\geq d_c(q)$. If 
	$
	t = \lceil\sqrt{q-3/4}+1/2\rceil,
	$
	then every port of the graph $G(m,q,t)$ has degree at most $d-1$ and every non-port of $G(m,q,t)$ has degree at most $d$.
\end{lemma}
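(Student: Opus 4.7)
The plan is a direct degree computation, splitting vertices into ports and non-ports and then checking the two required inequalities using the explicit value of $t$.

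First I would compute the degrees. A non-port $u$ in a clique $C_i$ has $q-2$ neighbors inside $C_i$, $t$ neighbors in $I_i$ (the complete bipartite connection), and, if $i\geq 2$, exactly one additional neighbor in $I_{i-1}$ (the unique vertex of $I_{i-1}$ whose subset $C_{i,j}$ contains $u$). So every non-port has degree at most $q+t-1$, with the maximum achieved by non-ports in $C_i$ for $i\ge 2$. A port $v$ in an independent set $I_i$ has $q-1$ neighbors in $C_i$ and, if $i\le m-1$, at most $\lceil (q-1)/t\rceil$ neighbors in $C_{i+1}$ (the size of its block $C_{i+1,j}$). Hence every port has degree at most $(q-1)+\lceil (q-1)/t\rceil$.

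The target inequalities are therefore
\begin{equation*}
q+t-1 \le d \qquad\text{(non-ports)}\qquad \text{and}\qquad (q-1)+\lceil (q-1)/t\rceil\le d-1 \quad\text{(ports)}.
\end{equation*}
The first follows immediately from $d\geq d_c(q)$ once I observe the arithmetic identity
\begin{equation*}
\lceil \sqrt{q-3/4}+1/2\rceil = \lceil \sqrt{q-3/4}-1/2\rceil+1,
\end{equation*}
which holds because for any real $s$, $\lceil s+1/2\rceil-\lceil s-1/2\rceil=1$ (a simple case analysis on the fractional part of $s$). Plugging in $s=\sqrt{q-3/4}$, this identity gives $t = \lceil \sqrt{q-3/4}-1/2\rceil + 1$, so $d_c(q)=q+t-1$, matching the first inequality.

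For the second inequality, since $d\ge q+t-1$ it suffices to show $\lceil (q-1)/t\rceil \le t-1$, equivalently $q-1\le t(t-1)$. By definition of $t$, we have $t-1/2 \ge \sqrt{q-3/4}$, and squaring (both sides nonnegative for $q\ge 3$, so $t\ge 2$) yields $t^2-t+1/4 \ge q-3/4$, i.e., $t^2-t\ge q-1$. This gives $\lceil (q-1)/t\rceil\le t-1$ and completes the bound on the port degrees. The only subtle step is the identity relating the two ceilings, but it is routine; everything else is bookkeeping on the gadget's local structure.
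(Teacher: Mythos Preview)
Your proof is correct and follows essentially the same approach as the paper: compute the maximum degree of ports and non-ports from the gadget's local structure, then verify the two inequalities using the explicit value of $t$. The only cosmetic difference is that the paper bounds $\lceil (q-1)/t\rceil$ via the rationalization $(q-1)/(\sqrt{q-3/4}+1/2)=\sqrt{q-3/4}-1/2$, while you obtain the equivalent bound $\lceil (q-1)/t\rceil\le t-1$ by squaring $t-1/2\ge\sqrt{q-3/4}$; both yield the same conclusion.
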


\begin{proof}
	The degree of a port in $G(m,q,t)$ is bounded by
	\[
	(q-1) + \ceil{\frac{q-1}{t}} \leq (q-1) + \ceil{\frac{q-1}{\sqrt{q-\frac{3}{4}}+\frac{1}{2}}} = q + \ceil{\sqrt{q-\frac{3}{4}}-\frac{1}{2}} - 1 \leq d-1.
	\]
	Meanwhile, the degree of a non-port in $G(m,q,t)$ is at most
	\[
	(q-2) + t + 1 = q + \ceil{\sqrt{q-\frac{3}{4}}+\frac{1}{2}} - 1 = q + \ceil{\sqrt{q-\frac{3}{4}}-\frac{1}{2}} \leq d.\tag*{\qedhere}
	\]
\end{proof}

We define the \textit{phase} of a proper $q$-coloring of $G(m,q,t)$ to be the color of its ports.
In the following lemma we bound the number of $q$-colorings with a given phase, which is used later in the proof of Theorem~\ref{thm:main-q-coloring}.
\begin{lemma}\label{lem:gadget-No-coloring}
	Let $m,q,t\in\N^+$ with $t<q$. Then, the number of proper $q$-colorings of $G(m,q,t)$ with a given phase is $[(q-1)!]^m$.
\end{lemma}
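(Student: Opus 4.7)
The plan is to use Lemma~\ref{lem:gadget-phase-coloring} to reduce the problem to counting colorings of the cliques $C_1,\dots,C_m$ once the port color (phase) is fixed, and then to show that these cliques can be colored independently.

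First I would fix a phase, i.e., a color $c\in\{1,\dots,q\}$ assigned to every port. By Lemma~\ref{lem:gadget-phase-coloring}, in any proper $q$-coloring of $G(m,q,t)$ consistent with this phase, every vertex in each clique $C_i$ receives a color from the set $\{1,\dots,q\}\setminus\{c\}$, which has exactly $q-1$ elements. Since $|C_i|=q-1$ and $C_i$ is a clique, this forces the restriction of the coloring to $C_i$ to be a bijection from $V(C_i)$ onto $\{1,\dots,q\}\setminus\{c\}$. Hence, the set of possible colorings of $C_i$ consistent with the phase is in one-to-one correspondence with the $(q-1)!$ permutations of the $q-1$ available colors.

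Next, I would verify that the colorings of the $m$ cliques can be chosen independently, i.e., that the only constraints imposed on $C_i$ by the rest of the graph are (i) the clique constraints inside $C_i$ and (ii) the avoidance of the color $c$. The edges of $G(m,q,t)$ incident to $V(C_i)$ are: internal edges of $C_i$; edges to $I_i$, all of whose vertices have color $c$; and, for $i\geq 2$, edges from $V(C_i)$ to $I_{i-1}$, whose vertices also have color $c$. The latter two sets of constraints both reduce to requiring that no vertex of $C_i$ be colored $c$, which is already enforced by the restriction to $\{1,\dots,q\}\setminus\{c\}$. Thus no cross-clique interaction exists once the phase is fixed, so each $C_i$ contributes a factor of $(q-1)!$ independently, yielding $[(q-1)!]^m$ colorings with the given phase.

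There is essentially no obstacle here: the statement is a direct consequence of Lemma~\ref{lem:gadget-phase-coloring} combined with a careful inspection of the edges of the gadget to confirm that the only constraints linking different cliques are mediated through the ports, which have been fixed. The only subtlety worth spelling out is that the partition $C_i = C_{i,1}\cup\cdots\cup C_{i,t}$ does not introduce any additional restriction on the colorings of $C_i$, since each $C_{i,j}$ is connected only to the $j$-th port of $I_{i-1}$ (which has color $c$), a constraint already subsumed by the bijection requirement above.
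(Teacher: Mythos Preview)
Your proposal is correct and follows essentially the same approach as the paper: fix the phase, observe via Lemma~\ref{lem:gadget-phase-coloring} that each clique $C_i$ must be colored bijectively with the remaining $q-1$ colors, and conclude independence of the cliques. In fact you give more detail than the paper does in justifying why the cliques can be colored independently once the ports are fixed; the paper simply asserts this from the cliques being disjoint.
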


\begin{proof}
	By Lemma~\ref{lem:gadget-phase-coloring}, in every proper $q$-colorings  the vertices in the independent sets $I_1,\dots,I_m$ are assigned the same color, which is given by its phase. With the coloring of these vertices fixed, the number of $q$-colorings of each clique $C_i$ is $(q-1)!$. Since the cliques are disjoint the total number of $q$-colorings is $[(q-1)!]^m$.
\end{proof}

\subsection{Testing instance construction: the $q\geq 4$ case}
\label{subsec:testing:instance}

Let $H=(V,E)$ be a bipartite connected graph on $N$ vertices and suppose we want to approximately count the number of $3$-colorings of $H$.
In this section we show how to construct the testing instance from $H$ when $q\geq 4$. Our construction uses the gadget from Section~\ref{subsec:col:gadget}.

For integers $k,\ell \ge 1$, define the simple graph $\hat{H}_{k,\ell}=(\hat{V},\hat{E})$ as follows:
\begin{enumerate}
	\item Let $H_1 = (V(H_1),E(H_1)),\dots,H_k=(V(H_k),E(H_k))$ be $k$ copies of the graph $H$;
	\item Let $J$ be a complete $(q-3)$-partite graph in which each cluster has $\ell$ vertices; 
	\item Set $\hat{V} = \left( \bigcup_{i=1}^k V(H_i) \right) \cup V(J)$;
	\item In addition to the edges in $J$ and those in $H_i$ for $1\leq i\leq k$, $\hat{E}$ also contains edges between every vertex in $H_i$ for $1\leq i\leq k$ and every vertex in $J$; i.e., for $u\in H_i$ and $v\in J$, we have $\{u,v\}\in \hat{E}$.
\end{enumerate}
We remark that our definition of $\hat{H}_{k,\ell}$ requires $q\geq 4$; when $q=4$, $J$ is simply an independent set with $\ell$ vertices. Next, we use the graph $G(m,q,t)$ from Section~\ref{subsec:col:gadget} as a gadget to construct a simple graph $\hat{H}_{k,\ell}^\Gamma$ based on $\hat{H}_{k,\ell}$ where $\Gamma=\{m,q,t\}$. We proceed as follows:
\begin{enumerate}
	\item Replace every vertex $v$ of $\hat{H}_{k,\ell}$ by a copy $G_v$ of $G(m,q,t)$;
	\item For every edge $\{u,v\}\in \hat{E}$, pick an unused port in $G_u$ and an unused port in $G_v$ and connect them; in this way, every port is connected with at most one port from another gadget.
\end{enumerate}
The number of ports in a gadget $G(m,q,t)$ is $mt$, and the total number of vertices in $\hat{H}_{k,\ell}$ is $kN+\ell(q-3)$. 
The graph $\hat{H}_{k,\ell}^\Gamma$ is well-defined only if we have enough ports in every gadget $G_v$ to connect them with ports from other gadgets. 
For this, it suffices that
\[
mt \geq kN+\ell(q-3),
\]
and so we set
\begin{equation}
\label{eq:param:m}
m = kN+\ell(q-3) \qquad\text{and}\qquad t=\lceil\sqrt{q-3/4}+1/2\rceil \ge 1.
\end{equation}

Let $B$ be a complete bipartite graph with the same vertex bipartition as $H$. By setting $H=B$, we can define the graphs $\hat{B}_{k,\ell}$ and $\hat{B}_{k,\ell}^\Gamma$. Given $k,\ell\in\N^+$, we write $G = \hat{H}_{k,\ell}^\Gamma$ and $G^*=\hat{B}_{k,\ell}^\Gamma$ for our choice of $m$ and $t$. Suppose $q\geq 3$ and $d\geq d_c(q)$. Then, Lemma~\ref{lem:gadget-degree-coloring} implies that $G,G^* \in \mathcal{M}(n,d)$ for
\begin{equation}
\label{eq:param:n}
n=[kN+\ell(q-3)] \cdot [m(q-1+t)] = m^2(q-1+t).
\end{equation}

Let $Z_3(H)$ and $Z_3(B)$ denote the number of $3$-colorings of $H$ and $B$, respectively.
The two uniform distributions $\mu_G$ and $\mu_{G^*}$ over the $q$-colorings of $G$ and $G^*$ are related as follows.
\begin{lemma}\label{lem:G-and-Gstar-coloring}
	Let $k,\ell\in\N^+$ with $\ell\geq 2$. Define
	$
	\psi(k,\ell) = (q-3)^{1/k} \; 2^{1+\ell/k}.
	$
	Then the following holds:
	\begin{enumerate}
		\item If $Z_3(H) < \psi(k,\ell)$, then
		\[
		\TV{\mu_G}{\mu_{G^*}} \leq \frac{4}{3} \left( \frac{Z_3(H)}{\psi(k,\ell)} \right)^k.
		\]
		\item If $Z_3(H) \geq \psi(k,\ell)$, then
		\[
		\TV{\mu_G}{\mu_{G^*}} \geq \frac{2}{5} \left( 1 - \left( \frac{Z_3(B)}{Z_3(H)} \right)^k \right).
		\]
	\end{enumerate}
\end{lemma}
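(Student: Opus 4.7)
The plan is to identify proper $q$-colorings of $G=\hat{H}_{k,\ell}^\Gamma$ with proper $q$-colorings of $\hat{H}_{k,\ell}$ via the ``phase'' property of our gadget. By Lemma~\ref{lem:gadget-phase-coloring}, in any proper $q$-coloring of $G$ all ports of each gadget share a single color (its \emph{phase}), and the port-port edges between gadgets force phases along every edge of $\hat{H}_{k,\ell}$ to differ. By Lemma~\ref{lem:gadget-No-coloring} each valid phase assignment extends to exactly $[(q-1)!]^{m|\hat V|}$ proper $q$-colorings of $G$; the same factor appears for $G^*$ since the gadgets are identical. Moreover, because $H$ is a subgraph of $B$, we have $G\subseteq G^*$ and hence $\Omega_{G^*}\subseteq \Omega_G$. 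For two uniform distributions on nested sets the total variation distance simplifies to
\[
\TV{\mu_G}{\mu_{G^*}}=1-\frac{|\Omega_{G^*}|}{|\Omega_G|},
\]
which, after cancelling the common interior factor, reduces the problem to enumerating proper $q$-colorings of $\hat{H}_{k,\ell}$ and $\hat{B}_{k,\ell}$.

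Next I will partition proper $q$-colorings of $\hat{H}_{k,\ell}$ by the set $S$ of colors used inside $J$. Since each vertex of every $H_i$ is adjacent to all of $J$, the colors appearing in $H_i$ must be disjoint from $S$; together with the fact that $J$ is complete $(q-3)$-partite (so $|S|\ge q-3$) and that each connected bipartite $H_i$ requires at least $2$ colors (so $|S|\le q-2$), this leaves only $|S|\in\{q-3,q-2\}$. For $|S|=q-3$ each cluster of $J$ is monochromatic and the cluster colors are distinct, giving $q!/6$ colorings of $J$, while each $H_i$ is independently colored with the remaining $3$ colors, contributing $Z_3(H)^k$. For $|S|=q-2$ exactly one cluster uses two colors and the others use one each; a short count gives $\frac{q!(q-3)}{4}(2^\ell-2)$ colorings of $J$, where the factor $(2^\ell-2)$ comes from requiring both colors to actually appear in the doubled cluster, and each $H_i$ admits exactly $2$ proper colorings with the $2$ available colors, giving $2^k$ in total. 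Crucially, this second term $\frac{q!(q-3)}{4}(2^\ell-2)\cdot 2^k$ is identical for $\hat{B}_{k,\ell}$ since $B$ is also connected bipartite with the same bipartition; the count for $|S|=q-3$ differs only by replacing $Z_3(H)$ with $Z_3(B)$.

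Combining these counts and cancelling the common factor $q!/6$ gives
\[
\TV{\mu_G}{\mu_{G^*}}=\frac{Z_3(H)^k-Z_3(B)^k}{Z_3(H)^k+\alpha},\qquad \alpha=\tfrac{3}{2}(q-3)(2^\ell-2)\,2^k.
\]
Since $\psi(k,\ell)^k=(q-3)\cdot 2^{k+\ell}$, a direct calculation yields $\alpha=\tfrac{3}{2}\psi(k,\ell)^k(1-2^{1-\ell})$, so for $\ell\ge 2$ we have $\tfrac{3}{4}\psi(k,\ell)^k\le\alpha<\tfrac{3}{2}\psi(k,\ell)^k$. For part 1, if $Z_3(H)<\psi(k,\ell)$ then the numerator is at most $Z_3(H)^k$ and the denominator is at least $\alpha\ge\tfrac{3}{4}\psi(k,\ell)^k$, giving $\TV{\mu_G}{\mu_{G^*}}\le\tfrac{4}{3}(Z_3(H)/\psi(k,\ell))^k$. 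For part 2, if $Z_3(H)\ge\psi(k,\ell)$ then $\alpha\le\tfrac{3}{2}\psi(k,\ell)^k\le\tfrac{3}{2}Z_3(H)^k$, so the denominator is at most $\tfrac{5}{2}Z_3(H)^k$ and $\TV{\mu_G}{\mu_{G^*}}\ge\tfrac{2}{5}(1-(Z_3(B)/Z_3(H))^k)$.

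The main obstacle will be the $|S|=q-2$ enumeration: one must correctly pick the single cluster that uses two colors, choose those two colors and the distinct colors of the other clusters, subtract the two ``degenerate'' colorings where only one of the two colors appears in the doubled cluster (producing the $2^\ell-2$ factor), and then verify that the resulting count is \emph{the same} for $\hat{H}_{k,\ell}$ and $\hat{B}_{k,\ell}$---so this term cancels in the numerator of the TV-distance formula while surviving in the denominator, which is precisely what allows the denominator to be controlled by the threshold $\psi(k,\ell)^k$.
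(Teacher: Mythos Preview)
Your proposal is correct and follows essentially the same route as the paper's proof: reduce to counting proper $q$-colorings of $\hat{H}_{k,\ell}$ versus $\hat{B}_{k,\ell}$ via the gadget phase correspondence, partition these colorings according to whether $J$ uses $q-3$ or $q-2$ colors (the paper's Lemmas~\ref{lem:H-hat-coloring} and~\ref{lem:H-hat-Gamma-coloring}), and then use the nested-sets total variation formula together with the inequality $2^\ell-2\ge 2^{\ell-1}$ for $\ell\ge 2$ to obtain both bounds. The only cosmetic difference is that you package the $|S|=q-2$ term as $\alpha$ and bound it directly against $\psi(k,\ell)^k$, whereas the paper phrases the same estimates as $|\Omega_G^A|/|\Omega_G^B|$ and $|\Omega_G^A|\ge\tfrac{2}{3}|\Omega_G^B|$; the arithmetic is identical.
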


Finally, we note that we can generate random $q$-colorings of $G^*$ in polynomial time.
\begin{lemma}\label{lem:sampling-coloring}
	There exists an algorithm with running time $O(n)$ that generates a sample from the distribution $\mu_{G^*}$.
\end{lemma}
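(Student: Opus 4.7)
The plan is to exploit the gadget structure to reduce uniform sampling on $G^*=\hat{B}_{k,\ell}^\Gamma$ to uniform sampling on the much smaller graph $\hat{B}_{k,\ell}$. By Lemma~\ref{lem:gadget-phase-coloring}, any proper $q$-coloring of $G^*$ is determined by (a) a ``phase'' $\pi(v)\in\{1,\ldots,q\}$ for each gadget $G_v$ (the common color of its ports), and (b) a coloring of the $m$ cliques of $G_v$ using the $q-1$ colors distinct from $\pi(v)$. Port-to-port edges in $G^*$ correspond exactly to edges of $\hat{B}_{k,\ell}$, so $\pi$ must be a proper $q$-coloring of $\hat{B}_{k,\ell}$. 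By Lemma~\ref{lem:gadget-No-coloring}, for any such $\pi$ the number of extensions in (b) equals $[(q-1)!]^{mK}$, where $K=kN+\ell(q-3)$; crucially this count does not depend on $\pi$. Hence uniform sampling from $\mu_{G^*}$ reduces to sampling a uniform proper $q$-coloring $\pi$ of $\hat{B}_{k,\ell}$ and then, independently in each gadget and each clique, sampling a uniform bijection from the clique onto $\{1,\ldots,q\}\setminus\{\pi(v)\}$. This second stage takes $O(q)$ per clique, hence $O(mqK)=O(n)$ total.

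For the first stage I partition proper $q$-colorings of $\hat{B}_{k,\ell}$ according to the number $|T|$ of distinct colors used in $J$. Because $J$ is a complete $(q-3)$-partite graph and every vertex of $J$ is adjacent to every vertex of every $B_i$, the colors of the $q-3$ clusters of $J$ are pairwise disjoint and jointly disjoint from those used in each $B_i$. Since each $B_i$ is bipartite and thus requires at least two colors, $|T|\in\{q-3,q-2\}$. Direct enumeration gives explicit closed-form weights $W_1\propto Z_3(B)^k$ for $|T|=q-3$ (each cluster takes one color; each $B_i$ admits $Z_3(B)$ proper $3$-colorings on the three remaining colors) and $W_2\propto(2^\ell-2)\cdot 2^k$ for $|T|=q-2$ (one distinguished cluster uses two colors internally, others one each; each $B_i$ admits $2$ proper $2$-colorings on the two remaining colors). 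These weights are computable in time depending only on $q,k,\ell,N$.

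With the weights in hand, sample the case with probability $W_1/(W_1+W_2)$ versus $W_2/(W_1+W_2)$; sample the coloring of $J$ within the chosen case (in the $|T|=q-2$ case, the distinguished cluster's $\ell$-vertex coloring is uniform over $\{c_1,c_2\}^\ell$ conditioned on both colors appearing, easily handled by rejection); and finally sample a uniform proper $3$- or $2$-coloring of each of the $k$ copies of $B$ on the designated palette. Uniform sampling of a proper $3$-coloring of the complete bipartite graph $B=K_{|A|,|C|}$ is a small sub-problem solved in $O(N)$ time by first conditioning on $(|T_A|,|T_C|)\in\{(1,1),(1,2),(2,1)\}$ with weights proportional to $1$, $2^{|C|}-2$, $2^{|A|}-2$ (times the number of valid color-set assignments), then sampling a uniform surjection from each side onto its color set. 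Aggregating, the first stage costs $O(K+kN)=O(n)$, so the total running time is $O(n)$. The only risk of complication is correctly bookkeeping the case weights; no genuine analytical obstacle arises, since every conditional distribution on $\hat{B}_{k,\ell}$ decomposes explicitly along the symmetries of $J$ and the $B_i$.
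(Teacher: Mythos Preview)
Your proposal is correct and follows essentially the same approach as the paper: reduce sampling on $G^*$ to sampling a uniform phase vector (equivalently, a uniform proper $q$-coloring of $\hat{B}_{k,\ell}$) via the gadget lemmas, partition colorings of $\hat{B}_{k,\ell}$ according to whether $J$ uses $q-3$ or $q-2$ colors, compute the two explicit weights, and sample within each case. The paper packages the phase-vector fact via Lemma~\ref{lem:induced-coloring} and is terser about the sub-sampling steps (it just notes that random $3$-colorings of a complete bipartite graph and random surjections onto two colors are straightforward), but otherwise the structure and running-time accounting match yours.
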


The proof of both of these lemmas are provided in Section~\ref{subsec:coloring-facts}.

\subsection{Proof of Theorem~\ref{thm:main-q-coloring}: the $q\geq 4$ case}
\label{subsec:proof:col:main:q>=4}

In this section, we prove Theorem~\ref{thm:main-q-coloring} for the case when $q\geq 4$. Our proof relies on Lemmas~\ref{lem:G-and-Gstar-coloring} and \ref{lem:sampling-coloring}. 
We converge to a good approximation for the number of $3$-colorings  $Z_3(H)$ of a bipartite graph $H$ using the presumed algorithm for the identity testing problem.
In each round, we choose $k,\ell$ and generate the graph $G = \hat{H}^\Gamma_{k,\ell}$ as described in Section~\ref{subsec:testing:instance};
the size of the graph $G$ depends on $k,\ell$ and thus it varies in each round. 
We then generate samples from $\mu_{G^*}$ in polynomial time by Lemma~\ref{lem:sampling-coloring} where $G^* = \hat{B}^\Gamma_{k,\ell}$. These samples and the graph $G$ are passed as input to the identity testing algorithm.
If $Z_3(H)< \psi(k,\ell)$, then $\mu_G$ and $\mu_{G^*}$ are close in total variation distance (see Lemma~\ref{lem:G-and-Gstar-coloring}), and the tester would return \textsc{Yes}. Otherwise, if $Z_3(H)\geq  \psi(k,\ell)$, then $\mu_G$ and $\mu_{G^*}$ are statistically far from each other, and the tester would return \textsc{No}. Thus, using binary search over $k,\ell$ we can obtain a good approximation for $Z_3(H)$.

\begin{proof}[Proof of Theorem~\ref{thm:main-q-coloring} for $q\geq 4$]
	Let $H=(V(H), E(H))$ be an $N$-vertex connected bipartite graph 
	with $N \ge 5$.
	Suppose we want to approximately count the number of $3$-colorings of $H$. 
	Recall that $B$ is the complete bipartite graph with the same vertex bipartition as $H$. Then, $Z_3(B) \leq Z_3(H) \leq 3^N$ where the upper bound corresponds to the independent set on $N$ vertices.
	
	Fix $\eps,\delta\in(0,1)$. Our goal is to find an integer $\hat{Z}  \in [Z_3(B),3^N]$ such that with probability at least $1-\delta$
	\begin{equation}\label{eq:FPRAS}
	(1-\eps)\hat{Z} \leq Z_3(H) \leq (1+\eps)\hat{Z}.
	\end{equation}
	We assume first that $\eps\geq 2^{-N/4}$.
	The case when $\eps < 2^{-N/4}$ is much simpler and will be considered at the end of the proof.
	We give an algorithm that with probability at least $1-\delta$ outputs an integer $\hat{Z} \in [Z_3(B),3^N]$ such that 
	\begin{equation}\label{eqn:bounds-on-Z}
	2^{-\eps} (\hat{Z}-1) \leq Z_3(H) \leq 2^\eps \hat{Z}.
	\end{equation}
	Then, \eqref{eq:FPRAS} follows from the following fact.
	
	\begin{fact}
		\label{fact:fpras}
		For all $\varepsilon \in [2^{-N/4},1)$,
		if $\hat{Z}$ is such that $2^{-\eps} (\hat{Z}-1) \leq Z_3(H) \leq 2^\eps \hat{Z}$, 	then $(1-\eps)\hat{Z} \leq Z_3(H) \leq (1+\eps)\hat{Z}$.
	\end{fact}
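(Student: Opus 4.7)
The plan is to prove the two inequalities $(1-\eps)\hat Z \leq Z_3(H)$ and $Z_3(H) \leq (1+\eps)\hat Z$ separately. The upper bound is purely a convexity fact about $2^{\varepsilon}$, while the lower bound requires showing that $\hat Z$ is large enough (roughly $\Omega(1/\varepsilon)$) to absorb the ``$-1$'' in the hypothesis.

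For the upper bound, I would observe that $\varepsilon \mapsto 2^{\varepsilon}$ is convex on $[0,1]$ and agrees with the affine function $\varepsilon \mapsto 1+\varepsilon$ at the endpoints $\varepsilon=0$ and $\varepsilon=1$. Convexity then forces $2^{\varepsilon} \le 1+\varepsilon$ for all $\varepsilon \in [0,1]$, and combining this with the hypothesis $Z_3(H) \le 2^{\varepsilon}\hat Z$ immediately yields $Z_3(H) \le (1+\varepsilon)\hat Z$.

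For the lower bound, it suffices to show $2^{-\varepsilon}(\hat Z - 1) \ge (1-\varepsilon)\hat Z$, which rearranges to the equivalent inequality $\hat Z(2^{-\varepsilon}-1+\varepsilon) \ge 2^{-\varepsilon}$. Using $e^{-x} \ge 1-x$, I get $1-2^{-\varepsilon} \le \varepsilon \ln 2$, so $2^{-\varepsilon}-1+\varepsilon \ge \varepsilon(1-\ln 2) > 0$. Hence it suffices to establish the lower bound $\hat Z \ge \frac{2^{-\varepsilon}}{\varepsilon(1-\ln 2)}$.

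To obtain such a lower bound on $\hat Z$, I would use the hypothesis $Z_3(H) \le 2^{\varepsilon}\hat Z$ together with the elementary count $Z_3(K_{a,b}) = 3(2^a+2^b)-6 \ge 6\cdot 2^{N/2}-6$ (AM-GM on $a+b=N$), which gives $Z_3(H) \ge Z_3(B) \ge 6\cdot 2^{N/2}-6$, and hence $\hat Z \ge 2^{-\varepsilon}(6\cdot 2^{N/2}-6)$. Since $\varepsilon \ge 2^{-N/4}$, this yields $\varepsilon \hat Z(1-\ln 2) \ge 2^{-\varepsilon}(1-\ln 2)(6\cdot 2^{N/4}-6\cdot 2^{-N/4})$, which for all $N \ge 5$ comfortably exceeds $2^{-\varepsilon}$, completing the proof. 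The only ``obstacle'' is this final numerical verification, which is routine: the key structural input is simply that $Z_3(B)$ grows exponentially in $N$ while $\varepsilon$ is assumed no smaller than $2^{-N/4}$, so $\varepsilon \hat Z$ grows like $2^{N/4}$ and dominates any constant.
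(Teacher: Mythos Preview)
Your proof is correct and follows the same high-level outline as the paper: the upper bound via $2^{\eps}\le 1+\eps$ on $[0,1]$, and the lower bound by showing $\hat Z$ is large enough to absorb the ``$-1$''. The technical details differ slightly, though. For the lower bound the paper reuses the same convexity inequality: from $2^{\eps}\le 1+\eps$ it gets $1-2^{\eps}(1-\eps)\ge \eps^2$, so it suffices that $\hat Z\ge 1/\eps^2$, which follows from the contextual assumption $\hat Z\ge Z_3(B)\ge 2^{N/2}$ (used directly, not derived from the hypothesis). Your bound $2^{-\eps}-1+\eps\ge \eps(1-\ln 2)$ is sharper (linear rather than quadratic in $\eps$), so you only need $\hat Z\gtrsim 1/\eps$; and you extract this from the hypothesis $Z_3(H)\le 2^{\eps}\hat Z$ together with $Z_3(H)\ge Z_3(B)$, which makes your argument more self-contained relative to the stated Fact. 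The paper's route is slightly cleaner (one inequality reused, no numerical check), while yours avoids relying on the ambient assumption $\hat Z\ge Z_3(B)$ that is not part of the Fact's hypotheses.
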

	
	Let $k = \ceil{N/\eps}$. Recall that $\psi(k,\ell) = (q-3)^{1/k} \; 2^{1+\ell/k}$. For any $\hat{Z}\in \N^+$ for which we would like to test if \eqref{eqn:bounds-on-Z} hold, we choose an integer $\ell$ satisfying
	$$
	2^{-\eps} \psi(k,\ell) \leq \hat{Z} \leq \psi(k,\ell).
	$$
	Such an $\ell$ would exist if and only if it satisfies 
	$$
	k\log_2\hat{Z} - \log_2(q-3) -k \leq \ell \leq k\log_2\hat{Z} - \log_2(q-3) -k + k\eps.
	$$
	Since the difference between the upper and lower bounds is $k\eps \geq N\geq 1$, there is always at least one possible value for $\ell$. 
	Note also that $\ell\leq k\log_2(3^N) \leq 2kN$ as $\hat{Z}\leq 3^N$. 
	
	After choosing $k$ and $\ell$, 
	which depend on $N$, $q$, $\varepsilon$ and $\hat{Z}$, 
	we construct the graphs $G = \hat{H}_{k,\ell}^\Gamma$ and $G^*=\hat{B}_{k,\ell}^\Gamma$ as defined in Section~\ref{subsec:testing:instance}. 
	Then, the graphs $G$ and $G^*$ belong to $\mathcal{M}(n_{k,\ell},d)$, where given our choices for $m$, $t$, $k$ and $\ell$, we have:
	\[
	m = kN+\ell(q-3) \leq  \frac{4qN^2}{\eps} \quad\text{and}\quad n_{k,\ell} = m^2(q-1+t) \leq \frac{32q^3N^4}{\eps^2};
	\]
	see \eqref{eq:param:m} and \eqref{eq:param:n}.
	Given $\hat{Z}$, our input to the identity testing algorithm (henceforth called the \textsc{Tester}) is the graph $G$ and $L=L(n_{k,\ell})$ random $q$-colorings of $G^*$. By Lemma~\ref{lem:sampling-coloring}, we can generate one sample from $\mu_{G^*}$ in $O(n_{k,\ell})$ time. Thus, the total running time for one call of the \textsc{Tester} (including the generation of the samples) is $O(nL(n)+T(n))$ for $n = \ceil{32q^3\eps^{-2}N^4}$. 
	The following claim which is proved later follows from Lemma~\ref{lem:G-and-Gstar-coloring}.
	\begin{claim}
		\label{claim:tester-yes-no}
		Suppose $Z_3(B) \leq \hat{Z} \leq 3^N$ and $L \le 2^{N-4}$. 
		\begin{enumerate}
			\item If $Z_3(H) < 2^{-\eps}\hat{Z}$, then the \textsc{Tester} outputs \textsc{Yes} with probability at least $2/3$;
			\item If $Z_3(H) > 2^{\eps}\hat{Z}$, then the \textsc{Tester} outputs \textsc{No} with probability at least $2/3$.
		\end{enumerate}
	\end{claim}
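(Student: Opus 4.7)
The plan is to use Lemma~\ref{lem:G-and-Gstar-coloring} to translate the hypothesis on $Z_3(H)$ versus $\hat{Z}$ into bounds on $\TV{\mu_G}{\mu_{G^*}}$, and then invoke the identity testing guarantee of the \textsc{Tester}. The fulcrum of the argument is that the choice of $\ell$ forces $\hat{Z} \leq \psi(k,\ell) \leq 2^{\eps}\hat{Z}$, while $k = \lceil N/\eps\rceil$ gives $k\eps \geq N$, hence $2^{-k\eps}\leq 2^{-N}$. These two sandwich inequalities convert the multiplicative gap between $Z_3(H)$ and $\hat Z$ into an exponentially small or exponentially large signal inside the $k$-th power appearing in Lemma~\ref{lem:G-and-Gstar-coloring}.

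For Case 1, I would chain $Z_3(H) < 2^{-\eps}\hat{Z} \leq 2^{-\eps}\psi(k,\ell) < \psi(k,\ell)$ so that Lemma~\ref{lem:G-and-Gstar-coloring}(1) applies and yields
\[
\TV{\mu_G}{\mu_{G^*}} \leq \tfrac{4}{3}\,(Z_3(H)/\psi(k,\ell))^k \leq \tfrac{4}{3}\cdot 2^{-k\eps} \leq \tfrac{4}{3}\cdot 2^{-N}.
\]
I would then bound the product-distribution distance by $\TV{\mu_G^{\otimes L}}{\mu_{G^*}^{\otimes L}} \leq L\cdot\TV{\mu_G}{\mu_{G^*}} \leq 2^{N-4}\cdot\tfrac{4}{3}\cdot 2^{-N} = 1/12$. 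Under an optimal coupling, the $L$ samples drawn from $\mu_{G^*}$ coincide with $L$ samples drawn from $\mu_G$ with probability at least $11/12$. Combined with the \textsc{Tester}'s $3/4$ success probability on genuine samples from $\mu_G$ (its guarantee in the equality branch of the testing promise), this gives a probability of at least $3/4 - 1/12 = 2/3$ that the \textsc{Tester} outputs \textsc{Yes}, as required.

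For Case 2, I would observe $Z_3(H) > 2^{\eps}\hat{Z} \geq \psi(k,\ell)$ so that Lemma~\ref{lem:G-and-Gstar-coloring}(2) applies, then bound
\[
(Z_3(B)/Z_3(H))^k \leq (\hat{Z}/Z_3(H))^k < 2^{-k\eps} \leq 2^{-N}.
\]
This yields $\TV{\mu_G}{\mu_{G^*}} \geq \tfrac{2}{5}(1-2^{-N})$, which for the values of $N$ assumed in the theorem (a quick check shows $N\geq 5$ suffices, with $\tfrac{2}{5}\cdot\tfrac{31}{32} > \tfrac{1}{3}$) exceeds the testing threshold $1/3$. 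The \textsc{Tester} is then in its ``far'' regime and outputs \textsc{No} with probability at least $3/4 \geq 2/3$ by its correctness guarantee applied directly to samples from $\mu_{G^*}$.

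The main obstacle, though the calculations themselves are light, is the coupling step in Case 1: the \textsc{Tester}'s correctness is stated for samples drawn from the null distribution $\mu_G$, while we actually feed it samples from $\mu_{G^*}$. The slack $1 - 3/4 = 1/4$ in the \textsc{Tester}'s error budget must absorb the product-TV gap $L\cdot\TV{\mu_G}{\mu_{G^*}}$, which is precisely why the interplay between the hypothesis $L \leq 2^{N-4}$ and the doubly-exponential suppression $(Z_3(H)/\psi(k,\ell))^k \leq 2^{-N}$ is essential; any looser bound on either quantity would not leave enough room to recover the $2/3$ success rate.
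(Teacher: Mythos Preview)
Your proof is correct and follows essentially the same approach as the paper: you use the sandwich $\hat{Z} \leq \psi(k,\ell) \leq 2^{\eps}\hat{Z}$ together with $k\eps \geq N$ to feed Lemma~\ref{lem:G-and-Gstar-coloring} in each case, then in Case~1 bound the product-TV by $L\cdot\tfrac{4}{3}\cdot 2^{-N}\leq 1/12$ and apply a coupling/union-bound against the \textsc{Tester}'s $1/4$ error, while in Case~2 you get $\TV{\mu_G}{\mu_{G^*}} \geq \tfrac{2}{5}(1-2^{-N})>1/3$ (for $N\geq 5$) so the \textsc{Tester} is directly in its far regime. The arithmetic and the logic match the paper's proof line for line.
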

	
	We test whether $\hat{Z}$ provides a bound for $Z_3(H)$ using the following 
	algorithm. 
	For $R\geq 1$ odd, we construct the corresponding graph $G=\hat{H}_{k,\ell}^\Gamma$, generate $L\cdot R$ random colorings of $G^*=\hat{B}_{k,\ell}^\Gamma$, 
	and	run the \textsc{Tester} $R$ times using $L$ samples each time (every sample is used only once).
	The output of this algorithm would be the majority answer in the $R$ rounds.
	We call this algorithm the \emph{$R$-round-\textsc{Tester}}  for $\hat{Z}$.
	The following claim, which follows directly from a Chernoff bound and is provided later, establishes the guarantee for the accuracy of the $R$-round-\textsc{Tester}.
	
	\begin{claim}
		\label{claim:r-round}
		Let $R = 48\ceil{\ln(2N/\delta)}+1$. 
		\begin{enumerate}
			\item 	If $Z_3(H) < 2^{-\eps}\hat{Z}$, then $R$-round-\textsc{Tester} for $\hat{Z}$ outputs \textsc{Yes} with probability at least $1 - \frac{\delta}{2N}$;
			\item  If $Z_3(H) > 2^{\eps}\hat{Z}$, then $R$-round-\textsc{Tester} for $\hat{Z}$ outputs \textsc{No} with probability at least $1 - \frac{\delta}{2N}$.
		\end{enumerate}
	\end{claim}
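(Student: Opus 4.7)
The plan is a standard amplification-by-majority argument using a multiplicative Chernoff bound, so no new ideas beyond Claim~\ref{claim:tester-yes-no} are needed; the only thing to verify is that the constant $48$ in the definition of $R$ is large enough.

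First I would fix an input $\hat Z$ and look at the $R$ outputs $Y_1,\dots,Y_R\in\{\textsc{Yes},\textsc{No}\}$ produced by the $R$ independent invocations of the \textsc{Tester}. The invocations are independent because (i) each round uses its own fresh batch of $L$ samples drawn from $\mu_{G^*}$, which are mutually independent by construction, and (ii) the internal coins of the \textsc{Tester} are refreshed between calls. Let $X$ be the number of rounds in which $Y_i$ agrees with the ``correct'' answer for $\hat Z$ (\textsc{Yes} in case 1, \textsc{No} in case 2). By Claim~\ref{claim:tester-yes-no}, whenever $Z_3(H)<2^{-\varepsilon}\hat Z$ or $Z_3(H)>2^{\varepsilon}\hat Z$, each $Y_i$ is correct with probability at least $2/3$, so $X$ stochastically dominates a $\mathrm{Binomial}(R,2/3)$ random variable, and in particular $\mu:=\E[X]\ge 2R/3$.

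Next, since $R$ is odd, the majority output is wrong precisely when $X\le (R-1)/2<R/2$. I would bound $\Pr[X\le R/2]$ by a multiplicative Chernoff bound: writing $R/2=(1-1/4)\cdot (2R/3)\le (1-1/4)\mu$, we get
\[
\Pr\!\left[X\le R/2\right]\;\le\;\Pr\!\left[X\le (1-1/4)\mu\right]\;\le\;\exp\!\left(-\tfrac{(1/4)^2\mu}{2}\right)\;\le\;\exp\!\left(-\tfrac{R}{48}\right).
\]
With $R=48\lceil\ln(2N/\delta)\rceil+1$, the right-hand side is at most $\exp(-\ln(2N/\delta))=\delta/(2N)$, establishing both parts of the claim.

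The only mild technical point to double-check is the computation of the Chernoff exponent (and the choice of $1/4$ as the relative deviation, which yields the constant $48$); everything else is immediate from independence of the rounds and from Claim~\ref{claim:tester-yes-no}. There is no substantive obstacle: this step is just an off-the-shelf probability amplification.
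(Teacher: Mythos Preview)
Your proposal is correct and follows essentially the same approach as the paper: both argue via majority amplification and a multiplicative Chernoff bound with relative deviation $1/4$ from the mean $\mu\ge 2R/3$, yielding the exponent $R/48$. The only cosmetic difference is that you let $X$ count correct answers (handling both cases at once), while the paper lets $X$ count \textsc{Yes} answers and treats case~2 by symmetry.
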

	
	\noindent
	The algorithm for counting $3$-colorings in $H$ is based on binary search over the interval $[Z_3(B),3^N]$;  in each iteration it uses the $R$-round-\textsc{Tester} to determine the interval for the next iteration. 
	We proceed as follows.
	\begin{enumerate}
		\item Run the $R$-round-\textsc{Tester} for $\hat{Z} = Z_3(B)$. If the $R$-round-\textsc{Tester} outputs \textsc{Yes}, then return $\hat{Z} = Z_3(B)$;
		\item Run the $R$-round-\textsc{Tester} for $\hat{Z} = 3^N$. If the $R$-round-\textsc{Tester} outputs \textsc{No}, then return $\hat{Z} = 3^N$;
		\item Let $(L_0, U_0) = (Z_3(B), 3^N)$. For $i\geq 1:$
		\begin{enumerate}
			\item Let $C_i = \floor{(L_{i-1}+U_{i-1})/2}$;
			\item Run the $R$-round-\textsc{Tester} for $\hat{Z} = C_i$;
			\item If the $R$-round-\textsc{Tester} outputs \textsc{Yes}, then set $(L_i, U_i) = (L_{i-1}, C_i)$;
			\item If the $R$-round-\textsc{Tester} outputs \textsc{No}, then set $(L_i, U_i) = (C_i, U_{i-1})$;
			\item If $U_i-L_i = 1$, return $\hat{Z} = U_i$; otherwise, set $i:=i+1$ and repeat.
		\end{enumerate}
	\end{enumerate}
	
	Observe that $U_i-L_i-1$ decreases by a factor $2$ in each iteration. Thus, the $R$-round-\textsc{Tester} is called at most $2+ \log_2(3^N)\leq 2N$ times for $N\geq 5$. 
	
	Now, let $\mathcal{F}$ be the event that in a single run of the binary search algorithm the following two conditions are maintained:
	\begin{enumerate}[(i)]
		\item If $Z_3(H) < 2^{-\eps}\hat{Z}$, then the $R$-round-\textsc{Tester} outputs \textsc{Yes} for $\hat{Z}$;
		
		
		\item If $Z_3(H) > 2^{\eps}\hat{Z}$, then the $R$-round-\textsc{Tester} outputs \textsc{No} for $\hat{Z}$.
		
	\end{enumerate}
	\noindent
	Claim~\ref{claim:r-round} and a union bound imply that
	\begin{equation}
	\label{eq:prob:f}
	\Pr[\neg \mathcal{F}] \leq \frac{\delta}{2N} \cdot 2N = \delta.
	\end{equation}
	We claim that when $\mathcal{F}$ occurs, the output of the binary search algorithm satisfies \eqref{eqn:bounds-on-Z}. For this we consider three cases.
	First, if the algorithm stops in step 1, then $\hat{Z} = Z_3(B)$; that is, the $R$-round-\textsc{Tester} outputs \textsc{Yes} for $\hat{Z} = Z_3(B)$. Therefore,
	\[
	2^{-\eps} (\hat{Z}-1) \leq Z_3(B) \leq Z_3(H) \leq 2^\eps \hat{Z},
	\]
	where the last inequality follows from condition (ii) in the definition of the event $\mathcal{F}$.
	Similarly, if the algorithm stops in step 2, then $\hat{Z} = 3^N$. Namely, the $R$-round-\textsc{Tester} outputs \textsc{No} for $\hat{Z} = 3^N$, and so
	\[
	2^{-\eps} (\hat{Z}-1) \leq 2^{-\eps} \hat{Z} \leq Z_3(H) \leq 3^N \leq 2^\eps \hat{Z},
	\]
	where the second inequality follows from condition (i) in the definition of $\mathcal{F}$.
	
	Finally, suppose that the binary search algorithm stops in step 3 and $\hat{Z}=U_i$ for some $i\geq 1$.
	Observe that $L_i< U_i$ for all $i\geq 1$. Moreover, for each $i\geq 1$ the $R$-round-\textsc{Tester} outputs \textsc{No} for $\hat{Z} = L_i$ and \textsc{Yes} for $\hat{Z} = U_i$.
	The algorithm stops when $U_i-L_i=1$ for some $i$.
	It follows from the definition of $\mathcal{F}$ that
	\[
	2^{-\eps} (\hat{Z}-1) = 2^{-\eps} L_i \leq Z_3(H) \leq 2^\eps U_i = 2^\eps \hat{Z}.
	\]
	
	\smallskip
	Therefore, the output of the binary  search algorithm satisfies \eqref{eqn:bounds-on-Z} whenever $\mathcal{F}$ occurs.
	From \eqref{eq:prob:f}, it follows that we obtain an $\varepsilon$-approximation for $Z_3(H)$ with probability at least $1-\delta$ as desired.
	
	It remains for us to consider the overall running time of the binary search procedure.
	As mentioned, the $R$-round-\textsc{Tester} algorithm is called at most $2N$ times,
	and the running time of each call is $O(nL(n)+T(n))$ where $n = \ceil{32q^3\eps^{-2}N^4}$.
	Hence, the overall running time of the algorithm is $O\left( (nL(n)+T(n))N\ln(N/\delta) \right)$.
	
	Finally, we mention that for the trivial case when $\eps< 2^{-N/4}$, we can simply enumerate every $3$-labeling $\sigma: V(H) \to \{1,2,3\}$ of $H$ and count the number of proper $3$-colorings. The running time of this process is $O(3^N) \leq O(\eps^{-8})$. 
\end{proof}

We finalize the proof of Theorem~\ref{thm:main-q-coloring} for $q\geq 4$ by providing the missing proofs of 
Fact~\ref{fact:fpras} and Claims \ref{claim:tester-yes-no}~and~\ref{claim:r-round} .

\begin{proof}[Proof of Fact~\ref{fact:fpras}]
	For $\eps\in(0,1)$, we have $2^\eps \hat{Z} \leq (1+\eps)\hat{Z}$. Moreover, for $\eps\in[2^{-N/4},1)$ we have
	\[
	\frac{1}{1-2^\eps (1-\eps)} \leq \frac{1}{1-(1+\eps)(1-\eps)} = \frac{1}{\eps^2} \leq 2^{N/2} \leq Z_3(B) \leq \hat{Z}.
	\]
	This implies that $2^{-\eps} (\hat{Z}-1) \geq (1-\eps) \hat{Z}$ and the theorem follows.
\end{proof}

\begin{proof}[Proof of Claim~\ref{claim:tester-yes-no}]
	Recall that we choose $\ell$ such that
	$
	2^{-\eps} \psi(k,\ell) \leq \hat{Z} \leq \psi(k,\ell)
	$, where $\psi(k,\ell) = (q-3)^{1/k} \; 2^{1+\ell/k}$. Hence,
	when $Z_3(H) < 2^{-\eps}\hat{Z}$ we have
	\[
	Z_3(H) < 2^{-\eps}\hat{Z} \leq 2^{-\eps} \psi(k,\ell) < \psi(k,\ell).
	\]
	Part 1 of Lemma~\ref{lem:G-and-Gstar-coloring} implies
	\[
	\TV{\mu_G}{\mu_{G^*}} \leq \frac{4}{3} \left( \frac{Z_3(H)}{\psi(k,\ell)} \right)^k \leq \frac{4}{3} \cdot 2^{-k\eps} \leq \frac{4}{3} \cdot 2^{-N},
	\]
	where the last inequality follows from the fact that $k  =\ceil{N/\varepsilon}$.
	Let $\mu_{G}^{\otimes L}$ (resp., $\mu_{G^*}^{\otimes L}$) be the product distribution corresponding to $L$ independent samples from $\mu_{G}$ (resp., $\mu_{G^*}$). Recall that $L\leq 2^{N-4}$ by assumption. Then we get
	\[
	\TV{\mu_{G}^{\otimes L}}{\mu_{G^*}^{\otimes L}} \leq L \TV{\mu_G}{\mu_{G^*}} \leq L \cdot \frac{4}{3} \cdot 2^{-N} \leq 2^{N-4} \cdot \frac{4}{3} \cdot 2^{-N} = \frac{1}{12}.
	\]
	Consider the optimal coupling $\mathbb P$ of the distributions $\mu_{G}^{\otimes L}$ and $\mu_{G^*}^{\otimes L}$. In a sample from $\mathbb P$,
	the colorings from $G$ and $G^*$ are equal with probability at least $11/12$.
	Hence, the input to the $\textsc{Tester},$ which is drawn from $\mu_{G^*}^{\otimes L}$, is distributed according to $\mu_{G}^{\otimes L}$ with probability at least $11/12$.
	Denote this event by $\mathcal{F}_\mathcal{S}$. Notice that if $\mathcal{F}_\mathcal{S}$ occurs and the \textsc{Tester} is correct, then the \textsc{Tester} would output \textsc{Yes}. Thus,
	\[
	\Pr[\text{\textsc{Tester} outputs No}] \leq \Pr[\neg \mathcal{F}_{\mathcal{S}}] + \Pr[\text{\textsc{Tester} makes a mistake}] \leq  \frac{1}{12} + \frac{1}{4} = \frac{1}{3},
	\]
	which establishes part 1 of the claim.
	
	If $Z_3(H) > 2^{\eps}\hat{Z}$, then
	$
	Z_3(H) > 2^{\eps}\hat{Z} \geq \psi(k,\ell)
	$
	and
	$
	Z_3(B) \leq \hat{Z} < 2^{-\eps}Z_3(H).
	$
	Part 2 of Lemma~\ref{lem:G-and-Gstar-coloring} implies that for $N\geq 5$
	\[
	\TV{\mu_G}{\mu_{G^*}} \geq \frac{2}{5} \left( 1 - \left( \frac{Z_3(B)}{Z_3(H)} \right)^k \right) \geq \frac{2}{5} \left( 1 - 2^{-k\eps} \right) \geq \frac{2}{5} \left( 1 - 2^{-N} \right) > \frac{1}{3}.
	\]
	It follows that
	\[
	\Pr[\text{\textsc{Tester} outputs \textsc{Yes}}] = \Pr[\text{\textsc{Tester} makes a mistake}] \leq \frac{1}{4} < \frac{1}{3}.\tag*{\qedhere}
	\]
\end{proof}

\begin{proof}[Proof of Claim~\ref{claim:r-round}]
	For $i=1,\dots,R$, let $X_i$ be the indicator of the event that in the $i$-th round the \textsc{Tester} outputs \textsc{Yes}. Let $X= \sum_{i=1}^R X_i$.
	If $Z_3(H) < 2^{-\eps}\hat{Z}$, then Claim~\ref{claim:tester-yes-no} implies that $\E [X] \geq \frac{2}{3}R$. The Chernoff bound then implies that the probability that the $R$-round-\textsc{Tester} outputs \textsc{No} is
	\begin{equation*}
	\Pr\left[ X \leq \frac{R}{2} \right] \leq \Pr\left[ X \leq \frac{3}{4} \E [X] \right] \leq \exp\left( -\frac{\E [X]}{32} \right) \leq \exp\left( -\frac{R}{48} \right) \leq \frac{\delta}{2N}.
	\end{equation*}
	The case when $Z_3(H) > 2^\eps\hat{Z}$ (part 2) can be derived analogously.
\end{proof}

\subsubsection{Colorings of $G$ and $G^*$: proof of Lemmas~\ref{lem:G-and-Gstar-coloring} and \ref{lem:sampling-coloring}}
\label{subsec:coloring-facts}

In this section we establish first several facts about the $q$-colorings of $G =\hat{H}_{k,\ell}^\Gamma$ and $G^* = \hat{B}_{k,\ell}^\Gamma$.
We then use these facts to
bound $\TV{\mu_G}{\mu_{G^*}}$ (Lemma~\ref{lem:G-and-Gstar-coloring})
and to design an
algorithm for sampling the $q$-colorings of $G^*$ (Lemma~\ref{lem:sampling-coloring}).

For $r\in \{2,3\}$,
let $Z_r(H)$ denote the number of $r$-colorings of $H$. Since $H$ is a connected bipartite graph, we have $Z_2(H)=2$.
The following lemma establishes a useful partition of the $q$-colorings of $\hat{H}_{k,\ell}$.
\begin{lemma}\label{lem:H-hat-coloring}
	Let $k,\ell\in\N^+$.
	Let $\Omega^a$ and $\Omega^b$ be the set of $q$-colorings of $\hat{H}_{k,\ell}$ in which
	$J$ is colored by exactly $q-3$ and $q-2$ colors respectively.
	Then $\{\Omega^a, \Omega^b\}$ is a partition for the set of $q$-colorings of~$\hat{H}_{k,\ell}$; moreover, 
	\[
	|\Omega^a| = \frac{1}{6}\,q!\,Z_3(H)^k \qquad\text{and}\qquad |\Omega^b| = \frac{1}{4} (q-3) \,q!\, (2^\ell-2) 2^k.
	\]
\end{lemma}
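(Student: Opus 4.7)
The plan is to analyze how many colors the cluster graph $J$ can use in a proper $q$-coloring of $\hat H_{k,\ell}$, verify that the only possibilities are $q-3$ or $q-2$, and then count each case separately using straightforward combinatorial enumeration followed by the counts for colorings of the individual copies $H_i$.

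First I would establish the partition. Since $J$ is complete $(q-3)$-partite, the color sets $S_1,\ldots,S_{q-3}$ used by its clusters are pairwise disjoint (any two vertices in distinct clusters are adjacent), and each $|S_i|\ge 1$ because the clusters are non-empty. Hence the number of colors used by $J$ is $\sum_{i=1}^{q-3}|S_i|\ge q-3$. On the other hand, every vertex of $J$ is adjacent to every vertex of every $H_i$, so each $H_i$ must be colored using only the $q - \sum_i|S_i|$ colors not used on $J$. Because $H$ is a connected bipartite graph on $N\ge 2$ vertices it contains at least one edge, and hence each $H_i$ requires at least $2$ colors; this forces $\sum_i |S_i| \le q-2$. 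Therefore $J$ uses either $q-3$ or $q-2$ colors, giving the partition $\{\Omega^a,\Omega^b\}$.

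Next I would count $|\Omega^a|$. If $J$ uses exactly $q-3$ colors, then $|S_i|=1$ for every $i$, i.e.\ each cluster is monochromatic and the clusters use pairwise distinct colors. The number of such ordered color assignments to the $q-3$ clusters is $q(q-1)\cdots 4 = q!/3! = q!/6$. Once $J$ is colored, each $H_i$ must receive a proper coloring using the remaining $3$ colors, which by definition contributes a factor of $Z_3(H)$ per copy. Since the copies are independent, we multiply to obtain $|\Omega^a| = \tfrac{1}{6}\,q!\,Z_3(H)^k$.

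Then I would count $|\Omega^b|$ in the same spirit: if $J$ uses exactly $q-2$ colors, then exactly one cluster $c$ has $|S_c|=2$ and every other has size $1$. Enumerate choices as follows: pick the cluster $c$ in $q-3$ ways; pick its pair of colors in $\binom{q}{2}$ ways; assign the remaining $q-4$ clusters to a distinct remaining color each in $(q-2)!/2!$ ways. This yields a total of $(q-3)\cdot q!/4$ color-set assignments, after simplification. Within cluster $c$, the $\ell$ vertices carry any non-monochromatic assignment using its two colors, contributing $2^\ell-2$. Finally, each $H_i$ must be properly colored using the $2$ remaining colors, and since $H$ is connected bipartite, $Z_2(H)=2$, giving a factor $2^k$. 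Multiplying yields $|\Omega^b|=\tfrac{1}{4}(q-3)\,q!\,(2^\ell-2)\,2^k$.

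The main obstacle is not difficult but requires care: ruling out the possibility that $J$ uses $q-1$ or $q$ colors, which relies on $H$ having at least one edge so that each $H_i$ needs at least two colors. Once this is secured, the rest is bookkeeping that decomposes cleanly because the $H_i$'s share no edges with each other, the cluster vertices of $J$ share no edges within clusters, and all cross-edges between $J$ and the $H_i$'s are handled uniformly by the ``forbidden color'' argument.
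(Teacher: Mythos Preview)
Your proposal is correct and follows essentially the same approach as the paper's proof: establish the partition by observing that $J$ needs at least $q-3$ colors and each connected bipartite $H_i$ needs at least $2$ disjoint from those of $J$, then count each case by enumerating color assignments to $J$ and multiplying by the appropriate $Z_r(H)^k$. The only cosmetic difference is in the $|\Omega^b|$ enumeration: the paper first assigns colors to the $q-4$ monochromatic clusters ($q!/4!$ ways) and then picks the bichromatic pair from the remaining four ($\binom{4}{2}$ ways), whereas you first pick the bichromatic pair ($\binom{q}{2}$ ways) and then inject the $q-4$ clusters into the remaining $q-2$ colors ($(q-2)!/2!$ ways); both give $(q-3)\,q!/4$.
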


Observe that in the colorings from $\Omega^a$, the $H_i$'s are assigned the remaining $3$ colors,
and in those from $\Omega^b$ they are colored with $2$ colors.
We provide the proof of this lemma next.

\begin{proof}[Proof of Lemma~\ref{lem:H-hat-coloring}]
	Observe that $J$ is a complete $(q-3)$-partite graph, so it requires at least $q-3$ colors in every proper $q$-coloring of $\hat{H}_{k,\ell}$.
	Moreover, since each $H_i$ is a connected bipartite graph, it requires at least $2$ colors in every $q$-coloring. Also, every vertex in $H_i$ for $1\leq i\leq k$ is adjacent to every vertex in $J$. Thus, the $H_i$'s do not receive the colors that are used to color $J$. It then follows that
	$\{\Omega^a, \Omega^b\}$ is a partition for the set of $q$-colorings of $\hat{H}_{k,\ell}$.
	
	We count next the number of $q$-colorings of each type. For colorings in $\Omega^a$, there are $q!/3!$ ways to color $J$, and given the colors of $J$, there are $Z_3(H)$ colorings of each $H_i$ that use the remaining $3$ colors. This gives
	\[
	|\Omega^a| = \frac{q!}{3!} \cdot Z_3(H)^k = \frac{1}{6} \,q!\, Z_3(H)^k.
	\]
	For colorings in $\Omega^b$, the complete $(q-3)$-partite graph $J$ receives exactly $q-2$ colors. Hence, there is one cluster of $J$ that is assigned $2$ colors, and every other cluster of $J$ is colored by one color of its own.
	There are $q-3$ ways of selecting the bichromatic cluster, $q!/4!$ choices for the colors of the $q-4$ monochromatic clusters and $\binom{4}{2}$ choices for the colors of the bichromatic cluster. Also, there are $2^\ell-2$ colorings of the bichromatic cluster using exactly $2$ colors. Finally, given the colors of $J$, we have $Z_2(H)=2$ colorings for each $H_i$ using the remaining $2$ colors. Combining these, we get
	\[
	|\Omega^b| = (q-3) \cdot \frac{q!}{4!} \cdot \binom{4}{2} \cdot (2^\ell-2) \cdot Z_2(H)^k = \frac{1}{4} (q-3) \,q!\, (2^\ell-2) 2^k.\tag*{\qedhere}
	\]
\end{proof}

Recall that the phase of a $q$-coloring of a gadget $G(m,q,t)$ is the color of its ports.
Let $\sigma$ be a $q$-coloring of $\hat{H}_{k,\ell}^\Gamma$. The \emph{phase vector} of $\sigma$ is a mapping $\tau: V(\hat{H}_{k,\ell}) \to \{1,\dots,q\}$ defined as follows: for every vertex $v$ of $\hat{H}_{k,\ell}$, $\tau(v)$ is the phase of the coloring $\sigma$ in the gadget $G_v$ for the vertex $v$.
We show next that the phase vector of a $q$-coloring of $\hat{H}_{k,\ell}^\Gamma$ determines a $q$-coloring of $\hat{H}_{k,\ell}$.
\begin{lemma}\label{lem:induced-coloring}
	Let $\sigma$ be a $q$-coloring of $\hat{H}_{k,\ell}^\Gamma$ and $\tau$ be the phase vector of $\sigma$.
	Then, $\tau$ is a $q$-coloring of $\hat{H}_{k,\ell}$.
	Moreover, if $\tau$ is a $q$-coloring of $\hat{H}_{k,\ell}$, then there are $((q-1)!)^{m^2}$ $q$-colorings of $\hat{H}_{k,\ell}^\Gamma$ whose phase vector is $\tau$.
\end{lemma}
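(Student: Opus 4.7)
The plan is to prove both statements by reducing them to the basic properties of the gadget $G(m,q,t)$ established in Lemmas~\ref{lem:gadget-phase-coloring} and~\ref{lem:gadget-No-coloring}.

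For the first statement, I would fix a proper $q$-coloring $\sigma$ of $\hat{H}_{k,\ell}^\Gamma$ and let $\tau$ be its phase vector. For each vertex $v$ of $\hat{H}_{k,\ell}$, the restriction $\sigma|_{G_v}$ is a proper $q$-coloring of $G_v$, so Lemma~\ref{lem:gadget-phase-coloring} applies and certifies that every port in $G_v$ receives the color $\tau(v)$ while the non-ports use the remaining $q-1$ colors (so $\tau(v)$ is unambiguously defined). Now fix any edge $\{u,v\}\in \hat{E}$. By the construction of $\hat{H}_{k,\ell}^\Gamma$, this edge corresponds to a single edge between some port $p_u\in G_u$ and some port $p_v\in G_v$. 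Since $\sigma(p_u)=\tau(u)$, $\sigma(p_v)=\tau(v)$, and $\sigma$ is proper, we conclude $\tau(u)\neq \tau(v)$. Hence $\tau$ is a proper $q$-coloring of $\hat{H}_{k,\ell}$.

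For the second statement, I would fix any proper $q$-coloring $\tau$ of $\hat{H}_{k,\ell}$ and count the number of $q$-colorings $\sigma$ of $\hat{H}_{k,\ell}^\Gamma$ whose phase vector equals $\tau$. Any such $\sigma$ is determined by the choice, for each gadget $G_v$, of a proper $q$-coloring of $G_v$ with phase $\tau(v)$. Conversely, given one such coloring per gadget, the combined assignment is a proper coloring of $\hat{H}_{k,\ell}^\Gamma$: the only edges not internal to a gadget are the inter-gadget port edges corresponding to edges of $\hat{H}_{k,\ell}$, and on such an edge the endpoints are colored $\tau(u)$ and $\tau(v)$, which differ because $\tau$ is proper. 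By Lemma~\ref{lem:gadget-No-coloring}, each gadget admits exactly $[(q-1)!]^{m}$ proper $q$-colorings with any given phase, independently of the choice of phase. Since $\hat{H}_{k,\ell}$ has exactly $|\hat V|=kN+\ell(q-3)=m$ vertices (by our definition of $m$ in \eqref{eq:param:m}), the independence of the per-gadget choices yields
\[
\prod_{v\in \hat V} [(q-1)!]^{m} \;=\; [(q-1)!]^{m\cdot m} \;=\; [(q-1)!]^{m^2}.
\]

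There is no real obstacle in the argument; the proof is essentially a book-keeping application of the previously established gadget properties. The only subtlety worth stating carefully is that the construction of $\hat{H}_{k,\ell}^\Gamma$ introduces exactly one port-to-port edge per edge of $\hat{H}_{k,\ell}$, so Lemma~\ref{lem:gadget-phase-coloring} transfers the coloring constraint on inter-gadget edges cleanly to the constraint that $\tau$ be proper on $\hat{H}_{k,\ell}$, and no further compatibility condition on the per-gadget colorings is imposed.
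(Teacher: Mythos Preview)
Your proof is correct and follows essentially the same approach as the paper: both arguments use Lemma~\ref{lem:gadget-phase-coloring} to show $\tau$ is proper on $\hat{H}_{k,\ell}$ via the port-to-port edges, and both count extensions by applying Lemma~\ref{lem:gadget-No-coloring} to each of the $|\hat V|=m$ gadgets independently. If anything, your write-up is slightly more explicit than the paper's in verifying that an arbitrary choice of per-gadget colorings glues to a proper coloring of $\hat{H}_{k,\ell}^\Gamma$.
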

\begin{proof}
	In our construction, 
	for every edge $\{u,v\}$ of $\hat{H}_{k,\ell}$ we connect one port of the gadget $G_u$ with one port of $G_v$. Thus, the phase of $G_u$ and the phase of $G_v$ are distinct. This gives $\tau(u) \neq \tau(v)$ for every edge $\{u,v\}$ of $\hat{H}_{k,\ell}$. Hence, $\tau$ is a $q$-coloring of $\hat{H}_{k,\ell}$.
	
	Given the phase vector $\tau$ of a $q$-coloring of $\hat{H}_{k,\ell}^\Gamma$, the number of ways to color each gadget is $((q-1)!)^m$ by Lemma~\ref{lem:gadget-No-coloring}. Since gadgets are connected to each other only by edges between ports, we deduce that given the phase vector $\tau$ (namely, the colors of all the ports in all the gadgets) the number of $q$-colorings of $\hat{H}_{k,\ell}^\Gamma$ is
	\[
	\left[((q-1)!)^{m}\right]^{kN+\ell(q-3)} = ((q-1)!)^{m^2}
	\]
	where we recall that the number of vertices of $\hat{H}_{k,\ell}$ is $kN+\ell(q-3)$ and we set $m=kN+\ell(q-3)$.
\end{proof}

Combining Lemmas \ref{lem:H-hat-coloring} and \ref{lem:induced-coloring}, we can also partition the $q$-colorings of $\hat{H}_{k,\ell}^\Gamma$ into two types.
\begin{lemma}\label{lem:H-hat-Gamma-coloring}
	Let $k,\ell\in\N^+$.
	Let $\Omega^A$ and $\Omega^B$ be the set of $q$-colorings of $\hat{H}_{k,\ell}^\Gamma$
	whose phase vector is a $q$-coloring of $\hat{H}_{k,\ell}$ that belongs to $\Omega^a$ and $\Omega^b$ respectively.
	Then $\{\Omega^A, \Omega^B\}$ is a partition for the set of $q$-colorings of~$\hat{H}_{k,\ell}^\Gamma$; moreover,  
	\begin{align*}
	|\Omega^A| &= |\Omega^a| \cdot [(q-1)!]^{m^2} = \frac{1}{6}\,q!\,Z_3(H)^k ((q-1)!)^{m^2},~\text{and}\\
	|\Omega^B| &= |\Omega^b| \cdot [(q-1)!]^{m^2} = \frac{1}{4} (q-3) \,q!\, (2^\ell-2) 2^k ((q-1)!)^{m^2}.
	\end{align*}
\end{lemma}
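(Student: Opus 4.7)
The plan is to derive Lemma~\ref{lem:H-hat-Gamma-coloring} as an almost immediate bookkeeping consequence of the two preceding lemmas, Lemma~\ref{lem:H-hat-coloring} and Lemma~\ref{lem:induced-coloring}. First I would associate to every $q$-coloring $\sigma$ of $\hat{H}_{k,\ell}^\Gamma$ its phase vector $\tau_\sigma$, which by the first half of Lemma~\ref{lem:induced-coloring} is always a proper $q$-coloring of $\hat{H}_{k,\ell}$. This defines a map from $q$-colorings of $\hat{H}_{k,\ell}^\Gamma$ to $q$-colorings of $\hat{H}_{k,\ell}$.

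Next I would use Lemma~\ref{lem:H-hat-coloring}, which says every $q$-coloring of $\hat{H}_{k,\ell}$ lies in exactly one of $\Omega^a$ or $\Omega^b$, the classes being distinguished by whether $J$ uses $q-3$ or $q-2$ colors. Pulling this dichotomy back through the phase-vector map, every $\sigma$ lies in exactly one of $\Omega^A,\Omega^B$, so $\{\Omega^A,\Omega^B\}$ partitions the set of $q$-colorings of $\hat{H}_{k,\ell}^\Gamma$.

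For the cardinalities I would refine the partition one step further by fibering over the phase vector:
\[
\Omega^A \;=\; \bigsqcup_{\tau\in\Omega^a}\{\sigma : \tau_\sigma = \tau\},\qquad \Omega^B \;=\; \bigsqcup_{\tau\in\Omega^b}\{\sigma : \tau_\sigma = \tau\}.
\]
The second half of Lemma~\ref{lem:induced-coloring} says that each fiber has size exactly $((q-1)!)^{m^2}$, independent of $\tau$, so the factor pulls out of the sum and yields $|\Omega^A|=|\Omega^a|\cdot((q-1)!)^{m^2}$ and $|\Omega^B|=|\Omega^b|\cdot((q-1)!)^{m^2}$. Substituting the explicit formulas for $|\Omega^a|$ and $|\Omega^b|$ from Lemma~\ref{lem:H-hat-coloring} gives the two displayed identities. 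There is no real obstacle here; the only point worth flagging is the uniformity of the fiber size across all valid phase vectors $\tau$, which is what makes the count factor cleanly, and this is exactly the content of Lemma~\ref{lem:induced-coloring}.
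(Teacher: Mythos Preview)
Your proposal is correct and matches the paper's approach exactly: the paper's proof is simply ``Follows immediately from Lemmas~\ref{lem:H-hat-coloring} and \ref{lem:induced-coloring},'' and you have spelled out precisely the bookkeeping behind that sentence. The key observation you highlight---uniformity of the fiber size $((q-1)!)^{m^2}$ over all valid phase vectors---is exactly what makes the count factor, and this is indeed the content of Lemma~\ref{lem:induced-coloring}.
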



\begin{proof}
	Follows immediately from Lemmas~\ref{lem:H-hat-coloring} and \ref{lem:induced-coloring}.
\end{proof}

We are now ready to prove Lemmas~\ref{lem:G-and-Gstar-coloring} and~\ref{lem:sampling-coloring}.

\begin{proof}[Proof of Lemma~\ref{lem:G-and-Gstar-coloring}]
	Let $\Omega_{G}$,
	$\Omega^A_{G}$ and $\Omega^B_{G}$ denote the set of all $q$-colorings, $q$-colorings from $\Omega^A$ and $q$-colorings from $\Omega^B$ of the graph $G=\hat{H}_{k,\ell}^\Gamma$ respectively. Define $\Omega_{G^*}$, $\Omega^A_{G^*}$ and $\Omega^B_{G^*}$ similarly for $G^* = \hat{B}_{k,\ell}^\Gamma$. By Lemma~\ref{lem:H-hat-Gamma-coloring}, we have $|\Omega_{G}| = |\Omega_{G}^A| + |\Omega_{G}^B|$, $|\Omega_{G^*}| = |\Omega_{G^*}^A| + |\Omega_{G^*}^B|$ and $|\Omega_{G}^B| = |\Omega_{G^*}^B|$. Since $H$ is a subgraph of $B$, we deduce that $\hat{H}_{k,\ell}$ is a subgraph of $\hat{B}_{k,\ell}$ and also $G$ is a subgraph of $G^*$ (by selecting the same ports when constructing $G$ and $G^*$). Therefore, $\Omega_{G} \supset \Omega_{G^*}$. It follows that
	\begin{align}
	\TV{\mu_G}{\mu_{G^*}} &= \sum_{\sigma:\; \mu_G(\sigma) > \mu_{G^*}(\sigma)} \mu_G(\sigma) - \mu_{G^*}(\sigma) = \sum_{\sigma \in \Omega_{G}\backslash \Omega_{G^*}} \frac{1}{|\Omega_{G}|} \notag \\
	&= 1 - \frac{|\Omega_{G^*}|}{|\Omega_{G}|} = 1 - \frac{|\Omega_{G^*}^A| + |\Omega_{G^*}^B|}{|\Omega_{G}^A| + |\Omega_{G}^B|} = \frac{|\Omega_{G}^A| - |\Omega_{G^*}^A|}{|\Omega_{G}^A| + |\Omega_{G}^B|} \label{eq:col-tv}.
	\end{align}
	
	If $Z_3(H) < \psi(k,\ell) = (q-3)^{1/k} \; 2^{1+\ell/k}$, then we deduce from Lemma~\ref{lem:H-hat-Gamma-coloring} that
	\begin{align*}
	\TV{\mu_G}{\mu_{G^*}} &\leq \frac{|\Omega_{G}^A|}{|\Omega_{G}^B|}
	= \frac{\frac{1}{6}\,q!\,Z_3(H)^k ((q-1)!)^{m^2}}{\frac{1}{4} (q-3) \,q!\, (2^\ell-2) 2^k ((q-1)!)^{m^2}}
	\le \frac{2 Z_3(H)^k}{3 (q-3) 2^{\ell-1 +k}}
	= \frac{4}{3} \left( \frac{Z_3(H)}{\psi(k,\ell)} \right)^k,
	\end{align*}
	where the second inequality uses the fact that $2^\ell-2 \geq 2^{\ell-1}$ for $\ell\geq 2$. This establishes part 1 of the lemma.
	
	For part 2,
	if $Z_3(H) \geq \psi(k,\ell)$, then by Lemma~\ref{lem:H-hat-Gamma-coloring}
	\[
	|\Omega_G^A| = \frac{1}{6}\,q!\,Z_3(H)^k ((q-1)!)^{m^2} \geq \frac{1}{6} (q-3) \,q!\,2^{\ell+k} ((q-1)!)^{m^2} \geq \frac{2}{3} |\Omega_{G}^B|.
	\]
	We deduce that
	\begin{align*}
	\TV{\mu_G}{\mu_{G^*}} &\geq \frac{|\Omega_{G}^A| - |\Omega_{G^*}^A|}{|\Omega_{G}^A|+\frac{3}{2}|\Omega_{G}^A|} = \frac{2}{5} \left( 1-\frac{|\Omega_{G^*}^A|}{|\Omega_{G}^A|} \right) = \frac{2}{5} \left( 1 - \left( \frac{Z_3(B)}{Z_3(H)} \right)^k \right). \qedhere
	\end{align*}
\end{proof}

\begin{proof}[Proof of Lemma~\ref{lem:sampling-coloring}]
	By Lemma~\ref{lem:induced-coloring}, the number of $q$-colorings of $G^* = \hat{B}_{k,\ell}^\Gamma$ given a phase vector $\tau$ is $((q-1)!)^{m^2}$, which is independent of $\tau$. Thus, the phase vector $\tau$ of a uniformly random $q$-coloring of $G^*$ is a uniformly random $q$-coloring of $\hat{B}_{k,\ell}$.
	Our algorithm for sampling from the distribution $\mu_{G^*}$ then works as follows:
	\begin{enumerate}
		\item Generate a random $q$-coloring $\tau$ of $\hat{B}_{k,\ell}$;
		\item For each $v\in \hat{V} = V(\hat{B}_{k,\ell})$, color all ports of the gadget $G_v$ in $\hat{B}_{k,\ell}^\Gamma$ with $\tau(v)$, and then color all non-ports of $G_v$, which are disjoint cliques of size $q-1$, with a random $(q-1)$-coloring using all colors but $\tau(v)$.
	\end{enumerate}
	
	To generate a $q$-coloring of $\hat{B}_{k,\ell}$ uniformly at random, we can proceed as follows:
	\begin{enumerate}
		\item Compute $|\Omega^a|$, $|\Omega^b|$ and $|\Omega| = |\Omega^a| + |\Omega^b|$;
		\item With probability $|\Omega^a|/|\Omega|$ generate a random $q$-coloring from $\Omega^a$;
		\item With probability $|\Omega^b|/|\Omega|$ generate a random $q$-coloring from $\Omega^b$.
	\end{enumerate}
	
	To compute $|\Omega^a|$ and $|\Omega^b|$, assume $(U,W)$ is the bipartition of the vertex set of the complete bipartite graph $B$. Suppose $|U|=N_1$ and $|W|=N_2$. Then we have
	\[
	Z_3(B) = 3\cdot 2 + 3\cdot (2^{N_1}-2) + 3\cdot (2^{N_2}-2) = 3 \left(2^{N_1} + 2^{N_2} -2\right).
	\]
	Lemma~\ref{lem:H-hat-coloring} implies that
	\[
	|\Omega^a| = \frac{1}{6}\,q!\,3^k \left(2^{N_1} + 2^{N_2} -2\right)^k \qquad\text{and}\qquad |\Omega^b| = \frac{1}{4} (q-3) \,q!\, (2^\ell-2) 2^k.
	\]
	
	To generate a coloring from $\Omega^a$, first choose $q-3$ random colors for the complete $(q-3)$-partite graph $J$
	and randomly assign one of these colors to each cluster of $J$.
	The $k$ copies of $B$ are colored with the remaining $3$ colors. Since $B$ is a complete bipartite graph, it is straightforward to generate a random $3$-coloring in linear time.
	
	In similar manner, to generate a coloring from $\Omega^b$, first choose $q-2$ colors and color $J$ with these $q-2$ colors.
	This can be done by first picking a random cluster of $J$ and coloring it with $2$ different random colors, and then coloring the other $q-4$ clusters with the remaining $q-4$ colors. Finally color the $k$ copies of $B$ with the $2$ colors not used in $J$.
	
	Since each step of the sampling procedure for $\hat{B}_{k,\ell}$ takes at most linear time, the running time of generating a random $q$-coloring of $\hat{B}_{k,\ell}$ is $O(kN+\ell(q-3))$. Therefore, the running time of sampling from $\mu_{G^*}$ is $O(n)$.
\end{proof}


\subsection{Proof of Theorem~\ref{thm:main-q-coloring}: the $q=3$ case}
\label{subsec:proof:col:main:q=3}

In this section we provide the proof of Theorem~\ref{thm:main-q-coloring} for $q=3$.
The proof of this case is very similar to that of $q \ge 4$, but we are required to modify the construction of the testing instance slightly
and rederive the results in Lemmas~\ref{lem:G-and-Gstar-coloring} and \ref{lem:sampling-coloring}.

Let $H=(V,E)$ be a connected bipartite graph on $N$ vertices for which we want to count the number of $3$-colorings.
Recall that for $k,\ell\in\N^+$, we define $\hat{H}_{k,\ell}$ to be the graph that contains $k$ copies of $H$, a complete $(q-3)$-partite graph $J$ with $(q-3)\ell$ vertices, and a complete bipartite graph connecting $J$ and all copies of $H$.
If $J$ is colored by $q-2$ colors, then every copy of $H$ is assigned the remaining $2$ colors;
on the other hand, if $J$ is colored by $q-3$ colors, then the copies of $H$ are colored with the remaining $3$ colors.
By checking which of the two types of $q$-colorings dominates using the \textsc{Tester}, we can obtain a bound on $Z_3(H)$.
This approach works only for $q\geq 4$ as the construction of $\hat{H}_{k,\ell}$ (in particular, the complete $(q-3)$-partite graph $J$) requires $q\geq 4$. 

For $q=3$, we need one additional idea. We construct first a graph $\tilde{H}$ which consists of the original graph $H$, two additional vertices $\{s,t\}$, and several intermediate vertices connecting $H$ and $\{s,t\}$ (see Figure~\ref{fig:H'}).
The graph $\tilde{H}$ is constructed  in a way such that in every $3$-coloring: if $s$ and $t$ receive the same color, then $H$ is colored by exactly two colors; and, if $s$ and $t$ receive two distinct colors, then $H$ can be colored by any proper $3$-coloring with equal probability. The problem then reduces to counting the $3$-colorings of $\tilde{H}$. We can define a graph $\tilde{H}_{k,\ell}$ using the similar construction as $\hat{H}_{k,\ell}$ for $q\geq 4$, but with two modifications: Firstly, we define $J$ to be an independent set instead of a complete $(q-3)$-partite graph; Secondly, we connect every vertex of $J$ with only the vertices $s$'s and $t$'s in all copies of $\tilde{H}$ instead of all vertices. After constructing the testing instance $\tilde{H}_{k,\ell}$ and $\tilde{H}^\Gamma_{k,\ell}$, the proof of Theorem~\ref{thm:main-q-coloring} for $q=3$ follows in the same manner as for $q\geq 4$.


\begin{figure}[t]
	\centering
	\begin{tikzpicture}[scale=0.4]
	\tikzmath{
		\x = 4;
		\y = 4;
		\sr3 = 1.732;
	}
	
	\foreach \i in {0,\y,2*\y} {
		\filldraw
		(0,\i) circle (2pt) node {}
		(\x,\i) circle (2pt) node {}
		(\x+\sr3,\i+1) circle (2pt) node {}
		(\x+\sr3,\i-1) circle (2pt) node {}
		;
		
		\path
		(0,\i) edge (\x,\i)
		(\x,\i) edge (\x+\sr3,\i+1)
		(\x,\i) edge (\x+\sr3,\i-1)
		(\x+\sr3,\i+1) edge (\x+\sr3,\i-1)
		;
	}
	
	\filldraw
	(0,3*\y) circle (2pt) node [label = below: $v$] {}
	(\x,3*\y) circle (2pt) node [label = below: $a_v$] {}
	(\x+\sr3,3*\y+1) circle (2pt) node [label = above: $b_v$] {}
	(\x+\sr3,3*\y-1) circle (2pt) node [label = below: $c_v$] {}
	;
	
	\path
	(0,3*\y) edge (\x,3*\y)
	(\x,3*\y) edge (\x+\sr3,3*\y+1)
	(\x,3*\y) edge (\x+\sr3,3*\y-1)
	(\x+\sr3,3*\y+1) edge (\x+\sr3,3*\y-1)
	;
	
	\filldraw
	(\x+\sr3+\x,\y-1) circle (2pt) node [label = right: $t$] {}
	(\x+\sr3+\x,2*\y+1) circle (2pt) node [label = right: $s$] {}
	;
	
	\path
	(\x+\sr3,-1) edge [bend right = 20] (\x+\sr3+\x,\y-1)
	(\x+\sr3,\y-1) edge [bend right = 10] (\x+\sr3+\x,\y-1)
	(\x+\sr3,2*\y-1) edge [bend left = 20] (\x+\sr3+\x,\y-1)
	(\x+\sr3,3*\y-1) edge [bend left = 30] (\x+\sr3+\x,\y-1)
	
	(\x+\sr3,1) edge [bend right = 30] (\x+\sr3+\x,2*\y+1)
	(\x+\sr3,\y+1) edge [bend right = 20] (\x+\sr3+\x,2*\y+1)
	(\x+\sr3,2*\y+1) edge [bend left = 10] (\x+\sr3+\x,2*\y+1)
	(\x+\sr3,3*\y+1) edge [bend left = 20] (\x+\sr3+\x,2*\y+1)
	;
	
	\draw[dashed]
	(0,1.5*\y) ellipse (50pt and 220pt)
	;
	
	\draw
	(0,1.5*\y) node {$H$}
	;
	\end{tikzpicture}
	\caption{The graph $\tilde{H}$.}
	\label{fig:H'}
\end{figure}
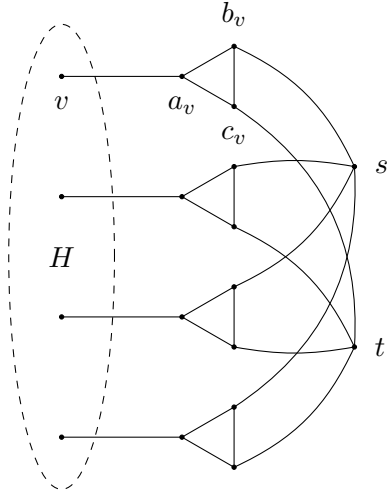

We  define next the graph $\tilde{H}=(\tilde{V},\tilde{E})$, which unlike $H$ is not a bipartite graph.
\begin{enumerate}
	\item Let $s,t$ be two vertices called \emph{interfaces};
	\item For each $v\in V$, let $T_v$ be a triangle on $\{a_v,b_v,c_v\}$ (clique on $3$ vertices);
	\item Set $\tilde{V} = V \cup \left( \bigcup_{v\in V} V(T_v) \right) \cup \{s,t\}$;
	\item Set $\tilde{E} = E \cup \left( \bigcup_{v\in V} E(T_v) \right) \cup \big\{ \{v,a_v\}, \{s,b_v\}, \{t,c_v\}: v\in V \big\};$
\end{enumerate}
see Figure~\ref{fig:H'} for an illustration of the graph $\tilde{H}$.
Observe that $\tilde{H}$ has $\tilde{N} = 4N+2$ vertices.

Let $I(\tilde{H})=\{s,t\}$.
For $k,\ell\in\N^+$, we also define the graph $\tilde{H}_{k,\ell}=(V(\tilde{H}_{k,\ell}),E(\tilde{H}_{k,\ell}))$ as follows:

\begin{enumerate}
	\item Let $\tilde{H}_1,\dots,\tilde{H}_k$ be $k$ copies of the graph $\tilde{H}$;
	\item Let $J$ be an independent set on $\ell$ vertices;
	\item Set $V(\tilde{H}_{k,\ell}) = \left( \bigcup_{i=1}^k V(\tilde{H}_i) \right) \cup V(J)$;
	\item In addition to the edges in $\tilde{H}_i$ for $1\leq i\leq k$, $E(\tilde{H}_{k,\ell})$ also contains edges between the interfaces of $\tilde{H}_i$ for $1\leq i\leq k$ and every vertex in $J$; i.e., for $I(\tilde{H}_i)=\{s_i,t_i\}$ and $v\in J$, we have $\{s_i,v\},\{t_i,v\}\in E(\tilde{H}_{k,\ell})$.
\end{enumerate}

Finally, we define the graph $\tilde{H}_{k,\ell}^{\Gamma}$ where $\Gamma=\{m,3,t\}$ in the same way as for $q\geq 4$; namely, we replace every vertex of $\tilde{H}_{k,\ell}$ by a copy of the graph $G(m,3,t)$ and every edge by an edge between two (unused) ports of the corresponding two gadgets. Furthermore, to make the graph $\tilde{H}_{k,\ell}^{\Gamma}$ well-defined, we set
\[
m = k\tilde{N}+\ell = k(4N+2) + \ell \qquad\text{and}\qquad t = \ceil{ \sqrt{q-\frac{3}{4}} + \frac{1}{2} } =2.
\]

Let $B$ be a complete bipartite graph with the same vertex bipartition as $H$. By setting $H=B$, we also define the graphs $\tilde{B}_{k,\ell}$ and $\tilde{B}_{k,\ell}^{\Gamma}$. Given $k,\ell\in\N^+$, let $G = \tilde{H}_{k,\ell}^{\Gamma}$ and $G^*=\tilde{B}_{k,\ell}^{\Gamma}$. Suppose $d\geq d_c(3)=4$. Then, Lemma~\ref{lem:gadget-degree-coloring} implies that $G,G^* \in \mathcal{M}(n,d)$ for
\[
n=(k\tilde{N}+\ell) \cdot 4m = 4m^2.
\]

The next two lemmas
will
play the role of Lemmas~\ref{lem:G-and-Gstar-coloring} and \ref{lem:sampling-coloring} in the proof of Theorem~\ref{thm:main-q-coloring} for the case $q=3$.

\begin{lemma}
	\label{lem:G-and-Gstar-coloring-q=3}
	Let $k,\ell\in\N^+$ with $\ell\geq 2$. Then the following holds:
	\begin{enumerate}
		\item If $Z_3(H) < 2^{\ell/k}-2$, then
		\[
		\TV{\mu_G}{\mu_{G^*}} \leq 2 \left( \frac{Z_3(H)+2}{2^{\ell/k}} \right)^k.
		\]
		\item If $Z_3(H) \geq 2^{\ell/k}-2$, then
		\[
		\TV{\mu_G}{\mu_{G^*}} \geq \frac{1}{2} \left( 1 - \left( \frac{Z_3(B)+2}{Z_3(H)+2} \right)^k \right).
		\]
	\end{enumerate}
\end{lemma}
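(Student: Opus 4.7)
The plan is to mirror the proof strategy of Lemma~\ref{lem:G-and-Gstar-coloring}, but with a partition of the $3$-colorings of $\tilde{H}_{k,\ell}$ adapted to the new construction. Since each vertex of $J$ is adjacent to every interface $s_i, t_i$, the interfaces must use at most $2$ distinct colors in any proper $3$-coloring. This motivates defining $\Omega^A$ to be the set of colorings where all interfaces share a single common color (so $J$ is free to use the other $2$ colors independently on each of its $\ell$ vertices), and $\Omega^B$ to be the set of colorings where the interfaces use exactly $2$ colors (so $J$ is forced to be monochromatic in the third color). Lifting via Lemma~\ref{lem:induced-coloring}, each coloring of $\tilde{H}_{k,\ell}$ extends in $((q-1)!)^{m^2} = 2^{m^2}$ ways to a coloring of $G = \tilde{H}_{k,\ell}^\Gamma$, so the partition carries over to $\Omega^A_G, \Omega^B_G$.

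The combinatorial crux is a precise count of the $3$-colorings of a single copy of $\tilde{H}$ as a function of the colors assigned to $s$ and $t$. First, if $s = t = c$, the triangle $T_v$ forces $a_v = c$ (because $b_v, c_v$ are adjacent to $s, t$ of color $c$ and cannot use it, while the triangle uses all three colors), and then $\{b_v, c_v\}$ can be the remaining two colors in either order, giving $2^N$ triangle configurations; moreover the vertices of $V$ must then form a proper $2$-coloring of $H$ using the two colors $\neq c$, contributing $Z_2(H) = 2$ choices by connectivity of $H$. Hence $s = t = c$ admits $2^{N+1}$ colorings. Second, if $s \neq t$, I would check by enumeration of the three choices of $a_v$ that $b_v, c_v$ are uniquely determined by the constraints $b_v \neq \mathrm{color}(s)$, $c_v \neq \mathrm{color}(t)$, $\{a_v,b_v,c_v\}=\{1,2,3\}$, and that each of the three possible colors for $v$ is compatible with exactly $2$ triangle configurations. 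Consequently each proper $3$-coloring of $H$ extends to exactly $2^N$ colorings of $\tilde{H}$, and a given ordered pair $(s,t)$ of distinct colors yields $Z_3(H)\cdot 2^N$ colorings.

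Summing over $k$ copies of $\tilde{H}$ in $\tilde{H}_{k,\ell}$, the $2^\ell$ free extensions of $J$ in Type A, and the $3$ choices of interface color pair combined with an inclusion–exclusion (to enforce that both colors in the pair actually appear) in Type B yields
\[
|\Omega^A| = 3 \cdot 2^{k(N+1)+\ell}, \qquad |\Omega^B| = 3 \cdot 2^{k(N+1)} \bigl[(Z_3(H)+2)^k - 2\bigr],
\]
and applying Lemma~\ref{lem:induced-coloring} multiplies each by $2^{m^2}$. Two observations finalize the setup: the Type A count depends on $H$ (resp.\ $B$) only through $Z_2 = 2$, so $|\Omega^A_G| = |\Omega^A_{G^*}|$; and $H \subseteq B$ on the same bipartition implies $G \subseteq G^*$ edge-wise (by selecting the same ports when constructing both), so $\Omega_{G^*} \subseteq \Omega_G$. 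Following the calculation in \eqref{eq:col-tv},
\[
\TV{\mu_G}{\mu_{G^*}} = \frac{|\Omega^B_G| - |\Omega^B_{G^*}|}{|\Omega^A_G| + |\Omega^B_G|} = \frac{(Z_3(H)+2)^k - (Z_3(B)+2)^k}{2^\ell + (Z_3(H)+2)^k - 2}.
\]

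From this explicit formula, the two bounds follow by elementary algebra. In Case~1, when $(Z_3(H)+2)^k < 2^\ell$, dropping the negative term in the numerator and using $2^\ell - 2 \geq 2^{\ell-1}$ (valid for $\ell \geq 2$) gives the upper bound with its factor of $2$. In Case~2, when $(Z_3(H)+2)^k \geq 2^\ell$, the denominator is bounded above by $2(Z_3(H)+2)^k$, yielding the lower bound with its factor of $\tfrac{1}{2}$. The main obstacle is the combinatorial enumeration in the second half of the single-copy count above: establishing that every proper $3$-coloring of $H$ extends to exactly $2^N$ colorings of $\tilde{H}$ when $s \ne t$ requires verifying a precise local independence in each triangle, which hinges on the asymmetric way $a_v, b_v, c_v$ attach to $v, s, t$ in the construction of $\tilde{H}$ and is the key new ingredient relative to the $q \geq 4$ case.
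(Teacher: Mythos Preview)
Your proof is correct and follows the same overall strategy as the paper: the single-copy enumeration you outline is exactly the content of the paper's Lemma~\ref{lem:coloring-th}, the lifting to $G=\tilde{H}_{k,\ell}^\Gamma$ via the gadget count is the paper's Lemma~\ref{lem:induced-coloring-H-prime}, and your explicit formula
\[
\TV{\mu_G}{\mu_{G^*}}=\frac{(Z_3(H)+2)^k-(Z_3(B)+2)^k}{(Z_3(H)+2)^k+2^\ell-2}
\]
is identical to the one the paper derives, so the two final bounds follow identically.

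The one genuine (if minor) difference is the choice of partition. The paper partitions the colorings of $\tilde{H}_{k,\ell}$ by the number of colors on $J$ (one vs.\ two): its $\Omega^a=\{J\text{ uses }1\text{ color}\}$ carries the $H$-dependence with $|\Omega^a|=3\cdot 2^{k(N+1)}(Z_3(H)+2)^k$, while $\Omega^b=\{J\text{ uses }2\text{ colors}\}$ has the $H$-independent count $3(2^\ell-2)2^{k(N+1)}$. You instead partition by the number of colors on the interfaces $\{s_i,t_i\}$, which puts the $H$-independent class on the other side (your $\Omega^A$, of size $3\cdot 2^{k(N+1)+\ell}$) and pushes the inclusion--exclusion ``$-2$'' into the $H$-dependent class $\Omega^B$. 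The two partitions are not the same set-theoretically (e.g.\ the paper's $\Omega^b$ is strictly contained in your $\Omega^A$), but since the colors used on $J$ and on the interfaces are complementary in $\{1,2,3\}$, they are dual bookkeeping choices that yield the same total and the same TV expression. The paper's choice avoids the inclusion--exclusion step on the $H$-dependent side, but otherwise there is no advantage either way.
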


\begin{lemma}\label{lem:sampling-coloring-q=3}
	There exists an algorithm with running time $O(n)$ that generates a sample from the distribution $\mu_{G^*}$.
\end{lemma}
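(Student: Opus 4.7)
The plan is to mirror the approach used for Lemma~\ref{lem:sampling-coloring}. First I would invoke Lemma~\ref{lem:induced-coloring} with $q=3$ to reduce the problem to sampling a uniform proper $3$-coloring of $\tilde{B}_{k,\ell}$: that lemma shows that every phase vector $\tau$ of a $3$-coloring of $G^* = \tilde{B}_{k,\ell}^\Gamma$ admits exactly $((q-1)!)^{m^2} = 2^{m^2}$ extensions to a proper $3$-coloring of $G^*$. Hence sampling from $\mu_{G^*}$ can be done by: (i) drawing a uniform proper $3$-coloring $\tau$ of $\tilde{B}_{k,\ell}$; (ii) coloring the ports of every gadget $G_v$ with $\tau(v)$; (iii) coloring each non-port $2$-clique of $G_v$ uniformly with one of the $2$ proper $2$-colorings using the colors $\{1,2,3\}\setminus\{\tau(v)\}$. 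Steps (ii)--(iii) clearly run in $O(m^2)=O(n)$ time, so the crux is step (i).

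For step (i) I would partition the $3$-colorings of $\tilde{B}_{k,\ell}$ into two classes according to how many distinct colors appear on the independent set $J$: either exactly $1$ or exactly $2$ (using all $3$ is impossible since every vertex of $J$ is adjacent to every interface $s_i, t_i$, which must themselves be properly colored). Call these classes $\Lambda^1$ and $\Lambda^2$. I would then compute $|\Lambda^1|$ and $|\Lambda^2|$ exactly and sample the class with probability proportional to these counts. Within $\Lambda^2$, pick one of the $3$ choices for the pair of colors used on $J$ uniformly, draw a uniform proper $2$-coloring of $J$ (one of $2^\ell-2$), observe that $s_i,t_i$ are then forced to the remaining color for every $i$, and sample a uniform extension of each copy $\tilde{B}_i$ independently. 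Within $\Lambda^1$, pick one of the $3$ possible colors for $J$ uniformly; then for each copy $\tilde{B}_i$, decide independently whether $s_i=t_i$ (in color) or $s_i\ne t_i$ with probability proportional to the number of extensions of each subclass, and sample conditionally.

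The step I expect to be the main obstacle is computing, and sampling from, the number of extensions of a single copy $\tilde{B}_i$ given the colors of $s_i$ and $t_i$. By a short case analysis on the triangle $T_v$ together with the edges $\{v,a_v\}, \{s,b_v\}, \{t,c_v\}$, one can check that for every vertex $v$ of $B$ and every fixed coloring of $s_i, t_i, v$, there are exactly two valid colorings of $T_v$. When $s_i=t_i$, the constraints force the color of $v$ into the other two colors, reducing the coloring of $V(B)$ to a proper $2$-coloring of $B$; when $s_i\ne t_i$, the color of $v$ is unconstrained and the coloring of $V(B)$ is any proper $3$-coloring of $B$. This gives $2^{N+1}$ and $2^{N}Z_3(B)$ extensions, respectively. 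Since proper $2$- and $3$-colorings of the complete bipartite graph $B$ can be sampled uniformly in $O(N)$ time by the same argument as in the proof of Lemma~\ref{lem:sampling-coloring} (with $Z_3(B) = 3(2^{N_1}+2^{N_2}-2)$ explicitly computable), each copy $\tilde{B}_i$ is sampled in $O(N)$ time; hence step (i) runs in $O(kN+\ell) = O(m)$ time and the overall running time is $O(n)$ as claimed.
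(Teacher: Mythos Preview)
Your proposal is correct and follows essentially the same approach as the paper: reduce to sampling a uniform $3$-coloring of $\tilde{B}_{k,\ell}$ via the phase-vector correspondence, partition according to whether $J$ uses one or two colors, compute the class sizes, and sample within each class using the structure of $\tilde{B}$. One minor point: for the $q=3$ construction the relevant phase-vector lemma is Lemma~\ref{lem:induced-coloring-H-prime} (for $\tilde{H}_{k,\ell}^\Gamma$), not Lemma~\ref{lem:induced-coloring} (which is stated for $\hat{H}_{k,\ell}^\Gamma$); also, your triangle case analysis is exactly the content of Lemma~\ref{lem:coloring-th}, which the paper invokes via Lemma~\ref{lem:H-prime-hat-coloring} rather than redoing it inline.
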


With these two lemmas in hand,
the proof of Theorem~\ref{thm:main-q-coloring} for $q=3$ is then identical to that for the $q\geq 4$ case and is thus omitted.

\subsubsection{The colorings of $G$ and $G^*$: proof of Lemmas~\ref{lem:G-and-Gstar-coloring-q=3} and \ref{lem:sampling-coloring-q=3}}

It remains for us to prove Lemmas~\ref{lem:G-and-Gstar-coloring-q=3} and \ref{lem:sampling-coloring-q=3}.
First, we establish several facts about the $3$-colorings of $G =\tilde{H}_{k,\ell}^\Gamma$ and $G^* = \tilde{B}_{k,\ell}^\Gamma$.
We then use these facts as basis to bound $\TV{\mu_G}{\mu_{G^*}}$ (Lemma~\ref{lem:G-and-Gstar-coloring-q=3})
and give a sampling algorithm for $\mu_{G^*}$ (Lemma~\ref{lem:sampling-coloring-q=3}).
Some of these facts are counterparts of those established in Section~\ref{subsec:coloring-facts} for $q \ge 4$.

Let $Z_3(H)$ denote the number of $3$-colorings of $H$.
For $i,j\in\{1,2,3\}$, let $Z_3^{i,j}(\tilde{H})$ be the number of $3$-colorings of $\tilde{H}$ such that $s$ receives color $i$ and $t$ receives color $j$.
\begin{lemma}\label{lem:coloring-th}
	For $i,j\in\{1,2,3\}$,
	$
	Z_3^{i,i}(\tilde{H}) = 2^{N+1}
	$
	and
	$
	Z_3^{i,j}(\tilde{H}) = 2^N Z_3(H)
	$
	when $i \neq j$.
\end{lemma}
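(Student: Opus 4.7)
The plan is to condition on the colors at $s$ and $t$ and count, vertex by vertex in $V$, the number of valid colorings of the attached triangle $T_v = \{a_v,b_v,c_v\}$. The key observation is that, because $b_v$ is adjacent to $s$, $c_v$ is adjacent to $t$, and $T_v$ is a triangle, the local constraints on $(a_v,b_v,c_v)$ depend only on $(\sigma(s),\sigma(t))$ and they essentially decouple across $v\in V$, so the global count factors into a count of proper colorings of $H$ times a local count per triangle.

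First I would handle the case $i=j$. With $\sigma(s)=\sigma(t)=i$, the constraints $b_v\neq i$ and $c_v\neq i$ together with $\{a_v,b_v,c_v\}=\{1,2,3\}$ force $a_v=i$ and $\{b_v,c_v\}=\{1,2,3\}\setminus\{i\}$, giving exactly $2$ choices for the ordered pair $(b_v,c_v)$. The edge $\{v,a_v\}$ then forces $\sigma(v)\neq i$, so the restriction of $\sigma$ to $V$ must be a proper $2$-coloring of $H$ with the two colors in $\{1,2,3\}\setminus\{i\}$. Since $H$ is connected and bipartite, there are exactly $2$ such $2$-colorings, and each one extends to $\tilde H$ in $2^N$ ways (independent binary choice per triangle), yielding $Z_3^{i,i}(\tilde H)=2\cdot 2^N = 2^{N+1}$.

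Next I would handle the case $i\neq j$. A short enumeration of the $6$ permutations of $(1,2,3)$ on $(a_v,b_v,c_v)$ subject to $b_v\neq i$ and $c_v\neq j$ leaves exactly $3$ permutations, and I would note that these three permutations realize each of the three possible values for $a_v$ exactly once. The edge $\{v,a_v\}$ then rules out precisely the one permutation with $a_v=\sigma(v)$, leaving exactly $2$ valid triangle colorings regardless of $\sigma(v)\in\{1,2,3\}$. Consequently, each proper $3$-coloring of $H$ extends to $\tilde H$ in $2^N$ ways, so $Z_3^{i,j}(\tilde H)=2^N Z_3(H)$.

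There is no real obstacle here: the whole argument is an enumeration of the six permutations of the triangle once $(i,j)$ is fixed, together with the standard fact that a connected bipartite graph has exactly two proper $2$-colorings. The only point worth stating carefully is that the $N$ triangles are vertex-disjoint from each other and their only interaction with $H\cup\{s,t\}$ is through the edges $\{v,a_v\}$, $\{s,b_v\}$, $\{t,c_v\}$, so conditioning on $(\sigma(s),\sigma(t))$ and on $\sigma|_V$ makes the triangle choices fully independent across $v\in V$.
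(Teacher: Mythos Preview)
Your proof is correct and follows essentially the same approach as the paper's: condition on the colors of $s$ and $t$, observe that the triangles decouple, and count the valid permutations on each $T_v$. Your organization of the $i\neq j$ case---first noting that exactly three permutations survive the constraints $b_v\neq i$, $c_v\neq j$ and that they hit each value of $a_v$ once, then ruling out the one with $a_v=\sigma(v)$---is a slightly cleaner packaging than the paper's case split on $\sigma(v)$, but the argument is the same.
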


\begin{proof}
	We first compute $Z_3^{1,1}(\tilde{H})$. Suppose that both $s$ and $t$ are colored with color $1$. Then, for each $v\in V$,
	colors $2$ and $3$ are both required to color
	$b_v$ and $c_v$. As a result, $a_v$ receives color $1$ and $v$ can not be colored by $1$ for all $v \in V$.
	Hence, the vertices of $H$ in $\tilde{H}$  can only be assigned colors $2$ or $3$.
	There are only two ways to color the connected bipartite graph $H$ with two colors. This gives
	\[
	Z_3^{1,1}(\tilde{H}) = 2\cdot 2^N = 2^{N+1},
	\]
	and by symmetry $Z_3^{2,2}(\tilde{H}) = Z_3^{3,3}(\tilde{H}) = 2^{N+1}$.
	
	We compute next $Z_3^{1,2}(\tilde{H})$. Let $\sigma$ be any $3$-coloring of $H$. We claim that there are $2^N$ colorings of $\tilde{H}$ in which $s$ is assigned color $1$, $t$ is assigned color $2$, and $\sigma$ is the coloring in $H$. From this, it follows immediately that
	\[
	Z_3^{1,2}(\tilde{H}) = 2^N Z_3(H).
	\]
	Let $v\in V$ and consider $3$-colorings of the triangle $T_v$.
	If $\sigma(v)=1$, then the only $3$-colorings of the triangle $(a_v,b_v,c_v)$ are $(2,3,1)$ and $(3,2,1)$ since $\{v,a_v\}$, $\{s,b_v\}$ and $\{t,c_v\}$ are all edges of $\tilde{H}$. 
	Similarly, when $\sigma(v)=2$ or $\sigma(v)=3$, there are also two possible colorings for $(a_v,b_v,c_v)$ in each case.
	Therefore, if $s$ and $t$ are assigned colors $1$ and $2$ respectively, and $\sigma $ is the coloring in $H$, there are exactly two proper $3$-colorings of $T_v$ for each $v\in V$. As the triangles $T_v$ are disjoint for $v\in V$, once the colors of $s$, $t$ and $H$ are assigned, there are $2^N$ proper $3$-colorings of $\tilde{H}$. This proves our claim,
	and by symmetry, for any $i,j\in \{1,2,3\}$ with $i\neq j$, we obtain $Z_3^{i,j}(\tilde{H}) = 2^N Z_3(H)$.
\end{proof}
\noindent
As in Lemma~\ref{lem:H-hat-coloring}, we can partition the $3$-colorings of $\tilde{H}_{k,\ell}$ into two categories.
\begin{lemma}\label{lem:H-prime-hat-coloring}
	Let $k,\ell\in\N^+$.
	Let $\Omega^a$ and $\Omega^b$ be the set of $3$-colorings of $\tilde{H}_{k,\ell}$ in which
	$J$ is colored by exactly $1$ and $2$ colors respectively.
	Then $\{\Omega^a, \Omega^b\}$ is a partition for the set of $3$-colorings of~$\tilde{H}_{k,\ell}$; moreover, 
	\[
	|\Omega^a| = 3 \cdot 2^{k(N+1)} (Z_3(H)+2)^k \qquad\text{and}\qquad |\Omega^b| = 3 \cdot (2^\ell-2) 2^{k(N+1)}.
	\]
\end{lemma}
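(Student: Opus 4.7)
\medskip

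The plan is to prove Lemma~\ref{lem:H-prime-hat-coloring} as a direct analogue of Lemma~\ref{lem:H-hat-coloring}, using Lemma~\ref{lem:coloring-th} to count colorings of a single copy of $\tilde{H}$ conditioned on the colors of its two interfaces. The key structural fact is that in $\tilde{H}_{k,\ell}$, every vertex of $J$ is adjacent to \emph{both} interfaces $s_i, t_i$ of every copy $\tilde{H}_i$. Consequently, if $C_J \subseteq \{1,2,3\}$ denotes the set of colors used on $J$, then we must have $s_i, t_i \in \{1,2,3\} \setminus C_J$ for every $i$. Since $J$ is nonempty, $|C_J| \in \{1,2\}$, which immediately gives the partition $\{\Omega^a, \Omega^b\}$.

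To count $|\Omega^a|$, I would proceed as follows. There are $3$ choices for the single color $c$ used on $J$, and conditioned on this choice every vertex of $J$ is forced to receive $c$, while each $\tilde{H}_i$ must be colored with both interfaces $s_i, t_i$ restricted to the remaining two colors $\{a,b\} = \{1,2,3\} \setminus \{c\}$. Summing over the four possibilities $(s_i,t_i) \in \{a,b\}^2$ and applying Lemma~\ref{lem:coloring-th},
\[
\sum_{(s_i,t_i)\in\{a,b\}^2} Z_3^{s_i,t_i}(\tilde{H}) \;=\; 2\cdot 2^{N+1} + 2\cdot 2^N Z_3(H) \;=\; 2^{N+1}\bigl(Z_3(H)+2\bigr).
\]
Since the $k$ copies $\tilde{H}_1,\dots,\tilde{H}_k$ are colored independently once $J$ is fixed, multiplying over the $k$ copies and the $3$ choices of $c$ yields $|\Omega^a| = 3\cdot 2^{k(N+1)}(Z_3(H)+2)^k$.

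For $|\Omega^b|$, the set $C_J$ has size $2$, giving $\binom{3}{2} = 3$ choices; the number of colorings of the independent set $J$ using exactly those two colors is $2^\ell - 2$ (all $2^\ell$ two-colorings minus the two monochromatic ones). Once $C_J$ is chosen, each $\tilde{H}_i$ must have both interfaces equal to the unique remaining color $c$, and by Lemma~\ref{lem:coloring-th} the number of such colorings of $\tilde{H}_i$ is $Z_3^{c,c}(\tilde{H}) = 2^{N+1}$. Taking the product over the $k$ copies and multiplying by the number of ways to color $J$ gives $|\Omega^b| = 3\cdot(2^\ell-2)\cdot 2^{k(N+1)}$.

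The argument is essentially bookkeeping on top of Lemma~\ref{lem:coloring-th}; there is no genuine obstacle. The only point that requires a moment of care is verifying that the interfaces of different copies of $\tilde{H}$ are constrained only by their adjacency to $J$ (and not to each other), so that once $J$ is fixed the colorings of the $k$ copies factor as an independent product. This is immediate from the construction of $\tilde{H}_{k,\ell}$, where the only edges leaving $\tilde{H}_i$ go to $J$.
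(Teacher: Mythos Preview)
Your proposal is correct and follows essentially the same approach as the paper: both arguments reduce to Lemma~\ref{lem:coloring-th}, case-split on whether $J$ uses one or two colors, sum the four (resp.\ one) terms $Z_3^{i,j}(\tilde{H})$ over the admissible interface-color pairs, and then take the $k$-fold product over copies. The only minor cosmetic difference is that the paper justifies $|C_J|\le 2$ by noting that all vertices of $J$ share a common neighbor, whereas you invoke adjacency to both interfaces; both are immediate from the construction.
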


\begin{proof}
	Observe that in every $3$-coloring of $\tilde{H}_{k,\ell}$ the number of colors we can assign to the independent set $J$ is at least one and at most two, since all the vertices of $J$ have at least one common neighbor. 
	It follows immediately that $\{\Omega^a, \Omega^b\}$ is a partition for the set of $3$-colorings of~$\tilde{H}_{k,\ell}$. 
	For the $3$-colorings in $\Omega^a$, we first assign a color to $J$, say color $1$. Then, we count the number of $3$-colorings of each $\tilde{H}_i$ whose interfaces $\{s_i,t_i\}$ cannot be assigned color $1$. Lemma~\ref{lem:coloring-th} and symmetry imply
	\[
	|\Omega^a| = 3 \left( Z_3^{2,2}(\tilde{H}) + Z_3^{2,3}(\tilde{H}) + Z_3^{3,2}(\tilde{H}) + Z_3^{3,3}(\tilde{H}) \right)^k = 3 \cdot 2^{k(N+1)} (Z_3(H)+2)^k.
	\]
	For $3$-colorings in $\Omega^b$, we pick the two colors that color $J$, say color $1$ and $2$. Then, the interfaces of $\tilde{H}_i$ have to be assigned color $3$ for each $i$. The number of ways to color $J$ with both colors $1$ and $2$ is $2^\ell-2$. Then, by Lemma~\ref{lem:coloring-th} and symmetry, we get
	\[
	|\Omega^b| = 3 \cdot (2^\ell-2) Z_3^{3,3}(\tilde{H})^k = 3 \cdot (2^\ell-2) 2^{k(N+1)}.\tag*{\qedhere}
	\]
\end{proof}

\begin{lemma}\label{lem:induced-coloring-H-prime}
	Let $\sigma$ be a $3$-coloring of $\tilde{H}_{k,\ell}^{\Gamma}$ and $\tau$ be the phase vector of $\sigma$.
	Then, $\tau$ is a $3$-coloring of $\tilde{H}_{k,\ell}$.
	Moreover, if $\tau$ is a $3$-coloring of $\tilde{H}_{k,\ell}$, then there are $2^{m^2}$ $3$-colorings of $\tilde{H}_{k,\ell}^{\Gamma}$ whose phase vector is $\tau$.
\end{lemma}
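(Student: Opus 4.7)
The plan is to adapt the proof of Lemma~\ref{lem:induced-coloring} verbatim, with only the cosmetic change that we replace $\hat{H}_{k,\ell}$ by $\tilde{H}_{k,\ell}$, set $q=3$, and use that the number of vertices of $\tilde{H}_{k,\ell}$ equals $m = k\tilde{N}+\ell$ (by our choice of $m$). The two parts of the statement are almost immediate consequences of the structure of the construction together with Lemmas~\ref{lem:gadget-phase-coloring} and~\ref{lem:gadget-No-coloring}.

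For part 1, I would argue as follows. By the construction of $\tilde{H}_{k,\ell}^{\Gamma}$, for every edge $\{u,v\} \in E(\tilde{H}_{k,\ell})$ there is a single edge between some port of the gadget $G_u$ (a copy of $G(m,3,t)$) and some port of $G_v$. Since $\sigma$ is a proper $3$-coloring of $\tilde{H}_{k,\ell}^{\Gamma}$, these two ports must receive distinct colors. By Lemma~\ref{lem:gadget-phase-coloring}, within each gadget all ports share a common color, which is precisely the phase; hence $\tau(u) \neq \tau(v)$ whenever $\{u,v\} \in E(\tilde{H}_{k,\ell})$. This shows that $\tau$ is a proper $3$-coloring of $\tilde{H}_{k,\ell}$.

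For part 2, assume $\tau$ is a proper $3$-coloring of $\tilde{H}_{k,\ell}$. Since all inter-gadget edges lie between ports of distinct gadgets and $\tau$ assigns distinct phases to the two endpoints of every such inter-gadget edge, these cross-gadget constraints are automatically satisfied once the phase vector is fixed. Conditional on a choice of phase for each gadget, the colorings of the individual gadgets are independent (because no non-port vertex has a neighbor outside its own gadget). By Lemma~\ref{lem:gadget-No-coloring} (with $q=3$), each gadget admits $((q-1)!)^m = 2^m$ proper $3$-colorings compatible with a given phase. The total number of gadgets is the number of vertices of $\tilde{H}_{k,\ell}$, which by our choice of $m$ equals $k\tilde{N}+\ell = m$, so we obtain $(2^m)^m = 2^{m^2}$ proper $3$-colorings of $\tilde{H}_{k,\ell}^{\Gamma}$ with phase vector $\tau$.

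There is no real obstacle here: the proof is structurally identical to Lemma~\ref{lem:induced-coloring}, and the only thing to verify is that the counting carries through unchanged with $q=3$ and the graph $\tilde{H}_{k,\ell}$ in place of $\hat{H}_{k,\ell}$. The key conceptual point—already present in the $q\ge 4$ case—is the factorization of the count: the phase vector captures all of the coupling between gadgets, so once it is fixed the internal colorings factor as a product over gadgets, each factor being given by Lemma~\ref{lem:gadget-No-coloring}.
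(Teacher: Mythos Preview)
Your proposal is correct and matches the paper's approach exactly: the paper simply notes that the proof is analogous to that of Lemma~\ref{lem:induced-coloring}, and you have spelled out precisely this analogy with the appropriate substitutions ($q=3$, $\tilde{H}_{k,\ell}$ in place of $\hat{H}_{k,\ell}$, and the vertex count $k\tilde{N}+\ell=m$).
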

\begin{proof}
	The proof is analogous to that of Lemma~\ref{lem:induced-coloring}.
\end{proof}

\begin{lemma}
	\label{lem:H-prime-hat-Gamma-coloring}
	Let $k,\ell\in\N^+$.
	Let $\Omega^A$ and $\Omega^B$ be the set of $3$-colorings of $\tilde{H}_{k,\ell}^\Gamma$
	whose phase vector is a $3$-coloring of $\tilde{H}_{k,\ell}$ that belongs to $\Omega^a$ and $\Omega^b$ respectively.
	Then $\{\Omega^A, \Omega^B\}$ is a partition for the set of $3$-colorings of $\tilde{H}_{k,\ell}^\Gamma$; moreover,
	\begin{align*}
	|\Omega^A| &= |\Omega^a| \cdot 2^{m^2} = 3 \cdot 2^{k(N+1)} (Z_3(H)+2)^k \cdot 2^{m^2},~\text{and}\\
	|\Omega^B| &= |\Omega^b| \cdot 2^{m^2} = 3 \cdot (2^\ell-2) 2^{k(N+1)} \cdot 2^{m^2}.\end{align*}
\end{lemma}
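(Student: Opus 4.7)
The plan is to derive this lemma as an immediate consequence of Lemmas~\ref{lem:H-prime-hat-coloring} and \ref{lem:induced-coloring-H-prime}, mirroring exactly the argument used for Lemma~\ref{lem:H-hat-Gamma-coloring} in the $q\geq 4$ case.

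First I would establish that $\{\Omega^A,\Omega^B\}$ is indeed a partition of the set of $3$-colorings of $\tilde{H}_{k,\ell}^{\Gamma}$. For this, observe that by Lemma~\ref{lem:induced-coloring-H-prime}, every $3$-coloring $\sigma$ of $\tilde{H}_{k,\ell}^{\Gamma}$ has a well-defined phase vector $\tau$, and this $\tau$ is itself a $3$-coloring of $\tilde{H}_{k,\ell}$. By Lemma~\ref{lem:H-prime-hat-coloring}, $\tau$ belongs to exactly one of $\Omega^a$ or $\Omega^b$, so $\sigma$ belongs to exactly one of $\Omega^A$ or $\Omega^B$. This shows $\Omega^A\cap\Omega^B=\emptyset$ and $\Omega^A\cup\Omega^B$ equals the set of all $3$-colorings of $\tilde{H}_{k,\ell}^{\Gamma}$.

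Next I would count $|\Omega^A|$ and $|\Omega^B|$ by fibering over phase vectors. By the second part of Lemma~\ref{lem:induced-coloring-H-prime}, for every fixed $3$-coloring $\tau$ of $\tilde{H}_{k,\ell}$, there are exactly $2^{m^2}$ $3$-colorings of $\tilde{H}_{k,\ell}^{\Gamma}$ with phase vector $\tau$, independent of the choice of $\tau$. Hence
\[
|\Omega^A| \;=\; |\Omega^a|\cdot 2^{m^2} \qquad\text{and}\qquad |\Omega^B| \;=\; |\Omega^b|\cdot 2^{m^2}.
\]
Substituting the explicit values $|\Omega^a|=3\cdot 2^{k(N+1)}(Z_3(H)+2)^k$ and $|\Omega^b|=3\cdot(2^\ell-2)\,2^{k(N+1)}$ from Lemma~\ref{lem:H-prime-hat-coloring} yields the stated formulas.

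There is no real obstacle here: the lemma is a bookkeeping statement that packages Lemmas~\ref{lem:H-prime-hat-coloring} and~\ref{lem:induced-coloring-H-prime} together in the same way Lemma~\ref{lem:H-hat-Gamma-coloring} packaged Lemmas~\ref{lem:H-hat-coloring} and~\ref{lem:induced-coloring}. The only point worth double-checking is that the ``multiplier'' $2^{m^2}$ from Lemma~\ref{lem:induced-coloring-H-prime} does not depend on the specific phase vector $\tau$ (so that the sizes of the two classes scale by the same factor); this is built into Lemma~\ref{lem:induced-coloring-H-prime}, since the number of proper $3$-colorings of the gadget $G(m,3,t)$ consistent with a given phase is $((3-1)!)^m=2^m$ by Lemma~\ref{lem:gadget-No-coloring}, and the gadgets interact with one another only through the colors of their ports (i.e., through $\tau$).
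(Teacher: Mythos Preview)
Your proposal is correct and takes exactly the same approach as the paper, which simply states that the lemma ``follows immediately from Lemmas~\ref{lem:H-prime-hat-coloring} and~\ref{lem:induced-coloring-H-prime}.'' Your write-up merely unpacks that one-line justification with the natural partition-and-fiber argument, so there is nothing to add or correct.
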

\begin{proof}
	Follows immediately from Lemmas~\ref{lem:H-prime-hat-coloring} and \ref{lem:induced-coloring-H-prime}.
\end{proof}

\begin{proof}[Proof of Lemma~\ref{lem:G-and-Gstar-coloring-q=3}]
	We use the notation from the proof of Lemma~\ref{lem:G-and-Gstar-coloring}.
	Following the derivation of \eqref{eq:col-tv}, we get
	\[
	\TV{\mu_G}{\mu_{G^*}} = \frac{|\Omega_G^A| - |\Omega_{G^*}^A|}{|\Omega_{G}^A| + |\Omega_{G}^B|}.
	\]
	If $Z_3(H) < 2^{\ell/k}-2$, then we deduce from Lemma~\ref{lem:H-prime-hat-Gamma-coloring} that
	\begin{align*}
	\TV{\mu_G}{\mu_{G^*}} &\leq \frac{|\Omega_{G}^A|}{|\Omega_{G}^B|} = \frac{3 \cdot 2^{k(N+1)} (Z_3(H)+2)^k \cdot 2^{m^2}}{3 \cdot (2^\ell-2) 2^{k(N+1)} \cdot 2^{m^2}}\\
	&\leq \frac{(Z_3(H)+2)^k}{2^{\ell-1}} = 2 \left( \frac{Z_3(H)+2}{2^{\ell/k}} \right)^k.
	\end{align*}
	If $Z_3(H)\geq 2^{\ell/k}-2$, then by Lemma~\ref{lem:H-prime-hat-Gamma-coloring}
	\[
	|\Omega_{G}^A| = 3 \cdot 2^{k(N+1)} (Z_3(H)+2)^k \cdot 2^{m^2} \geq 3 \cdot 2^{k(N+1)} 2^\ell \cdot 2^{m^2} \geq |\Omega_{G}^B|.
	\]
	Thus, we get
	\[
	\TV{\mu_G}{\mu_{G^*}} \geq \frac{|\Omega_G^A| - |\Omega_{G^*}^A|}{2|\Omega_{G}^A|} = \frac{1}{2} \left( 1-\frac{|\Omega_{G^*}^A|}{|\Omega_{G}^A|} \right) = \frac{1}{2} \left( 1 - \left( \frac{Z_3(B)+2}{Z_3(H)+2} \right)^k \right).\tag*{\qedhere}
	\]
\end{proof}

\begin{proof}[Proof of Lemma~\ref{lem:sampling-coloring-q=3}]
	This can be done in the same way as the proof of Lemma~\ref{lem:sampling-coloring}. It suffices to first generate a random $3$-coloring $\tau$ of $\tilde{B}_{k,\ell}$ and then sample from $\mu_{G^*}$ given $\tau$ as the phase vector where $G^*=\tilde{B}^\Gamma_{k,\ell}$. To sample a random $3$-coloring of $\tilde{B}_{k,\ell}$, we do the following:
	\begin{enumerate}
		\item Compute $|\Omega^a|$, $|\Omega^b|$ and $|\Omega| = |\Omega^a| + |\Omega^b|$;
		\item With probability $|\Omega^a|/|\Omega|$ generate a random $3$-coloring from $\Omega^a$;
		\item With probability $|\Omega^b|/|\Omega|$ generate a random $3$-coloring from $\Omega^b$.
	\end{enumerate}
	We can compute $|\Omega^a|$ and $|\Omega^b|$ by Lemma~\ref{lem:H-prime-hat-coloring}. To sample from $\Omega^a$, we first pick one color for $J$ and then color each copy of $\tilde{B}$. Notice that since $B$ is a complete bipartite graph, we can sample a random $3$-coloring of $B$, and consequently $\tilde{B}$, in linear time. To sample from $\Omega^b$, we pick two colors to color $J$; then in every copy of $\tilde{B}$, the complete bipartite graph $B$ will receive only two colors. The total running time for sampling a random $3$-coloring of $\tilde{B}_{k,\ell}$ is $O(k\tilde{N}+\ell)$ and the running time for sampling from $\mu_{G^*}$ is $O(n)$.
\end{proof}

\section{Discussion}
\label{sec:disc}

Our hardness results for identity testing for the Ising model require $|\beta| d \ge c \log n$ for a suitable constant $c > 0$.
We further assume that $\bu = \beta$; namely, our lower bounds hold even under this additional promise.
 Our proof extends without any significant modification to the case where $\max\{|\beta|,|\bu|\} \cdot d \ge c \log n$. As mentioned, there are polynomial running time algorithms for identity testing when either
 $|\bu| d = O(\log n)$, in which case we can use structure learning methods,
 or when $|\beta| = O(d^{-1})$ is in the tree uniqueness region,
 and known sampling methods can be combined with the testing algorithm in \cite{DDK}.
Therefore, when $\beta$ is the non-uniqueness region ($|\beta| d < c \log n$) and  $|\bu| d = \omega(\log n)$, the computational complexity of identity testing is open, as there is no known polynomial running time algorithm, and our lower bound does not apply to this regime of parameters.

\bibliographystyle{plain}

\end{document}